\xpatchcmd{\proof}{\itshape}{\bfseries\proofnameformat}{}{}
\newcommand{\proofnameformat}{}
\newcommand{\equivalent}{{\Longleftrightarrow}}
\newcommand{\imply}{{\Longrightarrow}}
\newcommand{\ra}{{\rightarrow}}
\newcommand{\better}{{\succeq}}
\newcommand{\sbetter}{{\succ}}
\newcommand{\worse}{{\preceq}}
\newcommand{\R}{{\mathbb R}} 
\newcommand{\F}{{\mathbb F}} 
\newcommand{\N}{{\mathbb N}}
\newcommand{\E}{{\mathbb E}}
\newcommand{\D}{\mathcal{D}}
\newcommand{\mc}{\mathcal{C}}
\newcommand{\h}{\mathcal{H}}
\newcommand{\A}{\mathcal{A}}
\newcommand{\bi}{\begin{itemize}}
\newcommand{\ei}{\end{itemize}}
\newtheorem{theorem}{Theorem}
\newtheorem{lemma}{Lemma}
\newtheorem{proposition}{Proposition}
\newtheorem{definition}{Definition}
\newtheorem{remark}{Remark}
\date{\vspace{-5ex}}
\begin{document}

\title{Dynamic Random Subjective Expected Utility}

\author{Jetlir Duraj\footnote{duraj@g.harvard.edu, Acknowledgments: I am indebted to Drew Fudenberg and Tomasz Strzalecki for their continuous encouragement and support in this project. I thank Jerry Green, Kevin He, Eric Maskin and Nicola Rosaia for numerous comments during different stages of this project. I also thank Arjada Bardhi, Krishna Dasaratha, Ryota Iijima, Jonathan Libgober, Jay Lu, Maria Voronina and the audience of Games and Markets at Harvard for their helpful comments. Any errors are mine.}}

\maketitle


\begin{abstract}

Dynamic Random Subjective Expected Utility (DR-SEU) allows to model choice data observed from an agent or a population of agents whose beliefs about objective payoff-relevant states and tastes can both evolve stochastically. Our observable, the augmented Stochastic Choice Function (aSCF) allows, in contrast to previous work in decision theory, for a direct test of whether the agents' beliefs reflect the true data-generating process conditional on their private information as well as identification of the possibly incorrect beliefs. We give an axiomatic characterization of when an agent satisfies the model, both in a static (R-SEU) as well as in a dynamic setting (DR-SEU). We look at the case when the agent has correct beliefs about the evolution of objective states, as well as at the case when her beliefs are incorrect but unforeseen contingencies are impossible. 

We also distinguish in some detail two sub-variants of the dynamic model which coincide in the static setting: Evolving SEU, where a sophisticated agent's utility evolves according to a Bellman equation and Gradual Learning, where the agent is learning about her taste over time. We prove easy and natural comparative static results on the degree of belief incorrectness as well as on the speed of learning about taste.

Auxiliary results contained in the online appendix extend previous decision theory work in the menu choice and stochastic choice literature from a technical as well as a conceptual perspective.

\end{abstract}



\maketitle

\section{Introduction}\label{sec:intro}






The study of stochastic choice has found renewed popularity in economics. Along with a considerable amount of research on \emph{static} stochastic choice models, several recent works have pioneered foundational work into \emph{dynamic} stochastic choice models.\footnote{\cite{fs}, \cite{fis}, \cite{ssm} to name a few.} In a dynamic setting the agent solves a dynamic decision problem and learns as time passes about either the environment she is facing or her own evolution of preferences or both. In many applications an analyst only observes choices of an agent as well as some (possibly public) signals about payoff-relevant objective states. He doesn't have information about the stochastic process of the preferences of the agent (the private information of the agent). 

In this paper we consider such a general environment: there are payoff-relevant objective states every period, an agent has every period standard \emph{subjective expected utility} (SEU) preferences, comprised of beliefs about the objective, payoff-relevant state as well as a Bernoulli utility over a set of prizes. The \emph{subjective state} of an agent in each period consists of her realized SEU. We assume that these follow an exogenously given stochastic process which is well-known to the agent (albeit unknown to the analyst). We assume the agent can't influence the given stochastic process and allow for both stochastic tastes and stochastic beliefs. In many real life examples this two-fold randomness is present, e.g. investment and saving behavior may depend both on exogenous, objective randomness such as market conditions as well as on the stochastic evolution of the risk aversion of the agent.

We assume that after each history of choices and realizations of the objective states the analyst observes limiting frequencies of the choice of the agent in decision problems/menus of the current period as well as the realization of the objective states in the current period. The data also reflect variation of the decision problems/menus. Thus, the observable is in every period, after each history of choices and objective states, a \emph{probability distribution} over choices from a menu in the current period \emph{and} over realizations of the objective state.\footnote{Our identification results are valid under more general conditions -- see Remark \ref{thm:remk1} in section \ref{sec:rseu}.} Many situations in real life deliver such data, from employment situations in the labor market, consumption and investment decisions, to educational choices of students, loan practices, etc.\footnote{This type of data also allows an alternative \emph{heterogeneous population} interpretation: there is a population of agents facing similar choice situations. The analyst observes in many instances the choice of an agent \emph{as well as} the realization of some payoff-relevant objective state. We focus on the single-agent case in the exposition, but intuitions and results can be readily translated.} 

Our focus is axiomatic throughout. 
Under the assumption that the distribution of the private information of the agent doesn't depend on the decision problem she faces and that the analyst has access to a rich observable featuring variation in the decision problems, we give conditions on the observable which allow the analyst to uncover the distribution of the private information of the agent regardless of its arbitrariness. Under these conditions the analyst can also study whether the agent's beliefs when making choices reflect the \emph{correct} data-generating process \emph{conditional} on her private information and whenever that is not the case he can identify the biases conditional on the agent's private information. While the study of misspecified learning is not new, this is the first work, to the best of knowledge, where there are no \emph{a priori} assumptions on the origin of the misspecification. The misspecified beliefs may be because of misspecified priors, because of imprecise observation of private signals by the agent or because conditional on her private information the agent has some arbitrary behavioral biases in beliefs.\footnote{E.g. this model allows for the case of \emph{confirmatory bias} studied in \cite{rs} where an agent may misread signals in a way favorable to her current hypothesis. The agent in their model is not Bayesian with respect to the correct prior but is so \emph{within her model}.} 

The model we consider is still falsifiable as we require the agent to be Bayesian with respect to the stochastic process describing the evolution of her private information, even though she may be non-Bayesian with respect to the true data generating process of the objective states. Moreover, we don't allow any misspecified learner to receive \emph{hard} evidence about misspecification, such as the occurrence of an unforeseen contingency. Thus, in this paper the agent is able to explain any observed string of objective states \emph{within} her model, even though as time passes her beliefs might diverge more and more from the true data-generating process.\footnote{The time horizon is assumed to be finite. Thus the agent cannot resort to statistical tests of arbitrary accuracy to determine that her beliefs might indeed be misspecified.}

The richer observable allows comparative static results about the degree of biasedness of beliefs. We show how an analyst can use the data to construct a precise estimator of the extent of the belief biasedness of the agent and how he can compare different agents using this estimator. Moreover, since our model allows for both stochastic taste and beliefs, we show what an analyst can say about the relative \emph{speed} with which two different agents learn their taste, given their respective datasets. 

This paper is most related to \cite{lu} -- who studies the same static model but with unobservable objective states, and \cite{fis} -- who study a fully non-parametric dynamic model as here but without payoff-relevant objective states. Relatedly, \cite{dlst} study the ex-ante menu preference of the agent modeled by \cite{lu}. Among other things we extend their work to allow for stochastic taste.\footnote{Our proofs modify and extend the proofs of \cite{lu} and \cite{fis} in multiple directions as well as extending several other models in the literature. E.g. we extend \cite{as} to include objective states and stochastic beliefs. Details are in the online appendix.} Conceptually the paper is also related to \cite{lu2} who shows how a combination of ex-ante preference over acts and post-signal random choice can overcome the classical issue of identification in the Expected \emph{State-dependent} Utility model. Our model illustrates the strong identification properties of random choice data for the case of state-independent utilities in a rich dynamic environment allowing for stochastic taste. Finally, the observable in this paper can be interpreted as a likelihood function of a dynamic choice model in the spirit of \cite{rust} and the literature that it inspired.\footnote{See \cite{rust2} and \cite{ae} for surveys on the dynamic discrete choice literature.} Whereas that literature has focused on identification and inference of controlled stochastic processes, this paper offers an axiomatic treatment of such likelihood functions for choice behavior in a general set up with both observable and unobservable states. 

In the following we explain in detail the organization of the paper mentioning its contribution at each step.

Section \ref{sec:rseu} focuses on the static model. For each decision problem $A$ an analyst observes the frequency of an agent's choice and the realization of a payoff-relevant objective state $s$ (we say agent picks act $f$ from menu $A$ and objective state $s$ is realized with a certain probability $\rho(f,A,s)$). We call this observable an \emph{augmented stochastic choice function (aSCF)}. We show how the analyst can identify from this observable the space of the subjective states of the agent. We call this the \emph{revealed subjective support} of the data. We impose axioms similar to the ones in \cite{lu} to ensure that the revealed subjective support consists of SEUs that are identified by a belief $q$ about the realization of $s$ as well as a Bernoulli utility $u$. Furthermore, we show how the analyst can use the concept of the revealed subjective support to test whether the agent is using the correct data-generating process of objective states, conditional on her private information. This corresponds to the classical statistical concept of well-calibrated beliefs originating in \cite{dawid} but now in a general setting which allows for stochastic taste.
Intuitively, an agent has \emph{correct interim beliefs} only if the observed frequency of the realization of $s$ \emph{conditional on observing} $f$ chosen from $A$ is a mixture of beliefs in the subjective support of the data which can rationalize the choice of $f$ from $A$.\footnote{The last section of \cite{lu} also studies the property of well-calibrated beliefs but in a setting of non-stochastic taste.} 
Whenever this condition fails the analyst can identify the incorrect beliefs as well as the true data-generating process, conditional on the private information. We also give a relaxation of the correct interim beliefs condition which restricts the extent of belief incorrectness: the agent never receives hard evidence that her beliefs may be incorrect because the realization of $s$ is always in the support of her belief $q$. 

Section \ref{sec:drseu} introduces the dynamic model. The observable is now a \emph{history-dependent aSCF}: for every history $h^{t-1}$ occurring with positive probability, the analyst observes frequencies of the choice in a subsequent decision problem $A_{t}$ together with the realization of the objective state in the respective period (we say agent picks $f_t$ from menu $A_t$ after history $h^{t-1}$ and objective state $s_t$ is realized with probability $\rho_t(f_t,A_t,s_t|h^{t-1})$). Histories have empirical content, as they help the analyst identify the serial correlation in the private information of the agent, i.e. in her tastes and beliefs. 

We assume these history-dependent aSCFs satisfy the assumptions of the static model. In contrast to the static case there is now \emph{limited observability}: not every menu is observable after every history. This is a similar observability problem as in \cite{fis} and technically its solution in this paper adapts theirs to our more general setting with payoff-relevant states. It relies in identifying two classes of histories which reveal the same private information.\footnote{The two equivalence properties are called \emph{Contraction history independence} and \emph{Linear history independence}.} 
Whenever the observable satisfies the history-dependent version of the static model and the two history equivalence properties the analyst can identify the stochastic evolution of the private information of the agent as well as the true data-generating process of the objective states. This is the DR-SEU model, the namesake of the paper.

After establishing the main characterization result we focus on two special cases of DR-SEU whose static versions are indistinguishable: Evolving SEU, where the evolution of agent's Bernoulli utility is given through a Bellman equation and its specialization, Gradual Learning, where the agent is learning about a \emph{fixed} but unknown taste. 
Additionally, and because we need it for the dynamic characterization results, we describe when a menu preference comes from an agent who is subjectively learning both about objective states and about her Bernoulli utility/taste through a new axiom called \emph{Weak Dominance}. Intuitively, such an agent would always prefer to exchange any menu of acts $A$ for a menu $\bar A$ which allows her to pick any of the prizes occurring in $A$ with positive probability \emph{irrespective} of the realization of the uncertainty she's facing ex-ante. 

Section \ref{sec:comp} leverages the characterization theorems to prove comparative statics results. In a setting of non-stochastic taste we address the question of how an analyst can compare agents with respect to their biasedness of beliefs. Namely, given a commonly observable characteristic, e.g. gender, race or letter grades, if the analyst fixes a direction of biased beliefs for every characteristic, he can tell from stochastic choice data when an agent is more biased than another agent. Intuitively, the choice data give evidence that the more biased agent values menus \emph{uniformly} more differently to a fictitious unbiased agent than the less biased agent. Finally, in the special case of the Gradual Learning representation, we show how an analyst may distinguish when an agent's uncertainty for taste fully resolves and how the analyst may compare different agents with respect to the speed of learning their taste. Intuitively, agent 2 learns her taste more slowly than agent 1 whenever the data suggests that agent 2 satisfies Weak Dominance whenever agent 1 does.  

Section \ref{sec:conc} concludes and comments on avenues for future work. The appendix contains the proofs of the main theorem for the static setting as well as of the main theorem for the dynamic setting accompanied by a set of auxiliary results necessary to understand the main proofs. Other characterization theorems as well as technical extensions of results from several papers in the literature which are needed for the proofs are relegated to the online appendix. The latter also contains a section considering the case when the analyst does not observe the realization of objective states.

Before continuing with the theoretical set up and the results we note two examples which illustrate the questions and issues this paper addresses.

\subsection{Examples}


\subsubsection{A model of discrimination}\label{sec:disc}




Consider an employer at a job fair looking at applications for a job vacancy.\footnote{Many situations have the same structure: lending activity of a bank, university applications, etc.} The job consists of performing a task, after the job fair is concluded, whose outcome has two potential values coming from $S_1=\{g,b\}$ ($g$ stands for \emph{good} and $b$ for \emph{bad}). We assume that whether $g$ or $b$ is realized depends on both the ability of the employee as well as other randomness outside of the control of the employee.

During the job fair, in the first period of the model ($t=0$) some characteristic $s_0\in S_0 = \{s_0',s_0''\}$ of the applicant is revealed to the employer, say ethnicity, gender, education level, etc. We assume the distribution of $s_0$ over $S_0$ is known to the employer. This may be justified e.g. if the data about the prevalence of the characteristic $s_0$ in the population of the applicants at the job fair is public. In the second period ($t=1$) the employer has beliefs about the outcome of the task, conditional on the revealed characteristic $s_0$. These are coded by $(\hat q_1,\hat q_2)=(\hat q_1(g|s_0'),\hat q_1(g|s_0'')) \in (0,1)^2$. These can potentially be different from the true data generating process which here for simplicity is given by $q_1(g|s_0')=q_1(g|s_0'')=\frac{1}{2}$. Assume here for simplicity that the analyst knows this data-generating process.
In our example we say that the employer has incorrect beliefs if the following holds.
\vspace{3mm}\\
\textbf{Incorrect Beliefs:} $1>\hat q_1(g|s_0') >\frac{1}{2} >  \hat q_1(g|s_0'')>0$.\footnote{Other assumptions are possible. These here are for definiteness.} 
\vspace{3mm}\\




We assume in the following that the objective state $s_1$ (task outcome) is also observable to the analyst after the choice of the employer.



Given the observed characteristic $s_0$ the employer can choose in $t=1$ whether to hire the candidate (formally, act $h_{s_0}:S_1\ra \R$ ) or not hire (act $nh_{s_0}:S_1\ra \R$). In the case of not hiring, the utility of the employer is always zero $u_{s_0}(nh_{s_0}(s_1))= 0$ for all $s_0\in S_0,s_1\in S_1$. 

In the case  of hiring the employer has (possibly) stochastic utility $u_{s_0}:\R\ra\R$ which satisfies 

\[
u_{s_0}(h_{s_0}(g)) = g_{s_0},\quad u_{s_0}(h_{s_0}(b)) = b_{s_0}\text{ with } g_{s_0}> 0> b_{s_0}\text{ almost surely}. 
\]

Stochastic utility conditional on the realization of $s_0$ is meant to capture the possibility that the utility of a successful task for the employer may depend on the specific task to be solved, here assumed unobservable to the analyst, besides on the characteristic $s_0$ of the employee. It may also happen due to other characteristics of the candidate besides $s_0$ which are unobservable to the analyst but relevant to the employer.\footnote{The employer may have lexicographic preferences; she cares about $s_0$ first and foremost but given $s_0$ also takes into account other unobservable features of the candidate.} 
Finally, we assume that whenever the employer is indifferent between hiring and not hiring a candidate he uses an unbiased coin to break ties.



Besides biases in beliefs we allow for the possibility that the employer cares about the realization of $s_0$ as well. We require for the random variables $g_i,b_i,i=1,2$ to be jointly continuously distributed and to fulfill the following condition.

\[
(C)\quad g_{s_0'}\geq g_{s_0''} > 0> b_{s_0'}\geq b_{s_0''}\quad\text{almost surely}.
\]
A successful task benefits the employer more -- and a failed one hurts him less -- if it is the deed of an agent of characteristic $s_0'$ rather than $s_0''$. That is, the employer incurs uniformly lower payoffs from $s_0''$ for each outcome.


We say that the employer cares about $s_0$ if the following holds. 
\vspace{3mm}\\
\textbf{Preference for $s_0'$:} $\quad g_{s_0'}> g_{s_0''} > 0> b_{s_0'}> b_{s_0''}\quad\text{almost surely}$.
\vspace{4mm}\\
Here we ask for the `extreme' inequalities in condition $(C)$ to hold strictly almost surely.\footnote{Just as for beliefs other assumptions are here possible.}

Assume now that an analyst has frequency data on both hiring decisions at the job fair and on the outcome of the task, even though she may not observe the precise type of the task in every instance. Thus for all $s_0 = s_0',s_0''$ and $s_1 = g,b$ the analyst observes the limiting frequency that candidate $s_0$ is hired, and that state $s_1$ is realized, denoted by $\rho_{s_0}(h_{s_0},\{h_{s_0}, nh_{s_0}\},s_1)$. This paper gives conditions on stochastic choice data which allows the following. 

- As a first step the analyst can confirm that the true data-generating process is unbiased, i.e. that $q_1(g|s_0')=q_1(g|s_0'')=\frac{1}{2}$ holds. This corresponds to the constraint $$\rho_{s_0'}(f_{s_0'},\{f_{s_0'}, h_{s_0'}\},g) = \rho_{s_0''}(f_{s_0''},\{f_{s_0''}, h_{s_0''}\},g) = \frac{1}{2}.$$

- The analyst can also discern from stochastic choice data whether there is bias in beliefs, whether the employer cares about the realization of $s_0$ or whether both are occurring simultaneously. 

Namely, whenever the employer is unbiased in beliefs and doesn't care about the realization of $s_0$ \emph{per se} he chooses to hire either candidate with the \emph{same} positive probability. This corresponds to the constraint

\[
\sum_{s_1}\rho_{s_0'}(f_{s_0'},\{f_{s_0'}, h_{s_0'}\},s_1) = \sum_{s_1}\rho_{s_0''}(f_{s_0''},\{f_{s_0''}, h_{s_0''}\},s_1). 
\]

Whenever there is either bias in beliefs or the employer has preference for $s_0'$ he hires candidate $s_0'$ with strictly higher probability than candidate $s_0''$.

\begin{equation}\label{eq:helpemp}
\sum_{s_1}\rho_{s_0'}(f_{s_0'},\{f_{s_0'}, h_{s_0'}\},s_1) > \sum_{s_1}\rho_{s_0''}(f_{s_0''},\{f_{s_0''}, h_{s_0''}\},s_1)
\end{equation}

Finally, whenever the employer has incorrect beliefs \emph{and} has preference for $s_0'$, all else equal he hires candidate $s_0'$ with a (weakly) higher probability than in the case of either bias in beliefs only or preference for $s_0'$ only. This corresponds to a larger gap in \eqref{eq:helpemp}. 

This example shows that stochastic choice data coming from standard subjective expected utility (SEU) maximizers can be used to identify biases, whenever the analyst gets information for the realization of the objective state (here whether the task is successful or not). As we show, stochastic choice data allow comparisons of different employers in terms of their biases in much more complicated examples than the current one. 








\subsubsection{Educational choices}\label{sec:educ}


Consider an undergraduate student who adheres to subjective expected utility (SEU) and has beliefs about the final outcome in the job market once she graduates. This outcome comes from a finite objective state space, say,  
$$S =  \{\text{job in finance, job in tech industry, job in government, graduate school, start-up}\}.$$ 
At the beginning of the undergraduate education the student is also learning about her taste regarding possible careers and so has stochastic tastes $\tilde v_0,\tilde v_1,\dots,\tilde v_{\tau}$ about the final outcome. At the end of some student-specific year $\tau\ge 1$, learning about taste ceases: the student has a fixed Bernoulli utility $v$ about the final outcome $S$ even though her beliefs $q_t$ about the final outcome in $S$ remain stochastic throughout the whole higher education experience. 

Formally, let school years be encoded by $t\in \{0,1,\dots,T\}$. Let $s_t$ be a period-$t$ signal about final outcome coming from a finite space of objective signals $S_t$. These can be grades or feedback from faculty, experiences in internships, etc. Let acts (decisions of a student) correspond to jobs/projects/classes she engages with in each year and menus $A_t$ be finite collections of such acts the student can choose from in each education year. Denote the set of menus available in period $t$ by $\A_t$. Given a realized signal $s_t$ each act $f_t$ in period $t$ delivers a lottery over pairs consisting of an instantaneous prize from a finite set of prizes $Z$ and a continuation decision problem $A_{t+1}$ from $\A_{t+1}$.\footnote{There is no continuation problem in period $t=T$.} The realization of the continuation problem $A_{t+1}$ corresponds to jobs/internships/classes possibly available to the student, after she has taken a current class corresponding to the act $f_t$. Say that an act $f_t$ is \emph{constant}, if the lottery over pairs of current prize and continuation decision doesn't depend on the realization of the signal $s_t$, i.e. it is the same for all $s_t$ in $S_t$. E.g. a constant act is a summer job a student may take only due to financial reasons and which doesn't enhance her intellectual skills in the job market for any possible career. 

The analyst observes past choices of, say act $f_{l}$ chosen from menu $A_{l}$ as well as the realization of signal $s_l\in S_l$, and for the current period $t\in \{1,\dots,T\}$ she observes after the history $h^{t-1}=(f_0,A_0,s_0;\dots;f_{t-1},A_{t-1},s_{t-1})$ the frequencies of triples $(f_t,A_t,s_t)$. These history-dependent frequencies, denoted by $\rho_t(f_t,A_t,s_t|h^{t-1})$, are to be interpreted as \emph{after history $h^{t-1}$ student chose $f_t$ when facing $A_t$ and the objective signal $s_t$ was realized.}

If the history-dependent preference of the student over menus/decision problems from $\A_{t},t=0,\dots,T$ were observable, it is intuitive to expect it satisfies the following properties. 

\begin{enumerate}
\item \textbf{Preference for Flexibility:} Every year the student prefers menus which are larger rather than subsets thereof. That is, $B_t\in \A_t$ is less valuable than $A_t\in \A_t$ if $B_t\subset A_t$. This is because a strict subset offers less option value for a SEU agent than a full menu. 
\item \textbf{Weak Dominance for $t\le \tau$:} At $\tau = 0$, say, she prefers to replace a single act $f_1$ whose utility depends on the realization of the signal $s_1$ with a menu of constant acts $\bar A=\{f_1(s_1):s_1\in S_1\}$ offering the same outcomes (lotteries over $Z\times\A_{2}$) as every $s_1-$dependent outcome of $f_1$. This is because menu $\bar A$ offers insurance against her stochastic taste in $t=1$. Intuitively, summer jobs where the student doesn't learn new specialized skills for the job market may be more valuable to a student who is still unsure of her taste about different careers than committing to an internship whose outcome is highly dependent on what she learns about her career taste at the end of the current period.
\item \textbf{Strong Dominance for $t> \tau$:} From \emph{the end} of period $\tau$ on, whenever the act $f_{t+1}$ delivers weakly \emph{better} utility for each realization of the signal in period $t+1$ than $g_{t+1}$, from the perspective of the \emph{end} of year $t$, the menu $\{f_{t+1},g_{t+1}\}$ is as good as $\{f_{t+1}\}$. This unambiguous comparison of continuation problems in the end of year $t$ becomes possible because at the end of period $\tau$ the career tastes of the student have stabilized and are deterministic.\footnote{The names are justified: after formally introducing the technical set up and the axioms in the main body of the paper we show in online appendix section 5 that under Preference for Flexibility, Strong Dominance implies Weak Dominance but not the other way around.} Given a fixed taste about distinct careers she is able to at least determine when an act is uniformly more valuable than another, no matter the realization of the objective signal in the current period $t$.
\end{enumerate}
We show how the properties A-C can be derived from ex-post stochastic choice from menus without knowing anything about the preference over menus of the student. Moreover, our methods allow the analyst to also determine the \emph{speed} with which an agent, such as the student in this example, learns her final taste $v$ (e.g. to determine the $\tau$ of the student). For example, if the act $f_1$ is taking an internship which requires substantial investment in learning new skills in a very specific field like finance, i.e. an act whose outcome is highly dependent on $s_1$ as well as the realization of the future taste $\tilde v_1$, we should expect an agent who knows by the end of period $t=0$ that her taste is so that she likes to get a job in finance, to prefer committing to $f_1$ at the end of $t=0$. This should be especially the case if the alternative is to face a menu which offers acts whose outcomes don't depend much on $s_1$ or the realization of $\tilde v_1$ such as helping out with grading an undergrad class, taking up a summer job in the library, etc, even though they might be as financially profitable as picking the internship in finance $f_1$. 

Finally, given richness of the data, our characterization results show how an analyst is able to compare different agents according to their speed of learning about taste in similar situations.\footnote{Intuitively in our example, student 1 learns her taste faster than student 2, if stochastic choice data give evidence that student 1 satisfies Strong Dominance whenever student 2 does.} 

\section{Static Random Subjective Expected Utility with observable objective states}\label{sec:rseu}

In this section we introduce and characterize the static model. This is the crucial building block of the dynamic model of section \ref{sec:drseu}. 

\paragraph{Set up in the static model.}

Let $Z$ be a prize space assumed to have a separable, metric topological structure. Let $S$ be a finite set of objective states\footnote{The wording \emph{objective} means that the state $s$ is verifiable by both agent and analyst after it occurs.} and $\F$ the set of Anscombe-Aumann \emph{acts} (AA acts) with a typical element given by $f:S\ra\Delta(Z)$ where $\Delta(Z)$ denotes the space of \emph{simple} lotteries over prizes in $Z$.\footnote{A lottery is called \emph{simple} if only finitely many prizes can happen with positive probability. $\Delta(Z)$ is equipped with the topology of weak convergence of probability measures. The set of acts $\F$ is equipped with the product-topology over $\Delta(Z)^S$.} Finally, denote by $\A$ the collection of finite, nonempty subsets of $\F$. A typical element in $\A$ is called a \emph{menu} and denoted by a capital letter, e.g. $A\in \A$. $\A$ is equipped with the Hausdorff topology.  

Given a belief of the agent over $S$, i.e. an element $q$ from $\Delta(S)$ and an Expected Utility function $u:\Delta(Z)\ra\R$ to evaluate simple lotteries, we say the agent satisfies \emph{Subjective Expected Utility (SEU) with beliefs $q$ and taste $u$} if the utility of an act $f$ is given by $q\cdot (u\circ f): = \sum_{s\in S}q(s)u(f(s))$.\footnote{In the following we often identify the EU-functional $u:\Delta(Z)\ra\R$ with its Bernoulli utility from $\R^Z$.} 

Define 
\[
N(A,f) = \{(q,u)\in \Delta(S)\times \R^X: q\cdot (u\circ f)\geq q\cdot (u\circ g), g\in A\}. 
\]
This is the set of SEUs which can rationalize the choice of $f$ from menu $A$.

 Denote $N^+(A,f)$ the respective subset of $N(A,f)$ where $f$ is not tied to other acts from $A$. 

Moreover, define
\[
M(A;u,q) = \{f\in A: q\cdot (u\circ f)\geq q\cdot (u\circ g), g\in A\}.
\]
This is the set of maximizers when the agent's belief about objective state of the world is $q$ and her Bernoulli utility is $u$. 

The timeline of the one-period model is the following.

\begin{figure}[H]
\centering
\includegraphics[width=13cm]{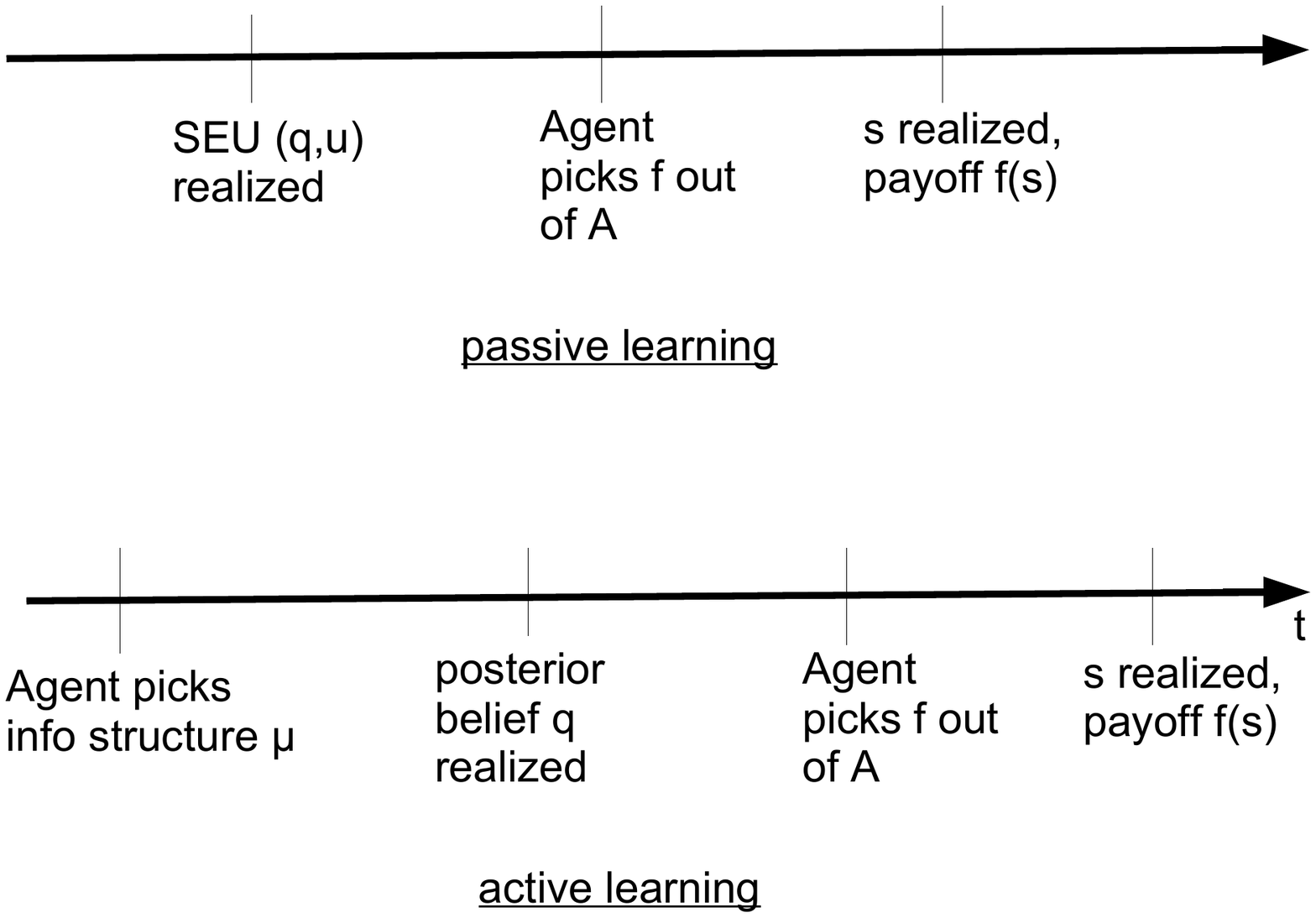}
\caption{Timeline for the static setting.}
\label{fig:timelinestatic}
\end{figure}

Let $\mu$ be a probability measure over $\Delta(S)\times \R^X$, equipped with the sigma-algebra $\mathcal{F}$ generated by sets of the form $N^+(A,f), N(A,f)$ or alternatively with the Borel sigma-algebra of $\Delta(S)\times \R^X$.\footnote{This is constructed as a product sigma-Algebra of the respective Borel sigma-Algebra of weak convergence on $\Delta(S)$ and the Borel one from $\R^X$ (the latter again a product sigma-Algebra).} 

We say that $\mu$ is \emph{regular} if $\mu(N^+(A,f)) = \mu(N(A,f))$ for any $A\in \A,f\in \F$. In this paper regular measures $\mu$ have the following form: whenever there are ties, i.e. $M(A;u,q)$ is not a singleton for some $A$ and SEU pair $(q,u)$ the agent \emph{randomly} picks an auxiliary SEU pair $(p,v)$ such that 
$M(M(A;u,q);p,v)$ \textbf{is} a singleton.\footnote{The interested reader can peruse the proofs in Section 1 in the online appendix for the mathematical details. This tie-breaking rule is special and it will be reflected in the properties of the data in the form of a specific axiom: Extremeness-type of Axioms (see next subsections) \emph{imply} tie-breaking through SEUs.}

\paragraph{Observable in the static model.}
We assume the analyst observes an \emph{augmented stochastic choice function} defined as in part 1) of the following definition.

\begin{definition}\label{thm:defaSCFstatic}

1) An \emph{augmented stochastic choice function (aSCF)} is a map $\rho:\F\times \A\times S\ra[0,1]$ with the properties
\begin{enumerate}[(a)]
\item 
\[
\sum_s\sum_{f\in A}\rho(f,A,s) = 1,\quad\forall A\in\A.
\]
\item 
\[
\rho(s):=\sum_{f\in A}\rho(f,A,s) = \sum_{f\in B}\rho(f,B,s)>0,\quad \forall A,B\in \A,s\in S.
\]
\vspace{2mm}
2) A \emph{stochastic choice function (SCF)} $\zeta$ is a map $\zeta:\F\times \A\ra[0,1]$ with the property 
\[
\sum_{f\in A}\zeta(f,A) = 1,\quad\forall A\in\A
\]

\end{enumerate} 
\end{definition}

The second requirement in the definition of aSCF makes sure that we can define the observed frequency of objective state $s$ independently of the decision problem the agent is facing. This says that objective uncertainty is fully exogenous and independent from the problem the agent is facing in addition to being outside the influence of the analyst. Formally, it allows the definition of $\rho(s) = \sum_{f\in A}\rho(f,A,s)$ for any $A$ and $s\in S$, i.e. the probability of observing $s$ in the data. 

For a given aSCF $\rho$ we denote in the following by $\bar \rho$ the SCF derived from summing each $\rho(f,A,s)$ across states. Formally, 

\[
\bar\rho(f,A) := \sum_{s\in S}\rho(f,A,s),\quad f\in A,A\in \A.
\]
\paragraph{Discussion of the Observable.}
Assuming that the data of the analyst comes in the form of aSCFs characterizes an analyst with superior information compared to the set up of \cite{lu}. In many realistic situations this is a viable assumption: loan performance data, how students perform in school or how an employee performs in some task is often observable to an outside analyst.\footnote{Section \ref{sec:rseu} of the online appendix considers extensively the case when the observable corresponds to SCF, that is the realization of $s$ is not observable by the analyst. For the static setting the whole theory, up to explicit modeling of the tie-breaking is contained in \cite{lu}, whereas the dynamic version of his model can be derived easily using the approach of \cite{fis}. See the online appendix for more details.} 

\cite{ellis} and \cite{cd2} also consider state-dependent choice data but have a different focus: that of information acquisition in a static setting. They don't study the question of misspecified learning, either because the analyst doesn't get to see the realization of the objective state or because they assume from the start that the agent is using the correct prior. \cite{cm} considers state-dependent stochastic choice data in a passive learning model similar to ours but assume that the taste of the agent is deterministic and known to the analyst. 



\begin{remark}\label{thm:remk1} The observable in Definition \ref{thm:defaSCFstatic} has more general applicability, e.g. it can be used even if there is partial observability of $s$ as long as there is full identification \emph{in the aggregate}. 

In more detail, assume the analyst observes a signal $y\in Y$ about the true realization of the objective state $s\in S$ instead of its realization. If $\hat\mu(y|s)$ gives the (menu-independent) conditional probability of observing signal $y$ when the realized state is $s$ the assumption of aSCFs as observable is valid for the analysis if the following two conditions hold:
\\- $\hat\mu$ is known by the analyst,
\\- The matrix $(\hat\mu(y|s))_{y\in Y,s\in S}$ is quadratic and has full rank. 
\end{remark}



\subsection{Representation in the Static Setting}

We now introduce the Random Subjective Expected Utility representation for an aSCF $\rho$ we are after. An agent has private information about both beliefs over the realization of the objective state $s$ as well as her taste $u\in\R^Z$. The analyst observes only aggregate frequencies of choice data and realizations of the objective state from the same agent in many choice instances or similar aggregate data choices from a population of agents. 

\begin{definition}\label{thm:filtras-r-seu}
A Random SEU representation (R-SEU) of the aSCF $\rho$ is a tuple \\$(\Omega,\mathcal{F}^*,\mu,(q,u,s), (\hat q,\hat u))$ such that 
\begin{enumerate}
\item $(\Omega,\mathcal{F}^*,\mu)$ is a probability space with finite $\Omega$, 
\item $(q,u,s):\Omega\ra \Delta(S)\times\R^{Z}\times S $ is an injective map, has non-constant SEU $(q(\omega),u(\omega))$ and $s(\omega)\in supp(q(\omega))$ for all $\omega\in\Omega$.
\item Either 

C1. The representation has correct interim beliefs (cib):  $\mu(s\in\cdot|q) = q(\cdot)$

or otherwise

C2. The representation has no unforeseen contingencies (nuc):  \\$supp\left(\mu(s\in\cdot|q,u)\right) \subset supp\left(q(\cdot)\right)$.
\item the $(q,u)$-measurable tiebreaking process $(\hat q,\hat u):\Omega\ra\R^{Z}$ is regular and for all $f\in A$,
\[
\rho(f,A,s) = \mu(C(f,A,s)). 
\]
\end{enumerate}
Here, $C$ is defined as 
\[
C(f,A,s) = \{\omega\in \Omega: f\in M\left(M(A,q(\omega),u(\omega)),\hat q(\omega),\hat u(\omega)\right), s(\omega) = s\}.
\]
\end{definition}

In the following $\omega$ are called \emph{states of the world.}
$C(f,A,s)$ denotes then the collection of states of the world where the agent chooses $f$ from $A$ and the objective state $s$ is realized.

Before continuing, we note down the true data-generating process (DGP) derived from the representation. 

\begin{definition}\label{thm:DGP}
For an aSCF $\rho$ that satisfies a R-SEU representation define the DGP, a $\Delta(S)$-valued random variable $\bar q:\Omega\ra\Delta(S)$ as\\ 
$\bar q(\omega)(\cdot) = \mu(s\in\cdot|q,u)(\omega)$. Then the property of correct interim beliefs (cib) can be written as 
\[
\bar q = q
\]
whereas that of unforeseen contingencies (nuc) is written as 
\[
supp(\bar q)\subset supp (q). 
\]
\end{definition}

\subsection{The revealed subjective support of a SCF}

For this subsection only, we look at an agent whose preference $\better$ over acts is \emph{continuous} but otherwise arbitrary (i.e. not necessarily SEU) and introduce a concept which is helpful in the characterization results of this paper in addition to having general applicability outside of this model as well. If the only fact the analyst knows about the stochastic choice of an agent is that it comes from a continuous preference, the sets $N(f,A)$ can be written as

\[
N(f,A) = \{\better\text{ continuous preference over }\F:f\better g, g\in A\}. 
\]
Say that the stochastic choice data of an agent satisfies  a \emph{Random Utility Model} if the stochasticity in choice follows from the randomness of her preference. Formally, we define as follows.
\begin{definition}[Random Utility Model]\label{thm:defrum}
Say that a SCF $\zeta$ on $\mathcal{F}$ satisfies a \emph{Random Utility Model} (RUM) if there exists a regular probability measure $\mu$ over continuous preferences over $\mathcal{F}$ so that for every $A\in \A$ and $f\in A$ we have 
\[
\zeta(f,A) = \mu(N(f,A)).
\]
\end{definition}

The randomness in preferences may originate from her stochastic perceptions of the decision environment she faces, for example in the special case of SEUs her beliefs may be stochastic. In the case of SEUs randomness can also come from stochastic tastes. 

Alternatively, a RUM may be interpreted as representing data from a population of heterogeneous agents who have deterministic preferences. The following definition shows how to identify from data the collection of preferences underlying a RUM. 


\begin{definition}\label{thm:RSSuppdef}
For a SCF $\zeta$ which satisfies a RUM let $RSSupp(\zeta)$, the revealed subjective support of $\zeta$, be defined through

\begin{align*}
RSSupp(\zeta) = &\{\better \text{ over }\F: \forall A\in \A, f\in A,\text{if }\better\in N(A,f)\\&\text{ then there exists }(f_n,A_n)\ra(f,A)\text{ with }\zeta(f_n,A_n)>0\}.
\end{align*}
Here convergence $(f_n,A_n)\ra(f,A)$ is in the product topology of $\F\times\A$.
\end{definition}

This says that a preference $\better$ is in the revealed subjective support of $\zeta$ if every choice that can be rationalized by $\better$ appears in the data encoded by $\zeta$, up to tie-breaking.\footnote{In more detail: $\better$ occurs in the data if for every choice pair $(f,A)$ either (1) $\rho(f,A)>0$ \emph{and} $\better\in N(A,f)$ or if (2) $\rho(f,A)=0$ and $\better\in N(A,f)$ then $\rho(f,A)=0$ only happens due to tie-breaking.} 


If the RUM has support on SEUs, the definition `picks out' the SEUs in the support of $\mu$ from Definition \ref{thm:defrum} up to positive affine transformations of the respective Bernoulli utilities.

\paragraph{Aside.} Another compact and suggestive way to write down the revealed subjective support of a SCF $\zeta$ is as follows. 

For a continuous preference $\better$ over $\F$ denote the set of choices it can rationalize as $R^{\better}$, that is                                                                                   
\[
R^{\better} =\{(f,A)\in \F\times\A: \better\in N(f,A) \}. 
\]

This is the set of choice data that are consistent with maximization of $\better$. 

The set of choices explained by the data represented by some SCF $\zeta$ is 

\[
N(\zeta) =\{(f,A): f\in A,\text{ }\exists (f_n,A_n)\ra (f,A)\text{ with }\zeta(f_n,A_n)>0\text{ for all }n\}.
\]

Then $RSSupp(\bar\rho)$ can be characterized as follows.

\[
RSSupp(\zeta) =\{\better: R^{\better}\subset N(\zeta) \}.
\]


%

\subsection{Axiomatization of aSCFs}

The following axiomatization of aSCFs is based on previous results about the axiomatization of SCFs in \cite{lu} and \cite{as}. 

Axioms 0-1 till 0-5 below are adaptations to our setting of aSCFs of the standard axioms from Theorem S.1 of \cite{lu}. They imply that an aSCF comes from an underlying RUM whose \emph{revealed subjective support} contains only SEUs. Axiom 0-6 is adapted from \cite{as} and ensures that there can only occur finitely many such SEUs. 
\paragraph{Standard Axioms in statewise form.}
For all $s\in S$ it holds
 \paragraph{Axiom 0-1: Statewise Monotonicity.}  $\rho(f,A,s)\geq \rho(f,B,s)$ for $A\subset B$. 
\paragraph{Axiom 0-2: Statewise Linearity.} $\rho(\lambda f+ (1-\lambda)g, \lambda A+ (1-\lambda)\{g\},s) = \rho(f,A,s)$ for any $A\in \A, g\in\F$ and $\lambda\in (0,1)$. 
\paragraph{Axiom 0-3: Statewise Extremeness.} $\rho(ext(A),A,s) = 1$ for all $A\in \A$.\footnote{Note that $\F$ has a mixture structure in the usual way. In particular, one can form $conv(A)$, the convex hull of $A$ for any menu $A$. Then $ext(A)$ is identified with the set of extremum points of $conv(A)$.} 
\paragraph{Axiom 0-4: Statewise Continuity.} $\A\ni A\mapsto\rho(\cdot, A|s)$ is continuous.\footnote{The image of the mapping is the space of simple lotteries on $\F$, equipped with the topology of weak convergence of probability measures.}
\paragraph{Axiom 0-5: State Independence.}

To explain this axiom we first introduce some terminology: a menu $A$ is called \emph{constant} if it contains only constant acts. Given a menu $A$ and a state $r\in S$ let $A(r) = \{f(r): f\in A\}$ be the constant menu containing all lotteries from acts in $A$ which happen at state $r$. 

Then \textbf{State Independence} says: Suppose $f(s_1) = f(s_2), A_1(s_1) = A_2(s_2)$ and $A_i(s) = \{f(s)\},s\neq s_i, i=1,2$. Then $\rho(f,A_1,s) = \rho(f,A_1\cup A_2,s)$. 

Intuitively, if an act $f$ yields the same payoff in states $s_1$ and $s_2$, payoffs of menu $A_1$ in $s_1$ are the same as those of menu $A_2$ in $s_2$ and acts in $A_i$ only differ in $s_i$ then the probability of choosing $f$ in $A_1$ is the same  as choosing $f$ in $A_1\cup A_2$, \emph{unless} the realization of the Bernoulli utility of the agent depends on whether $s_1$ or $s_2$ is realized. 
\paragraph{Axiom 0-6: Statewise Finiteness.}
There is $K>0$ such that for all $A\in \A$, there is $B\subset A$ with $|B|\leq K$ independent of $s$ such that for every $f\in A\setminus B$ there are sequences $f^n\ra^m f$ and $B^n\ra^m B$ with $\rho(f_n,\{f_n\}\cup B^n,s) = 0$.

\vspace{4mm}
To state the axiom of correct beliefs we define for a SEU pair $(q,u)$ where $p$ is the belief of the agent and $u$ her Bernoulli utility as $\pi_q(p,u) = p.$ That is, the projection to the belief used from the agent.
Furthermore, in the following $\rho(s|f,A)$ is the conditional probability of observing the realization of the objective state $s$ in the data conditional on the agent choosing $f$ from menu $A$.  

\paragraph{Axiom 0-7: Correct Interim Beliefs (CIB).} For all $f\in \F$ and $A\in\A$ with $\bar\rho(f,A)>0$ we have 
\begin{equation}\label{eq:CIB}
\rho(\cdot|f,A)\in \pi_{q}\left(conv\left(N(f,A)\cap RSSupp(\bar\rho)\right)\right)= conv\left(\pi_q(N(f,A)\cap RSSupp(\bar\rho))\right).
\end{equation}

The axiom says that the DGP of the objective state $s$ conditional on observed choice $(f,A)$ is a mixture of beliefs which correspond to some SEU that fulfill two natural conditions \emph{simultaneously}: 1) the SEU is contained in the revealed subjective support of the data and 2) the SEU rationalizes the choice $f$ from $A$.  






Incorrect beliefs can arise due to different reasons: the agent may observe objective signals with noise, she may have a misspecified prior or otherwise have \emph{subjectively} biased beliefs even though they average out to the correct prior. We exclude in this paper the case when incorrect beliefs originate from non-Bayesian updating with respect to \emph{any} prior.

In contrast to section 6 of \cite{lu} here the analyst gets information about the realization of the objective state and can glean out the true DGP from data. This allows her to make a direct comparison between the true DGP and the beliefs of the agent.\footnote{Moreover, in the dynamic model in Section \ref{sec:drseu} we assume that the agent is sophisticated and thus our model doesn't allow any \emph{prospective overconfindence/underconfidence} as in \cite{lu}.} Section 7 of \cite{lu} constructs a test of CIB based on test acts. His methods require non-stochastic taste whereas our axiom is robust to stochasticity of tastes.

Now we present a relaxation of the Correct Interim Beliefs Axiom which allows for incorrect beliefs but so that the incorrectness remains undetected by the agent ex-post. This is inconsequential in a static setting but has repercussions in the dynamic setting of Section \ref{sec:drseu} where we study an agent who \emph{passively learns} about objective states as well as her taste in every period. 

\paragraph{Axiom 0-7': No Unforeseen Contingencies (NUC)} For all $f\in \F$ and $A\in\A$ with $\bar\rho(f,A)>0$ it holds 
\[
supp\left(\rho(\cdot|f,A)\right)\subset \bigcup \{supp(q): q\in \pi_q(N(A,f)\cap RSSupp(\bar\rho))\}.
\]


Our first main result gives the axiomatization of aSCFs in a static setting. 

\begin{theorem}\label{thm:SREU}
The aSCF $\rho$ on $\A$ admits a R-SEU representation with CIB satisfied if and only if 
it satisfies Axioms 0-1 till 0-7. It admits a R-SEU representation with NUC satisfied if and only if 
it satisfies Axioms 0-1 till 0-6 together with Axiom 0-7'.
\end{theorem}

In the following whenever for an aSCF $\rho$ the Axioms 0-1- till 0-6 together with 0-7' are satisfied, we say \emph{Axiom 0 is satisfied for $\rho$.} 

\subsubsection{Informational Representation for aSCFs}\label{subsec:infoR}

We consider here the special case of Theorem \ref{thm:SREU} where all possible Bernoulli utilities in the representation are equal up to positive affine transformations of each other. This implies that stochasticity in choice only comes from randomness in beliefs. 

To facilitate analysis, we require the existence of a best constant act. This requirement is easily expressed in terms of stochastic choice. 
 
\paragraph{Axiom: Existence of a constant best act.} There exists a constant act $\bar f\in \F$ such that for every act $f\in \F$ it holds 

\[
f\neq \bar f\quad\imply\quad\rho(f,\{f,\bar f\}) = 0.
\]
\vspace{3mm}

The existence of a best constant act is assured for example if $Z$ consists of monetary prizes and the preferences of the agent over money are strictly increasing. Whenever this Axiom is satisfied, it becomes easier to eschew tie-breaking considerations when writing down other Axioms on data. 



The axiom on data which ensures that the agent has a deterministic taste is the following.\footnote{This is an adaptation of the \emph{C-Determinism} Axiom from the \cite{lu} who doesn't consider tie-breaking explicitly as we do.} 

\paragraph{Axiom: C-Determinism*.} For any menu $A$ consisting of constant acts it holds true
\[
\lim_{a\ra 1}\rho\left(af+(1-a)\bar f; A\setminus \{f\}\cup\{af+(1-a)\bar f\}\right)\in \{0,1\}.
\]

This says that except for possible stochastic tie-breaking, constant acts are chosen deterministically. On the other hand, if taste is stochastic then choice from constant menus should be stochastic, even after taking into account possible stochastic tie-breaking. Given this intuition the following characterization result is not surprising.

\begin{proposition}
[Informational Representation for aSCFs]\label{thm:inforepresentationascf}

Assume that an aSCF $\rho$ has a R-SEU representation with regular measure $\mu$.  
Assume that there exists a constant best act. 

Then the following are equivalent. 
\begin{enumerate}
\item For all $(q,u),(p,v)\in RSSupp(\bar\rho)$ $u$ is a positive affine transformation of $v$.
\item $\rho$ satisfies C-Determinism*.
\end{enumerate}

\end{proposition}


\section{Dynamic Random Subjective Expected Utility}\label{sec:drseu}

This section is devoted to the dynamic model. We introduce the general representation and two interesting specializations of it. After that, we give axioms for all three representations. 

\paragraph{Set up in the dynamic model.}


Let $Z$ be a \emph{finite} prize space, $\infty>T\ge 1$ and for each $t=0,\dots,T$ let $S_t$ be finite spaces of \emph{objective states}. The objective states evolve according to a DGP which cannot be influenced by the agent (passive learner situation).

Define recursively the spaces of \emph{consequences} for every period as follows. Let $X_T=Z$ and the set of acts $\F_T$ with a typical element $f_T:S_T\ra\Delta(Z)$. Let $\A_T$ be the collection of finite sets from $\F_T$. Then continue inductively by defining $X_t=Z\times \mathcal{A}_{t+1}$, where $\mathcal{A}_{t+1}$ is the collection of finite menus from $\F_{t+1}$. $\F_t$ is then the set of acts $f_t:S_t\ra\Delta(X_t)$.\footnote{Furthermore we denote in the following by $\A^c_{t}$ the collection of period$-t$ menus consisting of constant acts.}

Thus, an act $f_t$ at time $t<T$ gives for each possible objective state $s_t$ a lottery over current consumption and a continuation decision problem/menu. We denote $f_t^A$ the marginal act on menus $A_{t+1}$ and $f_t^Z$ the marginal act on $Z$ induced by $f_t$.


We assume in each period $(q_t,u_t)$ is private information of the agent whereas the realization of $s_t$ is observed by both the agent and the analyst. Thus stochasticity in choice comes from the information asymmetry between the agent and the analyst in the single-agent interpretation, whereas in the population interpretation the analyst is observing \emph{dynamic} data from a population of SEU agents whose preference characteristics are unknown. 




Visually the timeline is depicted in Figure \ref{fig:timeline}.

\begin{figure}[H]
\centering
\includegraphics[width=15cm]{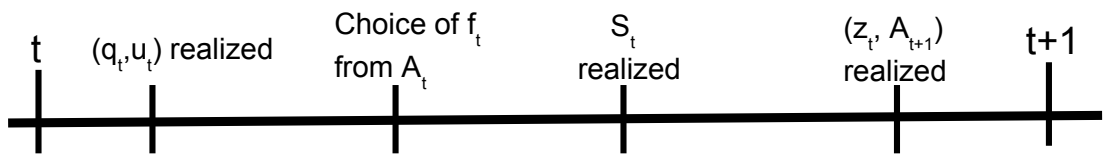}
\caption{Timeline for the dynamic setting.}
\label{fig:timeline}
\end{figure}
\vspace{4mm}









\paragraph{The observable in the dynamic setting.}

The analyst observes histories with a typical element $h^t$ as well as \emph{history-dependent} aSCFs $\rho_{t}(\cdot|h^{t-1})$. The collection of the former is denoted by $\h_t$ whereas of the latter simply by $\rho$ and called \emph{a dynamic augmented stochastic choice function (dynamic aSCF)}. These are described recursively as follows. For $t=0$ the analyst observes an aSCF $\rho_0$ as in Definition \ref{thm:defaSCFstatic}. The set $\h_0$ collects all histories $h^0=(f_0,A_0,s_0)\in \F\times\A_0\times S_0$ such that $\rho_0(h^0)>0$. For $h^0\in\h_0$ denote $\A_{1}(h^0): =supp(f_0^A)$ the set of period$-1$ menus that follow $h^0$ with positive probability. The construction is continued recursively: for any history $h^{t}\in\h_t$ there is an aSCF $\rho_{t+1}(\cdot|h^t)$ which can be used to define the set of possible continuation menus $\A_{t+1}(h^{t})$. The set of period$-(t+1)$ histories is then $\h_{t+1}: =\{(h^t,f_{t+1},A_{t+1},s_{t+1}):A_{t+1}\in \A_{t+1}(h^t), \rho_{t+1}(f_{t+1},A_{t+1},s_{t+1}|h^t)>0\}$.  

In simple words: histories are finite sequences of triplets $(f_i,A_i,s_i)$ with the interpretation that the data shows that with positive probability $f_i$ is chosen from menu $A_i$ and $s_i$ is the realized objective state in period $i$. Moreover, a history can only happen if the elements $(f_i,A_i,s_i)$ of its sequence happen \emph{successively} with positive probability starting from the `oldest' one $(f_0,A_0,s_0)$ to the most recent.


The data reflects \emph{limited observability} in the sense that $\rho_t$ is defined only conditional on histories which happen with positive probability in the data. We show below how this can be overcome. 

\subsection{Representations} 

We first define properties shared by all representations. The focus is on having properties which are tractable but still allow for a general enough representation.



\subsubsection{Simplicity, regularity and preference-based tie-breaking.}

Say that the triple $\mathbf{(\mathcal{F}_t,q_t,u_t,s_t)_{0\leq t\leq T}}$ is \emph{simple} w.r.t.\footnote{w.r.t. stands for \emph{with respect to}.} the probability space $(\Omega,\mathcal{F}^*,\mu)$ if 

\begin{enumerate}
\item each $\mathcal{F}_t$ is generated by a finite partition such that $\mu(\mathcal{F}_t(\omega))>0$ for all $\omega\in\Omega$. Here $\mathcal{F}_t(\omega)$ is the partition cell of $\mathcal{F}_t$ which contains $\omega$. 
\item the map $ (q_t,u_t,s_t):\Omega\ra \Delta(S_t)\times \R^{X_t}\times S_t$ has non-constant SEU $(q_t(\omega),u_t(\omega))$ for all $\omega$ and is adapted to the filtration $\mathcal{F}_t,t\leq T$. Moreover, whenever $\omega'\not\in\mathcal{F}_t(\omega)$ it holds $(q_t(\omega),u_t(\omega),s_t(\omega))\neq (q_t(\omega'),u_t(\omega'),s_t(\omega'))$.

\end{enumerate}


The \textbf{tiebreakers $(\hat q_t,\hat u_t)_{0\leq t\leq T}$} are \emph{regular} and \emph{preference-based}, i.e. 
\begin{enumerate}
\item $\mu(\omega\in\Omega: |M(A_t,\hat q_t,\hat u_t)|=1) = 1$ for all $A_t\in\mathcal{A}_t$. 
\item conditional on $\mathcal{F}_T(\omega)$ the sequence $(\hat q_1,\hat u_1),\dots,(\hat q_T,\hat u_T)$ is independent and 
\item $\mu((\hat q_t,\hat u_t)\in \cdot|\mathcal{F}_T(\omega)) = \mu((\hat q_t,\hat u_t)\in \cdot|q_l(\omega),u_l(\omega), l\leq t)$ for all $t$. 
\end{enumerate}

Simplicity and regularity are necessary for a parsimonious representation, whereas the \emph{preference-based} condition incorporated in C. ensures that the tie-breaking of the agent depends only on her realized SEU in the period at hand (and through it also on past history) but not on the realization of the objective state in the current period. 



We define for a triple $(f_k,A_k,s_k)$ the set
\[
C(f_k,A_k,s_k) = \{\omega\in \Omega: f_k\in M\left(M(A_k,q_k(\omega),u_k(\omega)),\hat q_k(\omega),\hat u_k(\omega)\right), s_k(\omega) = s_k\}.
\]
These are the states of the world which rationalize the observable $(f_k,A_k,s_k)$ in period $k$. Similarly one defines for a history $h^t=(A_0,f_0,s_0;\dots;A_t,f_t,s_t)$ the set of states of the world which rationalize the occurrence of the history. 

\[
C(h^t) = \cap_{l\leq t}C(A_l,f_l,s_l). 
\]

\subsubsection{The general representation.}

We are now ready to write down the most general representation of a dynamic aSCF. It doesn't impose any functional restrictions on the Bernoulli utilities of the agents and only a minimal restriction on the evolution of beliefs.

\begin{definition}\label{thm:filtrdefdrseu}
A Dynamic Random SEU representation (DR-SEU) of the dynamic aSCF $\rho$ is a tuple $\left(\Omega,\mathcal{F}^*,\mu,(\mathcal{F}_t,(q_t,u_t),s_t,(\hat q_t,\hat u_t))_{0\leq t\leq T}\right)$ such that 
\begin{enumerate}
\item $(\Omega,\mathcal{F}^*,\mu)$ is a finitely additive probability space, 
\item the filtration $(\mathcal{F}_t)\subset \mathcal{F}^*$ and the $\mathcal{F}_t-$adapted process $(q_t,u_t,s_t):\Omega\ra \Delta(S_t)\times\R^{X_t}\times S_t $ is simple, 
\item the $\mathcal{F}^*$-measurable tiebreaking process $(\hat q_t,\hat u_t):\Omega\ra\R^{X_t}$ is regular and preference-based and for all $f_t\in A_t, h^{t-1}\in \h_{t-1}(A_t)$,

\[
\rho_t(f_t,A_t,s_t|h^{t-1}) = \mu(C(f_t,A_t,s_t)|C(h^{t-1})). 
\]

\item Either 

D.1. The representation has correct interim beliefs (CIB):\\ $\mu(s_t\in\cdot|q_t) = q_t(\cdot)$ for all $t\in\{0,\dots,T\}$, 

or otherwise

D.2. The representation has no unforeseen contingencies (NUC):\\ $supp\left(\mu(s_t\in\cdot|q_t,u_t)\right) \subset supp\left(q_t(\cdot)\right)$.
\end{enumerate}

\end{definition}
Some explanations are in order. History $h^{t-1}$ happens with the probability $\mu(C(h^{t-1}))$: the state of the world has to be so that for each $l\leq t$ the realized subjective state/SEU $(q_l,u_l)$ picks $f_l$ from $A_l$, $f_l$ survives any possible tie-breaking and finally, in period $l$ the objective state $s_l$ is realized. 

Conditional on $C(h^{t-1})$ occurring, $f_t$ is chosen from $A_t$ only if the realized subjective state in period $t$ given by the pair $(q_t,u_t)$ is so that a SEU-maximizing choice from $A_t$ is $f_t$ and $f_t$ survives any possible tie-breaking. 

Note that the stochastic process of the objective and subjective states is unconstrained, except for the condition D: the agent uses the correct data-generating process conditional on her private information \emph{(correct interim beliefs)} or otherwise she respects the requirement of \emph{(no unforeseen contingencies)}, i.e. the agent never gets hard evidence that her belief process is misspecified. The only other requirement embodied in the definition is that the agent uses Bayes rule to update her beliefs.

\subsubsection{Two special cases: Evolving SEU vs. Gradual Learning.}

As noted before, the general representation doesn't include any behavioral restrictions on the evolution of the beliefs and tastes of the agent besides the SEU assumptions and that the agent remains Bayesian after every history with respect to her beliefs about the future evolution of tastes and objective states. In particular, her beliefs about the future SEU realizations may be incorrect. In this subsection we exclude this possibility.

\paragraph{Evolving SEU.} This specialization of DR-SEU captures a \emph{dynamically sophisticated} agent who \emph{correctly} takes into account the evolution of her future SEU preferences.\footnote{This model of sophisticated behavior still doesn't encompass all possible sophisticated behaviors allowed by the general DR-SEU representation -- see Example 3 concerning \cite{epstein} and \cite{epsteinetal} in subsection \ref{sec:axioms}.} There is an $\mathcal{F}_t-$adapted process of random EU-functionals $v_t,t=0,\dots,T$, the felicity functions, over instantaneous consumption lotteries $l\in \Delta(Z)$ and a discount factor $\delta>0$ such that $u_T = v_T$ and $u_t$ for $t\leq T$ is given by the following Bellman equation.

\begin{equation}\label{eq:evseu-filtr}
u_t(f_t(s_t)) = v_t(f_t^Z(s_t)) +\delta \E_{A_{t+1}\sim f_t^A(s_t),q_{t+1}\cdot u_{t+1}}\left[\max_{f_{t+1}\in A_{t+1}}(q_{t+1}\cdot u_{t+1})(f_{t+1})\middle|\mathcal{F}_t\right].
\end{equation}
Here the conditional expectation $\E[\cdot|\mathcal{F}_t]$ takes into account the randomness coming from the lottery $f_t^A(s_t)$ of the continuation problem as well as from the uncertainty about the SEU of the agent in period $t+1$. The agent makes the correct inference about the future SEU $q_{t+1}\cdot u_{t+1}$, given her current information in $\mathcal{F}_t$. 

\begin{definition}\label{thm:filtrdefevseu}
An Evolving SEU representation of the dynamic aSCF $\rho$ is a tuple \\$(\Omega,\mathcal{F}^*,\mu,(\mathcal{F}_t,q_t,u_t,s_t)_{0\leq t\leq T})$ such that 
\begin{enumerate}
\item $(\Omega,\mathcal{F}^*,\mu,(\mathcal{F}_t,q_t,u_t,s_t)_{0\leq t\leq T})$ is a DR-SEU representation.
\item \eqref{eq:evseu-filtr} holds true for the stochastic process of Bernoulli utilities $u_t, t=0,\dots,T$. 
\end{enumerate}
\end{definition}

If we assume there is only one period ($T=0$) then Evolving SEU collapses to the static model of section \ref{sec:rseu}. The same holds trivially true for the following special case of Evolving SEU. 

\paragraph{Gradual Learning.} This is a specialization of the Evolving SEU representation which captures an agent who is learning about her taste. This results in a martingale condition on the evolution of the felicities $v_t, t=0,\dots,T$. 

\begin{definition}\label{thm:filtrdefgradlearning}
A Gradual Learning (GL-SEU) representation of the dynamic augmented stochastic choice rule $\rho$ is a tuple $(\Omega,\mathcal{F}^*,\mu,(\mathcal{F}_t,q_t,u_t,s_t)_{0\leq t\leq T})$ such that 
\begin{enumerate}
\item $(\Omega,\mathcal{F}^*,\mu,(\mathcal{F}_t,q_t,u_t,s_t)_{0\leq t\leq T})$ is a Evolving-SEU representation.
\item There exists an EU-function $v$ for lotteries in $\Delta(Z)$ such that for all $t=0,\dots,T$ it holds
\begin{equation}
\label{eq:gl}
v_t = \E[v|\mathcal{F}_t]. 
\end{equation}
\end{enumerate}
\end{definition}
 As we show in the following subsection dynamic stochastic choice data are enough to distinguish the two special cases Evolving SEU and Gradual Learning even though the two models coincide in the static setting.\footnote{\cite{fis} showed the same insight in a setting of lotteries and without objective payoff-relevant states.}  

\subsection{Axiomatic Characterizations}\label{sec:axioms}
 The first axiomatization concerns the most general representation. 

\subsubsection{Axioms for DR-SEU}

Axioms for the general representation in Definition \ref{thm:filtrdefdrseu} can be classified in two groups. The first group identifies two types of \emph{observationally equivalent} histories. The second group comprises requiring Axiom 0 from the static setting after each history together with a technical axiom of \emph{history continuity}. 

\paragraph{Overcoming limited observability.}

Similar to \cite{fis} we characterize histories which are equivalent with respect to the information they reveal through two axioms: Contraction History Independence and Linear History Independence. This allows to overcome the limited observability problem.

Given a history $h^{t-1} = (A_0,f_0,s_0;\dots, A_{t-1},f_{t-1},s_{t-1})$ let $(h^{t-1}_{-k},(A'_k,f'_k,s'_k))$ be the history of the form 
$(A_0,f_0,s_0;\dots;A'_k,f'_k,s'_k;\dots;A_{t-1},f_{t-1},s_{t-1})$. That is, the history is changed only in period $k$.  

\begin{definition}\label{thm:CHI}
We say that $g^{t-1}\in\h^{k-1}$ is contraction equivalent to $h^{t-1}$ if for some $k$ we have $g^{t-1} = (h_{-k}^{t-1},(B_k,f_k,s_k))$ where $A_k\subset B_k$ and $\rho_k(f_k,A_k,s_k|h^{k-1}) = \rho_k(f_k,B_k,s_k|h^{k-1})$. 
\end{definition}

That is, when expanding the set of opportunities at a period $k$ but otherwise holding the history $h^{t-1}$ intact, the same stochastic choice results in the period of the expansion.

\paragraph{Axiom 1: Contraction History Independence (CHI)} For all $t\leq T,$ if $g^{t-1}\in\h_{t-1}(A_t)$ is contraction equivalent to $h^{t-1}\in\h_{t-1}(A_t)$ then for all $s_t\in S_t$
\[
\rho_t(\cdot,A_t,s_t|h^{t-1}) = \rho_t(\cdot,A_t,s_t|g^{t-1}).
\]

Intuitively, if the distribution of the preferences is stable, two contraction equivalent histories  should give the same stochastic choice in the future as well, all else equal. This is because in the Definition \ref{thm:CHI} above, elements from $B_k\setminus A_k$ were not attractive to any SEU in the underlying distribution of preferences which has induced either of the histories $h^{t-1}$ and $g^{t-1}$, and given the stability of the underlying distribution of preferences the content of private information revealed from the two histories $h^{t-1}$ and $g^{t-1}$ is the same. This implies that the continuation stochastic choice should be the same.


The other class of equivalent histories is the following. 

\begin{definition}\label{thm:LHI}
A finite set of histories $G^{t-1}\subset \h^{t-1}$ is linearly equivalent to $h^{t-1} = (A_0,f_0,s_0;\dots, A_{t-1},f_{t-1},s_{t-1})$ if 

\[
G^{t-1} = \{(h_{-k}^{t-1},(\lambda A_k+(1-\lambda)B_k,\lambda f_k +(1-\lambda)g_k,s_k)):g_k\in B_k\}.
\]
\end{definition}

That is, a history is changed only at a single period by having the revealed choice $f_k$ from $A_k$ mixed with all possible choices $g_k$ from a menu $B_k$.

One can calculate from the history-dependent aSCF, the probability choices \emph{conditional on a set of histories} $G^{t-1}$ by the formula 

\[
\rho(f_t,A_t,s_t|G^{t-1}) = \sum_{g^{t-1}\in G^{t-1}}\rho_t(f_t,A_t,s_t|g^{t-1})\cdot \frac{\rho(g^{t-1})}{\sum_{h^{t-1}\in G^{t-1}}\rho(h^{t-1})}.
\]

\paragraph{Axiom 2: Linear History Independence (LHI)} For all $t\leq T$ if $G^{t-1}\subset \h_{t-1}(A_t)$ is linearly equivalent to $h^{t-1}\in \h_{t-1}(A_t)$, then $\rho_t(f_t,A_t,s_t|h^{t-1}) = \rho_t(f_t,A_t,s_t|G^{t-1})$.
\vspace{4mm}\\
Intuitively, if we have a set of histories $G^{t-1}$ linearly equivalent to history $h^{t-1}$ with the mixing happening in period $k$, because of SEU-properties, $f_k$ is optimal from $A_k$ if and only if a mixture of the type $\lambda f_k+(1-\lambda)g_k$ with some $g_k$ is optimal from the mixed menu $\lambda \{f_k\}+(1-\lambda)B_k$. Therefore, the mixing doesn't reveal anything new regarding the private information of the agent and so continuation stochastic choice should be the same.

Now let Axioms 1 and 2 hold for the observable and assume the menu $A_t$ is not possible with positive probability after history $h^{t-1}$. Define
\[
\rho^{h^{t-1}}(f_t,A_t,s_t): = \rho_t(f_t,A_t,s_t|\lambda h^{t-1}+(1-\lambda)d^{t-1}),
\]
for some history $d^{t-1} = (g_k,\{g_k\},s_k)_{0\le k\le t-1}$ which leads to menu $A_t$ with probability one. LHI ensures that the construction is well-defined and coincides with $\rho_t(f_t,A_t,s_t|h^{t-1})$ whenever $A_t\in \A_t(h^{t-1})$. Note here that histories of the type $d^{t-1}$ don't reveal anything about the private information of the agent. They should be interpreted as tools for the analyst to obtain variation in the data, much needed for identification of the underlying parameters.

\paragraph{History-Dependent R-SEU and History Continuity.}

We model agents who in every period are SEU but have private information about their preferences. Therefore, the data need to satisfy Axiom 0 from the static setting. This is the content of the next Axiom. 

\paragraph{Axiom 3: R-SEU in every period} For all $t\leq T$  and $h^{t-1}$, each of the history-dependent aSCFs $\rho_t(\cdot|h^{t-1})$ satisfies Axiom 0 from the static setting, i.e. it has a R-SEU representation.

The last axiom needed to characterize DR-SEU is a technical form of Continuity. The following definition gives our concept of continuity for histories and is adapted from \cite{fis}.

\begin{definition}\label{thm:defmixconv}
1) For a sequence of acts $f_n$ say that $f_n$ converges in mixture to the act $f$, written as $f_n\ra^{m} f$, if there exists $h\in \F$ and $\alpha_n\ra 0$ with $f_n = \alpha_nh+ (1-\alpha_n)f$.

2) For a sequence of menus $(B^n)_n\subset \A$ say that $B_n$ converges in mixture to the act $f$, written $B^n\ra^m f$, if there exists $B\in \A$ and $\alpha_n$ with $B^n = \alpha_nB + (1-\alpha_n)\{f\}$. 

3) For a sequence of menus $(A^n)_n\subset\A$ say that $A_n$ converges in mixture to the menu $A$, written $A^n\ra^m A$, if for each $f\in A$ there is a sequence $(B_f^n)_n\subset \mathcal{A}$ such that $B_f^n\ra^m\{f\}$ and $A^n = \cup_{f\in A}B_f^n$.
\end{definition}

We next define menus and histories without ties, a concept we also come across later. 

\begin{definition}\label{thm:menuwoties} For any $0\leq t\leq T$ and $h^{t-1}\in\h_{t-1}$ the set of period $t-$menus without ties conditional on $h^{t-1}$ is denoted by $\A^*_t(h^{t-1})$ and consists of all $A_t\in\A_t$ such that for any $f_t\in A_t$ and any sequences $f_t^n\ra^mf_t,s_t\in S_t$ and $B_t^n\ra^mA_t\setminus\{f_t\}$ we have 
\[
\lim_n\rho_t(f_t^n,B_t^n\cup\{f_t^n\},s_t|h^{t-1}) = \rho_t(f_t,A_t,s_t).
\]
For $t=0$ we write $\A_0^*:=\A_0^*(h^{t-1})$. The set of period $t$ histories without ties is $\h_{t}^*:=\{h^t = (A_0,f_0,s_0;\dots;A_t,f_t,s_t)\in \h_{t}: A_k\in\A_t^*(h^{k-1}),\text{ for all }k\leq t\}$. 
\end{definition}

Intuitively, a menu $A_t$ without ties is so that no matter the SEU of the agent, she never needs to perform tie-breaking. Therefore the menu can be perturbed in \emph{any} direction and the probabilities of observing the perturbed act $f_t^n$ chosen from the perturbed menu $B_t^n$ converge to the probability of observing $f_t$ chosen from $A_t$. A history without ties is so that every menu occurring in it is without ties.


The technical Continuity axiom reads then as follows. 

\paragraph{Axiom 4: History Continuity} For all $t\leq T, A_{t},f_{t}$ and $h^{t-1}\in\h_{t-1}$,
\[
\rho_{t}(f_{t},A_{t},s_{t}|h^{t-1})\in co\{\lim_{n}\rho_{t+1}(f_{t+1},A_{t-1},s_{t}|h^{t-1,n}):h^{t,n}\ra^m h^t,h^{t-1,n}\in \h_{t-1}^*\}.
\]

Whenever a history $h^{t-1}$ is perturbed slightly, the change is in choices and decision problems as the objective states $s_k,k\le t-1$ come from a finite set. If the perturbation comes from menus without ties so that the agent doesn't need to perform tie-breaking along the path of the history, the probabilities of observing $f_t$ chosen from $A_t$ as well as $s_t$ realized should change continuously with the history.

\begin{theorem}\label{thm:drseufiltrthm}
For a dynamic aSCF $\rho$ the Axioms 1-4 are equivalent to the existence of a DR-SEU representation.  
\end{theorem}

If we add Existence of a Best Act and C-Determinism* from subsection \ref{subsec:infoR} to Axiom 0, we get a characterization of the special case of DR-SEU representation where the agent knows her Bernoulli utility $u_t$ for certain in every period. That is, she is learning only about the objective states. 

\begin{proposition}
[Informational Representation for aSCFs]\label{thm:infordynamic}

Assume that a dynamic aSCF $\rho$ has a DR-SEU representation with regular measure $\mu$.  
Assume that there exists a constant best prize. 

Then the following are equivalent after every history $h^t$ observed with positive probability. 
\begin{enumerate}
\item For all $(q_t,u_t),(p_t,v_t)\in RSSupp(\bar\rho_t(\cdot|h^t))$ $u$ is a positive affine transformation of $v_t$.
\item $\rho_t(\cdot|h^t)$ satisfies C-Determinism*.
\end{enumerate}

\end{proposition}

\subsubsection{Evolving SEU}


\paragraph{History-dependent revealed preference.} 

Stochastic choice coupled with the SEU assumption imposes enough structure on data to allow the identification of a \emph{history-dependent} preference relation $\better_{h_t}$ on acts. Intuitively, if the `tail' of the history $h^t$ is $(f_t,A_t,s_t)$, the SEU draw $(q_t,u_t)$ in period $t$ has to rationalize the choice of $f_t$ from $A_t$. For every pair of acts $g_t,r_t$ we can then define $g_t\better_{h^t} r_t$ if $g_t$ is weakly better than $r_t$ for every possible draw of SEU from $N(f_t,A_t)$ that happens with positive probability under the respective DR-SEU representation. Note that this implies that $\better_{h^t}$ is potentially incomplete. The following definition adds tie-breaking considerations to the intuition we just explained.

\begin{definition}\label{thm:histpref}
For each $t\leq T-1$ and $h^t=(h^{t-1},A_t,f_t,s_t)\in\h_{t}$ we define the relation $\better_{h^t}$ on $\F_t$ as follows: 
For any $g_t,g'_t\in \F_t$  we have $g_t\better_{h^t} r_t$ if there exist sequences in $\F_t$ with $g_t^n\ra^m g_t$ and $r_t^n\ra^m r_t$ such that 
\[
\rho_t\left(\frac{1}{2}f_t+\frac{1}{2}r_t^n, \frac{1}{2}A_t+\frac{1}{2}\{g_t^n,r_t^n\},s_t\middle|h^{t-1}\right) = 0,\text{ for all }n.
\]
Finally, let $\sim_{h^t}, \sbetter_{h^t}$ be the indifference and strict part of $\better_{h^t}$.
\end{definition}

Because of Axiom 0 in DR-SEU, specifically the \emph{no unforeseen contingencies (NUC)} assumption, the preference $\better_{h^t}$ doesn't depend on the realization of the period$-t$ objective state $s_t$ as long as that state has positive probability under $h^{t-1}$. 

We now put the additional axioms characterizing Evolving SEU on $\better_{h^t}$. 

\paragraph{Axiom 4: Separability.}

For all $t\leq T-1, g^t, r^t\in \F_t$ we have $g_t\sim_{h^t} r_t$ whenever $g_t^A(s_t) =^d r_t^A(s_t)$ and $g_t^Z(s_t) =^d r_t^Z(s_t)$ for all $s_t\in S_t$.
\vspace{3mm}\\
This says that whenever the marginal distributions over the current prize lottery and continuation menu of two acts after a history $h^t$ are the same then the two acts are indifferent under the revealed preference after the history. It ensures that Bernoulli utility $u_t$ has the form 

\begin{equation}\label{eq:helpevseu}
u_t(z_t,A_{t+1}) = v_t(z_t)+\delta V_t(A_{t+1}).
\end{equation}

Axiom 4 allows the definition of a history-dependent menu preference over continuation menus.

\begin{definition}\label{thm:filtrmenupref}
Fix a $z_t\in Z$. Take a $h^t\in \h_t$ and define an ex-post menu preference $\better_{h^t}$ over $\A_{t+1}$ by
\[
A_{t+1}\better_{h^t}B_{t+1},\text{ if  }\delta_{(z_t,A_{t+1})}\better_{h^t}\delta_{(z_t,B_{t+1})}.
\]
\end{definition}
\vspace{2mm}
We now add other menu preference axioms to shape the menu preference $V$ from \eqref{eq:helpevseu} into the form needed for \eqref{eq:evseu-filtr}. The next three Axioms are standard.

\paragraph{Axiom 5: Monotonicity.} Whenever $A_{t+1}\subseteq B_{t+1}$ it holds $B_{t+1}\better_{h^t}A_{t+1}$.
\paragraph{Axiom 6: Indifference to Timing.} For any $A_{t+1},B_{t+1}$ and $\alpha\in (0,1)$ we have 
\[
\alpha A_{t+1}+(1-\alpha)B_{t+1}\sim_{h^t}\alpha A_{t+1} + (1-\alpha)B_{t+1}.
\]
\paragraph{Axiom 7: Menu Non-Degeneracy.} There exists $A_{t+1},B_{t+1}$ such that $\delta_{(z_t,B_{t+1})}\better_{h^t}\delta_{(z_t,A_{t+1})}$ for all $z_t$.  
\vspace{3mm}\\
Before stating the next axiom, we introduce an operation on menus which produces for every menu a constant menu containing all the lotteries in its acts.
Formally, in a setting with AA-acts from $\F$ for a menu $A\subset \F$ define the menu of constant acts from $\bar A$ as follows.
\[
\bar A = \{g\in \F: g\text{ constant act with } g(s) = f(s')\text{ for some }f\in A, s,s'\in S\}. 
\]

The following axiom ensures that the menu preference $\better_{h^t}$ of Definition \ref{thm:filtrmenupref} can be represented by Expected Utility preferences with stochastic but state-independent Bernoulli utilities. 

\paragraph{Axiom 8: Weak Dominance.} For any $A_{t+1}\in \A_{t+1}$ it holds $\bar A_{t+1}\better_{h^t}A_{t+1}$.
\vspace{3mm}\\
Intuitively, from the perspective of the end of period $t$ and compared to the menu $A_{t+1}$, the menu $\bar A_{t+1}$ offers insurance w.r.t. the stochasticity of both beliefs and tastes as ex-post in $t+1$ the agent can choose her best lottery from any act in $A_{t+1}$ whereas in $A_{t+1}$ which lottery the agent ultimately faces depends on the realization of the objective state $s_{t+1}$.  

\paragraph{Menu Finiteness (technical).}

Next we define what it means for a menu preference to be finite. This is a technical property we need for tractability.

\begin{definition}\label{thm:menufinite}
For $\better$ a menu preference over some set of prizes $X$ say that it satisfies \emph{Finiteness} if there exists $K\in \N$ such that for menu $A$ there exists $B\subset A$ with $|B|\leq K$ and so that $B\sim A$.
\end{definition}

\paragraph{Axiom 9: Finiteness of Menu preference} For all $h^t\in \h_t$, the menu preference on $\A_{t+1}$ derived from $\better_{h^t}$ satisfies \emph{Finiteness} as in Definition \ref{thm:menufinite}. 
\vspace{4mm}\\
Finally, we add the sophistication axiom which ensures that the agent correctly predicts her future beliefs and tastes. Intuitively, if enlarging the menu $A_{t+1}$ to $B_{t+1}$ is valuable for the agent just after the realization of history $h^t$ and her beliefs about the future evolution of her preferences are correct, this is because there are possible draws of SEUs in period $t+1$ for which elements in $B_{t+1}\setminus A_{t+1}$ are optimal. This should be then reflected in the $h^t-$dependent stochastic choice from $B_{t+1}$. 


\paragraph{Axiom 10: Sophistication} 

For all $t\leq T-1$, $h^t\in\h_{t}$ and $A_{t+1}\subset B_{t+1}\in \A_{t+1}^*(h^t)$, the following are equivalent
\begin{enumerate}
\item $\rho_{t+1}(f_{t+1},B_{t+1},s_{t+1}|h^t)>0$ for some $f_{t+1}\in B_{t+1}\setminus A_{t+1}$ and some $s_{t+1}\in S_{t+1}$.
\item $B_{t+1}\sbetter_{h^t}A_{t+1}$. 
\end{enumerate}



\begin{theorem}\label{thm:evseufiltrthm}
For a dynamic aSCF $\rho$ satisfying a DR-SEU representation the Axioms 4-10 are equivalent to the existence of an Evolving SEU representation. 
\end{theorem}

Next, we note down a special cases of the Evolving SEU representation which can be used to model data from a population of agents with deterministic but heterogeneous tastes who are learning about payoff-relevant objective states. Thus, uncertainty about taste resolves in the first period, i.e. after an agent from the population is `drawn', but there is persistent uncertainty about payoff-relevant objective states. 

\paragraph{Example 3: Stochastic taste only in period zero.}
If we replace Axiom 8 with the following Strong Dominance axiom\footnote{This is what \cite{dlst} call Dominance in their main theorem.} then we get a version of Evolving SEU, where tastes are stochastic only in $t=0$ and the profile of future tastes is completely determined after every period-0 history.

\paragraph{Axiom 8': Strong Dominance}For all $0\leq t\leq T-1$ and $h^t\in \h_t$ we have: \\
If $f_{t+1}\in A_{t+1}$ and $\{f_{t+1}(s_{t+1})\}\better_{h^t}\{g_{t+1}(s_{t+1})\}$ for all $s_{t+1}\in S_{t+1}$ then \\$A_{t+1}\sim_{h^t} A_{t+1}\cup\{g_{t+1}\}$. 

Intuitively, if the Bernoulli utility is deterministic and if an act is better than another uniformly across all states, adding the dominated act to a menu which contains the dominating act doesn't make the menu more valuable. 


\begin{proposition}\label{thm:evseudomfiltrthm}
For a dynamic aSCF $\rho$ satisfying a DR-SEU representation the Axioms 4-7,8',9 and 10 are equivalent to the existence of an Evolving SEU representation where stochasticity of tastes is resolved at the end of period $0$.  
\end{proposition}

Finally, we note a special case of DR-SEU involving a sophisticated agent but which doesn't have an Evolving SEU representation.

\paragraph{Example 4.} \cite{epstein} and \cite{epsteinetal} consider a sophisticated agent who experiences temptation in beliefs and therefore updates her beliefs about objective states in a subjective way not necessarily conforming to Bayesian updating with respect to the true data-generating process. The ex-post choice versions of these models are special cases of DR-SEU and satisfy C-Determinism*, but they violate Axiom 5 (Monotonicity), which is necessary for an Evolving SEU representation.\footnote{The model in \cite{epsteinetal} features infinite horizon so the statement above holds for its finite horizon version.} 

\subsubsection{Gradual Learning}

Gradual Learning imposes additional restrictions on the evolution of Bernoulli utilities of an Evolving SEU representation: the agent is learning about a fixed taste. 

To explain the three additional Axioms which lead to the Gradual Learning representation we introduce some notation.

For some $t\leq T-1$ and given a sequence $l_t,\dots,l_T\in \Delta(Z)$ of consumption lotteries, let the stream of lotteries $(l_t,\dots,l_T)\in \Delta(X_t)\subset \F_t$ be the period-$t$ lottery that at every period $\tau\geq t$ yields consumption according to $l_{\tau}$. Formally, for any consumption lottery $l\in\Delta(Z)$ and menu of constant acts $A_{t+1}\in \A^c_{t+1}$ define $(l,A_{t+1})\in \Delta(X_{t+1})$ to be the lottery which has stochastic consumption now and fixed continuation with probability one.\footnote{This is similar to the definition in section 4.3 of \cite{fis}. } Then $(l_t,\dots,l_T) = (l_t,A_{t+1})\in \Delta(X_t)$ is defined recursively from period $T$ backwards by $A_T = \{l_T\}\in\A_T$ and $A_s = \{(l_s,A_{s+1})\}\in\A_s$ for all $s=t+1,\dots,T-1$. We write $(l_t,\dots,l_{\tau},m,\dots,m)$ if $l_{t+1} = \dots =l_T$ for some $m\in \Delta(Z)$ and $\tau\geq t$. 

\paragraph{Axiom 11: Stationary Preference over Lotteries [FIS].} For all $t\leq T-1, l,m,n\in \Delta(Z)$ and $h^t$ we have 
\[
(l,n,\dots,n)\sbetter_{h^t}(m,n,\dots,n)\text{ if and only if }(n,l,\dots,n)\sbetter_{h^t}(n,m,n,\dots,n). 
\]
Intuitively, if and only if the felicity $v_t$ today is just the average of the future felicity $v_{t+1}$ tomorrow, it holds true from today's perspective that postponing the \emph{choice} between two lotteries by a period results in the same ranking as for the case that the choice is made immediately. 

For the second axiom, just as in \cite{fis} for lotteries $l,m\in\Delta(Z)$, we say they are \emph{$h^t$-non-indifferent} if $(l,n\dots,n)\not\sim_{h^t}(m,n,\dots,n)$ for some $n\in \Delta(Z)$. 

Moreover, to avoid tautologies we require a non-degeneracy condition. 

\paragraph{Condition 1: Consumption Non-degeneracy}\label{cond} For all $t\leq T-1$ and $h^t$, there exists $h^t-$non-indifferent $l,m\in \Delta(Z)$. 

\paragraph{Axiom 12: Constant Intertemporal Trade-off [FIS].} For all $t,\tau\leq T-1,$ if $l,m$ are $h^t-$non-indifferent and $\hat l,\hat m$ are $g^{\tau}$-non-indifferent, then for all $\alpha\in [0,1]$ and $n\in \Delta(Z)$:

\begin{align*}
(l,m,n,\dots,n)\sim_{h^t}&(\alpha l+(1-\alpha)m,\alpha l + (1-\alpha)m,n,\dots,n)\\
\equivalent\\
(\hat l,\hat m,n,\dots,n)\sim_{g^{\tau}}&(\alpha \hat l+(1-\alpha)\hat m,\alpha \hat l + (1-\alpha)\hat m,n,\dots,n). 
\end{align*}
This ensures that the discounting factor $\delta$ from the Evolving SEU representation is unique. 

Finally, we note down the classical axiom which gives $\delta<1$.
\paragraph{Axiom 13: Impatience [FIS].} For all $t\leq T-1, h^t$ and $l,m,n\in \Delta(Z)$,\\if $(l,n,\dots,n)\sbetter_{h^t}(m,n,\dots,n)$, then $(l,m,n,\dots,n)\sbetter_{h^t}(m,l,n,\dots,n)$.
\vspace{4mm}\\The characterization result for Gradual Learning is then as follows.

\begin{theorem}\label{thm:glfiltrthm}
Assume the aSCF $\rho$ satisfies an Evolving SEU model and assume \hyperref[cond]{Condition 1} is satisfied. Then Axioms 11-13 are equivalent to the existence of a Gradual Learning representation for $\rho$. 
\end{theorem}

\subsection{Uniqueness}

The following Proposition proved in Section 4 of the online appendix shows that all three representations are unique up to positive affine transformations of the Bernoulli utilities the agent uses to evaluate lotteries over the respective consequence spaces $X_t$ as well as up to relabeling of the states of the world $\omega$ and of the objective states $s_t$. The characterization of uniqueness is a prerequisite for the comparative static exercises of Section \ref{sec:comp}. The results mirror closely the identification in \cite{fis} adapted to our more general setting with agents who hold (possibly incorrect) beliefs about payoff-relevant states.

\begin{proposition}\label{thm:uniquenessfiltr}
1) Suppose that a dynamic aSCF $\rho$ admits two DR-SEU representations $\left(\Omega,\mathcal{F}^*,\mu,(\mathcal{F}_t,(q_t,u_t),s_t,(\hat q_t,\hat u_t))_{0\leq t\leq T}\right)$ and $\left(\Omega',\mathcal{F}'^*,\mu',(\mathcal{F}'_t,(q'_t,u'_t),s'_t,(\hat q'_t,\hat u'_t))_{0\leq t\leq T}\right)$.

Then there exists a bijection $\phi_t:\mathcal{F}_t\ra\mathcal{F}'_t$ and $\mathcal{F}_t$-measurable functions $\alpha_t:\Omega\ra\R_{++}$ and $\beta_t:\Omega\ra\R$ such that for all $\omega\in\Omega$:

\begin{enumerate}[(i)]
\item $\mu(\mathcal{F}_0(\omega)) = \mu'(\phi_0(\mathcal{F}_0(\omega)))$ and $\mu(\mathcal{F}_t(\omega)|\mathcal{F}_{t-1}(\omega)) = \mu'(\phi_t(\mathcal{F}_t(\omega))|\phi_t(\mathcal{F}_{t-1}(\omega)))$ if $t\geq 1$;
\item $q'_t\equiv q_t$ for all $t\geq 1$, $u_t(\omega) = \alpha_t(\omega)u'_t(\omega')+\beta_t(\omega)$ whenever $\omega'\in \phi_t(\mathcal{F}_t(\omega))$;
\item $\mu((\hat q_t,\hat u_t)\in B_t(\omega)|\mathcal{F}_t(\omega)) = \mu'((\hat q'_t,\hat u'_t)\in \phi_t(B_t(\omega))|\mathcal{F'}_t(\phi_t(\omega)))$ for any $B_t(\omega) = \{(p_t,v_t)\in \Delta(S_t)\times \R^{X_t}: f_t\in M(M(A_t,(q_t(\omega),q_t(\omega)),p_t,v_t))\}$ for some $f_t\in A_t,$ $A_t\in\A_t$. 
\end{enumerate}

2) If $\rho$ admits two Evolving-SEU representations then in addition to (i)-(iii) above we have 
\begin{enumerate}[(i)]
  \setcounter{enumi}{3}
\item $\alpha_t(\omega) = \alpha_0(\omega)\left(\frac{\hat\delta}{\delta}\right)^t,$ for all $\omega\in \Omega$ and $t\geq 0$;
\item $v_t(\omega) = \alpha_t(\omega)v'_t(\omega')+\gamma_t(\omega)$ whenever $\omega'\in\phi_t(\mathcal{F}_t(\omega))$, where $\gamma_T(\omega) = \beta_T(\omega)$ and $\gamma_t(\omega) = \beta_t(\omega) -\delta\E[\beta_{t+1}|\mathcal{F}_t(\omega)]$ if $t\leq T-1$.
\end{enumerate}

3) If $\rho$ has two Gradual Learning Representations and satisfies \hyperref[cond]{Condition 1}, then in addition to (i)-(v) the following holds

\begin{enumerate}[(i)]
  \setcounter{enumi}{5}
  \item $\delta = \delta'$
  \item $\beta_t(\omega) = \frac{1-\delta^{T-t+1}}{1-\delta}\E[\beta_T|\mathcal{F}_t(\omega)]$.
  \end{enumerate}

\end{proposition}

1) shows that agent's choices uniquely identify the evolution of her private information in both relevant dimensions: tastes and beliefs. The lack of identification for the Bernoulli utility functions $u_t$ is unavoidable. Intuitively, when one rescales the Bernoulli utilities by a factor which depends only on information up to time $t$, the sets of maximal elements $M(A_t;q_t,u_t)$ don't change. 

2) shows that the Evolving SEU model allows for stronger identification of the Bernoulli utilities. The scaling factor of Bernoulli utilities needs to be measurable with respect to the information available at $t=0$. This is because in the Evolving SEU model the utility of the continuation problem enters \emph{cardinally} into the overall utility of choosing an act from a menu. One can then use the same information, namely that available in period $t=0$, to build a measuring rod with which utilities can be compared across periods. Obviously, the scaling factor $\alpha_t$ still depends on the state of the world $\omega$. In a population interpretation of the observable aSCF this means that different agents may use different information available at $t=0$ to compare utils intertemporally.

3) shows that the Gradual Learning model improves on the identification properties of the Evolving SEU model because the discount factor is identified uniquely. This is a consequence of the Constant Intertemporal Trade-Off Axiom. Under that Axiom any possible scaling of the Bernoulli utilities in addition to depending on time $t=0$ information only, has to be constant over time.

\section{Comparative Statics Results}\label{sec:comp}

This section offers simple comparative statics results under varying assumptions about the representations of the observable aSCF. The characterizations are simple because aSCFs represent very rich data sources.

\subsection{A measure of belief biasedness}

If the analyst doesn't observe anything about the realization of objective states, it is impossible to discuss correctness of beliefs of the agents. Most of the canonical models of behavior based only on menu choice as an observable, as in \cite{dlst} and \cite{ks} and many others, as well as models of stochastic choice without observable objective states as in \cite{lu} cannot address questions of belief biasedness. In this part we illustrate what is possible if the observable of the analyst consists of aSCFs.

For simplicity we assume there are best and worst prizes which coincide for all agents considered: that is, constant acts $\underline{f}, \bar f$ such that for every aSCF $\rho$ considered it holds:
\[
\text{ for every } f\neq \underline f\text{ we have }\bar\rho(f,\{\underline{f},f\}) = 1\text{ and for every }f\neq \bar f\text{ we have }\bar\rho(f,\{\bar f,f\}) = 0.
\]
Moreover, for simplicity we assume the agents have \emph{the same} non-stochastic taste $u$ and focus on comparative statics related to beliefs.\footnote{Formally speaking all aSCF/SCF-s in this subsection satisfy C-determinism* -- choice is stochastic because beliefs of an agent are stochastic, besides possible randomness coming from tie-breaking. 
In this setting all the machinery of \cite{lu}, esp. the related test acts can be used (see online appendix). The conditions on the SCFs which imply that the taste of distinct agents are the same are available upon request.}

We assume there is an underlying state of the world $\omega$ coming from a finite set $\Omega$. For example in Example 2 $\omega$ may encode gender or ethnicity. An analyst observes two agents $i=1,2$ who are interested in the realization of an objective payoff-relevant state $s\in S$. A state of the world $\omega$ goes hand in hand with a set of beliefs about the possible realizations of $s$ for each agent and a true data-generating-process (DGP). The analyst observes the aSCFs of the agents which are assumed to have the following form.

\begin{equation}\label{eq:helpcorr}
\rho_i(f,A,s) = \sum_{\omega\in\Omega}\mu(\omega,s)\tau^i_{q_i(\omega),u}(f,A),\quad i=1,2.
\end{equation}
Here $\mu\in \Delta(\omega\times S)$ and the tie-breakers $\tau^i_{q_i(\omega),u}$ depend only on the realized SEU $(q_i(\omega),u)$ of agent $i$.  

We assume $\mu$ is either known by the analyst (e.g. an experiment in a lab) or the analyst gleans it from the data $\rho_i$ using Theorem \ref{thm:SREU}.  



Now assume the analyst fixes a direction $q(\omega)\in\Delta(S)$ for possible biases for every $\omega\in\Omega$ and is interested in finding out how biased, if at all, the beliefs of the agents are in the direction $\{q(\omega)\}_{\omega\in\Omega}$. The analyst might think that a possible bias for $\omega$ corresponds to some `extreme' $q(\omega)\neq \mu(\cdot|\omega)$.\footnote{For example, if $\Omega$ encodes gender and the true DGP is that $\mu(\cdot|\omega)$ is independent of $\omega$, a possible extreme bias might be to assume that for $\omega=male$, $q(\omega)$ is `tilted' towards more favorable realizations of the objective state $s$ whereas for $\omega = female$, $q(\omega)$ is `tilted' towards more unfavorable realizations of the objective state $s$. As Example 1 illustrates, this might be the case with employment data depending on the vocation and job properties.}

A natural way in terms of the aSCF to say that an agent is biased in the direction $\{q(\omega)\}_{\omega\in\Omega}$ and that, say, agent 1 has \emph{uniformly less biased beliefs} than agent 2 is to require the following in terms of the representation.



\begin{definition}\label{thm:morecorrdef}
1) Agent i's beliefs are biased toward the direction $q:=\{q(\omega)\}_{\omega\in\Omega}$ if and only if 
there exists a vector of weights $\{a(\omega)\}_{\omega\in \Omega}\in [0,1]^{\Omega}$ such that the following holds 

$$q_i(\omega) = a_i(\omega)q(\omega)+ (1-a_i(\omega))\mu(\cdot|\omega)\text{ for some } a(\omega)\in [0,1].$$

2) Agent 1's beliefs are uniformly less biased toward  $q$ than agent 2's beliefs if and only if 
it holds for every $\omega\in \Omega$ that $0\leq a_1(\omega)\leq a_2(\omega)\leq 1$.
\end{definition}

Figure \ref{fig:morecorr} helps describe the definition.

\begin{figure}[H]
\centering
\includegraphics[width=8cm]{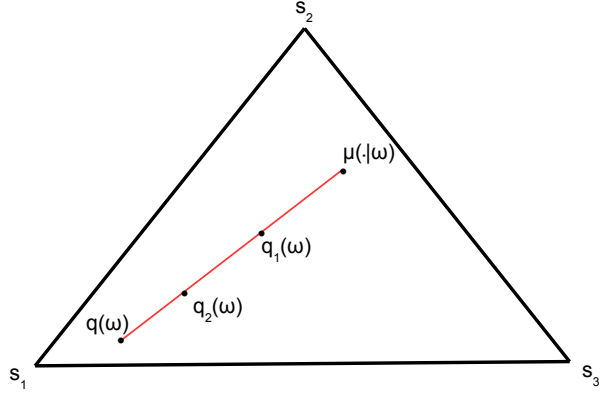}
\caption{In state $\omega$ agent 1 has beliefs more aligned to true DGP than agent 2.}
\label{fig:morecorr}
\end{figure}

The associated menu preference approach from \cite{lu} provides a way to identify the weights of the bias in some direction $q$. 

\begin{definition}[\cite{lu}]\label{thm:inducedmenupref}
Given $\bar\rho$, let the associated menu preference $\better_{\bar\rho}$ be given by the utility function on menus $V_{\bar\rho}:\A\ra[0,1]$ with 
\[
V_{\bar\rho}(A) = \int_0^1 \bar\rho(A,A\cup\{\alpha\underline{f}+(1-\alpha)\bar f\})da.
\]
\end{definition}

For a fixed weight in $\alpha\in[0,1]$ the value $\bar\rho(A,A\cup\{\alpha\underline{f}+(1-\alpha)\bar f\})$ gives the probability that an element of $A$ beats the act $\alpha\underline{f}+(1-\alpha)\bar f$, that is, the probability that the agent prefers items out of the menu $A$ instead of the \emph{test act} with weight $\alpha$ on the worst prize. Intuitively speaking, a menu is more valuable in the associated menu preference of a SCF if \emph{in the aggregate} its elements are more preferred than test acts $\alpha\underline{f}+(1-\alpha)\bar f$. \cite{lu} shows that, up to tie-breaking considerations, every stochastic choice function as $\bar\rho$ can be characterized through its associated menu preference $V_{\bar\rho}$. Thus, except for tie-breaking, $\bar\rho$ contains no more information about the agent than $V_{\bar\rho}$ does.

Given the direction of bias $q$ define for every weight of biases $a:\Omega\ra [0,1]$ the associated menu preference where the agent gives weight $a(\omega)$ to the belief $q(\omega)$ whenever the state of the world $\Omega$ is realized.

\[
V_a(A) = \int_{\Omega} \max_{f\in A}\left[a(\omega)q(\omega)+(1-a(\omega))\mu(\cdot|\omega)\right]\cdot(u\circ f)\mu(d\omega). 
\]

This gives a map $\psi_q:[0,1]^{\Omega}\ra\{\text{menu preferences}\}$.\footnote{The image of this map can naturally be identified with value functions of menu preferences.} Intuitively, one can interpret any element $a\in [0,1]^{\Omega}$ as a vector of degrees of biasedness towards $q$. 

Note that the construction of the map $\psi_q$ comes directly from the data: the aSCF-s $\rho_i,i=1,2$ give $\mu(\cdot|\omega)$ (or the analyst knows this already) and the analyst picks the bias vector $q$. 
Once can show that once a bias direction $q$ is fixed, every weight vector $a$ defines a \emph{unique} menu preference $V_a$.

This allows the following characterization of the degree of belief-biasedness in direction $q$ in terms of observables/data. Here, recall that the induced menu preference from the stochastic choice function $\bar\rho$ is also completely constructed from stochastic choice data. 

\begin{proposition}\label{thm:morecorrchar}
Assume that the two aSCF $\rho_i,i=1,2$ are as in \eqref{eq:helpcorr} and consider a vector of biases $q\in \Delta(S)^{\Omega}$. It holds:

\begin{enumerate}
\item Agent $i$'s beliefs are uniformly biased toward the direction $q$ with degree $a\in [0,1]^{\Omega}$ if and only if $$\psi^{-1}_q(V_{\bar\rho_i}) = a,$$ i.e. if and only if $a$ is the image under $\psi_q$ of the menu preference induced from stochastic choice. 
\item Agent 1's beliefs are uniformly less biased toward the direction $q$ than agent 2's beliefs if and only if $$\psi^{-1}_q(V_{\bar\rho_1})\leq \psi^{-1}_q(V_{\bar\rho_2}).$$
\end{enumerate}

\end{proposition}




Note that by varying $q$, an analyst can use the induced menu preference of $\bar\rho_i$ (from Definition \ref{thm:inducedmenupref}) to identify the \emph{actual} bias \emph{direction} of an agent whenever her aSCF doesn't satisfy the Axiom of Correct Interim Beliefs from Definition \ref{thm:filtras-r-seu}. 

\paragraph{Example 1 continued.} 
In the context of Example 1 from the Introduction, subsection \ref{sec:disc} this Proposition states that stochastic choice data are enough for the analyst to identify the incorrect beliefs $\hat q_i, i=1,2$. Namely, assume directions for the biases $q(s_0') = (1,0)$ and $q(s_0'') = (0,1)$. These correspond to the `extreme' beliefs that a candidate with $s_0=s_0'$ will always deliver outcome $s_1=g$ and a candidate $s_0=s_0''$ will always deliver outcome $s_1=b$. The Proposition delivers then $a(s_0') = 2\hat q_1-1$ and $a(s_0'') = 1-2\hat q_2$ so that whenever $a:S_0\ra[0,1]$ is identified from data the analyst can recover the incorrect beliefs $\hat q_i, i=1,2$.
\vspace{3mm}\\
An alternative to the vector of weights $a\in [0,1]^{\Omega}$ on biases is to require instead a uniform weight $a\in [0,1]$ on biases which is independent of the realization of the characteristic $\omega$. The conditions on the induced menu preferences identifying the bias $a$ are then simpler than in Proposition \ref{thm:morecorrdef}.\footnote{Defining the menu preference of an \emph{unbiased} agent (a counterfactual) and of a \emph{fully biased} agent, the condition of biasedness is that the induced menu preference of the agent is a convex combination of the menu preferences of the unbiased and fully biased agent and that $a$ corresponds to the weight on the biased agent.}  Nevertheless, in applications, the bias weights will usually differ according to the realization of the characteristic $\omega$. For example, one might expect in some cases the agent to use the correct conditional DGP $\mu(\cdot|\omega)$ and in other cases of realized $\omega$-s to use a very biased belief much closer to an `extreme' $q(\omega)\neq\mu(\cdot|\omega)$. Therefore, here we have focused on the concept of Definition \ref{thm:morecorrdef} which allows for this additional flexibility.

\subsection{The speed of learning about taste}

In this subsection we consider agents in a dynamic setting ($T\ge 1$) whose stochastic choice data satisfy the Gradual Learning model and discuss measures across agents of the speed of learning about taste. We assume for all agents considered in this subsection that at time $t=0$ their taste is not deterministic. 
Formally, we require the following conditions on any aSCF of this section. 

\paragraph{Assumptions} For all aSCFs in this subsection it holds true: 
\begin{enumerate}
\item $\rho$ satisfies a Gradual Learning (GL) representation with $T\geq 1$ and sequence of felicities $v_t,t\in\{0,\dots,T\}$.  
\item $\bar\rho_0$ doesn't satisfy C-Determinism*.
\end{enumerate}
\vspace{4mm}
B. ascertains that there is non-trivial learning about taste for an agent. 
On the other hand, due to Sophistication (assumed as part of A.), if an agent learns her future taste at the end of a period $t$, her taste remains deterministic in all future periods. 

Recall that the preferences $\better_{h^t}$ on continuation menus $\A_{t+1}$ for some history $h^t\in\h_t$ from Definition \ref{thm:filtrmenupref} are derived solely from stochastic choice data. If for an agent her uncertainty about future taste is resolved after a history $h^t$ the derived menu preference on $\A_{t+1}$ derived from $\better_{h^t}$ will satisfy Strong Dominance. On the other hand, Strong Dominance will be violated for $\better_{h^t}$ whenever an agent's uncertainty about future taste doesn't get resolved after history $h^t$. 
The same holds if instead of looking at whether Strong Dominance is satisfied we look at whether C-Determinism* is satisfied.

This suggests a simple way to define the \emph{speed of learning about taste} of an agent who satisfies a Gradual Learning model as well as an equally simple way to rank such agents according to their speed of learning about taste.

\begin{definition}
\label{thm:speedoflearningdef}
1) Say that an agent learns her future taste after history $h^t$ if her derived menu preference on $\A_{t+1}$ from $\better_{h^t}$ satisfies Strong Dominance or equivalently, if $\rho_{t+1}(\cdot|h^t)$ satisfies C-Determinism*.\footnote{Equivalence holds under the assumption that the data satisfy the GL representation.} 
\\2) Say that an agent becomes certain of her future taste at time $t$ if she learns her future taste after every history $h^t\in\h_t$. 
\\3) Say that agent 1 learns her taste faster than agent 2 if the following implication holds true for every $t\le T-1$:
\[
\text{agent 2 becomes certain of her taste at }t\quad\imply\quad \text{agent 1 becomes certain of her taste at }t.
\]
\end{definition}

The characterization of these concepts in terms of the GL representation (Definition \ref{thm:filtrdefgradlearning}) is as follows.

\begin{proposition}
\label{thm:speedoflearningchar}
1) Suppose an agent has a GL representation with probability space $(\Omega, \mathcal{F}^*,\mu)$. Then an agent learns her future taste after history $h^t$ if and only if conditional on $C(h^t)$ her felicity is deterministic, i.e.
\[
v_{t+1}\text{ is a constant function on }C(h^t).
\]
2) Suppose an agent has a GL representation with underlying probability space $(\Omega, \mathcal{F}^*,\mu)$. An agent becomes certain of her future taste at time $t$ if and only if her felicity at time $t$ is independent of the state of the world $\omega$, i.e.
\[
v_{t+1}\text{ is a constant function on all of }\Omega.
\]
3) Suppose two agents $i=1,2$ have GL representations with underlying probability space $(\Omega, \mathcal{F}^*,\mu)$ but otherwise may have different filtrations $\{\mathcal{F}^i_t\}_{t\leq T}$ and different evolution of SEUs $\{(q^i_t,u^i_t)\}_{t\leq T}$ for $i=1,2$. Then agent 1 learns her taste faster than agent 2 if and only if the following implication holds true for every $t\leq T-1$:
\[
v^2_{t+1}\text{ is a constant function on all of }\Omega\quad\imply\quad v^1_{t+1}\text{ is a constant function on all of }\Omega.
\]
\end{proposition}

\paragraph{Example 4.} 

Assume that we have two investors $i=1,2$ facing \emph{the same} market conditions whose CARA Bernoulli utility over monetary outcomes has the form $x\mapsto 1-e^{-\gamma_i x}$ where $\gamma_i$ is random according to a discrete distribution taking positive values from a finite set $\Gamma\subset [1,+\infty)$. In every period each investor decides whether to invest in a risky project $f$, whose outcome is strongly dependent on market conditions (objective state $s_t\in \R_+$ drawn anew each period) through $f(s_t)\sim \sqrt{s_t}\times Uniform\{-1,1\} + s_t$ or to pick investments $h(\alpha)$ whose $s_t$-independent outcome satisfies $h(\alpha)\sim \sqrt{\alpha} \times Uniform\{-1,1\}+\alpha$. Then according to the above Proposition an analyst has two ways of telling who of the two investors has learned her parameter $\gamma_i$ the earliest. If she only has data on choices from menus containing only acts of the type $h(\alpha)$ she finds the first time when the choice of each investor on such menus becomes deterministic. If she only has data of choice among menus, an indicator that investor 1 learned her preference parameter earlier is that she starts preferring menus where $f$ is present to menus where $f$ isn't present earlier in time than investor 2 does. 


\section{Conclusion}\label{sec:conc}

We have introduced a dynamic stochastic choice model general enough to encompass situations where a subjective expected utility agent has both stochastic taste as well as stochastic beliefs about the realization of objective payoff-relevant states. Under the assumption that the analyst has access to data which reveal the agent's history-dependent choices as well as the sequence of realizations of objective states we have characterized axiomatically the case when the analyst can uncover the otherwise arbitrary evolution of the private information of the agent. 

The assumed richness of the data allows the analyst to test whether the agent is using correctly specified beliefs about objective states conditional on her private information and if not, to determine the bias of the agent as well as to compare different agents according to their biasedness of beliefs. We have also characterized special cases of the general representation, Evolving SEU and Gradual Learning, which would have been otherwise indistinguishable in the static setting. Finally, in the case of Gradual Learning, we have shown how an analyst is able to detect from data that the agent has stopped learning about her taste and that therefore the randomness in choice only comes from randomness in beliefs. 

Information acquisition is outside the scope of this model and constitutes the natural next step in research. 
E.g. we shouldn't expect the student in Example 2 not to try and actively learn early about her final job market outcome. So it natural to expect Indifference to Timing to be violated; if an agent tries to actively learn about future tastes by spending resources after history $h^t$ we should expect her to satisfy instead the weaker condition: 
\[
\text{if }A_{t+1}\sim_{h_{t+1}}B_{t+1}\quad\text{ then }\quad\alpha A_{t+1}+(1-\alpha)B_{t+1}\worse_{h^t}A_{t+1}.
\]
That is, since contingent planning costs utility, the agent is averse to it whenever she is ex-ante indifferent between two decision problems. Introducing information acquisition in this framework would also allow a better study of misspecified learning. 

Other directions to pursue are as follows. We haven't considered consumption dependence as \cite{fis} do in their DREU model of stochastic taste only.\footnote{This is an easy extension left to the interested reader.} Developing `systems' of DR-SEUs coming from agents in strategic situations is also left for future research, as is characterizing meaningful relaxations of the Sophistication assumption in the Evolving SEU model. 

Finally, on another perspective, this paper is about identification and not inference. In applications data sets are naturally finite. We leave for future research characterizations of stochastic dynamic behavior when data sets are finite.


\newpage

\begin{appendices}

The Appendix is organized as follows.  Appendix \ref{sec:RSEU} is devoted to the proof of Theorem \ref{thm:SREU}. Appendix \ref{sec:asdynamic} describes the Ahn-Sarver representations in the dynamic setting. These are more convenient for proofs and their equivalence to the Filtration-based representations from the main text of the paper is proved in the online appendix. Appendix \ref{sec:sephist} proves the existence of so-called \emph{separating histories}. These are an essential tool in the proof of the main characterization theorems. Most of Appendix \ref{sec:proofmain} is devoted to the proof of Theorem \ref{thm:drseufiltrthm}, the rest of it to the proofs of Section \ref{sec:comp}. The proof of the rest of the characterization theorems is in the online appendix. Besides the rest of the auxiliary results, the latter also contains most of the technical work needed to extend the menu choice literature to the setting of SEUs, add explicit tie-breaking to \cite{lu} and beliefs about objective payoff-relevant states to \cite{as}.

\section{Random Subjective Expected Utility with observable objective states (AS-version)}\label{sec:RSEU}

\subsection{Separation property for acts - static setting}

We prove a \emph{separation property} for menus of acts, similar to Lemma 1 in \cite{as} (separation property for lotteries). 

We start with a trivial remark which will be used extensively in the following.

\begin{remark}
1) A SEU preference encoded by $(q,u)$ is constant (i.e. consists of only indifferences) if and only if $u$ is constant. 

2) Two SEU representations $(q,u)$ and $(q',u')$ represent the same SEU preference if and only if $q=q'$ and $u\approx u'$. 
\end{remark}
The separation property for acts is as follows.

\begin{lemma}[Separation property in the AA setting]\label{thm:separationproperty}
Let $Z'$ be any set (possibly infinite) and let $\{(q_k,u_k):k=1,\dots,K\}\subset \Delta(S)\times \R^{Z'} $ be a set of \emph{pairwise distinct} SEU representations s.t. $u_k$ is non-constant for all $k=1,\dots, K$. Then there is a collection of acts $\{f_k:k=1,\dots,K\}\subset \F$ s.t. $q_k\cdot u_k(f_k)>q_k\cdot u_k(f_l)$ for any distinct $l,k\in \{1,\dots,K\}$. 
\end{lemma}

\begin{proof}
We divide the proof in three steps.

\textbf{Step 1.} Assume first that $u_k\not\approx u_l$ for all $l\neq k$ and that $Z'$ is finite. Then we are in the setting of Lemma 13 from \cite{fis} and can use a menu of constant acts to realize the separation property required.

\textbf{Step 2.} Assume now that $u_k\approx u_l$ for all $l\neq k$ and that $Z'$ is finite. W.l.o.g. we can assume that $u_k=u_l=u$ and that $im(u) = [0,1]$. Note that in this case it also holds $q_k\neq q_l$ for all $l\neq k$. It is enough in this case to solve the following problem: 
\[
\text{(P)\quad For all }k\text{  find  }p_k\in\Delta(S)\text{ s.t. }q_k\cdot p_k>q_k\cdot p_l,\quad l\neq k.
\]
Now we are again in the setting of Lemma 13 in \cite{fis}, if we take as Bernoulli utilities the $q_k$-s. Formally, it follows $q_k\not\approx q_l$ whenever $S$ has more than one element as one can check using uniqueness result in the classical vNM Theorem. Thus, Lemma Lemma 13 in \cite{fis} gives probability distributions $p_k, k=1,\dots K$ satisfying (P). Now, we can easily construct the acts needed by the formula $u(f_k(s)) = p_k(s), s\in S, k=1,\dots,K$. 
Note that this trick works because $\Delta(S)\subset [0,1]^S$. 

\textbf{Step 3.} Assume now that we are in the general case $(q_k,u_k)\not\approx (q_l,u_l),l\neq k$.  There exists a finite $Z\subset Z'$ s.t. all $u_k$ are non-constant in $\Delta(Z)$. We are going to choose acts $f:S\ra\Delta(Z)$. Assume w.l.o.g. that for all $k$ we have $im(u_k)\subseteq [0,1]$. Divide the Bernoulli utilities $u_k$ in classes $r=1,\dots R\leq K$ s.t. if $l,k$ are so that $u_k\approx u_l$ they belong to the same class. Within the same class, normalize the Bernoulli utilities to be equal. Thus, we can rewrite the SEU preferences given as 
\[
\{(q_{rl},u_r):r=1,\dots R, l=1\dots, K_r\}.
\]
Now pick constant acts $h_r,r=1\dots, R$ as in \textbf{Step 1} with $u_r(h_r)>u_r(h_{r'}),r\neq r'$. Pick also within each group $r\in \{1,\dots,R\}$ acts $f_{rl}, l=1,\dots, K_r$ with image in $\Delta(Z)$ s.t. $q_{rl}\cdot u_r(f_{rl})>q_{rl}\cdot u_r(f_{rl'}), l\neq l'$. We claim that the separating acts we are after can be taken of the form 

\[
\lambda f_{rl}+(1-\lambda)h_r,\quad r=1,\dots,R; l=1,\dots,K_r
\]

whenever $\lambda>0$ small enough. 

We need to show that there exists $\lambda\in (0,1)$ with
\[
(P1)\qquad  q_{rl}\cdot u_r(\lambda f_{rl}+(1-\lambda)h_r)>q_{rl}\cdot u_r(\lambda f_{r'l'}+(1-\lambda)h_r'),\text{ whenever }(r,l)\neq (r',l').
\]

Consider first the case $r=r'$. Then $l\neq l'$ and (P1) is true for all $\lambda$ by linearity of the Bernoulli functions and the choice of $f_{rl}$. 

Consider then the case $r\neq r'$. Given that $u_r({h_r})>u_r(h_{r'})$ and the linearity of the Bernoulli functions, for a fixed pair of tuples $(r,l)\neq (r',l')$ (P1) becomes true whenever $\lambda$ is small enough for that pair. This gives a positive upper bound on $\lambda$. Since the number of pairs $(r,l)$ is finite, overall there exists a $1>\lambda>0$ for which (P1) is satisfied for all distinct pairs $(r,l)\neq (r',l')$.

\end{proof}

\subsection{Proof for the Axiomatization of aSCFs (AS-version)}

Pick an element $y^*\in X$ and set $U =\{u\in \R^X: u(y^*) = 0\}$.

We first define the AS-version (Ahn-Sarver version) of the representation.

\begin{definition}\label{thm:AS-RSEUaSCFdef}
1) Let $\rho$ be an aSCF for acts in $\F$ over $\Delta(X)$ where $X$ is a separable metric space and $S$, the set of objective states is finite. 

We say that $\rho$ admits an \emph{AS-version R-SEU  representation} if there is a triple $$(SubS,\mu,\{((q,u),\tau_{q,u}):(q,u)\in SubS\})$$ such that 
\begin{enumerate}
\item $SubS$ is a finite subjective state space of distinct and non-constant SEUs and $\mu$ is a probability measure on $SubS\times S$. 
\item For each $(q,u)\in SubS$ the tie-breaking rule $\tau_{q,u}$ is a regular sigma-additive probability measure on $\Delta(S)\times U$ endowed with the respective product Borel sigma-Algebra. 
\item For all $f\in \F$, $A\in \A$ and $s\in S$ we have 

\begin{equation}\label{eq:asreu}
\rho(f,A,s) = \sum_{(q,u)\in SubS}\mu(q,u,s)\tau_{q,u}(f,A),
\end{equation}

where $\tau_{q,u}(f,A) := \tau_{q,u}\left(\{(p,w)\in \Delta(S)\times U: f\in M(M(A;u,q);w,p)\}\right)$.
\end{enumerate}

2) We say that the AS-version R-SEU representation has \emph{no unforeseen contingencies} if $supp(\mu(\cdot|q,u))\subseteq supp(q)$ for all $(q,u)\in SubS$. 

3) We say that the AS-version R-SEU representation has \emph{correct interim beliefs} if $\mu(\cdot|q,u) = q(\cdot)$ for all $(q,u)\in SubS$. 
\end{definition}

The next Theorem gives the axiomatization of aSCFs which have an AS-version R-SEU representation. 

\begin{theorem}\label{thm:aASSREU}
The aSCF $\rho$ on $\A$ admits an AS-version R-SEU representation
if and only if it satisfies 
\begin{enumerate}
\item Statewise Monotonicity
\item Statewise Linearity
\item Statewise Extremeness
\item Statewise Continuity
\item Statewise State Independence
\item Statewise Finiteness
\end{enumerate}
Moreover, it additionally has a \emph{No Unforeseen Contingencies} representation if and only if it additionally satisfies No Unforeseen Contingencies. Finally, it has a \emph{Correct Interim Beliefs} representation if and only if it additionally satisfies Correct Interim Beliefs.
\end{theorem}


\begin{proof}[Proof of Theorem \ref{thm:aASSREU}.]
\textbf{Necessity.} Checking this is routine. In particular, one checks easily that $RSSupp(\bar\rho) = supp(\mu)$. 
\vspace{2mm}\\
\textbf{Sufficiency.} We prove this in several steps.

\emph{Step 1.} We construct the SCFs $\bar\rho$ from $\rho$ as well as $\rho(\cdot,\cdot|s)$ for all $s\in S$. Due to the axioms on $\rho$ all of $\bar\rho$ as well as $\rho(\cdot,\cdot|s), s\in S$ satisfy all axioms from Theorem 1 in the online appendix. In particular, we have the following representations: for all $f\in A,A\in \A$

\begin{equation}\label{eq:helpascf1}
\bar\rho(f,A) = \sum_{(q,u)\in SubS}\psi(q,u)\tau_{q,u}(f,A)
\end{equation}

and 

\begin{equation}\label{eq:helpascf2}
\rho(f,A|s) = \sum_{(q,u)\in SubS(s)}\psi^s(q,u)\tau^s_{q,u}(f,A).
\end{equation}

with appropriate probability measures $\psi$ and $\psi^s$ on finite sets of SEUs.

\emph{Step 2.} Due to simple probability accounting it holds

\begin{equation}\label{eq:helpascf3}
\bar\rho(f,A) = \sum_{s\in S}\rho(f,A|s)\rho(s). 
\end{equation}
If it were true that $supp(\psi^s)\not\subseteq supp(\psi)$ for some $s\in S$ then by use of separating menus as constructed in Lemma \ref{thm:separationproperty} one could come to a contradiction to \eqref{eq:helpascf3}. The same kind of contradiction argument and use of Lemma \ref{thm:separationproperty} leads to exclusion of the case $supp(\psi)\setminus\cup_{s\in S}supp(\psi^s)\neq \emptyset$. In all we have established 
\[
supp(\psi)=\cup_{s\in S}supp(\psi^s).
\]
In particular, we can extend w.l.o.g. $\psi^s$ for all $s$ to all of $supp(\psi)$ by setting it to zero outside of $supp(\psi^s)$. 

\emph{Step 3.} By a similar mixing argument as in Proposition 2 in the online appendix (see step 3 there) one can easily show that whenever $(q,u)\in supp(\psi)\cap supp(\psi^s)$ we have $\tau_{q,u}^s=\tau_{q,u}$. In particular, we can write the representations for $\rho(\cdot,\cdot|s)$ as 
\begin{equation}\label{eq:helpascf4}
\rho(f,A|s) = \sum_{(q,u)\in SubS}\psi^s(q,u)\tau_{q,u}(f,A).
\end{equation}
By plugging \eqref{eq:helpascf4} in \eqref{eq:helpascf3}, rearranging and using the uniqueness result for the AS-representation of $\rho$ from Proposition 2 in the online appendix we get 

\begin{equation}\label{eq:helpascf5}
\psi(q,u) = \sum_{s\in S}\psi^s(q,u)\rho(s),\quad (q,u)\in supp(\psi). 
\end{equation}

By setting $\mu(q,u,s) = \psi^s(q,u)\rho(s)$ we define a probability measure over $SubS\times S$ whose marginal over $SubS$ is full support and which satisfies \eqref{eq:asreu}.

\emph{Step 4.} Take a separating menu $\bar A = \{f(q,u):(q,u)\in supp(\psi)\}$ for $supp(\psi)$. We show that the following property (P) gives us the representation for correct interim beliefs.

\[
(P)\qquad \rho(\cdot|f(q,u),\bar A) = q(\cdot),\quad (q,u)\in supp(\psi). 
\]

\textbf{Claim.} (P) implies the representation with correct interim beliefs. 

\begin{proof}[Proof of Claim.]
For the menu $\bar A$ and each $(q,u)\in supp(\psi)$ we have 

\begin{equation}\label{eq:helpmehere}
\psi^s(q,u) =\rho(f(q,u),\bar A|s) =  \frac{\rho(f(q,u),\bar A,s)}{\rho(s)} = \frac{\rho(s|f(q,u),\bar A)\bar\rho(f(q,u),\bar A)}{\rho(s)} = \frac{q(s)\psi(q,u)}{\rho(s)}. 
\end{equation}
Here, only in the last equality we have used (P) and the definition and representation of $\bar\rho$ from Theorem 1 in the online appendix. We write this as the identity $$(!)\quad \rho(s)\psi^s(q,u) = q(s)\psi(q,u).$$ Summing (!) w.r.t. $(q,u)$ we get the identity (!!) $\rho(s) = \sum_{(q,u)\in SubS}\psi(q,u)q(s)$ for all $s\in S$ and thus a unique solution for $\psi^s$ in \eqref{eq:helpmehere}. It is then trivial to see that the representation holds because of \eqref{eq:helpascf4} and (!!). 
\end{proof}

\emph{Step 5.} In this step we show that (P) is implied by Correct Interim Beliefs. 

Denote in general for each $q\in \Delta(S)$ such that $(q,u)\in supp(\mu)$ for some $u$ $\rho(\cdot|f(q,u),\bar A) = \hat q(q,u)(\cdot)$.

Suppose by contradiction that there exists some $(q,u)\in supp(\mu)$ with $\rho(\cdot|f(q,u),\bar A) \neq q(\cdot)$. If it holds for some $\hat u$ that $(\hat q(q,u),\hat u)\in supp(\mu) = RSSupp(\bar\rho)$ then we know that $(\hat q(q,u),\hat u)\not\in N(\bar A,f(q,u))\cap RSSupp(\bar\rho) = \{(q,u)\}$ as $\bar A$ is separating for $RSSupp(\bar\rho)$ and $\hat q(q,u)\neq q$, which implies $(q,u)\not\approx (\hat q(q,u),\hat u)$. But clearly $|N(\bar A,f(q,u))\cap RSSupp(\bar\rho)| = |\{(q,u)\}|=1$.

Overall it follows that Correlated Interim Belief axiom is violated at the choice data $(f(q,u),\bar A)$.

\emph{Step 6.} We show that the following property (P!) gives us the representation for no unforeseen contingencies.

\[
(P!)\qquad supp(\rho(\cdot|f(q,u),\bar A)) \subset supp(q(\cdot)),\quad (q,u)\in supp(\psi). 
\]

\textbf{Claim.} (P!) implies the representation with unforeseen contingencies.

We look at \eqref{eq:helpmehere}, but leave out the final equality. The Claim follows immediately.  

\emph{Step 7.} In this step we show that (P!) is implied by No Unforeseen Contingencies.

Suppose by contradiction that there exists some $(q,u)\in supp(\psi)=RSSupp(\bar\rho)$ with $supp(\rho(\cdot|f(q,u),\bar A))\not\subseteq supp(q(\cdot))$. Pick again a separating menu for $RSSupp(\bar\rho)$ and note that $|N(\bar A,f(q,u))\cap RSSupp(\bar\rho)| = |\{(q,u)\}|=1$. Overall it follows that the No Unforeseen Contingencies axiom is violated for the choice data $(f(q,u),\bar A)$.

\end{proof}

We note down uniqueness.\footnote{The online appendix shows equivalence between AS-based representations and Filtration-based representations.}

\begin{proposition}\label{thm:aSCFuniqueness}
The AS-version REU-representation for an aSCF $\rho$ is essentially unique in the sense that for each two representations the only degree of freedom is positive affine transformations of the Bernoulli utilities of elements in the support of the measures over SEUs. 
\end{proposition}

\begin{proof}
For the case of CIB this follows directly from Proposition 2 in the online appendix applied to the SCF corresponding to the aSCF. 

For the case of NUC, if there are two different representations for $\rho $ with respective measures $\mu,\mu'$ it follows from Proposition 2 in the online appendix that the marginals are equal: $\sum_{s}\mu(q,u,s) = \sum_{s}\mu'(q,u,s)$ for all $(q,u,s)$. In particular, up to equivalence classes of positive affine transformations of the Bernoulli utility functions the support of these two marginals in $\Delta(S)\times\R^{X}$ is equal for the two measures. Assume then w.l.o.g. the same normalization for both supports. Taking now a separating menu $\bar A$ for the SEUs in the support of the two measures $\mu,\mu'$, we have from the representation property that 
\[
\rho(f(q,u),\bar A,s) = \mu(q,u,s)=\mu'(q,u,s)\text{ for all }s.
\]
This concludes the proof. 
\end{proof}

\begin{proof}[Proof for Proposition \ref{thm:inforepresentationascf}]
\textbf{Sufficiency.}
Define the SCF on $\Delta(X)$ by the formula\footnote{Here a slight abuse of notation as we haven't written down the isomorphism between constant menus of acts and menus of lotteries, but the context gives clarity.} 

\[
\tau(f,A) = \rho(f,A),\quad A\text{ is menu of constant acts}.
\]
Note that Theorem 1 in the online appendix gives with some slight abuse of notation
\[
\tau(f,A) = \sum_{(q,u)\in SubS\text{ for some }q}\mu(q,u)\tau_{q,u}(\{(p,w)\in \Delta(S)\times U: f\in M(M(A;q,u);p,w)\}).
\]
Since the beliefs play no role in the decision of the agent (all acts are constant), one can rewrite this as 
\[
\tau(f,A) = \sum_{u\in\pi_u\left(SubS\right)}\mu(u)\tau'_{u}(\{w\in U: f\in M(M(A;u);w)\}),
\]
where $\mu(u) = \sum_{q: (q,u)\in SubS}\mu(q,u)>0$ and $\tau'_{u} = \sum_{q: (q,u)\in SubS}\frac{\mu(q,u)}{\mu(u)}\tau_{q,u}$. Note that $\tau'_{u}$ is a regular tie-breaker for lotteries.\footnote{Here, the $w$ breaking ties from $M(A,u)$ is drawn as follows: first draw a $(q,u)$ where $(q,u)$ has probability $\frac{\mu(q,u)}{\mu(u)}$ and then, draw (conditionally independently across the $(q,u)$-s) $w$ according to the marginal of $\tau_{q,u}$ on $U$. This works because the tie-breakers are \emph{preference-based}.}

Obviously this gives an S-based REU representation as in Theorem 4 of \cite{fis}.  
C-Determinism* implies then directly that $\tau$ has only one state in the sense of the S-based representation from \cite{fis}.\footnote{Otherwise one arrives easily at a contradiction through separating lotteries to either $\mu(u)>0$ for all $u$ or to the C-Determinism* Axiom.} In particular, $u\approx v$ for all $u,v\in U$ such that $(q,u),(p,v)\in supp(\mu)$ for some $q,p\in \Delta(S)$.

\textbf{Necessity.} Consider a menu of constant acts $A$. Then for all $(q,u), (p,u)\in supp(\mu)$ we have $M(A;u,q)= M(A;v,u) = :M(A,q)$, so that by a small abuse of notation which uses the fact that the menu $A$ is constant we can write 

\[
\rho(f,A) = \sum_{(q,u)\in SubS}\mu(q)\tau_{q}(\{w\in U: f\in M(M(A;u);w)\}).
\]
The existence of a best constant act $\bar f$ means $u(\bar f)>u(f)$ whenever $f\neq \bar f$ and $f$ also constant. 

Note now that for each $g\in A, g\neq f$ we have for either $u(af+(1-a)\bar f)>u(g)$ or $u(af+(1-a)\bar f)<u(g)$ for all $a<1$ near enough to $a$. It follows that 

\[
\rho(af+(1-a)\bar f;A\setminus\{f\}\cup\{af+(1-a)\bar f\}) \in \{0,1\},\quad\text{for all }a<1\text{ near enough to 1.}
\]
Thus C-Determinism* is satisfied.  
\end{proof}

We skip writing down a statement and proof of a Proposition connecting AS-version representations with the representations in Definition \ref{thm:filtras-r-seu} (filtration form) since it will be subsumed in the more general arguments in Section 4 of the online appendix.



\section{AS-Based Representations for the dynamic setting}\label{sec:asdynamic}

The proofs in this appendix are done in the AS-version of the representations. Here we explain what these are. The online appendix then establishes the equivalence between the two types of representations. 

\subsection{Dynamic Random Subjective Expected Utility (DR-SEU)}
\begin{definition}\label{thm:drseudef}
We say that a history-dependent family of aSCF $\rho = (\rho_0,\dots,\rho_T)$ has a DR-SEU representation if there exists
\begin{itemize}
\item a finite objective state space $S$ and a collection of partitions $S_t,t=1\dots,T$ of $S$ such that $S_t$ is a refinement of $S_{t-1}$.
\item a finite collection of states of the world $\Theta_t, t=0,\dots,T$ (an element is of the type $(q_t,u_t,s_t)\in\Delta(S_t)\times\R^{X_t}\times S_t$). The sequence $\Theta_t,t\leq T$ has a partitional structure and there are no repetitions: each element $(q_t,u_t,s_t)$ is indexed by the predecessors $(q_0,u_0,s_0;\dots;q_{t-1},u_{t-1},s_{t-1})$.\footnote{This means that there can be repetitions in terms of the SEUs $(q_t,u_t)$ but whenever this happens a different $s_t$ is realized.} Moreover we have the restriction that $s_{k}\in supp (q_{k})$. 

\item a collection of probability kernels $$\psi_k:\Theta_{k-1} \ra \Delta(\Theta_k)$$ for $k=0,\dots,T$ \footnote{With the obvious conventions for $k=0$.} with a typical element in the image written as $\psi_k^{q_{k-1},u_{k-1},s_{k-1}}$. In particular, the probability that $(q_k,u_k,s_k)$ is realized after $\theta_{k-1}$ occurs is $\psi_k^{\theta_{k-1}}(q_k,u_k,s_k)$.

\item a sequence of tie-breakers: for all $t=0,\dots,T$ a regular probability measure $\tau_{(q_t,u_t)}$ over $\Delta(S_t)\times\R^{X_t}$, for all $(q_t,u_t) = \pi_{qu}(\theta_t)$ for some $\theta_t\in\Theta_t$. 

\end{itemize}
 such that the following two conditions hold. 

\textbf{DR-SEU 1}
\begin{enumerate}[(a)]
\item every $(q_t,u_t)\in \pi_{qu}\left(supp(\psi_t^{\theta_{t-1}})\right)$ represents a non-constant SEU preference.
\item $supp(\psi_t^{\theta_{t-1}}) \cap supp(\psi_t^{\theta'_{t-1}})=\emptyset$ whenever $\theta_{t-1}\neq \theta'_{t-1}$, both in $\Theta_{t-1}$.\footnote{This implies, that whenever $\pi_s(\theta_{t-1})=\pi_s(\theta_{t-1})$ and two elements $\theta_t\in supp(\psi_t^{\theta'_{t-1}}),\theta_t'\in supp(\psi_t^{\theta'_{t-1}})$ with $\pi_{qs}(\theta_t) = \pi_{qs}(\theta'_t)$ we must have $u_t\neq u'_t$.}
\item $\cup_{\theta_{t-1}}supp (\psi_t^{\theta_{t-1}}) = \Theta_t$. 
\item either (correct interim beliefs) The kernels $\psi$ satisfy $\psi_k^{\theta_{k-1}}(s_k|q_k,u_k) = q_k(s_k)$ 
\vspace{3mm}\\
or otherwise (no unforeseen contingencies) $supp\left(\psi_k^{\theta_{k-1}}(\cdot|q_k,u_k)\right) \subset supp(q_k)$.
\end{enumerate}

\textbf{DR-SEU 2} 

The SCF $\rho_t$ after a history $h^{t-1} = (A_0,f_0,s_0;\dots,A_{t-1},f_{t-1},s_{t-1})$ is given by
\\
\resizebox{1.1 \textwidth}{!} {
$
\rho_t(s_t,f_t,A_t|h^{t-1}) = \frac{\sum_{\pi_s(\theta_0,\dots,\theta_t)=(s_0,\dots,s_t)}\left[\prod_{k=0}^{t-1}\psi_k^{\theta_{k-1}}(\theta_k)\tau_{\pi_{qu}(\theta_k)}(f_k,A_k)\right]\cdot \psi_t^{\theta_{t-1}}(\theta_t)\tau_{\pi_{qu}(\theta_t)}(f_t,A_t)}{\sum_{\pi_s(\theta_0,\dots,\theta_{t-1})=(s_0,\dots,s_{t-1})}\left[\prod_{k=0}^{t-1}\psi_k^{\theta_{k-1}}(\theta_k)\tau_{\pi_{qu}(\theta_k)}(f_k,A_k)\right]}.
$
}
\end{definition}

 

\subsection{Evolving Subjective Utility (Evolving SEU)}

The Evolving Subjective Expected Utility representation is a special case of DR-SEU.

In the pre-choice situation in period $t$ when the agent knows $(q_t,u_t) = \pi_{qu}(\theta_t)$ and satisfies the Evolving SEU representation she evaluates acts according to the following SEU functional

\begin{equation}\label{eq:evseu-as}
\E_{q_t}[u_t(f_t)] = \E_{s_t\sim q_t}[u_t(f_t(s_t))] = \E_{s_t\sim q_t}[v_t(f_t^Z(s_t))] + \delta V_{t}^{\pi_{qu}(\theta_t)}(f_t^A). 
\end{equation}
Here $V_{t}^{\pi_{qu}(\theta_t)}(f_t^A)$ is defined in two steps. First we define 
\begin{equation}
\label{eq:dlst}
V_t^{\theta_t}(A_{t+1})= \int\max_{f_{t+1}\in A_{t+1}}\E_{q_{t+1}}[u_{t+1}(f_{t+1})]d\psi_{t+1}^{\theta_t}(q_{t+1},u_{t+1}).
\end{equation}
This gives the value of a menu when the agent knows the menu, but not the SEU with which it will evaluate the acts. This is the situation just after $(z_t,A_{t+1})$ is known to the agent at the end of period $t$. 

A moment before, i.e. when the agent doesn't know $s_t$ yet the value of $f^A_{t}$ is given by 
\begin{equation}
\label{eq:extdlst}
V_t^{\pi_{qu}(\theta_t)}(f^A_{t}):= \sum_{s_t}\sum_{A_{t+1}\in suppf_t^A(s_t)}q_t(s_t)f_t^A(s_t)(A_{t+1})V_t^{\theta_t}(A_{t+1}) =: \sum_{s_t}q_t(s_t)V_t^{\theta_t}(f_t^A(s_t)).
\end{equation}

Note that the uncertainty that is integrated out in \eqref{eq:extdlst} is the \emph{objective} one concerning $s_t$ and that we have used equation \eqref{eq:dlst} to define the extension of $V_t^{\theta_t}$ to lotteries over menus.\footnote{I.e. agent is Expected Utility w.r.t. lotteries over menus.} 

We can rewrite this in integral form as follows.
\begin{equation*}
V_t^{\pi_{qu}(\theta_t)}(f^A_{t}) = \int\max_{f_{t+1}\in A_{t+1}}\E_{q_{t+1}}[u_{t+1}(f_{t+1})]d\psi_{t+1}^{\pi_{qu}(\theta_t)}(q_{t+1},u_{t+1}),
\end{equation*}

where $\psi_{t+1}^{\pi_{qu}(\theta_t)}(q_{t+1},u_{t+1}) := \sum_{s_t}q_t(s_t)\psi_{t+1}^{\theta_t}(q_{t+1},u_{t+1}) = \sum_{s_t,s_{t+1}}q_t(s_t)\psi_{t+1}^{\theta_t}(q_{t+1},u_{t+1},s_{t+1})$.




\subsection{Gradual SEU-Learning.}\label{sec:gradualas}

Gradual SEU-learning is the case of Evolving SEU with the additional requirement that her sequence of expected utility functionals from consumption $v_t, t=0,\dots T$ form a Martingale. 
In the following we use the projection $\pi_v$, which for a $u_t$ as in \eqref{eq:evseu-as} gives the corresponding $v_t$. 

Normalize $v_t(\bar p) = 0$ for all $t$ where $\bar p$ is the uniform lottery over $Z$. This is possible because $Z$ is assumed to be finite for the dynamic setting. After a $\theta_t = (q_t,u_t,s_t)$ it has to hold for the sequence $\pi_v({\theta_t})$ from the Evolving SEU representation

\begin{equation}\label{eq:glseu-as}
\pi_v({\theta_t}) = \frac{1}{\delta}\sum_{(q_{t+1},u_{t+1})\in \pi_{qu}(\Theta_{t+1}) }\psi_{t+1}^{\theta_t}(q_{t+1},u_{t+1})\cdot \pi_v(u_{t+1}) = \frac{1}{\delta}\E_{}[\pi_v(\theta_{t+1})|\theta_t].
\end{equation}

\section{Separating histories}\label{sec:sephist}

We first define histories \emph{consistent} with a state $\theta_t$. Then we define \emph{separating histories} for a fixed state $\theta_t$. The main result of this section establishes the existence of separating histories (Lemma \ref{thm:sephistexistence}). 

Let us assume that we have an aSCF $\rho$ which satisfies DR-SEU 1. We define the predecessor of a state $\theta$ as $pred(\theta_t) = (\theta_0,\dots,\theta_{t-1})$.

\begin{definition}\label{thm:defpred}
For a state $\theta_t = (q_t,u_t,s_t)$ denote by $pred(\theta_t) = (\theta_0,\dots,\theta_{t-1})$ the unique predecessor of $\theta_t$ from $\prod_{i=0}^{t-1}\Theta_i$.
\end{definition}
The concept is well-defined because of DR-SEU 1 (a)-(b). 


\begin{definition}\label{thm:defconsistent}
Given a history $h^t=(A_0,f_0,s_0;A_1,f_1,s_1;\dots; A_t,f_t,s_t)$ say that $\theta_t$ is consistent with $h^t$ if for the unique predecessor of $\theta_t$, given by $(\theta_0,\dots,\theta_{t-1})$ we have 

\[
\prod_{k=0}^t\tau_{\pi_{qu}(\theta_k)}(f_k,A_k)\cdot \psi_k^{\theta_{k-1}}(\theta_k)>0.
\]
Here we use the convention $\psi_0^{\theta_{-1}}:= \psi_0$. 
\end{definition}

Note that multiple states $\theta_t$ can be consistent with the same history $h^t$. 

Define 
\[
\resizebox{1.15\hsize}{!}{$QU_{\theta_k}(A_{k+1},f_{k+1},s_{k+1}) = \{(q_{k+1},u_{k+1}): (q_{k+1},u_{k+1}, s_{k+1})\in supp(\psi_{k+1}^{\theta_k})\text{ and }f_{k+1}\in M(A_{k+1};q_{k+1},u_{k+1})\}.$}
\]
This is the set of SEU-s $(q_{k+1},u_{k+1})$ occurring right after $\theta_k$ which can rationalize the data $(A_{k+1},f_{k+1},s_{k+1})$.

For time $t=0$ define 
\[
QU_0(A_0,f_0,s_0) = \{(q_{0},u_{0}): (q_{0},u_{0},s_0)\in supp(\psi_{-1}^{\theta_0})\text{ and }f_{0}\in M(A_{0};q_{0},u_{0})\}.
\]

We prove first the following Lemma.

\begin{lemma}[Pendant to Lemma 1 in \cite{fis}]\label{thm:consistentpred}
Fix any $\theta_t$ and its predecessor $(\theta_0,\dots,\theta_{t-1})$. Suppose $h^t = (B_0,g_0,s_0;\dots;B_t,g_t,s_t)$ satisfies $QU_{\theta_{k-1}}(B_k,g_k,s_k)=\{\pi_{qu}(\theta_k)\}$. Then for all $k=0,\dots,t$, only $\theta_k$ in $\Theta_k$  can be consistent with $h^k$. 
\end{lemma}

\begin{proof}
Fix any $l=0,\dots,t$ and consider $\theta'_l\in \Theta_l\setminus\{\theta_l\}$ with $pred(
\theta'_l) = (\theta'_0,\dots,\theta'_{l-1})$. 

Let $k\leq l$ be smallest such that $\theta'_k\neq \theta_k$. Then $\pi_{qu}(\theta'_k)\in \pi_{qu}(supp(\psi_{k}^{\theta_{k-1}}))$. So $QU_{\theta_{k-1}}(B_k,g_k,s_k)=\{(q_k,u_k)\}$ (which is assumed) implies either (A) $(q_k,u_k)\neq (q'_k,u'_k)$ or (B) $(q_k,u_k)= (q'_k,u'_k), s_k\neq s'_k$ (otherwise contradiction to $\theta_k\neq\theta'_k$).

In the case of (B) the definition of the QU-sets implies then that $s'_k\not\in supp(q_k)$, i.e. $q'_k(s'_k)=0$. In the case of (A) the definition of the QU-sets implies $g_k\not\in M(B_k;q'_k,u'_k)$, i.e. $\tau_{q'_k,u'_k}(g_k,B_k)=0$. Overall we have that $\theta'_l$ is not consistent with $h^l$.

Next we show that $\theta_l$ is consistent with $h^l$. Note that from the definition of histories w.r.t. to some aSCF it follows that $\rho(g_l,B_l|h^l)>0$. DR-SEU 2 then implies 
\[
\sum_{\pi_{qu}(\theta_0,\dots,\theta_t)\in \times_{i\leq l}SEU_{i}}\left[\prod_{k=0}^{l-1}\psi_k^{\theta_{k-1}}(\theta_k)\tau_{\pi_{qu}(\theta_k)}(f_k,A_k)\right]\cdot \psi_l^{\theta_{l-1}}(\theta_l)\tau_{\pi_{qu}(\theta_l)}(f_l,A_l)q_l(s_l)>0.
\]
If it happens that $pred(\theta_l)\neq (\theta_0,\dots,\theta_{l-1})$ then $\left[\prod_{k=0}^{l-1}\psi_k^{\theta_{k-1}}(\theta_k)\right]\cdot \psi_l^{\theta_{l-1}}(\theta_l) =0$ just by the definition of DR-SEU 1. If otherwise $pred(\theta_l)= (\theta_0,\dots,\theta_{l-1})$ but $\theta_l\neq \theta'_l$ then we showed above that $\left[\prod_{k=0}^{l-1}q_k(\pi_s(\theta_k))\tau_{\pi_{qu}(\theta_k)}(f_k,A_k)\right]\cdot \tau_{\pi_{qu}(\theta_l)}(f_l,A_l)q_l(s_l)$ = 0.

\end{proof}

\begin{definition}\label{thm:sephistdef}
A separating history for $\theta_t$ with $pred(\theta_t) = (\theta_0,\dots,\theta_{t-1})$ is a history $h^t = (B_0,g_0,s_0;\dots;B_t,g_t,s_t)\in \h^*_t$ such that $QU_{\theta_{k-1}}(B_k,g_k,s_k) = \{\pi_{qu}(\theta_k)\}$ for all $k\leq t$. For the case $k=0$ we abuse notation and write $QU_{-1}(B_0,g_0,s_0) = QU_0(B_0,g_0,s_0)$. 
\end{definition}


\begin{remark}\label{thm:sephistproperties}
1) Let $A_t\in\A_{t}$ arbitrary. After introducing LHI below, one sees easily, that when mixing a separating history for $\theta_t$ with a deterministic history such that it has the same projection on objective states as $h^{t-1}$ one can assume that $h^{t-1}$ is so that $A_t\in \A^*_{t}(h^{t-1})$. In particular separating histories are not unique. 

2) By definition, $\theta_t$ is the only state in $\Theta_{t-1}$ consistent with $h^{t-1}$ if $h^{t-1}$ is a separating history for $\theta_{t-1}$.
\end{remark}


Write 
\[
\D_{t-1} = \{d^{t-1}\in\h_{t-1}: d^{t-1} = (\{f_0\},f_0,s_0;\dots \{f_{t-1}\},f_{t-1},s_{t-1}), f_i\in \F_i\},
\]
for the set of histories such that the menu is degenerate in each period and look at its subset 
\[
\D\mc_{t-1} =\{d^{t-1}\in\h_{t-1}: d^{t-1} = (\{h_0\},h_0,s_0;\dots \{h_{t-1}\},h_{t-1},s_{t-1}),h_i,i\leq t-1\text{ are constant acts}\}.
\]
The latter consists of deterministic histories where the agent faces only constant acts and thus objective states don't matter. 

Note that given a menu $A_t\not\in \A_t(h^{t-1})$ we can always choose a $h^{t-1}\in\D\mc_{t-1}$ with $A_t\in supp(h^A_{t-1})$. Then we can define the extended aSCF as follows.

\begin{definition}\label{thm:extendedascf} For a history $h^{t-1}\in \h_{t-1}$, $A_t\in \A_t$ and $s_t\in S_t$ define
\[
\rho_t^{h^{t-1}}(\cdot,A_t,s_t) = \rho_t(\cdot,A_t,s_t|\lambda h^{t-1}+(1-\lambda)d^{t-1}),
\]
for some $\lambda\in (0,1]$, where $d^{t-1}\in \D\mc_{t-1}$ is so that $\lambda h^{t-1}+(1-\lambda)d^{t-1}\in \h_{t-1}(A_t)$.\footnote{For this to hold it suffices that $A_t\in supp(h_{t-1}^A)$.} 
\end{definition}


We prove the extension is well-defined. 

\begin{lemma}\label{thm:extascfwelldefined}
Suppose that $\rho$ satisfies LHI. Fix $t\ge 1, A_t\in\A_t, h^{t-1} = (A_0,f_0,s_0;\dots, A_{t-1},f_{t-1},s_{t-1})\in\h_{t-1}$ and $(\lambda_0,\dots,\lambda_{t-1}),(\hat\lambda_0,\dots,\hat\lambda_{t-1})\in (0,1]^t$. 

Suppose $d^{t-1} = (h_0,\{h_0\},s_0;\dots;h_{t-1},\{h_{t-1}\},s_{t-1}), \hat d^{t-1} = (\hat h_0,\{\hat h_0\},s_0;\dots;\hat h_{t-1},\{\hat h_{t-1}\},s_{t-1}) \in \D\mc_{t-1}(A_t)$. Then we have 
\[
\rho_t(\cdot,A_t,s_t|\lambda h^{t-1}+(1-\lambda)d^{t-1}) = \rho_t(\cdot,A_t,s_t|\hat\lambda \hat h^{t-1}+(1-\hat\lambda)\hat d^{t-1}).\footnote{Here the mixture operation for histories is valid 
for every pair of histories which share the same sub-history of objective states -- the mixture operation only acts on the sub-history of acts and menus. Recall that mixture of menus is defined through the Minkowski sum.  
}
\]
In particular, $\rho_t^{h^{t-1}}$ is well-defined. 
\end{lemma}

\begin{proof}

Let $k = \max\{n=0,\dots,t-1:h_n\neq \hat h_n\}$. 

Suppose that $k=-1$. This means that $d^{t-1}=\hat d^{t-1}$. If $\lambda_i>\hat\lambda_i$ for $i=0,\dots,t-1$ then the $i-$th entry of $\lambda h^{t-1}+(1-\lambda) d^{t-1}$ can be rewritten as an appropriate mixture of the $i-$th entry of $\hat\lambda \hat h^{t-1}+(1-\hat\lambda)\hat d^{t-1}$ and $(A_i,f_i,s_i)$. If on the other hand $\lambda_i\le \hat\lambda_i$ for $i=0,\dots,t-1$ then the $i-$th entry of $\lambda h^{t-1}+(1-\lambda) d^{t-1}$ can be rewritten as an appropriate mixture of the $i-$th entry of $\hat\lambda \hat h^{t-1}+(1-\hat\lambda)\hat d^{t-1}$ and $(A_i,f_i,s_i)$. Starting from $i=0$ and using LHI and working our way up the index $i=0,\dots,t-1$ we see that the aSCF is unaffected by replacing each entry of $\hat\lambda h^{t-1}+(1-\hat\lambda) \hat d^{t-1}$ with its corresponding entry from $\lambda h^{t-1}+(1-\lambda) d^{t-1}$. This shows the result for the case $k=-1$. 

Assume now for the induction step that the statement is true for all $k\leq m-1$ for some $0\leq m\leq t-1$. We show that the claim still holds for $k=m$.\footnote{The argument is the same as in the proof of Lemma 15 in \cite{fis}. It is based on the fact that when mixing a history $h^{t-1}$ with a degenerate history from $\D_{t-1}$, then the sets of maximizers $N(A_i,f_i)$ doesn't change.} 
Define the following objects.
\begin{align*}
&B_m = \frac{1}{2}A_m+\frac{1}{2}\{h_m\},\quad \hat B_m = \frac{1}{2}A_m+\frac{1}{2}\{\hat h_m\},\quad r_m =\frac{1}{2}f_m+\frac{1}{2}h_m,\quad \hat r_m =\frac{1}{2}f_m+\frac{1}{2}\hat h_m\\
& g_n = \frac{1}{2}h_n+\frac{1}{2}l_n,\quad \hat g_n = \frac{1}{2}\hat h_n+\frac{1}{2}l_n, 
\end{align*}
for soon to be specified $l_n,n =1,\dots, t-1$. Namely, define $l_n$ recursively so that they satisfy 
\[
\lambda_n A_n+(1-\lambda_n)\{g_n\}, \hat\lambda_n A_n+(1-\hat\lambda_n)\{\hat g_n\}, \frac{1}{2}A_n+\frac{1}{2}\{h_n\}, \frac{1}{2}A_n+\frac{1}{2}\hat\{g_n\},\{g_n\}\in supp(l_{n-1}^A).
\]
Finally augment the constant act $l_{m-1}$ so that 
\[
\frac{2}{3}B_m + \frac{1}{3}\{\hat g_m\},\frac{2}{3}\hat B_m + \frac{1}{3}\{ g_m\}, \frac{1}{2}\{g_m\}+\frac{1}{2}\{\hat g_m\}\in supp(l_{m-1}^A). 
\]

 Denote $c^{t-1}:=(g_n,\{g_n\},s_n)_{n=0}^{t-1}$ and $\hat c^{t-1}:=(\hat g_n,\{\hat g_n\},s_n)_{n=0}^{t-1}$ both in $\D\mc_{t-1}$. Note that we have $\lambda h^{t-1}+(1-\lambda)c^{t-1}, \hat\lambda h^{t-1}+(1-\hat\lambda)\hat c^{t-1}\in \h_{t-1}(A_t)$ by construction. Also, the last entry at which $c^{t-1}$ and $\hat c^{t-1}$ differ is $m$. Thus by repeated application of LHI we can replace $\lambda h^{t-1}+(1-\lambda)d^{t-1}$ by $\lambda h^{t-1}+(1-\lambda)c^{t-1}$ and $\hat \lambda h^{t-1}+(1-\hat\lambda)\hat d^{t-1}$ by $\hat\lambda h^{t-1}+(1-\hat\lambda)\hat c^{t-1}$. $c^{t-1},\hat c^{t-1}$ and also satisfy the following relations. 
 
 \begin{align*}
 (a):\quad & \frac{1}{2}h^{t-1}+\frac{1}{2}d^{t-1},\frac{1}{2}h^{t-1}+\frac{1}{2}\hat d^{t-1}\in \h_{t-1}(A_t),\\
 (b):\quad & \frac{2}{3}B_m+\frac{1}{3}\{\hat h_m\},\{\frac{1}{2}h_m+ \frac{1}{2}\hat\h_m\}\in supp(h_{m-1}^A),\\
 (c):\quad & \frac{2}{3}\hat B_m+\frac{1}{3}\{h_m\},\{\frac{1}{2}h_m+ \frac{1}{2}\hat\h_m\}\in supp(\hat h_{m-1}^A).
 \end{align*}
These imply immediately
 
 \begin{align*}
 (d):\quad &\left(\frac{2}{3}B_m+\frac{1}{3}\{\hat g_m\}, \frac{2}{3}r_m+\frac{1}{3}\{\hat g_m\}\right)=\left(\frac{2}{3}\hat B_m+\frac{1}{3}\{g_m\}, \frac{2}{3}\hat r_m+\frac{1}{3}\{g_m\}\right)\\& = \left(\frac{1}{3}A_m+\frac{2}{3}\{\frac{1}{2}h_m+\frac{1}{2}\hat h_m\},\frac{1}{3}f_m+\frac{2}{3}(\frac{1}{2}h_m+\frac{1}{2}\hat h_m)\right).
 \end{align*}
 Now (a)-(c) imply that the histories
 \[
 \left((\frac{1}{2}h^{t-1}+\frac{1}{2}c^{t-1})_{-m},\left(\frac{2}{3}B_m+\frac{1}{3}\{\hat g_m\}, \frac{2}{3}r_m+\frac{1}{3}\{\hat g_m\},s_m\right)\right)
 \]
 and 
 \[
 \left((\frac{1}{2}h^{t-1}+\frac{1}{2}c^{t-1})_{-m},\left(\frac{2}{3}\hat B_m+\frac{1}{3}\{g_m\}, \frac{2}{3}\hat r_m+\frac{1}{3}\{ g_m\},s_m\right)\right)
 \]
 are in $\h_{t-1}(A_t)$. Moreover, (d) implies that the first history is an entry-wise mixture of $h^{t-1}$ with $e^{t-1} = (c^{t-1}_{-m},\{\frac{1}{2}h_m+\frac{1}{2}\hat h_m)\},\frac{1}{2}h_m+\frac{1}{2}\hat h_m,s_m)$, whereas the second is an entry-wise mixture of $\hat c^{t-1}$ with $\hat e^{t-1} = (\hat d^{t-1}_{-m},\{\frac{1}{2}h_m+\frac{1}{2}\hat h_m)\},\frac{1}{2}h_m+\frac{1}{2}\hat h_m,s_m)$. 
 
 The base case of the induction ($k=-1$) gives 
 \[
 \rho_t(\cdot;A_t,s_t|\lambda h^{t-1}+(1-\lambda)c^{t-1}) = \rho_t(\cdot;A_t,s_t|\frac{1}{2} h^{t-1}+\frac{1}{2}^{t-1}c^{t-1})
 \]
 and 
 \[
 \rho_t(\cdot;A_t,s_t|\hat \lambda h^{t-1}+(1-\hat\lambda)\hat c^{t-1}) = \rho_t(\cdot;A_t,s_t|\frac{1}{2} h^{t-1}+\frac{1}{2}^{t-1}\hat c^{t-1}).
 \]
 But note that the entry where $e^{t-1}, \hat e^{t-1}$ first differ is strictly less than $m$. Hence applying the inductive hypothesis we have 
 \begin{align*}
 &\rho_t\left(\cdot;A_t,s_t\middle|\left((\frac{1}{2}h^{t-1}+\frac{1}{2}c^{t-1})_{-m},\left(\frac{2}{3}B_m+\frac{1}{3}\{\hat g_m\}, \frac{2}{3}r_m+\frac{1}{3}\{\hat g_m\},s_m\right)\right)\right)
 =\\& \rho_t\left(\cdot;A_t,s_t\middle|\left((\frac{1}{2}h^{t-1}+\frac{1}{2}\hat c^{t-1})_{-m},\left(\frac{2}{3}\hat B_m+\frac{1}{3}\{ g_m\}, \frac{2}{3}\hat r_m+\frac{1}{3}\{ g_m\},s_m\right)\right)\right).
 \end{align*}
 Combining this together with the implication from the base case we get the result. 
\end{proof}


In the next Lemma we show that the extended aSCF satisfies the formula in DR-SEU 2. 

\begin{lemma}\label{thm:drseu2formulaextended}
Suppose that we have an aSCF $\rho$ which has a DR-SEU representation as in Definition \ref{thm:drseudef} till some period $T\in \N$. Then the extended version of $\rho$ as in Definition \ref{thm:extendedascf} will satisfy DR-SEU 2, i.e. for all $t\leq T, \forall f'_t, A'_t$ and $h^{t-1} = (A_0,f_0,s_0;\dots;A_{t-1},f_{t-1},s_{t-1})$ and $f_t,A_t$ we have 
\\
\resizebox{1.1 \textwidth}{!} {
$
\rho_t(s_t,f_t,A_t|h^{t-1}) = \frac{\sum_{\pi_s(\theta_0,\dots,\theta_t)=(s_0,\dots,s_t)}\left[\prod_{k=0}^{t-1}\psi_k^{\theta_{k-1}}(\theta_k)\tau_{\pi_{qu}(\theta_k)}(f_k,A_k)\right]\cdot \psi_t^{\theta_{t-1}}(\theta_t)\tau_{\pi_{qu}(\theta_t)}(f_t,A_t)q_t(s_t)}{\sum_{\pi_s(\theta_0,\dots,\theta_{t-1})=(s_0,\dots,s_{t-1})}\prod_{k=0}^{t-1}\psi_k^{\theta_{k-1}}(\theta_k)\tau_{\pi_{qu}(\theta_k)}(f_k,A_k)}.
$
}

\end{lemma}

\begin{proof}
If $h^{t-1}\in \h_{t-1}(A_t)$ then the claim follows directly from DR-SEU2. Assume thus that $h^{t-1}\not\in \h_{t-1}(A_t)$ and take $d^{t-1} = (\{h_0\},h_0,s_0;\dots;\{h_{t-1}\},h_{t-1},s_{t-1})\in \D\mc_{t-1}$ with $d^{t-1}\in\h_{t-1}(A_t)$ and compatible with the sub-history of objective states so that according to Definition \ref{thm:extendedascf} we can define for some $\lambda\in (0,1)$
\[
\rho_t(f_t,A_t,s_t|h^{t-1}): = \rho_t(f_t,A_t,s_t|\lambda h^{t-1}+(1-\lambda)d^{t-1}). 
\]
Note that 

(1) the formula depends on the menus and acts chosen only through the tiebreakers $\tau$. 

(2) $d^{t-1}\in \D^{t-1}$ implies that for all $s\leq t$
\begin{align*}
f_s\in M(M(A_s;q_s,u_s), p_s,w_s)\equivalent \lambda f_s+(1-\lambda)h_s\in M\left(M(\lambda A_s+(1-\lambda)\{h_s\};q_s,u_s), p_s,w_s\right). 
\end{align*}
1) and 2) imply immediately that for all $s\leq t$
\[
\tau_{q_s,u_s}(f_s,A_s) = \tau_{q_s,u_s}\left(\lambda f_s+(1-\lambda)h_s,\lambda A_s+(1-\lambda)\{h_s\}\right).
\]
From here the result follows from applying DR-SEU 2 to the history $\lambda h^{t-1}+(1-\lambda)d^{t-1}$.
\end{proof}

We define $\Theta(h^{t-1})\subset\Theta_{t-1}$ as the set of states $\theta_{t-1}$ consistent with $h^{t-1}$ in the sense of Definition \ref{thm:defconsistent}. 

\begin{lemma}\label{thm:charmenuwoties}[Pendant to Lemma 14 in \cite{fis}]
Fix $t\in\{0,\dots,T\}$ and suppose that we have a DR-SEU representation up to time $t$. Take any $h^{t-1} = (A_0,f_0,s_0;\dots;(A_{t-1},f_{t-1},s_{t-1})\in\h_{t-1}$ and $A_t\in\A_t$. Then the following are equivalent.
\begin{enumerate}
\item $A_t\in\A^*_t(h^{t-1})$.
\item For each $\theta_{t-1}\in \Theta(h^{t-1})$ and $(q_t,u_t)\in \pi_{qu}\left(
supp(\psi_t^{\theta_{t-1}})\right)$ we have $|M(A_t;q_t,u_t)|=1$.
\end{enumerate}
\end{lemma}

\begin{proof}
\emph{From A. to B.:} We prove the contrapositive. Suppose that there is $\theta_{t-1}\in \Theta(h^{t-1})$ and $(q_t,u_t)\in \pi_{qu}(supp(\psi_t^{\theta_{t-1}}))$ with $|M(A_t;q_t,u_t)|>1$. Pick any $f_t\in M(A_t;q_t,u_t)$ with $\tau_{q_t,u_t}(f_t,A_t)>0$. Since $u_t$ is non-constant by DR-SEU 1, we can find lotteries $\Delta(X_t)$ with $u_t(\underline r)<u_t(\bar r)$. Fix a sequence $\alpha_n\in (0,1)$ with $\alpha_n\ra 0$ and let $f_t^n = \alpha_n \delta_{\underline{r}} + (1-\alpha_n)f_t$ as well as $\underline{g}_t^n = \alpha_n \delta_{\underline{r}} + (1-\alpha_n)g_t$ and $\bar{g}_t^n = \alpha_n \delta_{\bar{r}} + (1-\alpha_n)g_t$ for all $g_t\in A_t\setminus \{f_t\}$. Let $\underline{B}_t^n=\{\underline{g}_t^n:g_t\in A_t\setminus\{f_t\}\}$ and $\bar{B}_t^n=\{\bar{g}_t^n:g_t\in A_t\setminus\{f_t\}\}$. Finally let $B_t^n = \underline{B}_t^n\cup \bar{B}_t^n$. Then we have $B_t^n\ra^m A_t\setminus\{f_t\}$ and $f_t^n\ra^m f_t$. Furthermore, since $|M(A_t;q_t,u_t)|>1$ we can pick $g_t\in A_t\setminus \{f_t\}$ such that $q_t\cdot u_t(\bar{g}_t^n)> q_t\cdot u_t(f_t^n)$. This implies $\tau_{q_t,u_t}\left(f_t^n,B_t^n\cup\{f_t^n\}\right) = 0$. 

Furthermore, note that for $(q'_t,u'_t)\in \pi_{qu}(\Theta_t)\setminus\{(q_t,u_t)\}$ we always have 
\[
N(M(A_t;q'_t,u'_t);f_t) = N(M(\underline{B}^n_t\cup\{f_t^n\};q'_t,u'_t);f^n_t)\supseteq N(M(B^n_t\cup\{f_t^n\};q'_t,u'_t);f^n_t),
\]
which implies $\tau_{q'_t,u'_t}(f_t,A_t)\ge \tau_{q'_t,u'_t}(f_t^n,B^n_t\cup\{f_t^n\})$ for all $n$. Letting $pred(\theta_{t-1}) = (\theta_0,\dots,\theta_{t-2})$ Lemma \ref{thm:drseu2formulaextended} implies that for all $n$ and all $s_t\in S_t$\footnote{Note that we need Lemma \ref{thm:drseu2formulaextended} here because the history $h^{t-1}$ is not assured to lead to $B^n_t\cup\{f_t^n\}$ with positive probability.}
\begin{align*}
&\rho_{t}(f_t,A_t,s_t|h^{t-1}) - \rho_{t}(f^n_t,B^n_t\cup\{f_t^n\},s_t|h^{t-1}) = \\
&  \resizebox{1.0\hsize}{!}{$\frac{\sum_{\pi_s(\theta'_0,\dots,\theta'_t) = (s_0',\dots,s_t')}\left[\prod_{k=0}^{t-1}\psi_k^{\theta'_{k-1}}(\theta'_k)\tau_{\pi_{qu}(\theta'_k)}(f_k,A_k)\right]\cdot \psi_t^{\theta'_{t-1}}(\theta'_t)\left(\tau_{\pi_{qu}(\theta'_t)}(f_t,A_t)-\tau_{\pi_{qu}(\theta'_t)}(f_t^n,B_t^n\cup\{f_t^n\})\right)}{\sum_{\pi_s(\theta'_0,\dots,\theta'_{t-1}) = (s_0',\dots,s_{t-1}')}\prod_{k=0}^{t-1}\psi_k^{\theta'_{k-1}}(\theta'_k)\tau_{\pi_{qu}(\theta'_t)}(f_t,A_t)}$}\\
& \ge \frac{\prod_{k=0}^{t-1}\psi_k^{\theta_{k-1}}(\theta_k)\tau_{\pi_{qu}(\theta_t)}(f_t,A_t)}{\sum_{\pi_s(\theta'_0,\dots,\theta'_{t-1})=(s_0',\dots,s_{t-1}')}\prod_{k=0}^{t-1}\psi_k^{\theta'_{k-1}}(\theta'_k)\tau_{\pi_{qu}(\theta'_t)}(f_t,A_t)}>0.
\end{align*}
The last line doesn't depend on $n$ so we get

\[
\limsup_{n\ra\infty}\rho_{t}(f^n_t,B^n_t\cup\{f_t^n\},s_t|h^{t-1})<\rho_{t}(f_t,A_t,s_t|h^{t-1}).
\]
By Definition \ref{thm:menuwoties} we have $A_t\not\in \A_{t}^*(h^{t-1})$.

\emph{From B. to A.:} Suppose $A_t$ satisfies B. Consider any $f_t\in A_t, f_t^n\ra^m f_t, B_t^n\ra^m A_t\setminus\{f_t\}$. Consider a $\theta_{t-1}\in\Theta(h_{t-1})$ and $(q_t,u_t)\in \pi_{qu}(supp(\psi_t^{\theta_{t-1}}))$. 

By 2. we either have $M(A_t;q_t,u_t) = \{f_t\}$ or $f_t\not\in M(A_t;q_t,u_t)$. In the former case $q_t\cdot u_t(f_t)>q_t\cdot u_t(g_t)$ for all $A_t\ni g_t\neq f_t$. By linearity we have $q_t\cdot u_t(f^n_t)>q_t\cdot u_t(g^n_t)$ for all $g_t^n\in B_t^n$ for all $n$ large enough. 

This implies $\tau_{q_t,u_t}(f_t,A_t) = \lim_n \tau_{q_t,u_t}(f^n_t,B^n_t\cup\{f_t^n\}) = 1$. In the case that $f_t\not\in M(A_t;q_t,u_t)$ we have similarly $q_t\cdot u_t(f_t)<q_t\cdot u_t(g_t)$ for some $A_t\ni g_t\neq f_t$. But then linearity implies $\tau_{q_t,u_t}(f_t,A_t) = \lim_n \tau_{q_t,u_t}(f^n_t,B^n_t\cup\{f_t^n\}) = 0$. 

Overall, for all $\theta_{t-1}\in \Theta(h^{t-1})$ and $(q_t,u_t)\in \pi_{qu}(supp(\psi_t^{\theta_{t-1}}))$ it holds $\tau_{q_t,u_t}(f_t,A_t) = \lim_n \tau_{q_t,u_t}(f^n_t,B^n_t\cup\{f_t^n\})$. By looking at the formula in Lemma \ref{thm:drseu2formulaextended} we see that this implies for all $s_t\in S_t$ and all $n$ large enough
\[
\rho_t(f_t^n,B_t^n\cup\{f_t^n\},s_t|h^{t-1}) = \rho_t(f_t,A_t,s_t|h^{t-1}).
\]
This finishes the proof.
\end{proof}

Before continuing, we register the piece of notation for an arbitrary $f_t\in\F_t$: $supp^Z(f_t) := \cup_{q\in supp(f_t)}supp(q)$.

\begin{lemma}\label{thm:strongapprmenuwoties}[Pendant to Lemma 17 in \cite{fis}.]
Suppose we have a DR-SEU representation till time $T$. Fix any $\theta_{t-1}\in \Theta_{t-1}$, separating history $h^{t-1}$ for $\theta_{t-1}$ and $A_t\in \A_t$. Then there exists a sequence $A_t^n\ra^m A_t$ with $A_t^n\in \A^*_{t}(h^{t-1})$.\footnote{Note that because of Remark \ref{thm:sephistproperties} this is w.l.o.g.} Moreover, given a $(q'_t,u'_t)\in \pi_{qu}(supp(\psi_t^{\theta_{t-1}}))$ and $f_t\in M(A_t;q_t,u_t)$ we can ensure in this construction that there is $f_t^n(q'_t,u'_t)\in A_t^n$ with $f_t^n(q'_t,u'_t)\ra^m f_t$ such that $QU_{\theta_{t-1}}(A_t^n,f_t^n(q'_t,u'_t),s_t)=\{(q'_t,u'_t)\}$ for all $s_t\in supp(q'_t)$.
\end{lemma}

\begin{proof}
Let $QU(\theta_{t-1}):=\pi_{qu}(supp(\psi_t^{\theta_{t-1}}))$. By Definition \ref{thm:drseudef} there exists a finite set $Y_t\subseteq X_t$ such that (i) for any $(q_t,u_t)\in QU(\theta_{t-1})$, $u_t$ is non-constant over $Y_t$; (ii) for any distinct $(q_t,u_t)\neq (q'_t,u'_t)$, both in $supp(\psi_t^{\theta_{t-1}})$, $(q_t,u_t)\neq (q'_t,u'_t)$ on $\F_t(Y_t)$ \footnote{Recall this denotes the set of acts whose images are contained in $\Delta(Y_t)$.} and (iii) $\cup_{f_t\in A_t}supp^Z(f_t)\subseteq Y_t$.  

By (i) and (ii) and Lemma \ref{thm:separationproperty} we can find a separating menu $C_t = \{f_t(q_t,u_t):(q_t,u_t)\in QU(\theta_{t-1})\}$, i.e. such that for all $(q_t,u_t)\in QU(\theta_{t-1})$ we have $M(C_t;q_t,u_t) =\{f_t(q_t,u_t)\}$. 

Pick $z(q_t,u_t)\in argmax_{y\in Y_t}u_t(y)$ for all $(q_t,u_t)\in QU(\theta_{t-1})$, write by a small abuse of notation again $z(q_t,u_t)$ for the constant act paying out $z(q_t,u_t)$ with probability one at each state of the world and define the constant act $b_t = \frac{1}{|Y_t|}\sum_{y\in Y_t}\delta_{y}\in \Delta(Y_t)$. Again, we denote by $b_t$ with a small abuse of notation the constant act which pays out the lottery $b_t$ in each state of the world. 

By (i) we have $q_t\cdot u_t(z(q_t,u_t))>q_t\cdot u_t(b_t)$ for all $(q_t,u_t)\in QU(\theta_{t-1})$. If we then define $\hat f_t(q_t,u_t)=\alpha f_t(q_t,u_t) + (1-\alpha)z(q_t,u_t)$ we still have $q_t\cdot u_t(\hat f_t(q_t,u_t))>q_t\cdot u_t(b_t)$ if we choose $\alpha\in(0,1)$ small enough. This is because of the `finiteness' of all the data going into the problem. Note also, that if we define $\hat C_t= \{\hat f_t(q_t,u_t):(q_t,u_t)\in QU(\theta_{t-1})\}$ we still have $M(\hat C_t;q_t,u_t) =\{\hat f_t(q_t,u_t)\}$. 

Now pick for each $(q_t,u_t)\in QU(\theta_{t-1})$ a $f_t(q_t,u_t)\in M(A_t;q_t,u_t)$. To also prove the `moreover' part, pick $f_t(q_t,u_t)$ as required in the `moreover' part. Fix a sequence $\epsilon_n\in(0,1)$ going to zero. For each $n$ and $(q_t,u_t)\in QU(\theta_{t-1}):=supp(\psi_t^{\theta_{t-1}})$ let $f_t^n(q_t,u_t) = (1-\epsilon_n)f_t(q_t,u_t) + \epsilon_n\hat f_t(q_t,u_t)$. Moreover, for each $g_t\in A_t$ define $g_t^n = (1-\epsilon_n)g_t+\epsilon_n b_t$. Finally, take 
\[
A_t^n = \{f_t^n(q_t,u_t):(q_t,u_t)\in QU(\theta_{t-1})\}\cup \{g_t^n: g_t\in A_t\}. 
\]
Note that $A_t^n\ra^m A_t$. Finally, note that by construction we have $M(A_t^n;q_t,u_t) = \{f_t^n(q_t,u_t)\}$. 

Since by Remark \ref{thm:sephistproperties}, part 2) $\theta_{t-1}$ is the only state consistent with $h^{t-1}$ Lemma \ref{thm:charmenuwoties} and the construction here imply $A_t^n\in \A^*_t(h_{t-1})$, as required. The last required property, i.e. $QU_{\theta_{t-1}}(A_t^n,f_t^n(q_t,u_t),s_t)=\{(q_t,u_t)\}$ for any $s_t\in supp(q_t)$ is true by construction. 
\end{proof}

The next result proves the existence of separating histories.  
\begin{lemma}\label{thm:sephistexistence}[Pendant to Lemma 2 in \cite{fis}.]
For any $\theta_t\in \Theta_t$ with $pred(\theta_t) = (\theta'_0,\dots,\theta'_{t-1})$ there always exists a separating history. 
\end{lemma}

\begin{proof}
By Lemma \ref{thm:separationproperty} and DR-SEU 1 we can construct for $\Theta_0$ a menu $B_0 = \{f_0^{\pi_{qu}(\theta_0)}:\theta_0\in \Theta_0\}\in \A_0$ such that $QU_0(B_0,f_0^{\pi_{qu}(\theta_0)},\pi_s(\theta_0)) = \{\pi_{qu}(\theta_0)\}$ for all $\theta_0\in\Theta_0$. Proceeding inductively, again using Lemma \ref{thm:separationproperty} and DR-SEU 1, we can find a menu $B_k(\theta_{k-1})=\{f_k^{\pi_{qu}(\theta_{k})}:\pi_{qu}(\theta_{k})\in \pi_{qu}(supp(\psi_k^{\theta_{k-1}}))\}$ for all $\theta_{k-1}\in \Theta_{k-1}$ such that (!) $QU_{\theta_{k-1}}(B_k(\theta_{k-1}),f_k^{\pi_{qu}(\theta_{k})},\pi_s(\theta_k)) = \{\pi_{qu}(\theta_k)\}$ for all $\pi_{qu}(\theta_{k})\in \pi_{qu}(supp(\psi_k^{\theta_{k-1}}))$. 

Moreover, we can assume that $B_{k+1}(\theta_k)\in supp^A(f_k^{\pi_{qu}(\theta_k)})$ for all $k=0,\dots,t-1$ and $\theta_k\in \Theta_k$ by mixing each $f_k^{\pi_{qu} (\theta_k)}$ with the constant act delivering $(z,B_{k+1}(\theta_k))$ for a $z\in Z$ fixed throughout. If the mixing puts small enough probability on the constant act in question, then (!) is preserved. 

This implies in particular that 
$h^t:=(B_0,f_0^{\theta'_0},s'_0;\dots; B_t(\theta'_{t-1}), f_t^{\pi_{qu}(\theta_t)},\pi_s(\theta_t))\in \h_{t}$. Moreover, since $QU_{\theta'_{k-1}}(B_k(\theta'_{k-1}),f_k^{\pi_{qu}(\theta'_k)},\pi_s(\theta'_k)) = \{\pi_{qu}(\theta'_k)\}$, it follows by Lemma \ref{thm:consistentpred} that only the state $\theta'_k$ is consistent with $h^k$ for $k=0,\dots,t$. Additionally, by construction for all $(q_k,u_k)\in \pi_{qu}(supp(\psi_k^{\theta'_{k-1}}))$ we have $M(B_k(\theta'_{k-1});q_k,u_k) = \{f_k^{q_k,u_k}\}$. Hence, by Lemma \ref{thm:charmenuwoties} we have $B_k(\theta'_{k-1})\in A_k^*(h^{k-1})$. Since this holds for all $k$ we have overall $h^{t}\in \h_{t}^*$. In summary it follows that $h^t$ is a separating history for $\theta_t$. 

\end{proof}

\section{Proof of the main result in the dynamic setting}\label{sec:proofmain}

Here we prove the representation theorem in its AS-version for DR-SEU. The proofs for the special cases Evolving SEU and Gradual Learning are in the online appendix. 

\subsection{Proof for DR-SEU}

\subsubsection{Sufficiency} We proceed by induction on $t\leq T$. First consider $t=0$. Because of the axioms and $X_0$ being a separable metric space we have the existence of an AS-version R-SEU representation for $\rho$ on $\h^0$. Depending on the version of the representation we are looking at, i.e. whether CIB or NUC is satisfied, we also have the respective property for the representation at time $t=0$. Set $SEU_0 = \{\pi_{qu}(\theta_0):\theta_0\in\Theta_0\}$.

Suppose next that we have the representation for all $t'\leq t$. We now construct the representation for $t+1$.

To this end, pick a subjective state $\theta_t\in\Theta_t$ and pick an arbitrary separating history $h^t(\theta_t)$ for $\theta_t$. This exists by Lemma \ref{thm:sephistexistence}. Define
\[
\rho^{\theta_t}_{t+1}(\cdot,A_{t+1},s_{t+1}) = \rho(\cdot,A_{t+1},s_{t+1}|h^t(\theta_t)).
\]
Here we use for the right-hand side the extended aSCF, which is well-defined as per Lemma \ref{thm:extendedascf}. As per axioms we get a representation 

\begin{equation}\label{eq:th1axiom0}
\rho^{\theta_t}_{t+1}(f_{t+1},A_{t+1},s_{t+1}) = \sum_{(q_{t+1},u_{t+1})\in SEU_{t+1}^{\theta_t}}\psi_{t+1}^{\theta_t}(q_{t+1},u_{t+1},s_{t+1})\tau_{(q_{t+1},u_{t+1})}(f_{t+1},A_{t+1}). 
\end{equation}

Again, depending on the respective property required by the axioms on beliefs, CIB or NUC, the kernel $\psi_{t+1}^{\theta_t}$ satisfies the respectively required property in DR-SEU 1.

We set $SEU_{t+1} = \sqcup_{\theta_t} SEU_{t+1}^{\theta_t}$ and define $\Theta_{t+1}$ accordingly by the collection of all $(q_{t+1},u_{t+1},s_{t+1})$ such that $(q_{t+1},u_{t+1})\in SEU_{t+1}$ and $s_{t+1}\in supp(q_{t+1})$.\footnote{The symbol $\sqcup$ means we join them into a union of disjoint sets, i.e. if a SEU $(q,u)$ appears in the support of two distinct $\theta_t,\theta'_t$ then we count it twice.} We extend the measures $\psi^{\theta_t}_{t+1}$ to all of $SEU_{t+1}$ by setting them to zero outside of $SEU_{t+1}^{\theta_t}$.

We see that DR-SEU 1 is satisfied by Definition. 

With this definition we can rewrite \eqref{eq:th1axiom0} as 

\begin{equation*}
\rho^{\theta_t}_{t+1}(f_{t+1},A_{t+1},s_{t+1}) = \sum_{\theta_{t+1}\in \Theta_{t+1}}\psi_{t+1}^{\theta_t}(\theta_{t+1})\tau_{\pi_{qu}(\theta_{t+1})}(f_{t+1},A_{t+1}). 
\end{equation*}

Before showing DR-SEU 2, we show that the definition of $\rho_{t+1}^{\theta_t}$ doesn't depend on the particular separating history for $\theta_t$ picked in its definition.

\begin{lemma}\label{thm:fislm3}
Fix any $\theta_t\in\Theta_t$ with $pred(\theta_t) = (\theta_0,\dots,\theta_{t-1})$. Suppose $h^t = (f_0,A_0,s_0;\dots;f_t,A_t,s_t)\in\h_t$ satisfies $QU_{\theta_{k-1}}(A_k,f_k,s_k) = \{\pi_{qu}(\theta_k)\}$ for all $k=0,1,\dots,t$. Then for any $A_{t+1}\in\A_{t+1}$ and $s_{t+1}\in S_{t+1}$ it holds  $\rho_{t+1}(\cdot,A_{t+1},s_{t+1}|h^t) = \rho_{t+1}^{\theta_t}(\cdot,A_{t+1},s_{t+1})$. 
\end{lemma}

\begin{proof}
\emph{Step 1.} Let $\tilde h^t = (\tilde f_0,\tilde A_0,\tilde s_0;\dots;\tilde f_t,\tilde A_t,\tilde s_t)\in\h_t$ denote the separating history for $\theta_t$ used to define $\rho_{t+1}^{\theta_t}$. We first prove the Lemma under the assumption that $h^t\in \h^*_t$, i.e. that $h^t$ is itself a separating history for $\theta_t$. Note that since $h^t,\tilde h^t\in\h^*_t$ and $QU_{\theta_{k-1}}(A_k,f_k,s_k) = QU_{\theta_{k-1}}(\tilde A_k,\tilde f_k,\tilde s_k)=\{(q_k,u_k)\}$ Lemma \ref{thm:charmenuwoties} implies that $M(A_k,q_k,u_k) = \{f_k\}$ and $M(\tilde A_k,q_k,u_k) = \{\tilde f_k\}$. 

Pick lotteries $(r_0,\dots,r_t)\in\Delta(X_0)\times \dots\times \Delta(X_t)$ such that $A_{t+1}\in supp(r_t^A)$ and so that for all $k=0,\dots,t-1$ it holds
\[
\{B_{k+1},\tilde B_{k+1},B_{k+1}\cup \tilde B_{k+1}\}\subset supp(r_k^A),
\]
where $B_l = \frac{1}{3}A_l+\frac{1}{3}\{\tilde f_l\}+\frac{1}{3}\{r_l\}$ and $\tilde B_l = \frac{1}{3}\tilde A_l+\frac{1}{3}\{ f_l\}+\frac{1}{3}\{r_l\}$ for $l=0,\dots,t$. Here we have identified lotteries with their respective constant acts. Define also the mixture act $g_l = \frac{1}{3}f_l+\frac{1}{3}\tilde f_l+\frac{1}{3}r_l$. 

Linearity of SEU functionals implies
\[
QU_{\theta_{k-1}}(B_{k},g_k,s_k) = QU_{\theta_{k-1}}(\tilde B_{k},g_k,\tilde s_k) = QU_{\theta_{k-1}}(\tilde B_{k} \cup B_{k},g_k,\tilde s_k) = \{(q_k,u_k)\}.\footnote{Note that in the last equality it is irrelevant whether we write $\tilde s_k$ or $s_k$ because of the argument in the first paragraph of the first step of the proof.} 
\]
We also have 
\[
M(B_k,q_k,u_k) = M(\tilde B_k,q_k,u_k) =M(\tilde B_k\cup B_k,q_k,u_k) = \{g_k\}. 
\]

This implies that for all $k=0,\dots,t$ and $(q'_k,u'_k)\in \pi_{qu}\left(supp(\psi^{\theta_{k-1}}_{k-1})\right)$ we have 
\[
\tau_{q'_k,u'_k}(g_k,B_k)=\tau_{q'_k,u'_k}(g_k,\tilde B_k) = \tau_{q'_k,u'_k}(g_k,\tilde B_k\cup B_k) = \begin{cases}
1, \text{ if }\pi_{qu}(\theta_k)= \pi_{qu}(\theta'_k)\\
0, \text{ if }\pi_{qu}(\theta_k)\neq \pi_{qu}(\theta'_k).
\end{cases}
\]
By DR-SEU 2 of the inductive hypothesis it follows for all $k=0,\dots, t-1$ that 
\begin{align*}
\begin{split}
&\psi_{t}^{\theta_{t-1}}(q_t,u_t,s_t) = \rho_t(g_t,\tilde B_t,s_t|\tilde B_0,g_0,s_0;\dots;\tilde B_{t-1},g_{t-1},s_{t-1})\\
& =  \rho_t(g_t,B_t,s_t|B_0,g_0,s_0,\dots,B_{t-1},g_{t-1},s_{t-1}) \\
& = \rho_t(g_t,\tilde B_t\cup B_t,s_t|\tilde B_0,g_0,s_0,\dots,\tilde B_{k}\cup B_{k},g_{k},s_{k},\dots,\tilde B_{t-1}\cup B_{t-1},g_{t-1},s_{t-1})\\
&= \rho_t(g_t,\tilde B_t\cup B_t,s_t| B_0,g_0,s_0,\dots,\tilde B_{k}\cup B_{k},g_{k},s_{k},\dots,\tilde B_{t-1}\cup B_{t-1},g_{t-1},s_{t-1}).
\end{split}
\end{align*}
Note that in these relations we could have replaced everywhere $s_k$ with $\tilde s_k$, since both are in the support of $q_k$ by the definition of the operator $QU_{\theta_{k-1}}$. 

Since all the histories considered above are compatible  with $A_{t+1}$ we apply CHI recursively to get 
\begin{align}
\begin{split}
&\rho_{t+1}(\cdot,A_{t+1},s_{t+1}|B_0, g_0,s_0;\dots;B_t, g_t,s_t) = \rho_{t+1}(\cdot,A_{t+1},s_{t+1}|\tilde B_0\cup B_0, g_0,s_0;\dots;\tilde B_t\cup B_t, g_t,s_t)\label{eq:ref11}\\& = \rho_{t+1}(\cdot,A_{t+1},s_{t+1}|\tilde B_0, g_0,s_0;\dots;\tilde B_t, g_t,s_t).
\end{split}
\end{align}
Here $s_{t+1}\in S_{t+1}$ is arbitrary. 
Use LHI and Lemma \ref{thm:extascfwelldefined} (well-definiteness of the extended aSCF) to get 
 \begin{align}
 \begin{split}
 &\rho_{t+1}(\cdot, A_{t+1},s_{t+1}|h^t) = \rho_{t+1}(\cdot, A_{t+1},s_{t+1}|B_0,g_0,s_0;\dots;B_t,g_t,s_t),\label{eq:ref12}\\
 &\rho_{t+1}(\cdot, A_{t+1},s_{t+1}|\tilde h^t) = \rho_{t+1}(\cdot, A_{t+1},\tilde s_{t+1}|\tilde B_0,g_0,\tilde s_0;\dots;\tilde B_t,g_t,\tilde s_t).
 \end{split}
 \end{align}
Finally, we put \eqref{eq:ref11} and \eqref{eq:ref12} together to get 
\[
\rho_{t+1}(\cdot, A_{t+1},s_{t+1}|h^t) = \rho_{t+1}(\cdot, A_{t+1},s_{t+1}|\tilde h^t).
\]
This establishes the proof for the case that $h^t\in \h^*_t$. 

\emph{Step 2.} Now suppose that $h^t\not\in \h^{t*}$. Take any sequence of (valid) histories $h^{t,n}\in \h^{t*}$ with $h^{t,n}\ra^m h^t$ with $h^{t,n} = (A_0^n,f_0^n,s_0^n;\dots;A_t^n,f_t^n,s_t^n)$ for each $n$. Existence is ensured by the Axiom of History Continuity. 

\textbf{Claim.} For all large $n$ we have $QU_{\theta_{k-1}}(A_k^n,f_k^n,s_k^n) = \{\pi_{qu}(\theta_k)\}$ for all $k=0,\dots,t$. 

\emph{Proof of Claim.}
Take some subsequence $(h^{t,n_l})_{l\geq 1}$ of $(h^{t,n})_{n\geq 1}$. We have $\rho_k(f_k^{n_l},A_k^{n_l},s_k^{n_l}|h^{k-1,n_l})>0$ for all $k=0,\dots,t$ by the definition of histories. Assume that by DR-SEU 2 for $k\leq t$ we can find $\theta'_{t,n_l}\in \Theta_t$ with $pred(\theta'_{t,n_l}) = (\theta'_{0,n_l},\dots,\theta'_{t-1,n_l})$ and $(\theta'_{0,n_l},\dots,\theta'_{t,n_l})\neq (\theta_{0,n_l},\dots,\theta_{t,n_l})$ such that $\pi_{qu}(\theta'_{k,n_l})\in QU_{\theta'_{k-1,n_l}}(f_k^{n_l},A_k^{n_l},s_k^{n_l})$ for all $k=0,\dots,t$. Since $S_0\times\dots \times S_t$ is finite, by choosing an appropriate subsequence we can assume $(\theta'_{0,n_l},\dots,\theta'_{t,n_l}) = (\theta'_{0},\dots,\theta'_{t})\neq (\theta_{0},\dots,\theta_{t})$ for all $l$. Pick the smallest $k$ such that $\theta'_k\neq\theta_k$ and pick any $g_k\in A_k$. Since $A_k^{n_l}\ra^m A_k$ we can find $g_k^{n_l}\in A_k^{n_l}$ with $g_k^{n_l}\ra^m g_k$. Since we have for all $l$ that $\pi_{qu}(\theta'_k)\in QU_{\theta'_{k-1}}(f_k^{n_l},A_k^{n_l},s_k^{n_l})$, so $\pi_{qu}(\theta'_k)(f_k^{n_l})\geq \pi_{qu}(\theta'_k)(g_k^{n_l})$ and thus also $\pi_{qu}(\theta'_k)(f_k)\geq \pi_{qu}(\theta'_k)(g_k)$ by linearity of the SEU represented by $\pi_{qu}(\theta'_k)$. 

Moreover, by choice of $k$ we have $\pi_{qu}(\theta'_k)\in \pi_{qu}(supp(\psi_{k-1}^{\theta'_{k-1}})) = \pi_{qu}(supp(\psi_{k-1}^{\theta_{k-1}}))$. But the fact that $QU_{\theta_{k-1}}(f_k,A_k,s_k) = \{\pi_{qu}(\theta_k)\}$ implies that $\pi_{qu}(\theta'_k) = \pi_{qu}(\theta_k)$ for all $k$. We have thus shown that each subsequence $(h^{t,n_l})_{l\geq 1}$ of $(h^{t,n})_{n\geq 1}$ has a subsequence with the property required by the claim. A simple argument by contradiction now establishes the claim. 

\emph{End of Proof of Claim.}

The Claim establishes that for all large enough $n$, $h^{t,n}$ satisfies the assumption of the Lemma. Since $h^{t,n}\in \h_{t}^*$, Step 1 then shows that $\rho_{t+1}(f_{t+1},A_{t+1},s_{t+1}|h^{t,n}) = \rho_{t+1}^{\theta_t}(f_{t+1},A_{t+1},s_{t+1})$ for all large enough $n$ and all $f_{t+1},s_{t+1}$. History Continuity now allows to close the argument and prove that 
\[
\rho_{t+1}(f_{t+1},A_{t+1},s_{t+1}|h^t) = \rho_{t+1}^{\theta_t}(f_{t+1},A_{t+1},s_{t+1}).
\]
\end{proof}

As a next step we establish that $\rho_{t+1}(\cdot|h^t)$ is a weighted average of the $\rho_{t+1}^{\theta_t}$ for $\theta_t$ consistent with $h^t$.  

\begin{lemma}\label{thm:lemma4fis}[Pendant of Lemma 4 in \cite{fis}]
For any $f_{t+1}\in A_{t+1}$ and $h^{t} = (A_0,f_0,s_0;\dots;A_t,f_t,s_t)\in \h_t(A_{t+1})$ we have \\
\resizebox{1.1 \textwidth}{!} {
$
\rho_{t+1}(f_{t+1},A_{t+1},s_{t+1}|h^t) = \frac{\sum_{\pi_s(\theta_0,\theta_1,\dots,\theta_t)=(s_0,\dots,s_t)}\prod_{k=0}^t\psi_k^{\theta_{k-1}}(\theta_k)\tau_{\pi_{qu}(\theta_k)}(f_k,A_k)\cdot\rho_{t+1}^{\theta_t}(f_{t+1},A_{t+1},s_{t+1}) }{\sum_{\pi_s(\theta_0,\dots,\theta_t)=(s_0,\dots,s_t)}\prod_{k=0}^t\psi_k^{\theta_{k-1}}(\theta_k)\tau_{\pi_{qu}(\theta_k)}(f_k,A_k)}.
$
}
\end{lemma}

\begin{proof}
Let $(\theta_t^1,\dots,\theta_t^m)$ be the set of elements from $\Theta_t$ that are consistent with history $h^t$, as defined in Definition \ref{thm:defconsistent}. For each $j=1,\dots,m$ let $\hat h^t(j) = (B_0^j,f_0^j,s_0;\dots;B_t^j,f_t^j,s_t)$ be a separating history for $\theta_t^j$. Note that such a history exists because under $\theta_t^j$ and its predecessors the `right' sub-history of objective states $(s_0,\dots,s_t)$ has positive probability.

We can assume w.l.o.g. that for each $k=1,\dots,t$ in all objective states $s_{t-1}$ there is a positive probability (albeit possibly small) for $(z,\frac{1}{2}A_k+\frac{1}{2}B_k^j)$ for some $z$. This can be achieved by mixing with constant acts. Thus, w.l.o.g. we can ensure that 
$h^t(j) := \frac{1}{2}h^t+\frac{1}{2}\hat h^t(j)\in \h_t(A_{t+1})$. 

Note first that it holds for all $j=1,\dots,m$ 
\begin{equation}\label{eq:th1help0}
\rho(h^t(j)) = \prod_{k=0}^t \psi_k^{\theta^j_{k-1}}(\theta_k^j)\tau_{\pi_{qu}(\theta_k^j)}(f_k,A_k).
\end{equation}

This follows from the following calculation. 
\begin{align*}
\rho(h^t(j)) & = \prod_{k=0}^t\rho_k(\frac{1}{2}f_k+\frac{1}{2}f_k^j;\frac{1}{2}A_k+\frac{1}{2}B_k^j,s_k|h^k(j))\\
& = \sum_{(\theta'_0,\dots,\theta'_t)}\prod_{k=0}^t\psi_k^{\theta'_{k-1}}(\theta'_k)\tau_{\pi_{qu}(\theta'_k)}(\frac{1}{2}f_k+\frac{1}{2}f_k^j,\frac{1}{2}A_k+\frac{1}{2}B_k^j)\\
& = \prod_{k=0}^t \psi_k^{\theta^j_{k-1}}(\theta^j_k)\tau_{\pi_{qu}(\theta^j_k)}(\frac{1}{2}f_k+\frac{1}{2}f_k^j,\frac{1}{2}A_k+\frac{1}{2}B_k^j)\\
& = \prod_{k=0}^t \psi_k^{\theta^j_{k-1}}(\theta^j_k)\tau_{\pi_{qu}(\theta^j_k)}(f_k,A_k).
\end{align*}

Here the second equality follows from DR-SEU2 and the inductive hypothesis for Sufficiency. The final two equalities follow from the fact that $h^t(j)$ is a separating history for $\theta^j_t$ (see Lemma \ref{thm:charmenuwoties}). Since $\theta_t^j$ is consistent with $h^t$ it follows \\$\psi_k^{\theta_{k-1}}(\theta_k)\cdot\tau_{\pi_{qu}(\theta^j_k)}(f_k,A_k)>0$ for all $k=0,\dots,t$ and therefore also: 

for every $\pi_{qu}(\theta'_k)\in \pi_{qu}(supp(\psi_k^{\theta_{k-1}^j}))$, $\tau_{\pi_{qu}(\theta'_k)}(\frac{1}{2}f_k+\frac{1}{2}f_k^j,\frac{1}{2}A_k+\frac{1}{2}B_k)>0$ if and only if $\pi_{qu}(\theta'_k) = \pi_{qu}(\theta^j_k)$. This yields the third equality above. 

Define now $H^t = \{h^t(j):j=1,\dots,m\}\subset \h_t(A_{t+1})$. By repeated application of LHI we have that 
\begin{equation}\label{eq:th1help1}
\rho_{t+1}(f_{t+1},A_{t+1},s_{t+1}|h^t) = \rho_{t+1}(f_{t+1},A_{t+1},s_{t+1}|H^t).
\end{equation}

Moreover, we have that 
\begin{align}
\begin{split}\label{eq:th1help2}
&\rho_{t+1}(f_{t+1},A_{t+1},s_{t+1}|H^t) = \frac{\sum_{j=1}^m \rho(h^t(j))\rho_{t+1}(f_{t+1},A_{t+1},s_{t+1}|h^t(j))}{\sum_{j=1}^m \rho(h^t(j))}\\
& = \frac{\sum_{j=1}^m \prod_{k=0}^t\psi_k^{\theta_{k-1}^j}(\theta_k^j)\tau_{\pi_{qu}(\theta_k^j)}(f_k,A_k)\cdot \rho_{t+1}(f_{t+1},A_{t+1},s_{t+1}|h^t(j))}{\sum_{j=1}^m \prod_{k=0}^t\psi_k^{\theta_{k-1}^j}(\theta_k^j)\tau_{\pi_{qu}(\theta_k^j)}(f_k,A_k)}\\
& = \frac{\sum_{j=1}^m \prod_{k=0}^t\psi_k^{\theta_{k-1}^j}(\theta_k^j)\tau_{\pi_{qu}(\theta_k^j)}(f_k,A_k)\rho^{\theta_t^j}_{t+1}(f_{t+1},A_{t+1},s_{t+1})}{\sum_{j=1}^m \prod_{k=0}^t\psi_k^{\theta_{k-1}^j}(\theta_k^j)\tau_{\pi_{qu}(\theta_k^j)}(f_k,A_k)}\\
& = \frac{\sum_{\pi_s(\theta_0,\dots,\theta_t)=(s_0,\dots,s_t)}\prod_{k=0}^t\psi_k^{\theta_{k-1}}(\theta_k)\tau_{\pi_{qu}(\theta_k)}(f_k,A_k)\cdot\rho^{\theta_t}_{t+1}(f_{t+1},A_{t+1},s_{t+1})}{\sum_{\pi_s(\theta_0,\dots,\theta_t)=(s_0,\dots,s_t)}\prod_{k=0}^t\psi_k^{\theta_{k-1}}(\theta_k)\tau_{\pi_{qu}(\theta_k)}(f_k,A_k)}.
\end{split}
\end{align}

Here the first equality holds by definition of choice conditional on a set of histories. The second follows from \eqref{eq:th1help0}. Note that $h^t(j)$, being a separating history for $\theta_t^j$ and consistent with $h^t$, implies $QU_{\theta_k^j}(\frac{1}{2}f_k+\frac{1}{2}f_k^j,\frac{1}{2}A_k+\frac{1}{2}B_k^j,s_k) = \{\pi_{qu}(\theta_k^j)\}$ for each $k$. Hence, Lemma \ref{thm:fislm3} implies that $\rho_{t+1}(f_{t+1},A_{t+1},s_{t+1}|h^t(j)) = \rho_{t+1}^{\theta_t^j}(f_{t+1},A_{t+1},s_{t+1})$. This yields the third equality. 

Finally, note that if $(\theta_0,\dots,\theta_t)\in \Theta_0\times\dots\times\Theta_t$ has $(\theta_0,\dots,\theta_t)\neq (\theta^j_0,\dots,\theta^j_t)$ for all $j$, then either $\theta_t\not\in\{\theta_t^j:j=1,\dots,m\}$ or $\theta_t = \theta_t^j$ for some $j$ but $pred(\theta_t^j)\neq (\theta_0,\dots,\theta_{t-1})$. In either case we have 
$\prod_{k=0}^t\psi_k^{\theta_{k-1}}(\theta_k)\tau_{\pi_{qu}(\theta_k)}(f_k,A_k) = 0$ by the inductive step up to $t$. This justifies the last equality in \eqref{eq:th1help2}.

Combining \eqref{eq:th1help1} and \eqref{eq:th1help2}, we obtain the desired conclusion. 
\end{proof}

We show that our construction satisfies DR-SEU2 at step $t+1$ as well. We recall the representation in \eqref{eq:th1axiom0} and combine it with Lemma \ref{thm:lemma4fis} to get for any $h^t = (A_0,f_0,s_0;\dots;A_t,f_t,s_t)\in \h_t(A_{t+1})$ 

\begin{align*}
\rho_{t+1}(f_{t+1},A_{t+1},s_{t+1}|h^t) =  
\end{align*}
\resizebox{1.1\hsize}{!}{%
        $
= \frac{\sum_{\pi_s(\theta_0,\dots,\theta_t)=(s_0,\dots,s_t)}\prod_{k=0}^t\psi_k^{\theta_{k-1}}(\theta_k)\tau_{\pi_{qu}(\theta_k)}(f_k,A_k)\cdot \left(\sum_{\theta_{t+1}}\psi_{t+1}^{\theta_t}(\theta_{t+1})\tau_{\pi_{qu}(\theta_{t+1})}(f_{t+1},A_{t+1})\right)}{\sum_{\pi_s(\theta_0,\dots,\theta_t)=(s_0,\dots,s_t)}\prod_{k=0}^t\psi_k^{\theta_{k-1}}(\theta_k)\tau_{\pi_{qu}(\theta_k)}(f_k,A_k)}$%
        }
\begin{align*}
= \frac{\sum_{\pi_s(\theta_0,\dots,\theta_{t+1})=(s_0,\dots,s_{t+1})}\prod_{k=0}^{t+1}\psi_k^{\theta_{k-1}}(\theta_k)\tau_{\pi_{qu}(\theta_k)}(f_k,A_k)}{\sum_{\pi_s(\theta_0,\dots,\theta_t)=(s_0,\dots,s_t)}\prod_{k=0}^{t}\psi_k^{\theta_{k-1}}(\theta_k)\tau_{\pi_{qu}(\theta_k)}(f_k,A_k)}.
\end{align*}

\subsubsection{Necessity}

Suppose that $\rho$ admits a DR-SEU representation as in Definition \ref{thm:drseudef}. From the representation in 
DR-SEU 2 and from Lemma \ref{thm:extascfwelldefined} we have that for a fixed $h^t\in \h_t$ the static aSCF rule $\rho_{t}(\cdot|h^t)$ satisfies the static axioms.

\textbf{Claim 1.} $\rho$ satisfies CHI.

\begin{proof}
Take any $h^{t-1} = (h_{-k}^{t-1}, (A_k,f_k,s_k))$ and $\hat h^{t-1} = (h_{-k}^{t-1}, (B_k,f_k,s_k))$ with $A_k\subseteq B_k$ and $\rho_k(f_k,A_k,s_k|h^{k-1}) = \rho_k(f_k,B_k,s_k|h^{k-1})$. From DR-SEU 2 this implies 

\begin{align}
\begin{split}\label{eq:necth1help1}
&\sum_{(\theta_0,\dots,\theta_k)}\left(\prod_{l=0}^{k}\psi_l^{\theta_{l-1}}(\theta_l)\tau_{\pi_{qu}(\theta_l)}(f_l,A_l)\right) \\&= \sum_{(\theta_0,\dots,\theta_k)}\left(\prod_{l=0}^{k}\psi_l^{\theta_{l-1}}(\theta_l)\tau_{\pi_{qu}(\theta_l)}(f_l,B_l)\right).
\end{split}
\end{align}

It follows from $\tau_{\pi_{qu}(\theta_l)}(f_k,A_k)\leq \tau_{\pi_{qu}(\theta_l)}(f_k,B_k)$ that equality in \eqref{eq:necth1help1} can hold if and only if $\tau_{\pi_{qu}(\theta_l)}(f_k,A_k) =  \tau_{\pi_{qu}(\theta_l)}(f_k,B_k)$ whenever $\theta_k$ is consistent with $h_k$. This implies immediately due to DR-SEU 2 that 
\[
\rho_t(\cdot|h^{t-1}) = \rho_t(\cdot|\hat h^{t-1}).
\]
\end{proof}

\textbf{Claim 2.} $\rho$ satisfies LHI. 

\begin{proof}
Take any $A_t,s_t$ and $h^{t-1} = (A_0,f_0,s_0;\dots;A_{t-1},f_{t-1},s_{t-1})\in \h_{t-1}(A_t)$ and $H^{t-1}\subseteq \h_{t-1}(A_t)$ of the form $H^{t-1} = \{h_{-k}^{t-1}, (\lambda A_k+(1-\lambda)B_k,\lambda f_k+(1-\lambda)g_k,s_k)): g_k\in B_k\}$ for some $k<t, \lambda\in (0,1)$ and $B_k =\{g_k^j: j=1,\dots,m\}\in \A_k$. Let $\tilde A_k = \lambda A_k+(1-\lambda)B_k$ and for each $j=1,\dots,m$ let $\tilde f_k^j = \lambda f_k+(1-\lambda)g_k^j$ and $\tilde h^{t-1}(j) = (h_{-k}^{t-1},(\tilde A_k,\tilde f_k^j,s_k))$. 

By DR-SEU 2, for all $f_t$ we have 
\[
\rho_t(f_t,A_t,s_t|h^{t-1}) = \frac{\sum_{\pi_s(\theta_0,\dots,\theta_{t-1})=(s_0,\dots,s_{t-1})}\prod_{l=0}^{t}\psi_{l}^{\theta_{l-1}}(\theta_l)\tau_{\pi_{qu}(\theta_l)}(f_l,A_l)}{\sum_{\pi_s(\theta_0,\dots,\theta_{t-1})=(s_0,\dots,s_{t-1})}\prod_{l=0}^{t-1}\psi_{l}^{\theta_{l-1}}(\theta_l)\tau_{\pi_{qu}(\theta_l)}(f_l,A_l)},
\]
and by definition also 
\[
\rho_t(f_t,A_t,s_t|H^{t-1}) = \frac{\sum_{j=1}^m\rho(\tilde h^{t-1}(j))\rho_t(A_t,f_t,s_t|\tilde h^{t-1}(j))}{\sum_{j=1}^m\rho(\tilde h^{t-1}(j))}.
\]

For each $j=1,\dots,m$ DR-SEU 2 yields 

 \resizebox{0.96\hsize}{!}{%
        $
\rho_t(f_t,A_t,s_t|\tilde h^{t-1}(j)) = \frac{\sum_{\pi_s(\theta_0,\dots,\theta_t)=(s_0,\dots,s_t)}\left(\prod_{l=0,l\neq k}^{t}\psi_{l}^{\theta_{l-1}}(\theta_l)\tau_{\pi_{qu}(\theta_l)}(f_l,A_l)\right)\cdot \psi_{k}^{\theta_{k-1}}(\theta_k)\tau_{\pi_{qu}(\theta_{k-1})}(\tilde f_k^j,\tilde A_k)}{\sum_{\pi_s(\theta_0,\dots,\theta_{t-1})=(s_0,\dots,s_{t-1})}\left(\prod_{l=0,l\neq k}^{t-1}\psi_{l}^{\theta_{l-1}}(\theta_l)\tau_{\pi_{qu}(\theta_l)}(f_l,A_l)\right)\cdot \psi_{k}^{\theta_{k-1}}(\theta_k)\tau_{\pi_{qu}(\theta_k)}(\tilde f_k^j,\tilde A_k)},
$%
        }
        \\as well as       
       \begin{align*}
       &\rho(\tilde h^{t-1}(j))  = \prod_{l=0,l\neq k}^{t-1}\rho_l(f_l,A_l,s_l|\tilde h^{l-1})\rho_k(\tilde f_k^j,\tilde A_k,s_k|\tilde h^{k-1}) \\& = \sum_{\pi_s(\theta_0,\dots,\theta_{t-1})=(s_0,\dots,s_{t-1})}\left(\prod_{l=0,l\neq k}^{t-1}\psi_l^{\theta_{l-1}}(\theta_l)\tau_{\pi_{qu}(\theta_l)}(f_l,A_l)\right)\cdot \psi_k^{\theta_{k-1}}(\theta_k)\tau_{\pi_{qu}(\theta_k)}(\tilde f_k^j,\tilde A_k).
       \end{align*}
       
       We put the last three formulas together and rearrange to obtain 
  \vspace{2mm}\\     
 \resizebox{1.0\hsize}{!}{%
        $
 \rho_t(f_t,A_t,s_t|H^{t-1}) = \frac{\sum_{\pi_s(\theta_0,\dots,\theta_t)=(s_0,\dots,s_t)}\left(\prod_{l=0,l\neq k}^{t}\psi_{l}^{\theta_{l-1}}(\theta_l)\tau_{\pi_{qu}(\theta_l)}(f_l,A_l)\right)\cdot \psi_{k}^{\theta_{k-1}}(\theta_k)\left(\sum_{j=1}^m\tau_{\pi_{qu}(\theta_k)}(\tilde f_k^j,\tilde A_k)\right)}{\sum_{\pi_s(\theta_0,\dots,\theta_{t-1})=(s_0,\dots,s_{t-1})}\left(\prod_{l=0,l\neq k}^{t-1}\psi_l^{\theta_{l-1}}(\pi_{qu}(\theta_l))\tau_{\pi_{qu}(\theta_l)}(f_l,A_l)\pi_q(\theta_l)(s_l)\right)\cdot \psi_k^{\theta_{k-1}}(\pi_{qu}(\theta_k))\left(\sum_{j=1}^m\tau_{\pi_{qu}(\theta_k)}(\tilde f_k^j,\tilde A_k)\right)\pi_q(\theta_k)(s_k)}.
 $%
        }
\vspace{3mm}

     But note that for all $\theta_k\in \Theta_k$ it holds
     
     \begin{align*}
     &\sum_{j=1}^m\tau_{\pi_{qu}(\theta_k)}(\tilde f_k^j,\tilde A_k) = \sum_{j=1}^m \tau_{\pi_{qu}(\theta_k)}\left((q',u')\in \Delta(S_k)\times \R^{X_k}: f_k^j\in M(M(\tilde A_k;\pi_{qu}(\theta_k));(q',u'))\right)\\
     &=\sum_{g_k^j\in B_k} \tau_{\pi_{qu}(\theta_k)}\left((q',u')\in \Delta(S_k)\times \R^{X_k}: f_k\in M(M(A_k;\pi_{qu}(\theta_k));(q',u')), g_k^j\in M(M(B_k;\pi_{qu}(\theta_k));(q',u'))\right)\\& =  \tau_{\pi_{qu}(\theta_k)}\left((q',u')\in \Delta(S_k)\times \R^{X_k}: f_k\in M(M(A_k;\pi_{qu}(\theta_k));(q',u'))\right)\\& = \tau_{\pi_{qu}(\theta_k)}\left(f_k, A_k\right).
     \end{align*}

By plugging this into the formula for $\rho_t(f_t,A_t,s_t|H^{t-1})$ we see that $$\rho_t(f_t,A_t,s_t|h^{t-1}) = \rho_t(f_t,A_t,s_t|H^{t-1}).$$ 

\end{proof}

\textbf{Claim 3.} $\rho$ satisfies History Continuity.

\begin{proof}
Fix any $(f_t,A_t,s_t)$ and $h^{t-1} = (f_0,A_0,s_0;\dots;f_{t-1},A_{t-1},s_{t-1})\in h_{t-1}$. Let $\Theta_{t-1}(h^{t-1})\subseteq \Theta_{t-1}$ denote the set of period-$(t-1)$ states that are consistent with $h^{t-1}$. Define $\rho^{\theta_{t-1}}_t(f_t,A_t,s_t) = \sum_{\theta_t}\psi_t^{\theta_{t-1}}(\theta_t)\tau_{\pi_{qu}(\theta_t)}(f_t,A_t)$. By Lemma \ref{thm:drseu2formulaextended} we have 
\begin{align*}
&\rho_t(f_t,A_t,s_t|h^{t-1}) = \frac{\sum_{\pi_s(\theta_0,\dots,\theta_t)=(s_0,\dots,s_t)}\prod_{k=0}^t\psi_k^{\theta_{k-1}}(\theta_k)\tau_{\pi_{qu}(\theta_k)}(f_k,A_k)}{\sum_{\pi_s(\theta_0,\dots,\theta_{t-1})=(s_0,\dots,s_{t-1})}\prod_{k=0}^{t-1}\psi_k^{\theta_{k-1}}(\theta_k)\tau_{\pi_{qu}(\theta_k)}(f_k,A_k)}
\end{align*}
\resizebox{1.0\hsize}{!}{%
        $
= \frac{\sum_{\pi_s(\theta_0,\dots,\theta_{t})=(s_0,\dots,s_{t})}\prod_{k=0}^{t-1}\psi_k^{\theta_{k-1}}(\theta_k)\tau_{\pi_{qu}(\theta_k)}(f_k,A_k)\cdot \sum_{\theta_t}\psi_t^{\theta_{t-1}}(\theta_t)\tau_{\pi_{qu}(\theta_t)}(f_t,A_t)}{\sum_{\pi_s(\theta_0,\dots,\theta_{t-1})=(s_0,\dots,s_{t-1})}\prod_{k=0}^{t-1}\psi_k^{\theta_{k-1}}(\theta_k)\tau_{\pi_{qu}(\theta_k)}(f_k,A_k)}.
$%
        }
      \vspace{3mm}\\  
       We see that $\rho_t(f_t,A_t,s_t|h^{t-1})\in co\{\rho^{\theta_{t-1}}_t(f_t,A_t,s_t):\theta_{t-1}\in\Theta_{t-1}(h^{t-1})\}$. Fix any $\theta_{t-1}\in \Theta_{t-1}(h^{t-1})$. To prove the claim it suffices to show that 
        \[
        \rho^{\theta_{t-1}}_t(f_t,A_t,s_t)\in\{\lim_n \rho_t(f_t,A_t,s_t|h_n^{t-1}):h_n^{t-1}\ra^m h^{t-1}, h_n^{t-1}\in \h_{t-1}^*\}. 
        \]
     To this end, let $pred(\theta_{t-1})= (\theta_0,\dots,\theta_{t-2})$ and let $\bar h^{t-1} = (B_0,g_0,s_0;\dots;B_{t-1},g_{t-1},s_{t-1})\in \h_{t-1}^*$ be a separating history for $\theta_{t-1}$. By Lemma \ref{thm:strongapprmenuwoties} for each $k=0,\dots,t-1$ we can find sequences $A_k^n\in \A_k^*(\bar h^{k-1})$ and $f_k^n\in A_k^n$ with $f_k^n\ra^m f_k$ and $QU_{\theta_{k-1}}(A_k^n,f_k^n,s_k) = \{\pi_{qu}(\theta_k)\}$ for all $n$ and all $k=0,\dots,t-1$. Working backwards from $k=t-2$ we can inductively replace $A_k^n$ and $f_k^n$ with a mixture putting small weight on a constant act yielding $(z,A_{k+1}^n)$ for some $z$ so as to ensure that $A_{k+1}^n\in supp^A(f_k^n(s_k))$, irrespective of $s_k\in S_k$. This can be done maintaining the previous properties of $A_k^n$ and $f_k^n$. 
     
By construction it follows $h_n^{t-1} = (A_0^n,f_0^n,s_0;\dots;A_{t-1}^n,f_{t-1}^n,s_{t-1})\in \h_{t-1}^*(A_t)$ and this is also a separating history for $\theta_{t-1}$. 
     
     By Lemma \ref{thm:drseu2formulaextended} the latter fact implies for each $n$ that
     \begin{align*}
     &\rho_t(f_t,A_t,s_t|h_n^{t-1})\\& = \frac{\sum_{\theta_t}\left(\prod_{k=0}^{t-1}\psi_k^{\theta_{k-1}}(\theta_k)\tau_{\pi_{qu}(\theta_k)}(f_k^n,A_k^n)\right)\cdot\psi_t^{\theta_{t-1}}(\theta_t)\tau_{\pi_{qu}(\theta_t)}(f_t,A_t)}{\prod_{k=0}^{t-1}\psi_k^{\theta_{k-1}}(\theta_k)\tau_{\pi_{qu}(\theta_k)}(f_k^n,A_k^n)}\\&=
     \sum_{\theta_t}\psi_t^{\theta_{t-1}}(\theta_t)\tau_{\theta_t}(f_t,A_t)\\&=\rho^{\theta_{t-1}}_t(f_t,A_t,s_t).
     \end{align*}
     The desired claim follows since $h_n^{t-1}\ra^m h^{t-1}$.  
\end{proof}


\subsection{Proofs for the Comparative Statics part}

\subsubsection{Proof of Proposition \ref{thm:morecorrchar}}

This is a trivial application of Lemma 24 in the online appendix.

\subsubsection{Proof of Proposition \ref{thm:speedoflearningchar}}

This is a direct implication of the Proof of the Representation Theorems for Evolving SEU and Gradual Learning (Theorems \ref{thm:evseufiltrthm} and \ref{thm:glfiltrthm} in the main body of the paper) as well as Theorem 1 in \cite{dlst}.  

\end{appendices}

\end{document}


\title{Online Appendix for Dynamic Random Subjective Expected Utility}

\author{Jetlir Duraj\footnote{duraj@g.harvard.edu}}

\maketitle



\begin{abstract}
 The online appendix is organized as follows.
 \\In the first section we extend the results of the online appendix of \cite{lu} by explicitly modeling the tie-breaking of the agent.
 \\In section 2 we prove properties of the induced menu preference from Definition 13 in the main body of the paper and show how the sophistication axiom ties together ex-ante and ex-post choices in our dynamic setting with Anscombe-Aumann acts and stochastic taste and beliefs. 
 \\In section 3 we use the results from section 2 to give the proofs for the Evolving SEU and Gradual Learning representations (Theorems 2 and 3 in the paper).
 \\Section 4 establishes the equivalence between the Ahn-Sarver representations and the filtration-based representations. It also establishes the uniqueness for Ahn-Sarver representations.
 \\Section 5 extends the classical option value theory to the case of subjective expected utility where both taste and beliefs can be stochastic.
 \\Section 6 extends the main result of \cite{as} to the case of subjective expected utility where both taste and beliefs can be stochastic.
 \\Section 7 comments on the case of SCF as an observable.
 \\Finally, Section 8 gathers auxiliary results for Section 4 of the paper (comparative statics).

\end{abstract}



\maketitle

\newpage

In this section $\rho$ will always be a SCF. 

\section{Random Subjective Expected Utility with separable metric space and explicit tie-breaking.}\label{sec:rseuapp}

\subsection{An auxiliary result}

We start by stating a technical Lemma which is proven in \cite{fis} for lotteries. We omit its proof which is a trivial adaptation of the proof of the respective Lemma in \cite{fis}.

\begin{lemma}[Lemma 21 in \cite{fis}]\label{thm:algebrawithsets}
Fix any $X'\subset X$ with $y^*\in X'$. Here $X$ is a prize space. For any collection $\mathcal{S}$ of sets denote $\mathcal{U}(\mathcal{S})$ the collection of all finite unions of elements of $\mathcal{S}$. 

(i). If $E\in \mathcal{N}(X')$ (resp. $E\in \mathcal{N}^+(X')$), then $E^c\in \mathcal{U}(\mathcal{N}^+(X'))$ (resp. $E^c\in \mathcal{U}(\mathcal{N}(X'))$).

(ii). $\mathcal{U}(\mathcal{N}(X'))$ and $\mathcal{U}(\mathcal{N}^+(X'))$ are $\pi-$systems (i.e. closed under intersection). 

(iii). $\mathcal{F}(X')$ is the collection of all $E$ such that $E=\cup_{l\in L}M_l\cap N_l$ for some finite index set $L$ and $M_l\in \mathcal{N}(X'), N_l\in \mathcal{N}^+(X')$ for each $l\in L$.

(iv). $\mathcal{F}(X')$ is the collection of all $E$ for which there exists a finite $Y\subset X'$ with $y^*\in Y$ and $E^Y\in \mathcal{F}(Y)$ such that $E=E^Y\times \R^{X\setminus Y}$. 

\end{lemma}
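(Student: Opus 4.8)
The plan is to prove the four assertions in the given order, treating (i) and (ii) as elementary Boolean-algebra facts about the generators, using them to obtain the disjunctive normal form in (iii), and finally reading off the cylinder description in (iv). Throughout I use that an element of $\mathcal{N}(X')$ (resp. $\mathcal{N}^+(X')$) is the solution set, inside $\R^X$, of a finite system of \emph{weak} (resp. \emph{strict}) linear inequalities among the prizes of $X'$, with $y^*$ available as the reference coordinate; in particular each such set depends on only finitely many coordinates, and a conjunction of two systems of the same type is again a system of that type.

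For (i) I would simply dualize. If $E\in\mathcal{N}(X')$ is cut out by the weak inequalities indexed by a finite set $I$, then $E^c=\bigcup_{i\in I}\{u : \text{the $i$-th inequality fails strictly}\}$, and each set in this union is defined by a single strict inequality, hence lies in $\mathcal{N}^+(X')$; thus $E^c\in\mathcal{U}(\mathcal{N}^+(X'))$. The parenthetical (resp.) statement is symmetric, complementing strict inequalities into weak ones. For (ii) I would use distributivity: for $\bigcup_j M_j,\ \bigcup_k M_k'\in\mathcal{U}(\mathcal{N}(X'))$ one has $\big(\bigcup_j M_j\big)\cap\big(\bigcup_k M_k'\big)=\bigcup_{j,k}(M_j\cap M_k')$, and $M_j\cap M_k'\in\mathcal{N}(X')$ because intersecting two systems of weak inequalities just concatenates the two systems. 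The same argument gives the claim for $\mathcal{U}(\mathcal{N}^+(X'))$.

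For (iii), let $\mathcal{D}$ denote the collection of all $E=\bigcup_{l\in L}M_l\cap N_l$ with $L$ finite, $M_l\in\mathcal{N}(X')$ and $N_l\in\mathcal{N}^+(X')$. Since every such $E$ is a finite Boolean combination of generators, $\mathcal{D}\subseteq\mathcal{F}(X')$; the content is the reverse inclusion, for which it suffices to check that $\mathcal{D}$ is a field containing the generators. It contains both the whole space and $\emptyset$ (using trivial inequality systems) and is closed under finite unions by construction. Closure under complementation is where (i) and (ii) do the work: $E^c=\bigcap_{l}(M_l^c\cup N_l^c)$, and by (i) each $M_l^c\in\mathcal{U}(\mathcal{N}^+(X'))$ and each $N_l^c\in\mathcal{U}(\mathcal{N}(X'))$; distributing the intersection over these unions and then using (ii) to collapse the $\mathcal{N}$-factors into a single element of $\mathcal{N}(X')$ and the $\mathcal{N}^+$-factors into a single element of $\mathcal{N}^+(X')$ rewrites $E^c$ in exactly the normal form defining $\mathcal{D}$. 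Hence $\mathcal{D}$ is a field, so $\mathcal{F}(X')\subseteq\mathcal{D}$ and the two coincide.

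Finally, (iv) follows from (iii) together with finite coordinate-dependence. Given $E=\bigcup_l M_l\cap N_l\in\mathcal{F}(X')$, let $Y$ be the union of $\{y^*\}$ with the finitely many prizes that actually appear in the inequalities defining the finitely many $M_l$ and $N_l$; then $Y\subset X'$ is finite, $y^*\in Y$, and $E$ constrains only the coordinates in $Y$, so $E=E^Y\times\R^{X\setminus Y}$ where $E^Y\subseteq\R^Y$ is the same Boolean combination read inside $\R^Y$. Reading the normal form of $E$ inside $\R^Y$ shows $E^Y\in\mathcal{F}(Y)$. For the converse inclusion, any inequality among prizes of a finite $Y\subset X'$ is in particular an inequality among prizes of $X'$, so each generator of $\mathcal{F}(Y)$ lifts, as a cylinder over $Y$, to a generator of $\mathcal{F}(X')$; since cylindrification $E^Y\mapsto E^Y\times\R^{X\setminus Y}$ commutes with complements and finite unions, it maps $\mathcal{F}(Y)$ into $\mathcal{F}(X')$, giving $E^Y\times\R^{X\setminus Y}\in\mathcal{F}(X')$. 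The main obstacle, and the only real content beyond routine set algebra, is this last step: one must check that a single finite $Y$, always containing the reference point $y^*$, can be chosen to witness the whole Boolean combination, and that restriction to $\R^Y$ and cylindrification back up are mutually inverse Boolean homomorphisms carrying generators to generators --- exactly the bookkeeping that makes the adaptation from \cite{fis} ``trivial.''
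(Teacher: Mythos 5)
Your argument is correct and follows the same route as the proof of Lemma 21 in \cite{fis}, to which the paper defers without reproducing it: dualize the generators, distribute intersections over unions, reduce to the disjunctive normal form in (iii), and read off the finite cylinder structure in (iv). The one step you assert rather than prove --- that a conjunction of two systems of the same type is again a system of that type, i.e.\ that $\mathcal{N}(X')$ and $\mathcal{N}^+(X')$ are themselves closed under pairwise intersection --- is exactly where \cite{fis} uses the mixture identity $N(A,f)\cap N(A',f')=N\left(\tfrac{1}{2}A+\tfrac{1}{2}A',\tfrac{1}{2}f+\tfrac{1}{2}f'\right)$ and its strict analogue, since these collections are defined via menus $N(A,f)$, $N^+(A,f)$ rather than via arbitrary finite systems of inequalities; with that identity supplied, everything you wrote goes through.
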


\subsection{Axioms}


In this section $\rho$ will denote a SCF, i.e. objective states are unobservable for the analyst.

First the definition of the representation we are after in this section. 

\begin{definition}\label{thm:deflu}
1) A measure $\mu$ on $\Delta(S)\times \R^X$ (endowed with product of Borel sigma-Algebras) is \emph{regular} if $q\cdot u(f)=q\cdot u(g)$ with $\mu-$measure 0. 

2) The SCF $\rho$ has a R-SEU representation if there exists a regular (finitely additive or sigma-additive as required per context) probability measure $\mu$ such that for all $f\in \F$ and $A\in \A$ we have 
\[
\rho(f,A) = \mu\left((q,u)\in\Delta(S)\times \R^X:q\cdot u(f)\geq q\cdot u(g), \forall g\in A\right). 
\]
\end{definition}

Whether $X$ is finite or not, unless otherwise stated, we will look at the following (by now) classical axioms on the SCF $\rho$. 

\begin{itemize}
\item Axiom 1: Monotonicity  $\quad\rho(f,A)\geq \rho(f,B)$ for $A\subset B$. 
\item Axiom 2: Linearity $\quad\rho(\lambda f+ (1-\lambda)g, \lambda A+ (1-\lambda)\{g\}) = \rho(f,A)$ for any $A\in \A, g\in\F$ and $\lambda\in (0,1)$. 
\item Axiom 3: Extremeness $\quad\rho(ext(A),A) = 1$ for all $A\in \A$. 
\item Axiom 4: (a) (Mixture) Continuity $\quad\rho(\cdot, \alpha A + (1-\alpha)A')$ is continuous in $\alpha\in [0,1]$ for any $A,A'\in \A$. 

or

(b) Continuity $\quad\rho(\cdot, A )$ is continuous in $A\in \A$. 
\item Axiom 5: State Independence (see below)

\item Axiom 6: Finiteness There is $K>0$ such that for all $A\in \A$, there is $B\subset A$ with $|B|\leq K$ such that for every $f\in A\setminus B$ there are sequences $f^n\ra^m f$ and $B^n\ra^m B$ with $\rho(f_n,\{f_n\}\cup B^n) = 0$.
\end{itemize}

\subsection{Lu's Theorem with explicit tie-breaking and finite prize space.}
In this section $X,S$ are finite sets, resp. of prizes and objective states. We want to prove a version of Theorem S.1 in the supplement of \cite{lu} with explicit tie-breaking. The latter is assumed away in \cite{lu}.

Note that since $X,S$ finite, all acts are vectors in $\R^{|X||S|}$, a euclidean space with the canonical finite orthonormal basis $(w_1,\dots,w_m)$, $m=|S||X|$. 

\begin{lemma}\label{thm:lu1}
If $\rho$ satisfies Monotonicity, Linearity, Extremeness and Mixture Continuity then there exists a regular finitely additive measure $\nu$ over the set $\Delta^f(U)\subset\R^m$ of normalized Bernoulli utility functions (equipped with the Borel sigma-Algebra), such that
\[
\rho(f,A) = \nu\left(w\in \Delta^f(U): w\cdot f\geq w\cdot g, g\in A\right).
\]
\end{lemma}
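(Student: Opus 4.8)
The plan is to build the finitely additive measure $\nu$ directly out of the choice probabilities, by treating the sets the representation must measure as a generating class and checking that $\rho$ induces a consistent charge on the algebra they generate. For a menu $A$ and $f\in A$, write $N_f(A)=\{w: w\cdot f\geq w\cdot g,\ \forall g\in A\}$ for the normal cone of $w$'s that rank $f$ weakly highest in $A$; these are exactly the sets appearing in the target formula, and they are members of the collection $\mathcal{N}$ of Lemma \ref{thm:algebrawithsets}. I would first use Linearity to show that mixing a menu $A$ with a common singleton $\{g\}$, and mixing $f$ correspondingly, is an affine contraction that fixes the normal-cone directions; since $\rho(\lambda f+(1-\lambda)g,\ \lambda A+(1-\lambda)\{g\})=\rho(f,A)$, the object we are measuring depends only on the direction of $w$, which is what lets the eventual measure live on the normalized set $\Delta^f(U)$ rather than on all of $\R^m$.

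Next I would define the candidate set function on the generating class by assigning mass $\rho(f,A)$ to $N_f(A)$. Extremeness gives $\rho(ext(A),A)=1$, so for each finite menu the cones $\{N_f(A)\}_{f\in ext(A)}$ cover $\Delta^f(U)$ with overlaps contained in the tie-hyperplanes $\{w:w\cdot(f-g)=0\}$; thus every menu induces a near-partition of $\Delta^f(U)$ into normal cones, and the atoms of the finite subalgebra generated by finitely many half-spaces $\{w:w\cdot(f-g)\geq 0\}$ are themselves normal cones of the "grand" menu containing all the relevant acts (this is the structural content of parts (ii)--(iii) of Lemma \ref{thm:algebrawithsets}). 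The crux is then a tower/consistency property: whenever $B\supset A$ one must have $\rho(f,A)=\sum_{h}\rho(h,B)$, the sum taken over those $h\in B$ whose cone $N_h(B)$ refines $N_f(A)$, so that the assignment is single-valued and finitely additive on the algebra $\mathcal{F}$. I expect reconciling the combinatorial geometry of how normal cones split and overlap across different menus with the additive bookkeeping of the choice probabilities, via Monotonicity and Linearity, to be the main obstacle; everything else is comparatively routine.

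With a normalized, monotone, finitely additive set function in hand on $\mathcal{F}$, I would invoke a standard extension result for charges from an algebra to the generated Borel $\sigma$-algebra to obtain $\nu$ as a finitely additive probability measure on $\Delta^f(U)$. Only finite additivity is available here precisely because the hypothesis is Mixture Continuity rather than the stronger Continuity of Axiom 4(b). Finally I would use Mixture Continuity to establish regularity: if $\nu$ placed positive mass on some tie-hyperplane $\{w:w\cdot f=w\cdot g\}$, then mixing a menu so as to sweep an act across that hyperplane would shift a positive chunk of probability discontinuously, contradicting continuity of $\alpha\mapsto\rho(\cdot,\ \alpha A+(1-\alpha)A')$. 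This rules out mass on ties, yields regularity of $\nu$, and delivers the representation $\rho(f,A)=\nu\left(w\in\Delta^f(U): w\cdot f\geq w\cdot g,\ \forall g\in A\right)$.
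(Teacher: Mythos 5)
Your proposal takes a genuinely different route from the paper, but it contains a gap that is not a technicality. The paper does not construct the measure from scratch: it defines an affine embedding $T$ of the set of acts $\F$ into a probability simplex $\Delta$ (together with a projection $P$ onto the affine hull of $im(T)$ and a mixing trick $aP(D)+(1-a)\{p^*\}$ to move arbitrary finite menus into $im(T)$), uses Linearity to check that the induced stochastic choice $\tau$ on lottery menus is well defined, and then invokes Theorem 2 of \cite{gp} wholesale to obtain the regular finitely additive measure $\bar\nu$; the representation for $\rho$ is then pulled back through the affine map $T$. The only genuinely new technical content of the lemma is the dimension-deficiency issue (acts live in a proper affine subspace, so one must project onto the affine hull of $im(T)$ before mixing, which the paper flags as a correction to \cite{lu}). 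Your plan bypasses this reduction entirely and instead re-derives the Gul--Pesendorfer construction directly on the algebra of normal cones.

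The gap is the step you yourself flag as ``the main obstacle'': the claim that assigning mass $\rho(f,A)$ to $N_f(A)$ yields a single-valued, finitely additive charge on the algebra generated by the normal cones. This consistency property is not a bookkeeping consequence of Monotonicity and Linearity; it is essentially the entire content of Theorem 2 in \cite{gp}, whose proof requires a substantial apparatus (the combinatorics of how cones of different menus refine one another, extreme-point arguments, and a limiting construction) precisely because distinct menus can induce the same cone while a naive assignment need not agree on it. Deferring that step leaves the proof without its core. Two smaller points: the passage from directions of $w$ to the normalized set $\Delta^f(U)$ must also contend with the fact that state-dependent utilities are identified only up to the affine structure of $\F$ (this is where the paper's projection $P$ earns its keep), and your regularity argument via Mixture Continuity is the right intuition but in \cite{gp} regularity is extracted jointly from Extremeness, Linearity and continuity rather than from a hyperplane-sweeping argument alone. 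If you want a self-contained proof along your lines you would in effect be reproving \cite{gp}; the economical route is the paper's: embed, apply their theorem, and pull back.
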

\begin{proof}
Let $W$ be the affine hull of $\F$ in $\R^{|X||S|}$ with dimension $m$ and consider $\Delta$ be the probability simplex in $W$ as well as $\{w_1,\dots,w_m\}$ an orthonormal basis of $W$. Consider the mapping $T:\F\ra \Delta$ given by 
\[
T(f)_{i} = \lambda\left[f\cdot (w_i-\frac{1}{m}\sum_{j}w_j)\right] +\frac{1}{m}
\]
Note that $f\cdot w_i$ is a number in $[0,1]$, by definition of acts.\footnote{Note that a similar function is defined in the proof of Lemma S.2 of the supplement of \cite{lu}. There is a typo there though: after the difference in the quadratic brackets, the sum must be divided by $m$ just as here.} Also, for all $\lambda>0$ small enough we have $T(f)\geq 0$ and by definition also $T(f)\in \Delta$. Note that this transformation preserves Linearity (and thus also Extremeness), Monotonicity and Mixture Continuity. 

We can do a construction similar to the one in \cite{lu}, proof of Lemma S.2 there. Define by $W'$ the affine hull of $im(T)$ and denote by $P$ the projection from $W$ to $W'$. Note that the projection is an affine map. For each finite set $D\subset \Delta$ we pick a $p^*\in \Delta\cap W'$ and $a\in (0,1)$ such that $aP(D)+(1-a)\{p^*\}\subset im(T)$. This works, because by definition of the affine hull, the relative interior of $im(T)$ w.r.t. $W'$ is nonempty.\footnote{Note that the proof in Lemma S.2 of the supplement of \cite{lu} doesn't use any projection to the affine hull of the image of the map $T$ defined there. His trick only works if there exists an $\epsilon>0$ and $p\in im(T)$ such that the intersection of $\Delta$ with the ball of radius $\epsilon$ around $p$, is contained in the image of $T$. This is not the case though, as elements from the image of $T$, both in our setting and in his setting satisfy additional conditions related to the fact that an act $f\in \F$, seen as an element of $\R^m$, satisfies additional constraints due to the fact that the image of the act are lotteries, and therefore `lives' in a convex set of dimension less than $m$. His trick to shift the menu into the image of $T$ works once we introduce the projection $P$ as above.}

His construction works here for the same reason that it works in his paper: there is a point in the relative interior of $im(T)$ which is also in the relative interior of $\Delta$. Thus we can define a SCF $\tau$ on $\Delta$ by 
\[
\tau(p,D) = \rho(f,A)
\]
where $T(A) = aP(D)+(1-a)\{p^*\}$ and $T(f) = aP(p)+(1-a)\{p^*\}$. Linearity of $\rho$ and the fact that both $T$ and $P$ are affine ensure that the construction of $\tau$ is well-defined, i.e. independent of the pair $(a,p^*)$. Just as in \cite{lu}, axioms of Theorem 2 in \cite{gp} are satisfied, so that there exists a \emph{regular}, finitely additive probability measure $\bar\nu$ over the set $U$ of normalized Bernoulli utility functions\footnote{$U$ contains utilities $u\in \R^X$ so that for some fixed $y^*\in X$ all $u\in U$ satisfy $u(y^*) = 0$. This normalization makes $u\approx 0$ the unique constant Bernoulli utility in $U$.\label{fnlabel}} 
such that 

\[
\rho(f,A) = \tau(p,D) = \bar\nu\left(v\in U: v\cdot T(f)\geq v\cdot T(g), g\in A \right)
\]
where $T(A) = aD+(1-a)\{p^*\}$ and $T(f) = ap+(1-a)\{p^*\}$. Now note that by linearity of $T$ this can be easily written as 
\[
\rho(f,A) = \bar\nu\left(v\in U: v\cdot f\geq v\cdot g, g\in A \right).
\]
\end{proof}

Note that by the fact that $\nu$ found in Lemma \ref{thm:lu1} is regular, we don't need an axiom as Non-degeneracy for states (as \cite{lu} does) as there always exists non-degenerate states because of regularity of the measure. 

Note also, that Lemma \ref{thm:lu1} has given us state-dependent utilities of the form $u(s,x), s\in S,x\in X$. I.e. we already have a State-dependent EU representation. Adding State-Independence as in Lu and strengthening Mixture Continuity to Continuity allows to prove that $\bar\nu$ puts probability one on $u(s,\cdot)\approx u(s_1,\cdot), s\in S$. For this, we first restate State-Independence as in \cite{lu}. 
\vspace{2mm}\\
Call a menu $A$ \emph{constant} if it contains only constant acts. Given a menu $A$ and a state $s\in S$ let $A(s) = \{f(s): f\in A\}$ be the constant menu which is the $s$-section of $A$. 

\paragraph{Axiom: State Independence} Suppose $f(s_1) = f(s_2), A_1(s_1) = A_2(s_2)$ and $A_i(s) = \{f(s)\},s\neq s_i, i=1,2$. Then $\rho(f,A_1) = \rho(f,A_1\cup A_2)$. 

This is the same axiom as in the appendix of \cite{lu}. 

Moreover, the proof of the following Lemma does need \emph{Continuity}, since it uses sigma-additivity of $\bar\nu$ from Lemma \ref{thm:lu1}.\footnote{In the presence of (the stronger axiom) Continuity, $\bar\nu$ is sigma-additive (this is Theorem 3 in \cite{gp}).}


\begin{lemma}\label{thm:lu2}
Let $\rho$ satisfy Monotonicity, Continuity, Extremeness, and Linearity. Then $\rho$ has a R-SEU representation if State Independence is satisfied. 
\end{lemma}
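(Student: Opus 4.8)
The plan is to bootstrap on Lemma~\ref{thm:lu1}. Since Continuity implies Mixture Continuity, that lemma already produces a regular, finitely additive probability measure $\bar\nu$ over the space of normalized state-dependent Bernoulli utilities $u=(u(s,\cdot))_{s\in S}$ (each with $u(s,y^*)=0$), under which
\[
\rho(f,A)=\bar\nu\Big(u:\ \textstyle\sum_{s\in S}u(s,f(s))\ge\sum_{s\in S}u(s,g(s))\ \text{ for all }g\in A\Big),
\]
where $u(s,f(s))=\sum_x f(s)(x)\,u(s,x)$ is the expected utility of the lottery $f(s)$. Because Continuity is strictly stronger than Mixture Continuity, Theorem~3 in \cite{gp} upgrades $\bar\nu$ to a \emph{sigma-additive} measure, and this is the property that carries the rest of the argument. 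The whole problem is now measure-theoretic: I must show that State Independence forces $\bar\nu$ to concentrate on those $u$ whose sections $u(s,\cdot)$ are all nonnegative multiples of a single taste $\hat u\in\R^X$, and then transport $\bar\nu$ to the measure $\mu$ on $\Delta(S)\times\R^X$ demanded by Definition~\ref{thm:deflu}.

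First I would decode State Independence into a statement about $\bar\nu$. Fix two states $s_1,s_2$, two lotteries $p,q$, and an act $f$ with $f(s_1)=f(s_2)=p$. Let $A_1$ be the menu that agrees with $f$ in every state except $s_1$ and whose $s_1$-section is $\{p,q\}$, and let $A_2$ be defined symmetrically at $s_2$. Since the acts in $A_1$ differ from $f$ only at $s_1$, the event that $f$ is chosen from $A_1$ collapses to the single-state comparison, so $\rho(f,A_1)=\bar\nu(E_1)$ and $\rho(f,A_1\cup A_2)=\bar\nu(E_1\cap E_2)$, where $E_i=\{u:\langle p-q,\,u(s_i,\cdot)\rangle\ge0\}$. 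State Independence gives $\rho(f,A_1)=\rho(f,A_1\cup A_2)$, i.e. $\bar\nu(E_1\setminus E_2)=0$; applying the axiom with the two states interchanged gives $\bar\nu(E_2\setminus E_1)=0$. Hence $\bar\nu(E_1\triangle E_2)=0$: for every fixed difference direction $d=p-q$, the signs of $\langle d,u(s_1,\cdot)\rangle$ and $\langle d,u(s_2,\cdot)\rangle$ agree $\bar\nu$-almost surely.

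Next I would aggregate these pointwise statements. Taking a countable dense family $D$ of difference directions and ranging over all ordered pairs of states, the discrepancy events are each $\bar\nu$-null, so by sigma-additivity their union $N$ is null. Off $N$, continuity of $d\mapsto\langle d,u(s,\cdot)\rangle$ promotes the sign agreement from $D$ to all directions, which forces all the sections $u(s,\cdot)$ onto a common ray; because the normalization $u(s,y^*)=0$ kills additive constants, this colinearity is exactly $u(s,\cdot)=q_s\,\hat u(\cdot)$ with $q_s\ge0$ and a shared $\hat u$. The fully degenerate utility ($\hat u\equiv0$) is discarded: under it every pair of distinct acts ties, so it is excluded by regularity of $\bar\nu$. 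I then define, via a measurable selection of the common direction, the map $\Phi(u)=(q,\tilde u)$ with $q=\big(q_s/\sum_t q_t\big)_s\in\Delta(S)$ and $\tilde u=\big(\sum_t q_t\big)\hat u$, calibrated so that $\sum_s u(s,f(s))=q\cdot\tilde u(f)$ for every act, and set $\mu=\Phi_*\bar\nu$.

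It remains to verify that $\mu$ works. It is a sigma-additive probability measure on $\Delta(S)\times\R^X$, and it is regular because a tie $q\cdot u(f)=q\cdot u(g)$ pulls back under $\Phi$ to the tie $\sum_s u(s,f(s))=\sum_s u(s,g(s))$, which is $\bar\nu$-null. Since $\Phi$ preserves the utility comparison and discards only the null set $N$,
\[
\rho(f,A)=\bar\nu\Big(\textstyle\sum_s u(s,f(s))\ge\sum_s u(s,g(s))\ \forall g\in A\Big)=\mu\big((q,u):q\cdot u(f)\ge q\cdot u(g)\ \forall g\in A\big),
\]
which is the R-SEU representation. The main obstacle is precisely the aggregation step: upgrading the per-direction almost-sure sign agreement to a single null set outside which all sections are simultaneously colinear. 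This is where sigma-additivity (hence Continuity rather than mere Mixture Continuity) is indispensable, and where the degenerate-taste and equality (boundary) cases must be handled through regularity.
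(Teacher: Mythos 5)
Your proposal is correct and takes essentially the same route as the paper: the paper's proof simply defers to pages 8--11 of Lu's supplementary appendix, adding only the observations that regularity of $\bar\nu$ from Lemma~\ref{thm:lu1} already supplies a non-null state and that regularity of the induced $\mu$ follows by pulling ties back to $\bar\nu$. What you have written is that cited argument spelled out in full --- decoding State Independence into almost-sure sign agreement of the state sections, using sigma-additivity over a countable dense set of difference directions to get simultaneous positive colinearity off a single null set, and transporting $\bar\nu$ to $\Delta(S)\times\R^X$ --- so there is no substantive divergence and no gap.
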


\begin{proof}
The proof is already contained in the pgs. 8-11 of the Supplementary Appendix of \cite{lu}. One just has to note that a non-null state exists by Lemma \ref{thm:lu1} because the measure $\bar\nu$ constructed there is regular. Thus, if all states would be null, then only ties would be possible and this would lead to a contradiction.\footnote{Also note: one doesn't need to prove regularity as in the proof of \cite{lu} since it's given by Lemma \ref{thm:lu1}.}

Given sigma-additivity of $\bar\nu$ from Lemma \ref{thm:lu1} the proof that State Independence, together with Lemma \ref{thm:lu1} gives a R-SEU representation follows word-for-word the proof in \cite{lu}, except that now regularity of the induced $\bar\nu$ over the Borel sigma-Algebra of $U$ doesn't need to be shown. Lu's proof gives a measure $\mu$ on $\Delta(S)\times \R^X$ induced by $\bar\nu$ and so that the representation holds:

\[
\rho(f,A) = \mu\left((q,u)\in\Delta(S)\times \R^X:q\cdot u(f)\geq q\cdot u(g), \forall g\in A\right),\quad f\in\F, A\in\mathcal{A}. 
\]
One checks directly through the string of equalities in pg. 10 of the Supplementary Appendix of \cite{lu}, that if $\mu$ allows ties then so does $\bar\nu$. This would be a contradiction, so that $\mu$ is also regular. 

\end{proof}

This Lemma gives one direction of our version of Lu's Theorem.  

Before completing the proof of \emph{Lu's Theorem with tiebreakers} we need to modify technically some parts of \cite{fis}. 

Consider measures defined on the sigma-Algebra $\mathcal{F}$ generated by sets of the type $N(A,f)$, $N^+(A,f)$ for $A\in\A,f\in\F$.

The support of a finitely additive measure over $\mathcal{F}$ is 
\[
supp(\nu) = \left(\cup\{V\in\mathcal{F}: V\text{ is open  and }\nu(V)=0\}\right)^c. 
\]

Note that here the \emph{openness} is w.r.t. the Borel-sigma Algebra on $\Delta(S)\times\R^X$.

For future purpose note that the following Lemma doesn't depend on the cardinality of $X$ as long as it is a separable metric space (in particular it is true if $X$ is finite). 

\begin{lemma}[Lemma 22 in \cite{fis}]\label{thm:lu3}
Let $\nu$ be a regular finitely-additive probability measure on $\mathcal{F}$ and suppose that $\left(N(A,f)\setminus \Delta(S)\times\{0\}\right)\cap supp(\nu) = \emptyset$ for some $A\in \A, f\in A$, where $0$ denotes the unique constant utility in $U$. Then $\nu(N^+(A,f))  = \nu(N(A,f))= 0$.  
\end{lemma}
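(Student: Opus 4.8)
The plan is to reduce both claimed equalities to a single statement about the \emph{open} strict-optimality region and then extract the nullity of that region from the fact that it misses the support. First I would show $\nu(N(A,f)) = \nu(N^+(A,f))$. Because $A$ is finite, the set $N(A,f)\setminus N^+(A,f)$ of points at which $f$ is weakly but not strictly optimal is contained in the finite union $\bigcup_{g\in A\setminus\{f\}}\{(q,u):q\cdot u(f)=q\cdot u(g)\}$ of tie sets. Regularity of $\nu$ makes each tie set between two distinct acts $\nu$-null, so finite additivity gives $\nu(N(A,f)\setminus N^+(A,f))=0$ and hence $\nu(N(A,f))=\nu(N^+(A,f))$; note this simultaneously handles $\Delta(S)\times\{0\}\subseteq N(A,f)\setminus N^+(A,f)$, which is null for the same reason. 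It therefore suffices to prove $\nu(N^+(A,f))=0$.

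Next I would locate $N^+(A,f)$ off the support. Each strict inequality $\{q\cdot u(f)>q\cdot u(g)\}$ is open, so $N^+(A,f)=\bigcap_{g\in A\setminus\{f\}}\{q\cdot u(f)>q\cdot u(g)\}$ is open. At $u=0$ every pair of acts is tied, so provided $A$ contains an act distinct from $f$ we have $N^+(A,f)\cap(\Delta(S)\times\{0\})=\emptyset$; the degenerate singleton case cannot meet the hypothesis, since regularity guarantees a non-null state and thus $supp(\nu)\not\subseteq\Delta(S)\times\{0\}$. Hence $N^+(A,f)\subseteq N(A,f)\setminus(\Delta(S)\times\{0\})$, which by hypothesis is disjoint from $supp(\nu)$, so $N^+(A,f)$ is an open set disjoint from the support.

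It remains to pass from ``open and off the support'' to ``null.'' By definition of the support, every point of $N^+(A,f)$ lies in some open $V\in\mathcal{F}$ with $\nu(V)=0$. For any compact $K\subseteq N^+(A,f)$ these null open sets cover $K$, so compactness extracts a finite subcover and finite additivity gives $\nu(K)=0$. The main obstacle is upgrading this to $\nu(N^+(A,f))=0$: for a merely finitely additive probability this does \emph{not} follow from the topological picture alone, since mass can concentrate on an open set disjoint from its support. I would bridge the gap with inner regularity (tightness) of $\nu$ — available because the domain of normalized utilities may be taken compact and the space is second countable — writing $\nu(N^+(A,f))=\sup\{\nu(K):K\subseteq N^+(A,f)\text{ compact}\}=0$. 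This tightness is exactly the structural input that the Fishburn-style argument behind Lemma 22 supplies, and it is the step I expect to demand the most care. With $\nu(N^+(A,f))=0$ in hand, the reduction from the first paragraph yields $\nu(N(A,f))=\nu(N^+(A,f))=0$, as claimed.
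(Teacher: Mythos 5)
Your first two paragraphs are fine and match the intended reduction: regularity kills the tie sets, so $\nu(N(A,f))=\nu(N^+(A,f))$, and $N^+(A,f)$ sits inside $N(A,f)\setminus\Delta(S)\times\{0\}$, hence off the support. The genuine gap is in your third paragraph. You correctly observe that for a merely finitely additive $\nu$, ``open and disjoint from the support'' does not by itself give nullity, but the bridge you propose --- inner regularity (tightness) of $\nu$ --- is not available here. Tightness is not automatic for finitely additive measures even on compact metric spaces; indeed a tight finitely additive Borel measure is already countably additive, and the entire point of working at the level of Lemma \ref{thm:lu1} is that $\nu$ need not be sigma-additive (that only comes later, under full Continuity). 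So the step $\nu(N^+(A,f))=\sup\{\nu(K):K\text{ compact}\}$ is unjustified, and it is not ``the structural input that Lemma 22 supplies.''

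What the Fishburn-style argument actually uses --- and what the paper explicitly flags --- is the \emph{cone structure} of the generating sets of $\mathcal{F}$: each $N(B,g)$ and $N^+(B,g)$ is closed under positive scaling of the utility component $u$, and so therefore are the open sets witnessing that a point is off the support. One takes the compact slice of $N(A,f)\setminus\Delta(S)\times\{0\}$ obtained by normalizing $u$ into $\Delta(S)\times[-1,1]^X$ and bounding it away from $\Delta(S)\times\{0\}$ (compact by Tychonoff and finiteness of $S$), covers that slice by finitely many $\nu$-null open sets from $\mathcal{F}$, and then uses scaling invariance to conclude that this \emph{finite} union already contains all of $N(A,f)\setminus\Delta(S)\times\{0\}$, not just the slice. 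Finite subadditivity then gives $\nu(N(A,f)\setminus\Delta(S)\times\{0\})=0$ with no appeal to inner regularity. Your compactness-plus-finite-subcover instinct is the right one; you just need to apply it to the normalized slice and let the conical shape of the covering sets do the upgrading, rather than invoking tightness.
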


\begin{proof}
The proof follows the same lines as Lemma 22 in \cite{fis}, except that now we write $\left(N(A,f)\setminus \Delta(S)\times\{0\}\right)$ instead of $\left(N(A,p)\setminus \{0\}\right)$, we replace $[-1,-1]^X$ with $\Delta(S)\times [-1,-1]^X$. The latter is again compact by Tychonoff's Theorem and the finiteness of $S$. We also look at an element $(q,u^*)\in \left(N(A,f)\setminus \Delta(S)\times\{0\}\right)$ instead of $u^*\in N(A,p)$ and use that $N(A,f)$ is closed under scaling of the second component $u$. Other than that the proof follows virtually the same steps as in \cite{fis}.
\end{proof}

We use Finiteness to see that Lemma 18 in \cite{fis} holds true in our setting as well. Note, this is true for $X$ arbitrary, as long as $X$ is a separable metric space. 

\begin{lemma}\label{thm:lu4}
1) Let $\rho$ have a R-SEU representation with a regular probability measure $\mu$ and satisfy Finiteness with some natural number $K$. Let $Pref(\F)$ denote the set of all SEU preferences over $\F$. Then 
\[
|\{\better\in Pref(\F):\better \text{ is represented by some }(q,u)\in \left(supp(\mu)\setminus \Delta(S)\times\{0\}\right)\}|\in \{1,\dots,K\}.
\]

2) Let $\rho$ have a R-SEU representation with a regular probability measure $\mu$ over $\Delta(S)\times \R^X$ such that it has finite support of size $K'$. Then $\rho$ satisfies Finiteness with $K=K'$. 
\end{lemma}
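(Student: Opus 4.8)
The plan is to tie the Finiteness axiom to the number of distinct non-degenerate preferences carried by $supp(\mu)$, adapting the argument of Lemma 18 in \cite{fis} to the Anscombe--Aumann pairs $(q,u)$. For the lower bound in part 1, regularity rules out $\mu$ being concentrated on $\Delta(S)\times\{0\}$, since that would make every pair of distinct acts tie with positive probability; hence $supp(\mu)$ carries at least one genuine preference and the count is $\ge 1$. For the upper bound I argue by contradiction: suppose $K+1$ distinct SEU preferences $\better_0,\dots,\better_K$ are represented by points $(q_i,u_i)\in supp(\mu)\setminus\Delta(S)\times\{0\}$. The first key step is to construct a single separating menu $A=\{g_0,\dots,g_K\}$ on which each $(q_i,u_i)$ strictly and uniquely maximizes $q_i\cdot u_i(\cdot)$ at its own act $g_i$; equivalently, $(q_i,u_i)$ lies in the interior of the weak-maximizer region $N(A,g_i)=\{(q,u):q\cdot u(g_i)\ge q\cdot u(g)\ \forall g\in A\}$. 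Such a menu exists because the preferences are pairwise distinct.

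The second step exploits the support property. Since $(q_i,u_i)\in supp(\mu)$ and sits in the interior of $N(A,g_i)$, every open neighborhood has positive $\mu$-measure, so $\rho(g_i,A)=\mu(N(A,g_i))>0$ for each $i$. Now apply Finiteness to $A$: there is $B\subset A$ with $|B|\le K$, so some $g_{i_0}\in A\setminus B$, together with sequences $f_n\ra^m g_{i_0}$ and $B^n\ra^m B$ satisfying $\rho(f_n,\{f_n\}\cup B^n)=0$. But $(q_{i_0},u_{i_0})$ strictly maximizes $q_{i_0}\cdot u_{i_0}(\cdot)$ over $\{g_{i_0}\}\cup B\subset A$, and these strict inequalities are preserved under the small perturbations $f_n\ra^m g_{i_0}$, $B^n\ra^m B$. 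Hence for large $n$ an entire open neighborhood of $(q_{i_0},u_{i_0})$ lies inside the strict-maximizer set of $f_n$ over $\{f_n\}\cup B^n$, forcing $\rho(f_n,\{f_n\}\cup B^n)=\mu(N(\{f_n\}\cup B^n,f_n))>0$, a contradiction. This bounds the number of distinct preferences by $K$.

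For part 2, write the finitely supported regular measure as $\mu=\sum_{i=1}^{K'}\alpha_i\,\delta_{(q_i,u_i)}$ with each $\alpha_i>0$. Given any menu $A$, let $B\subset A$ collect one maximizer of $q_i\cdot u_i(\cdot)$ over $A$ for each $i$, so that $|B|\le K'=:K$. For $f\in A\setminus B$, every support point weakly prefers its chosen element of $B$ to $f$. Where this preference is strict at all $i$ we already have $\rho(f,\{f\}\cup B)=0$; where $f$ ties a maximizer at some support point I would perturb, choosing $f_n\ra^m f$ and $B^n\ra^m B$ that break every such tie strictly in favor of $B^n$, so that no support point selects $f_n$ over $\{f_n\}\cup B^n$ and therefore $\rho(f_n,\{f_n\}\cup B^n)=0$. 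This is precisely the Finiteness condition with $K=K'$.

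I expect the main obstacle to lie in the separating-menu construction of part 1, namely showing that finitely many pairwise-distinct SEU preferences admit a common finite menu on which each is the unique strict maximizer of a distinct act, and in verifying that strict maximality at an interior support point is robust to the mixture perturbations $\ra^m$, so that the positive-measure conclusion survives passage to the sequences supplied by Finiteness. In part 2 the only delicate point is the tie-breaking perturbation required to upgrade weak optimality of the $B$-elements to the strict domination that kills $\rho(f_n,\{f_n\}\cup B^n)$.
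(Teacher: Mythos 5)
Your overall architecture is the same as the paper's (which itself defers to Lemma 18 of \cite{fis}): a separating menu plus the support property for the upper bound in part 1, and a one-maximizer-per-support-point menu $B$ plus a tie-breaking perturbation for part 2. The upper-bound argument in part 1 is correct as written: the separating menu exists by Lemma 1 in the appendix of the main paper, $(q_{i_0},u_{i_0})\in supp(\mu)$ lies in the open set $N^+(\{f_n\}\cup B^n,f_n)\in\mathcal{F}$ for large $n$, and every open set in $\mathcal{F}$ meeting the support has positive measure, which contradicts $\rho(f_n,\{f_n\}\cup B^n)=0$.

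There are, however, two concrete holes. First, your lower bound in part 1 infers ``$supp(\mu)$ meets the complement of $\Delta(S)\times\{0\}$'' from ``$\mu$ is not concentrated on $\Delta(S)\times\{0\}$.'' For a merely finitely additive $\mu$ (which is the generality in which the lemma is stated and first used), a measure need not assign full mass to any neighborhood of its support, so this implication is not automatic; the paper instead argues by contradiction via Lemma \ref{thm:lu3}: if $supp(\mu)\subset\Delta(S)\times\{0\}$ then $\left(N(A,f)\setminus\Delta(S)\times\{0\}\right)\cap supp(\mu)=\emptyset$ for every $A$ and $f\in A$, hence $\mu(N(A,f))=0$ for all $f\in A$, contradicting $\sum_{f\in A}\rho(f,A)=1$. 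Second, in part 2 the entire content is the perturbation you defer. The paper's construction is worth recording: choose a finite $Y\subset X$ on which every $u$ in the support is non-constant, let $h(u)$ be the constant act delivering $\delta_{y_u}$ with $y_u$ a $u$-best prize in $Y$, let $h$ be the constant act delivering the uniform lottery on $Y$, and set $B^n=\frac{n-1}{n}B+\frac{1}{n}\{h(u)\}$ and $f_n=\frac{n-1}{n}f+\frac{1}{n}h$. Then $q\cdot u\bigl(\frac{n-1}{n}f_{q,u}+\frac{1}{n}h(u)\bigr)>q\cdot u(f_n)$ for every $(q,u)\in supp(\mu)$ because $u(\delta_{y_u})>u(\mathrm{unif}(Y))$; a single common perturbation direction would not break ties for all support points simultaneously. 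Relatedly, your decomposition $\mu=\sum_i\alpha_i\delta_{(q_i,u_i)}$ is not licensed for finitely additive $\mu$ with finite support; the clean way to close part 2 is again Lemma \ref{thm:lu3}, since after the perturbation $N(\{f_n\}\cup B^n,f_n)\cap supp(\mu)\subset\Delta(S)\times\{0\}$.
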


\begin{proof}
1) We start first with showing that there can't be more than $K$ elements in the support. This is the same argument as in \cite{fis} (proof by contradiction), except that now we use the separation property for AA-acts (Lemma 1 in the appendix of the paper). 

The proof that there is at least one non-constant SEU preference in $supp(\mu)$ is again very similar to \cite{fis} (proof by contradiction), only that now we invoke Lemma \ref{thm:lu3} instead, to arrive at a contradiction. 

2) Let $K=K'$ and consider any $A\in \A$. For each equivalence class (which we represent with only one element) $(q,u)\in supp(\mu)$ pick a $f_{q,u}\in M(A;q,u)$ and take $B=\{f_{q,u}: (q,u)\in supp(\mu)\}$ and note that $|B|\leq K'$.

Consider the case $B\subset A$ strictly (the case $B=A$ being trivial). Pick a $f\in B\setminus A$. Take $Y\subset X$ large but finite so that all preferences in $supp(\mu)$ are non-constant when restricted to $\F(Y)$, the set of acts $g$ where for each state $s\in S$ we have $g(s)\in \Delta(Y)$. For each $(q,u)$ in $supp(\mu)$ let $\delta_{y_u}\in \argmax_{\delta_y:y\in Y} u(\delta_y)$. Denote $h$ the constant act which yields for each $s\in S$ the uniform lottery over $Y$. Following a similar argument in \cite{fis} define $h(u)$ the constant act giving $h(u)(s) = \delta_{y_u}$  for all $s\in S$ and from that $B^n = \frac{n-1}{n}B+\frac{1}{n}\{h(u):(q,u)\in supp(\mu)\}$. Finally, define also $f_n = \frac{n-1}{n}f+\frac{1}{n}h$. Clearly, $B^n\ra^mB$ and $f_n\ra^m f$ and for all $n$ large enough and all $(q,u)\in supp(\mu)$ we have $q\cdot u(\frac{n-1}{n}f_{q,u}+\frac{1}{n}h(u))>q\cdot u(f_n)$. It follows that for all large $n$ $\rho(f_n,\{f_n\}\cup B^n) = 0$. This proves Finiteness. 
\end{proof}

Note that the fact that there is a non-constant SEU $(q,u)$ in the support of $\mu$ above only used the regularity of $\mu$. 

We now show the other direction for our version of Lu's theorem. 

\begin{lemma}\label{thm:lu5}
Suppose that $\rho$ has a R-SEU representation with a regular, sigma-additive probability measure $\mu$. Then $\rho$ satisfies Monotonicity, Linearity, Extremeness, State Independence and Continuity. 
\end{lemma}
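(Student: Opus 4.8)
The statement is the sufficiency direction of Lu's theorem, so the plan is to fix the representing measure $\mu$ and verify each axiom by translating it into a statement about the choice events
\[
N(A,f)=\{(q,u)\in\Delta(S)\times\R^X: q\cdot u(f)\geq q\cdot u(g)\ \forall g\in A\},\qquad \rho(f,A)=\mu(N(A,f)).
\]
The only structural fact needed is that $(q,u)\mapsto q\cdot u(\cdot)$ is affine in the act argument, i.e. $q\cdot u(\lambda f+(1-\lambda)h)=\lambda\,(q\cdot u(f))+(1-\lambda)\,(q\cdot u(h))$. Monotonicity is then immediate: for $A\subset B$ every maximizer over $B$ is a maximizer over $A$, so $N(B,f)\subset N(A,f)$ and monotonicity of $\mu$ gives $\rho(f,B)\leq\rho(f,A)$. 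Linearity is equally mechanical: by affineness, $q\cdot u(\lambda f+(1-\lambda)g)\geq q\cdot u(\lambda h+(1-\lambda)g)$ holds for every $h\in A$ iff $q\cdot u(f)\geq q\cdot u(h)$ (the positive factor $\lambda$ cancels), so $N(\lambda A+(1-\lambda)\{g\},\lambda f+(1-\lambda)g)=N(A,f)$ as sets and the two probabilities coincide. Neither of these uses regularity or $\sigma$-additivity.

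For Extremeness I would use regularity. By Definition \ref{thm:deflu} the tie set $\{(q,u):q\cdot u(f)=q\cdot u(g)\}$ is $\mu$-null for distinct acts, and since menus are finite this makes the argmax of $q\cdot u(\cdot)$ over $A$ a $\mu$-a.s.\ singleton. A linear functional attaining a \emph{strict} maximum over the finite set $A$ at a single point $f^*$ cannot have $f^*\in\mathrm{conv}(A\setminus\{f^*\})$ (otherwise $q\cdot u(f^*)$ would be a strict convex combination of strictly smaller values), so $f^*\in ext(A)$; thus the chosen act is an extreme point with probability one. State Independence also reduces to a null-set argument. With $f(s_1)=f(s_2)$, $A_1(s_1)=A_2(s_2)$, and $A_i$ agreeing with $f$ off $s_i$, a direct computation of the two maximizer events (splitting on whether $q(s_1)$ and $q(s_2)$ vanish) shows that $N(A_1,f)$ and $N(A_1\cup A_2,f)$ differ only on $D=\{q(s_1)=0,\ q(s_2)>0,\ \neg P\}$, where $P$ denotes $u(f(s_1))\geq u(a)$ for all $a\in A_1(s_1)$. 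On $\{q(s_1)=0\}$ every $g\in A_1$ ties with $f$ since they agree off $s_1$, so (taking any fixed $g_0\in A_1$ with $g_0\neq f$; the case $A_1=\{f\}$ is trivial) $D$ lies in the tie set of $f$ and $g_0$, which regularity makes null. Hence $\rho(f,A_1)=\rho(f,A_1\cup A_2)$.

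The substantive step is Continuity, and this is where $\sigma$-additivity enters. The plan is to show that whenever $A^n\to A$ in the Hausdorff metric with $f^n\to f$, $f^n\in A^n$, the indicators $\mathbf 1_{N(A^n,f^n)}$ converge to $\mathbf 1_{N(A,f)}$ at every $(q,u)$ at which $q\cdot u(\cdot)$ has a unique strict maximizer over $A$, i.e.\ off the boundary tie set. Uniform continuity of $(q,u,f)\mapsto q\cdot u(f)$ on the relevant compact set, together with Hausdorff convergence of the menus, gives this pointwise convergence off the ties: if $f$ strictly beats the rest of $A$ then $f^n$ eventually strictly beats the rest of $A^n$, while a strictly dominated $f$ stays dominated along $A^n$. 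Regularity confines the failure of this convergence to a $\mu$-null tie set, so the indicators converge $\mu$-a.e.; being bounded by $1$ with $\mu$ a $\sigma$-additive probability measure, dominated convergence yields $\rho(f^n,A^n)=\mu(N(A^n,f^n))\to\mu(N(A,f))=\rho(f,A)$, which is Continuity in $A$. I expect this last axiom to be the only real obstacle: Monotonicity and Linearity are set-theoretic and Extremeness and State Independence are one-line null-set arguments, but Continuity genuinely needs both $\sigma$-additivity (to pass the limit inside $\mu$) and regularity (to make the exceptional set null), and the delicate point is verifying the a.e.\ convergence of the maximizer indicators off the ties.
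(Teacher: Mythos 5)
Your verifications of Monotonicity, Linearity, Extremeness and State Independence are correct, and in fact more explicit than the paper, which dismisses the first three as trivial and defers State Independence and Continuity to Lu's supplementary appendix. The set identities $N(B,f)\subset N(A,f)$ and $N(\lambda A+(1-\lambda)\{g\},\lambda f+(1-\lambda)g)=N(A,f)$, the observation that an a.s.-unique strict maximizer of a linear functional over $A$ cannot lie in $\mathrm{conv}(A\setminus\{f^*\})$, and the identification of the symmetric difference $N(A_1,f)\,\triangle\, N(A_1\cup A_2,f)$ with a subset of a tie set are all sound.

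The Continuity step, however, contains a genuine gap, and it is precisely the one part of the lemma that carries real content. You formulate Continuity as: $f^n\to f$, $A^n\to A$ in Hausdorff metric, $f^n\in A^n$ imply $\rho(f^n,A^n)\to\rho(f,A)$, and you claim that $\mathbf 1_{N(A^n,f^n)}\to\mathbf 1_{N(A,f)}$ at every $(q,u)$ for which the maximizer over $A$ is unique and strict. This claim fails when several distinct elements of $A^n$ converge to the \emph{same} element of $A$: take $A=\{f\}$, $f^n\to f$ and $h^n\to f$ with $f^n\neq h^n$, and $A^n=\{f^n,h^n\}\to A$. Then $N(A,f)$ is everything, while $\mathbf 1_{N(A^n,f^n)}(q,u)=\mathbf 1\{q\cdot u(f^n)\geq q\cdot u(h^n)\}$ can be $0$ for all $n$ on a set of full $\mu$-measure, so $\rho(f^n,A^n)=0\not\to 1=\rho(f,A)$. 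Your sentence ``if $f$ strictly beats the rest of $A$ then $f^n$ eventually strictly beats the rest of $A^n$'' breaks down because ``the rest of $A^n$'' need not converge to ``the rest of $A$''. The deeper issue is that the pointwise statement you set out to prove is simply false for R-SEU representations; the Continuity axiom in the Gul--Pesendorfer/Lu tradition (and the one that is equivalent to sigma-additivity via Theorem 3 of \cite{gp}) is weak convergence of the induced choice distributions, i.e.\ $\sum_{g\in A^n}\rho(g,A^n)\phi(g)\to\sum_{g\in A}\rho(g,A)\phi(g)$ for bounded continuous $\phi$. The correct route is: by regularity the tie sets of $A$ and of each $A^n$ are $\mu$-null, so $\mu$-a.e.\ the argmax over each menu is a singleton; upper hemicontinuity of the argmax correspondence (Berge) then gives $f^*(A^n;q,u)\to f^*(A;q,u)$ a.e., hence $\phi(f^*(A^n;q,u))\to\phi(f^*(A;q,u))$ a.e., and dominated convergence (this is where sigma-additivity enters) yields weak convergence. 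You correctly identified \emph{where} regularity and sigma-additivity must enter, but the object to which dominated convergence is applied has to be $\phi$ evaluated at the a.e.-unique maximizer, not the indicator of $N(A^n,f^n)$ along an arbitrary selection $f^n$.
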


\begin{proof}
Linearity, Extremeness, Monotonicity are trivial to check. 

Regularity of $\mu$ means that $u$ in the SEU representations $(q,u)$ is non-constant (nonzero with our normalization) with $\mu$ probability equal to one. This implies that there are no null states $s\in S$. From here one checks that State Independence holds in the same way as in the Supplementary Appendix of \cite{lu}. 

The proof of Continuity holds word for word as in \cite{lu}. 
\end{proof}

By combining the above Lemmata we have proven overall for the case of finite $X$ and finite $S$ the following Proposition.

\begin{proposition}[Lu's Theorem with Explicit Tie-breaking.]\label{thm:LuThm}
Let $X$ be finite. The following are equivalent for a SCF $\rho$. 
\begin{enumerate}
\item $\rho$ satisfies Monotonicity, Linearity, Extremeness, Continuity, State Independence and Finiteness.
\item $\rho$ has a R-SEU representation with a sigma-additive and regular probability measure on $\Delta(S)\times \R^X$ (equipped with the sigma-Algebra $\mathcal{F}$). 
\end{enumerate}
Moreover, $\rho$ satisfies Finiteness if and only if the support of $\mu$ is finite. 
\end{proposition}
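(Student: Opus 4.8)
The plan is to assemble the Proposition entirely from the Lemmata already established, so that the argument is essentially bookkeeping: the only substantive inputs are Lemma \ref{thm:lu2} (the five standard axioms produce a representation), Lemma \ref{thm:lu5} (a representation forces those five axioms back), and Lemma \ref{thm:lu4} (Finiteness is the behavioral content of a finite support). Throughout I would work in the finite-$X$, finite-$S$ setting of this subsection, where acts are vectors in $\R^{|X||S|}$, so that all of these Lemmata apply directly.

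First I would prove $(1)\Rightarrow(2)$. Assuming Monotonicity, Linearity, Extremeness, Continuity and State Independence, Lemma \ref{thm:lu2} yields at once a regular, sigma-additive probability measure $\mu$ on $\Delta(S)\times\R^X$ representing $\rho$; observe that Finiteness is not needed to produce the representation. To control the support I would then invoke Finiteness together with Lemma \ref{thm:lu4}(1): the number of SEU preference classes represented by points of $supp(\mu)\setminus(\Delta(S)\times\{0\})$ is at most $K$. Since $\rho$ depends on $\mu$ only through its pushforward to preferences — the maximizer set $M(A;q,u)$ is unchanged under positive rescaling of $u$ — I would replace $\mu$ by the measure assigning to a single representative of each class the total mass $\mu$ gives that class, producing a regular, sigma-additive representation with finite support (of size at most $K$).

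For the converse $(2)\Rightarrow(1)$, I would apply Lemma \ref{thm:lu5}, which already hands me Monotonicity, Linearity, Extremeness, State Independence and Continuity from any regular, sigma-additive representation. The only remaining axiom is Finiteness, and this is exactly Lemma \ref{thm:lu4}(2): a regular representation with finite support of size $K'$ satisfies Finiteness with $K=K'$. The explicit construction there — mixing each act toward the uniform lottery $h$ while pushing $B$ toward the best degenerate acts $h(u)$, so that $f_n\ra^m f$ and $B^n\ra^m B$ with $\rho(f_n,\{f_n\}\cup B^n)=0$ for large $n$ — supplies the witnessing sequences, so nothing new is required.

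The ``Moreover'' clause is then immediate from the two halves of Lemma \ref{thm:lu4}: part (2) gives finite support $\Rightarrow$ Finiteness, and part (1) gives Finiteness $\Rightarrow$ finitely many preference classes in the support. The step I expect to need the most care is precisely the reconciliation between these two formulations, since Lemma \ref{thm:lu4}(1) bounds the number of \emph{preference classes} while the Proposition speaks of the \emph{support} of $\mu$. The bridge is regularity: it confines $\mu$ to non-constant (hence, under the normalization $u(y^*)=0$, nonzero) utilities, so each preference class is a single scaling ray, and passing to a fixed scale — e.g.\ the normalized Bernoulli utilities $\Delta^f(U)$ of Lemma \ref{thm:lu1} — collapses each ray to a point, turning ``finitely many classes'' into a genuine finite-support statement. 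Keeping careful track of which measure is meant — the one delivered by Lemma \ref{thm:lu2} versus its finitely-supported normalization — is the main, and essentially the only, subtlety in the proof.
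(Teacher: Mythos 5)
Your proposal is correct and follows exactly the paper's own route: the paper proves this Proposition simply by declaring it the combination of Lemma \ref{thm:lu2} (axioms $\Rightarrow$ representation), Lemma \ref{thm:lu5} (representation $\Rightarrow$ axioms), and Lemma \ref{thm:lu4} (Finiteness $\Leftrightarrow$ finite support), which is precisely your assembly. Your added remark on reconciling ``finitely many preference classes'' with ``finite support'' via regularity and the normalization of Bernoulli utilities is a useful clarification the paper leaves implicit, but it does not constitute a different argument.
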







\subsection{Lu's Theorem with explicit tie-breaking and general prize space}

Similar to \cite{fis}, we use an increasing sequence of finite sets to define the \emph{sigma-additive} probability measures $\mu^Y$ where the prize space for the acts is restricted to some finite $Y\subset X$. By virtue of being on finitely-dimensional euclidean spaces these measures are automatically inner regular as required by the Kolmogorov extension theorem for sigma-additive probability measures. Kolmogorov consistency of the measures $\{\mu^Y,Y\subset X,\text{ finite}\}$ is proven the same way as in Claim 4 of F.2.2. in \cite{fis}. Regularity is again here given just as there, since we are looking at acts with simple lotteries (it is the fact that only finitely many prizes can occur with acts of a menu that is crucial). Part 2) of Lemma \ref{thm:lu4} is applicable here again to give that $\mu$ has finite support. It is then immediate with the same logic as in \cite{fis} that Proposition \ref{thm:LuThm} remains true for the case of a general separable, metric space $X$. 

A comment about Continuity going forward. Here we ask for full Continuity as in Axiom 4(b) above (as opposed to Mixture Continuity or to the Continuity axiom in the supplementary Appendix of \cite{fis}\footnote{See Axiom 11 there. That axiom only ensures that the Bernoulli utilities are continuous but allows for tie-breakers which are not sigma-additive.}). With the same methods as in \cite{gp} full Continuity in our setting (Axiom 4(b) above) is tantamount under Finiteness to continuity of all SEUs $(q,u)$ and sigma-additivity of the respective tie-breakers $\tau_{q,u}$. 

\subsubsection{Ahn, Sarver-like representation for Lu's Theorem}

Assume for this subsection the version of Proposition \ref{thm:LuThm} for a general separable metric space $X$.

We first note for future use that $$\{(p,w)\in \Delta(S)\times\R^{X}: f\in M(M(A;u,q);w, p)\} = N(M(A;u,q);w,p)\in \mathcal{F}.$$ 

We also note the following helpful Lemma, which corresponds to Lemma 23 in \cite{fis}. 

\begin{lemma}\label{thm:lu6}
Suppose $\mu$ is a regular and finitely-additive probability measure on $\mathcal{F}$ 
and $\left(supp(\mu) \setminus\Delta(S)\times\{0\}\right) = [(q,u)]$ for some $(q,u)\in \Delta(S)\times \R^X$. Then for any $A\in \A$ and $f\in A$ we have $\mu(N(A,f)) = \mu(N(M(A,(q,u)),f)).$ 
\end{lemma}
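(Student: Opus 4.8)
The plan is to show that the two sets coincide off the support of $\mu$, and then to convert ``off the support'' into ``$\mu$-null'' through Lemma \ref{thm:lu3}. Write $B:=M(A,(q,u))$ and let $v^\ast:=\max_{g\in A}q\cdot u(g)$, so that $B=\{g\in A:\ q\cdot u(g)=v^\ast\}$. The target is the equality $\mu(N(A,f))=\mu(N(B,f))$.

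First I would record the trivial inclusion. Since $B\subseteq A$, any $(p,w)$ under which $f$ weakly beats every act of $A$ also weakly beats every act of the smaller menu $B$; hence $N(A,f)\subseteq N(B,f)=N(M(A,(q,u)),f)$, and in particular $\mu(N(A,f))\le\mu(N(M(A,(q,u)),f))$. It therefore suffices to prove that the difference $D:=N(M(A,(q,u)),f)\setminus N(A,f)$ is $\mu$-null.

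Next I would analyze $D$ on the support. By hypothesis $supp(\mu)\setminus(\Delta(S)\times\{0\})=[(q,u)]$, so a support point is either degenerate or represents the same SEU preference as $(q,u)$. A degenerate point $(p,0)$ ranks all acts equally and hence lies in both $N(A,f)$ and $N(B,f)$, so it is not in $D$. If instead $(p,w)\in[(q,u)]$, then $p\cdot w(\cdot)$ is a positive affine transform of $q\cdot u(\cdot)$, so $M(\cdot\,;p,w)=M(\cdot\,;q,u)$; membership $(p,w)\in N(B,f)$ then forces $q\cdot u(f)\ge q\cdot u(g)=v^\ast$ for every $g\in B$, while $f\in A$ gives $q\cdot u(f)\le v^\ast$, whence $q\cdot u(f)=v^\ast$. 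Thus $f\in B$ and $f$ is $(q,u)$-optimal in all of $A$, i.e.\ $(p,w)\in N(A,f)$, so again $(p,w)\notin D$. Consequently $D\cap supp(\mu)=\emptyset$. Finally, since $N(A,f),N(B,f)\in\mathcal{F}$ we have $D\in\mathcal{F}$, and by Lemma \ref{thm:algebrawithsets}(iii) one can write $D$ as a finite union of sets of the form (weak-maximizer set) $\cap$ (strict-maximizer set); bounding each piece by an enclosing weak-maximizer set that avoids the support and applying Lemma \ref{thm:lu3} termwise, finite subadditivity yields $\mu(D)=0$, giving the desired equality.

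The main obstacle is exactly this last step. Because $\mu$ is only finitely additive, one cannot simply assert that the complement of the support is null; the conclusion must be routed through Lemma \ref{thm:lu3}, which controls genuine weak-maximizer sets $N(\cdot,\cdot)$. The delicate point is that while each piece $M_l\cap N_l\subseteq D$ avoids the support, the enclosing weak-maximizer set $M_l$ need not, so the termwise reduction has to be arranged so that the relevant enclosing sets still miss the support; this is precisely where care is needed at the boundary between $B$ and $A\setminus B$, i.e.\ between acts with $q\cdot u=v^\ast$ and acts with $q\cdot u<v^\ast$, and where the strict ($N^+$) sets furnished by Lemma \ref{thm:algebrawithsets} are used to prevent that boundary from contributing spurious mass.
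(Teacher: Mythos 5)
Your first two steps are right and match the structure of the argument the paper imports from Lemma 23 of \cite{fis}: the inclusion $N(A,f)\subseteq N(M(A,(q,u)),f)$ is correct, and your case analysis showing that the difference set $D:=N(M(A,(q,u)),f)\setminus N(A,f)$ is disjoint from $supp(\mu)$ is also correct. But the proof stalls exactly where you flag "the main obstacle," and that obstacle is not resolved. Decomposing $D$ via Lemma \ref{thm:algebrawithsets}(iii) into pieces $M_l\cap N_l$ does not help, because Lemma \ref{thm:lu3} only yields nullity for a set of the specific form $N(A',f')$ (or $N^+$) whose non-degenerate part misses the support; for a piece $M_l\cap N_l\subseteq D$ the enclosing sets $M_l$, $N_l$ individually need not miss the support, and for a merely finitely additive $\mu$ there is no general principle that a member of $\mathcal{F}$ disjoint from $supp(\mu)$ is null. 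As written, the decisive step is an acknowledged gap, not a proof.

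The missing idea is a different, concrete covering of $D$ by sets to which Lemma \ref{thm:lu3} applies directly. If $(p,w)\in N(M(A,(q,u)),f)\setminus N(A,f)$, then the $(p,w)$-maximum over $A$ is attained at some $g^*\in A$; if $g^*$ were in $M(A,(q,u))$ then $p\cdot w(f)\ge p\cdot w(g^*)=\max_{h\in A}p\cdot w(h)$ would put $(p,w)$ in $N(A,f)$, a contradiction. Hence
\[
N(M(A,(q,u)),f)\setminus N(A,f)\ \subseteq\ \bigcup_{g\in A\setminus M(A,(q,u))}N(A,g),
\]
a finite union since menus are finite. For each such $g$, any $(p,w)\in[(q,u)]$ has $M(A;w,p)=M(A,(q,u))\not\ni g$, so $\left(N(A,g)\setminus\Delta(S)\times\{0\}\right)\cap supp(\mu)=\emptyset$ and Lemma \ref{thm:lu3} gives $\mu(N(A,g))=0$. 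Finite subadditivity then yields $\mu(D)=0$, and finite additivity together with your inclusion finishes the lemma. With this covering inserted in place of your final paragraph, the argument is complete and coincides with the one the paper relies on.
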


\begin{proof}
It's the same as Lemma 23 in \cite{fis}, only that now we have to replace $\{0\}$ with $\Delta(S)\times\{0\}$, and use Lemma \ref{thm:lu3} instead.  
\end{proof}

Then one can construct the AS-representation as in \cite{fis}: \\take $\{(q_i,u_i)\in \left(supp(\mu)\setminus \Delta(S)\times\{0\}\right), i=1,\dots,L\}$, use Lemma 1 in main body of paper to construct a separating menu $\{f_{q,u}:(q,u)\in supp(\mu)\}$. Here we have abused notation by identifying a preference from $supp(\mu)$ with a chosen representation $(q,u)$ from the equivalence class of equivalent representations. For the sake of concreteness we pick $(q,u)$ such that $u\in U$ in the following. 

One can then define the open sets $B_{q,u} = N^+(A,f_{q,u})\in \mathcal{F}$. It holds $\mu(B_{q,u})>0$ for each representative $(q,u)$ from $supp(\mu)$. One can then prove just as in \cite{fis} that $\mu(\cup_{[(q,u)]\in supp(\mu)} B_{q,u}) = 1$.\footnote{This is Lemma 19 in \cite{fis}. Note that it implies that any positive probability on a trivial SEU representation $(q,0)$ is already included in the tie-breaking procedure of the agent. In \cite{lu} the case of constant $u$ is assumed away explicitly through an Axiom (Non-Degeneracy) and because ties are per definition not observable in his model, a constant $u$ would make for a trivial R-SEU overall, so has to be excluded. In \cite{fis} the axioms and thus the representation allow for constant $u$. The collection of other axioms (in particular Extremeness) and the fact that one is using SCF as defined in the main body of the paper as an observable implies a tie-breaking procedure for the agent where `ties' occur with probability zero. 

} 
This allows to define 

\[
\tau_{q,u}(V) = \frac{\mu(V\cap B_{q,u})}{\mu(B_{q,u})}, \quad \mu({q,u}):=\mu(B_{q,u}),\quad \forall (q,u)\in supp(\mu).  
\]

All of the $\tau_{q,u}$ are regular probability measures over $\Delta(S)\times \R^X$ equipped with sigma-Algebra $\mathcal{F}$.

One can now prove the pendant of Lemma 20 in \cite{fis}. 

\begin{lemma}\label{thm:lu7}
Let $\rho$ have a R-SEU representation with a regular $\mu$ over $\mathcal{F}$ and satisfy Finiteness.

For each $[(q,u)]\in supp(\mu), A\in\A$ and $f\in A$ we have
\begin{equation}\label{eq:helplu1}
\mu(N(A,f)) = \sum_{[(q,u)]\in supp(\mu)}\mu(q,u)\tau_{q,u}\left(\{(p,w)\in \Delta(S)\times\R^{X}: f\in M(M(A;u,q);w, p)\}\right).
\end{equation}
\end{lemma}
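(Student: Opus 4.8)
The plan is to use the partition of the probability space furnished by the open sets $\{B_{q,u}\}$ to split $\mu(N(A,f))$ into the contributions of the individual equivalence classes of $supp(\mu)$, and then to reduce each contribution to the single-preference situation already settled by Lemma \ref{thm:lu6}. Writing the finitely many classes of $supp(\mu)\setminus\Delta(S)\times\{0\}$ as $[(q_i,u_i)]$, $i=1,\dots,L$ (finite by Lemma \ref{thm:lu4}), I would first establish a set-theoretic decomposition of $N(A,f)$ across the $B_{q_i,u_i}$, and then invoke Lemma \ref{thm:lu6} separately for each conditional measure $\tau_{q_i,u_i}$.

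First I would record that the sets $B_{q_i,u_i}=N^+(\{f_{q_j,u_j}\}_j,f_{q_i,u_i})$ are pairwise disjoint: a point of $N^+$ has a \emph{unique} maximizer in the separating menu, so it cannot lie in two of these sets simultaneously, the representatives $f_{q_i,u_i}$ being distinct. Combined with the already-established fact $\mu(\cup_i B_{q_i,u_i})=1$ and finite additivity of $\mu$, this yields $\mu(N(A,f))=\sum_{i=1}^L \mu(N(A,f)\cap B_{q_i,u_i})$. By the definitions of $\tau_{q_i,u_i}$ and of the weights $\mu(q_i,u_i)=\mu(B_{q_i,u_i})$, each summand equals $\mu(q_i,u_i)\,\tau_{q_i,u_i}(N(A,f))$, so it remains only to rewrite $\tau_{q_i,u_i}(N(A,f))$ in the desired tie-breaking form.

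For this I would apply Lemma \ref{thm:lu6} to the regular probability measure $\tau_{q_i,u_i}$. The hypothesis to check is that its support, minus the constant utility, sits inside the single class $[(q_i,u_i)]$, and this is where the separating menu does the work: every preference in $[(q_j,u_j)]$ with $j\neq i$ has $f_{q_j,u_j}$ (not $f_{q_i,u_i}$) as its unique best element of the separating menu and hence lies outside $B_{q_i,u_i}$, while the constant utilities possess no strict maximizer and so are excluded from every $N^+$. Thus $B_{q_i,u_i}\cap supp(\mu)\subseteq[(q_i,u_i)]$, and the non-constant part of $supp(\tau_{q_i,u_i})$ lies in the single class $[(q_i,u_i)]$, on which the first-stage maximizer set $M(A;q_i,u_i)$ is constant; this is all Lemma \ref{thm:lu6} needs, its conclusion depending on the support only through $M(A;q_i,u_i)$. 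Lemma \ref{thm:lu6} then gives $\tau_{q_i,u_i}(N(A,f))=\tau_{q_i,u_i}(N(M(A;q_i,u_i),f))$, and since $N(M(A;q_i,u_i),f)=\{(p,w)\in\Delta(S)\times\R^{X}: f\in M(M(A;u_i,q_i);w,p)\}$, summing the $L$ contributions produces exactly equation \eqref{eq:helplu1}.

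The main obstacle is the step verifying the hypotheses of Lemma \ref{thm:lu6} for $\tau_{q_i,u_i}$, namely that conditioning $\mu$ on $B_{q_i,u_i}$ isolates a single SEU preference up to the constant utility. This rests on the separation property of the menu $\{f_{q,u}\}$ (Lemma 1 in the main body) together with Lemma \ref{thm:lu3}, which controls the behaviour of a regular measure off the complement of its support; by contrast, the disjointness and full-measure properties of the $B_{q,u}$ are routine once $N^+$ is read as ``unique maximizer''. Everything else is bookkeeping transferring verbatim from the corresponding argument (Lemma 20) in \cite{fis}.
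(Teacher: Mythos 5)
Your proposal is correct and follows essentially the same route as the paper's (itself a sketch deferring to Lemma 20 of \cite{fis}): decompose $\mu(N(A,f))$ over the pairwise-disjoint, full-measure family $\{B_{q_i,u_i}\}$, identify each piece with $\mu(q_i,u_i)\tau_{q_i,u_i}(N(A,f))$, verify that the non-constant part of $supp(\tau_{q_i,u_i})$ is confined to the single class $[(q_i,u_i)]$ via the separating menu, and finish with Lemma \ref{thm:lu6}. The only cosmetic difference is that the paper records the support identity as an equality (using closure of $N$- and $N^+$-sets under positive affine transformations of the second component) whereas you prove only the containment, which, as you correctly note, is all the application of Lemma \ref{thm:lu6} requires.
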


\begin{proof}
One shows that $supp(\tau_{q,u}\setminus \Delta(S)\times \{0\}) = [(q,u)]$. Similar to the proof of Lemma 20 in \cite{fis} this uses Lemma \ref{thm:algebrawithsets} here and the fact that $B_{q,u}\cap B_{q',u'} = \emptyset$ for two non-trivial SEU preferences whenever $(q,u)\not\approx (q',u')$, as well as the fact that sets from $\mathcal{F}$ of the type $N(A,f),N^+(A,f)$ are closed under positive affine transformations of their \emph{second} component. Just as in \cite{fis} the proof is closed by showing precisely \eqref{eq:helplu1} through the use of Lemma \ref{thm:lu6} and the fact that $\mu(\cup_{[(q,u)]\in supp(\mu)} B_{q,u}) = 1$.
\end{proof}

We now define the Ahn-Sarver type of representation. 

\begin{definition}\label{thm:AS-RSEU-SCFdef}
Let $\rho$ be a SCF for acts in $\F$ over $\Delta(X)$ where $X$ is a separable metric space and $S$, the set of objective states is finite. 

We say that $\rho$ admits an \emph{AS-version R-SEU  representation} if there is a triple $$(SubS,\mu,\{((q,u),\tau_{q,u}):(q,u)\in SubS\})$$ such that 

\begin{enumerate}
\item $SubS$ is a finite subjective state space of distinct, continuous and non-constant SEUs and $\mu$ is a full support probability measure on $SubS$. 
\item For each $(q,u)\in SubS$ the tie-breaking rule $\tau_{q,u}$ is a regular sigma-additive probability measure on $\Delta(S)\times U$ endowed with the respective Borel sigma-Algebra.\footnote{$\Delta(S)\times U$ is endowed with Borel sigma-Algebras. The space of once-normalized Bernoulli utilities $U$ is defined in footnote \footref{fnlabel}.}
\item For all $f\in \F$ and $A\in \A$ we have 
\[
\rho(f,A) = \sum_{(q,u)\in SubS}\mu(q,u)\tau_{q,u}(f,A),
\]
where $\tau_{q,u}(f,A) := \tau_{q,u}(\{(p,w)\in \Delta(S)\times U: f\in M(M(A;u,q);w,p)\})$.
\end{enumerate}

\end{definition}



We finally arrive at the version of the Theorem of Lu we use in the proofs. 

\begin{theorem}[Lu's Theorem with general prize space and in the AS-version]\label{thm:ASSREU}
The SCF $\rho$ on $\A$ admits an AS-version R-SEU representation
if and only if it satisfies 
\begin{enumerate}
\item Monotonicity
\item Linearity
\item Extremeness
\item Continuity
\item State Independence
\item Finiteness. 
\end{enumerate}
\end{theorem}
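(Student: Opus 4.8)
The statement is an equivalence. The plan is to obtain the AS-version from the axioms by first passing through the single-stage R-SEU representation of Proposition \ref{thm:LuThm} (available here for a general separable metric space $X$) and then applying the construction behind Lemma \ref{thm:lu7}, and conversely to obtain the six axioms from a given AS-version by verifying them directly against the two-stage choice structure.

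For the implication from the axioms to the representation, I would first apply Proposition \ref{thm:LuThm} to produce a regular, sigma-additive probability measure $\mu$ on $(\Delta(S)\times\R^X,\mathcal{F})$ with $\rho(f,A)=\mu(N(A,f))$; since Finiteness holds, the ``moreover'' clause makes $supp(\mu)$ finite, and by Lemma \ref{thm:lu4}(1) together with full Continuity the set $supp(\mu)\setminus(\Delta(S)\times\{0\})$ splits into finitely many classes of distinct, continuous, non-constant SEU preferences. I would take $SubS$ to be representatives $(q,u)$ with $u\in U$, which furnishes part (1) of Definition \ref{thm:AS-RSEU-SCFdef}. Next I would run the construction set up just before Lemma \ref{thm:lu7}: using Lemma~1 of the main body to obtain a separating menu $\{f_{q,u}\}$, I set $B_{q,u}=N^+(A,f_{q,u})\in\mathcal{F}$, record that $\mu\bigl(\cup_{[(q,u)]}B_{q,u}\bigr)=1$, and define $\mu(q,u):=\mu(B_{q,u})$ and $\tau_{q,u}(\cdot):=\mu(\cdot\cap B_{q,u})/\mu(B_{q,u})$, which are respectively a full-support measure on $SubS$ and regular sigma-additive tie-breakers on $\Delta(S)\times U$, giving part (2). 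Finally, feeding $\rho(f,A)=\mu(N(A,f))$ into the decomposition \eqref{eq:helplu1} of Lemma \ref{thm:lu7} reproduces verbatim the defining identity in part (3), completing this direction.

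For the converse I would verify the six axioms directly from the two-stage structure, rather than attempt to collapse a general AS-version back into a single measure (the two-stage selection does not in general agree with $\nu(N(A,\cdot))$ for $\nu=\sum_{(q,u)}\mu(q,u)\tau_{q,u}$). Monotonicity, Linearity and Extremeness are structural: enlarging the menu can only destroy the two-stage optimality of a fixed $f$ (Monotonicity); the procedure commutes with the mixture $\lambda(\cdot)+(1-\lambda)\{g\}$ because the summand $q\cdot u(g)$ (resp.\ $p\cdot w(g)$) is constant in the maximand (Linearity); and, by regularity of each $\tau_{q,u}$, the chosen act is $\tau_{q,u}$-almost surely a unique extreme point of the face $M(A;q,u)$, hence of $A$ (Extremeness). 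Continuity follows from continuity of the finitely many $(q,u)\in SubS$ and the sigma-additivity of $\mu$ on $SubS$ and of each $\tau_{q,u}$; State Independence is immediate from the state-independent form of the Bernoulli utilities $u,w\in U$. For Finiteness I would take $K=|SubS|$ and, for a menu $A$, let $B$ consist of one first-stage maximizer per element of $SubS$; given $f\in A\setminus B$, the perturbation of Lemma \ref{thm:lu4}(2)---which uses only the primary preferences to make $f_n$ strictly first-stage suboptimal against a perturbed $B^n$---yields $\rho(f_n,\{f_n\}\cup B^n)=0$ irrespective of the tie-breakers.

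The genuine work is confined to the forward direction and is already packaged in the cited lemmas: the identity $\mu\bigl(\cup_{[(q,u)]}B_{q,u}\bigr)=1$ (the analogue of Lemma~19 in \cite{fis}) and the decomposition \eqref{eq:helplu1}, both of which rely on each $\tau_{q,u}$ having non-trivial support exactly $[(q,u)]$ and on the fact, noted before Lemma \ref{thm:lu6}, that the sets $N(A,f)$, $N^+(A,f)$ and $N(M(A;q,u),f)$ all lie in $\mathcal{F}$ and are closed under positive scaling of the utility component. I expect the one remaining delicate point to be purely a matter of bookkeeping: checking that Continuity, via the Gul--Pesendorfer argument invoked just before this subsection, simultaneously yields continuity of the $(q,u)$ and sigma-additivity of the $\tau_{q,u}$ demanded by Definition \ref{thm:AS-RSEU-SCFdef}(2).
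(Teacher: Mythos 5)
Your proposal is correct, and the sufficiency direction coincides with the paper's: Proposition \ref{thm:LuThm} for separable metric $X$, finiteness of $supp(\mu)$ via Lemma \ref{thm:lu4}, and then the $B_{q,u}$/$\tau_{q,u}$ construction culminating in the decomposition \eqref{eq:helplu1} of Lemma \ref{thm:lu7}. Where you diverge is the necessity direction. The paper restricts the given AS-version representation to a finite prize set $Y\subset X$ (defining $\tau_{q,u|_Y}(B)=\tau_{q,u}(B\times\R^{X\setminus Y})$), observes that this is an AS-version representation of $\rho^Y$, invokes the arguments behind Proposition \ref{thm:LuThm} in the finite-prize case to get Linearity, Monotonicity, Extremeness and State Independence for $\rho^Y$, and then lifts to $\rho$ using the fact that menus are finite and lotteries simple; Finiteness is obtained by taking $Y$ large enough and applying Lemma \ref{thm:lu4}. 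You instead verify all six axioms directly against the two-stage selection on the general prize space. Both routes work; yours has the merit of making explicit a point the paper leaves implicit, namely that an arbitrary AS-version representation cannot simply be collapsed into a single measure $\nu=\sum_{(q,u)}\mu(q,u)\tau_{q,u}$ with $\rho(\cdot,A)=\nu(N(A,\cdot))$, since Definition \ref{thm:AS-RSEU-SCFdef} does not force $supp(\tau_{q,u})$ to concentrate on $[(q,u)]$ --- that property is only guaranteed for the tie-breakers constructed in Lemma \ref{thm:lu7}. The paper's finite-$Y$ reduction buys a cleaner appeal to already-proved finite-dimensional results (and is the natural companion to how the general-$X$ version of Proposition \ref{thm:LuThm} was itself obtained by Kolmogorov extension), while your direct verification keeps the argument self-contained and makes transparent exactly which structural features (first-stage monotonicity of the argmax correspondence, regularity of the $\tau_{q,u}$ for Extremeness, the first-stage-only perturbation for Finiteness) each axiom rests on.
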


\begin{proof}
\textbf{Sufficiency.} Assume that the Axioms are satisfied. Then Proposition \ref{thm:LuThm} in the case of $X$ separable, metric (see discussion at the start of this sub-sub-section up and including Lemma \ref{thm:lu7}) gives us a regular probability measure $\mu$ over $\mathcal{F}$ which implies a R-SEU representation of $\rho$. Due to Finiteness $\mu$ has finite support (Lemma \ref{thm:lu4}). Lemma \ref{thm:lu7} gives us the candidate for the AS-R-SEU representation: $SubS = \{(q,u):[(q,u)]\in supp(\mu)\setminus (\Delta(S)\times \{0\})\}$, $\mu$ and also $\tau_{q,u}$. By construction all distinct $(q,u)$ represent distinct SEU preferences and $\mu$ is full support (recall that $\mu(\cup_{[(q,u)]\in supp(\mu)} B_{q,u}) = 1$). Each $\tau_{q,u}$ is a regular sigma-additive probability measure on $\Delta(S)\times U$ endowed with the sigma-Algebra $\mathcal{F}$. 

\textbf{Necessity.} Suppose that we have the AS-version R-SEU representation \\$\left(SubS,\mu,\{((q,u),\tau_{q,u}):(q,u)\in SubS\}\right)$. 

Fix some finite $Y\subset X$ with $y^*\in Y$ and look at $(SubS,\mu,\{((q,u|_Y),\tau_{q,u|_Y}):(q,u|_Y)\in SubS\})$. Note that by $\tau_{q,u|_Y}$ we mean $\tau_{q,u|_Y}(B) = \tau_{q,u}(B\times \R^{X\setminus Y})$. This is an AS-version R-SEU representation of $\rho^Y$, as one can easily check.\footnote{$\rho^Y$ is simply $\rho$ restricted to menus of acts from $\F(Y)$.} Arguments from the proof of Proposition \ref{thm:LuThm} for the finite set of prizes $Y$ give that $\rho^Y$ satisfies Linearity, Monotonicity, Extremeness and State-Independence. 
Since the menus are finite and the lotteries simple, one can use this fact to easily see that $\rho$ satisfies Linearity, Monotonicity, Extremeness and State-Independence. By choosing $Y\subset X$ finite but large enough, it follows from Proposition \ref{thm:LuThm} and the application of Lemma \ref{thm:lu4} that Finiteness is satisfied. 
\end{proof}

We now establish uniqueness of the AS-version R-SEU-representation. 

\begin{proposition}\label{thm:SCFuniqueness}
The AS-version R-SEU-representation for a SCF $\rho$ is essentially unique in the sense that for each two representations the only degree of freedom is positive affine transformations of the Bernoulli utilities of the SEUs in the support of the respective measures of the representations.
\end{proposition}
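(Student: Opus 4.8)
The plan is to establish uniqueness in two stages, following the uniqueness argument of \cite{fis} but adapting it to Anscombe--Aumann acts with stochastic beliefs. First I would pin down the first-stage data $(SubS,\mu)$, and only afterwards, with the subjective state space and its weights fixed, recover each tie-breaker $\tau_{q,u}$. Throughout I would argue first for a fixed finite prize set $Y\subset X$, where $\rho^Y$ admits the finite representation of Proposition \ref{thm:LuThm}, and then pass to general $X$ by the same Kolmogorov-consistency argument used to build $\mu$, so that uniqueness on every finite $Y$ forces uniqueness of the limiting objects.

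For the first stage, the key observation is that tie-breakers are behaviorally inert on menus whose first-stage maximizer is unique. Using the separation property for AA-acts (Lemma 1 in the main body) together with regularity of $\mu$, I would construct menus $A$ on which each $(q,u)\in SubS$ has a singleton maximizer set $M(A;u,q)$; on such menus the term $\tau_{q,u}(\{(p,w): f\in M(M(A;u,q);w,p)\})$ collapses to the indicator that $f$ is that unique maximizer, so that $\rho(f,A)=\sum_{(q,u): f=M(A;u,q)}\mu(q,u)$. This is exactly a finite random SEU representation of first-stage choice, whose data is unique by the argument of \cite{lu} (equivalently Theorem 2 of \cite{gp}): the set of SEU preferences appearing and the weight $\mu(q,u)$ of each are determined by $\rho$, and each $(q,u)$ is unique as an SEU, i.e.\ $q$ is unique and $u$ is unique up to a positive affine transformation. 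This is precisely the degree of freedom asserted, so $(SubS,\mu)$ is pinned down once representatives $u\in U$ are chosen.

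For the second stage, I would fix one first-stage SEU $(q,u)$ and isolate its tie-breaking. Using Lemma 1 again I would build menus $A$ on which all of the test acts are tied under $(q,u)$ (so $M(A;u,q)=A$) while every other $(q',u')\in SubS$ has a singleton maximizer equal to a fixed reference act disjoint from the test acts. Evaluating $\rho$ at a test act $f$, every term with $(q',u')\neq(q,u)$ then contributes zero, since there $M(A;u',q')$ is a singleton not containing $f$, leaving $\rho(f,A)=\mu(q,u)\,\tau_{q,u}(\{(p,w): f\in M(A;w,p)\})$. Because $\mu(q,u)$ is already known, this reads off the value of $\tau_{q,u}$ on each event $N(A,f)$ arising this way; since these events form a $\pi$-system generating $\mathcal{F}$ (Lemma \ref{thm:algebrawithsets}), a Dynkin/$\pi$--$\lambda$ argument forces two regular tie-breakers agreeing on them to coincide, giving uniqueness of $\tau_{q,u}$.

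I expect the main obstacle to be this isolating construction: one must simultaneously render $(q,u)$ indifferent over a family of acts rich enough that the resulting events $\{(p,w): f\in M(A;w,p)\}$ generate $\mathcal{F}$ rather than a strict sub-$\sigma$-algebra, while keeping every other subjective state strictly ranked with an identifiable maximizer so that its contribution vanishes exactly. Both points rest on the separation lemma for acts and on the closure of $N(A,f),N^+(A,f)$ under positive affine transformations of their second component (as exploited in Lemma \ref{thm:lu7}); regularity of the $\tau_{q,u}$, together with Lemma \ref{thm:lu3}, is what keeps the tie-breakers supported away from the trivial utility, makes the identification exact, and lets agreement on the generating $\pi$-system propagate to all of $\mathcal{F}$.
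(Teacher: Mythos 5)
Your overall architecture matches the paper's: identify the support via separating menus, read off the weights there, then isolate each tie-breaker; your Stage 1 is in substance identical to the paper's Steps 1 and 2 (the paper argues $SubS_1=SubS_2$ by a direct contradiction on a separating menu for $\{(q_1,u_1)\}\cup SubS_2$ rather than citing \cite{gp}, and the finite-$Y$/Kolmogorov detour you propose is unnecessary since the separating menu of Lemma 1 is already a finite menu of simple acts). Where you genuinely diverge is the tie-breaker step, and the divergence matters. You propose building menus on which the test acts are exactly tied under $(q,u)$ while a reference act is the strict maximizer for every other SEU, reading off $\tau_{q,u}$ on the resulting events $N(A,f)$, and then extending by a $\pi$--$\lambda$ argument --- and you correctly flag as the main obstacle that this tied-menu class must be rich enough. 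The paper dissolves that obstacle entirely: for an \emph{arbitrary} pair $(g,B)$ it forms $C(\alpha)=(A\setminus\{f(q,u)\})\cup(\alpha B+(1-\alpha)\{f(q,u)\})$ with $A$ the separating menu; for small $\alpha$ continuity keeps every other SEU maximizing at its own separating act, linearity gives $M(\alpha B+(1-\alpha)\{f(q,u)\};u,q)=\alpha M(B;u,q)+(1-\alpha)\{f(q,u)\}$, and affine invariance of the events $N(\cdot,\cdot)$ in the second component yields $\tau^i_{q,u}(\alpha g+(1-\alpha)f(q,u),C(\alpha))=\tau^i_{q,u}(g,B)$ directly. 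This identifies the tie-breaker at every act--menu pair --- exactly the values through which $\tau_{q,u}$ enters the representation --- with no generation or extension argument needed. Note also a small internal tension in your construction as stated: if $M(A;u,q)=A$ and the reference act lies in $A$, the reference act is itself among the $(q,u)$-ties, so the events you identify are $N(A,f)$ with the reference act adjoined rather than the events $N(M(B;u,q),f)$ you actually need; you would want the reference act strictly $(q,u)$-worse than the test acts, which is what the paper's mixture construction delivers automatically.
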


\begin{proof}
Assume that there are two representations for $\rho$, i.e. there are two finite sets of SEU-s $SubS_i,i=1,2$ with normalized Bernoulli utilities in $\U$, measures $\mu_i,i=1,2$ over resp. $SubS_i$ and regular tiebreakers $\tau^j_{q_i,u_i}$ ($j=1,2$) so that for all $f\in A, A\in \A$ we have 

\[
\rho(f,A) = \sum_{(q_1,u_1)\in SubS_1}\mu_1(q_1,u_1)\tau^1_{q_1,u_1}(f,A) =\sum_{(q_2,u_2)\in SubS_2}\mu_2(q_2,u_2)\tau^2_{q_2,u_2}(f,A).
\]

W.l.o.g. we can assume that $SubS_i$ is the support of the respective probability measure $\mu_i$.

\emph{Step 1: }$SubS_1=SubS_2$. 

Assume this is not the case and assume w.l.o.g. that $SubS_1$ contains some $(q_1,u_1)\not\in SubS_2$. Consider then the set of SEU-s $\{(q_1,u_1)\}\cup SubS_2$, all distinct and pick a separating menu for it according to Lemma 1 in appendix of paper. Denote this menu by 
\[
A = \{f(q,u): (q,u)\in \{(q_1,u_1)\}\cup SubS_2\}. 
\]
Then note that by the second representation we have $\rho(f(q_1,u_1),A)=0$ but by the first $\rho(f(q_1,u_1),A)>0$. This is a contradiction. 

\emph{Step 2: }$\mu_1=\mu_2$. By taking the same menu as in Step 1. we get from the representations that 
\[
\rho(f(q,u),A) = \mu_i(q,u)\tau_{q,u}(f(q,u),A)= \mu_i(q,u), \quad (q,u)\in SubS, i=1,2. 
\]

\emph{Step 3: }$\tau^1_{q,u}=\tau^2_{q,u}$ for all $(q,u)\in SubS$. Fix a $(q,u)\in SubS$. Take an arbitrary $g\in \F$ with $g\in B,B\in\A$ and the separating menu $A$ as in the first two steps. Consider the menu $C(\alpha) = (A\setminus \{f(q,u)\})\cup (\alpha B + (1-\alpha)\{f(q,u)\})$. We have first that for all $\alpha\in (0,1)$ small enough it holds 
\[
\rho(\alpha B+ (1-\alpha)f(q,u), C(\alpha)) = \mu(q,u)\tau^i_{q,u}(\alpha B+ (1-\alpha)f(q,u), C(\alpha)). 
\]
In particular, it follows for all these small enough $\alpha$ and the definition of the tie-breakers that 
\begin{align*}
&\tau^i_{q,u}(\alpha g+ (1-\alpha)f(q,u), C(\alpha))  = \tau^i_{q,u}(\alpha g+ (1-\alpha)f(q,u), \alpha B+ (1-\alpha)\{f(q,u)\}) \\
&= \tau^i_{q,u}(g,B).
\end{align*}
This calculation is the same for both $i=1,2$ so that overall it follows $\tau^1_{q,u} = \tau^2_{q,u}$.
\end{proof}

\begin{remark}
Note that Theorem 4 in \cite{fis}, i.e. the case of REU is trivially included in Theorem \ref{thm:ASSREU}; just take $S=\{s\}$. 
\end{remark}

Finally, we note down a simple auxiliary Proposition.
\begin{proposition}\label{thm:revealedsuppmu}
Suppose the SCF $\zeta$ satisfies all the properties of Theorem \ref{thm:ASSREU}. Then for all SEUs $(q,u)$ it holds true (up to positive affine transformations of $u$)
\begin{align*}
(q,u)\in supp(\mu) \quad\equivalent\quad &\forall A\in \A, f\in A\text{ if }(q,u)\in N(A,f)\text{ and }\rho(f,A) = 0,\\&\text{ then there exists }(f_n,A_n)\ra(f,A)\footnote{$f_n\in A_n$ holds for all n. Convergence is in the product metric given by the usual norm on $\F$ and its related induced Hausdorff norm.}\text{ with }\rho(f_n,A_n) > 0.
\end{align*}
\end{proposition}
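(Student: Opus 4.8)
The plan is to prove the equivalence by treating the two implications separately, the finiteness of $supp(\mu)$ (guaranteed by Finiteness through Lemma \ref{thm:lu4}) and the separating-menu construction (Lemma 1 of the main body) being the workhorses throughout. Write $\mu$, $SubS=supp(\mu)$ and $\{\tau_{q,u}\}$ for the AS-version R-SEU representation of $\rho$, and recall $N(A,f)=\{(q,u):f\in M(A;q,u)\}$.

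For the direction $(q,u)\in supp(\mu)\Rightarrow$ (right-hand side), I would start from a pair $(A,f)$ with $f\in M(A;q,u)$ and $\rho(f,A)=0$. Since all summands in $\rho(f,A)=\sum_{(q',u')\in SubS}\mu(q',u')\tau_{q',u'}(f,A)$ are nonnegative, the term $\mu(q,u)\tau_{q,u}(f,A)$ vanishes; as $\mu(q,u)>0$, this forces $\tau_{q,u}(f,A)=0$, so $f$ is a $(q,u)$-maximizer whose tie-breaking weight is null (in particular $M(A;q,u)$ is not the singleton $\{f\}$). I would then perturb into a nearby pair for which $f$ becomes the \emph{strict unique} $(q,u)$-maximizer. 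Using non-constancy of $u$, pick constant acts $h^-,h^+$ with $q\cdot u(h^-)<q\cdot u(h^+)$; comparing $q\cdot u(f)$ against these two values shows that at least one of ``mix $f$ slightly toward $h^+$'' or ``mix every other element of $A$ slightly toward $h^-$'' strictly separates $f$ from the remaining maximizers. Calling the resulting approximating pairs $(f_n,A_n)\to(f,A)$ with $f_n\in A_n$, we get $M(A_n;q,u)=\{f_n\}$ for $n$ large, whence $\tau_{q,u}(f_n,A_n)=1$ and $\rho(f_n,A_n)\geq\mu(q,u)>0$, which is exactly the required sequence.

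For the converse I would argue by contraposition: assuming $(q,u)\notin supp(\mu)$, I produce a witness $(A,f)$ violating the right-hand side. Apply Lemma 1 to the finite family $supp(\mu)\cup\{(q,u)\}$ of distinct SEUs to obtain a separating menu $A=\{f_{q',u'}\}$, and set $f=f_{q,u}$. Then $(q,u)\in N^+(A,f)\subset N(A,f)$, while each $(q',u')\in supp(\mu)$ strictly prefers its own $f_{q',u'}\neq f$, so $f\notin M(A;q',u')$, giving $\tau_{q',u'}(f,A)=0$; summing over the finite support yields $\rho(f,A)=0$. It remains to show that no approximating sequence can recover positive probability. For any $(f_n,A_n)\to(f,A)$ with $f_n\in A_n$, Hausdorff convergence supplies, for each fixed $(q',u')\in supp(\mu)$, elements $g_n\in A_n$ with $g_n\to f_{q',u'}$; since the separation is strict, $q'\cdot u'(f_{q',u'})>q'\cdot u'(f)$, and continuity of $u'$ gives $q'\cdot u'(g_n)>q'\cdot u'(f_n)$ for $n$ large, so $f_n\notin M(A_n;q',u')$ and $\tau_{q',u'}(f_n,A_n)=0$. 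Because $supp(\mu)$ is finite I may pass to a single threshold, obtaining $\rho(f_n,A_n)=0$ for all large $n$, so no sequence converging to $(f,A)$ has $\rho>0$.

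I expect the main obstacle to be precisely this last, uniform robustness step in the converse: it is not enough to separate $f$ from each competing state at the limit, one must ensure the strict separation survives all sufficiently small perturbations of both the act and (in the Hausdorff sense) the menu, simultaneously across the whole support. This is what ties together the three ingredients — strictness of the separating menu, continuity of the SEUs in $supp(\mu)$, and finiteness of the support (to upgrade ``eventually zero for each state'' to ``eventually zero for $\rho$''). A minor secondary care point is the forward direction's edge case where $f$ already attains the extreme achievable value of $q\cdot u$, which is resolved by the option of perturbing the \emph{other} maximizers downward rather than $f$ upward.
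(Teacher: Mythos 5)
Your proof is correct and follows essentially the same route as the paper: contraposition via a separating menu for $supp(\mu)\cup\{(q,u)\}$ plus a robustness-of-strict-separation argument for the converse, and a tie-breaking perturbation that makes $f$ the unique $(q,u)$-maximizer for the forward direction. The only difference is the perturbation device in the forward step (you mix toward constant acts $h^{\pm}$ with a case distinction, while the paper mixes toward elements of a separating menu); both work, and your explicit handling of the edge case where $f$ cannot be perturbed upward is if anything slightly more careful.
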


\begin{proof}[Proof of Proposition \ref{thm:revealedsuppmu}]
Suppose that $(q,u)\not\in supp(\mu)$ and take a separating menu $\bar A$ for $\{(q,u)\}\cup supp(\mu)$. It follows then from the representation in Theorem \ref{thm:ASSREU} that $\rho(f(q,u),\bar A) = 0$ and that this remains true in a neighborhood of the choice data $(f,A)$, where neighborhoods are taken w.r.t. product of the topology on $\F$ with the Hausdorff topology on $\A$. 

Suppose now that $(q,u)\in supp(\mu)$, $\rho(f,A) = 0$ despite $(q,u)\in N(A,f)$. Given the representation in Theorem \ref{thm:ASSREU} we know that $\tau_{q,u}(f,A)=0$ and that there must exist some $g\in N(A,f), g\neq f$. Denote $N(f) = \{g\in A: g\in N(A,f),g\neq f\}$. Take a separating menu $\bar A$ for $supp(\mu)$. If there exists some other $supp(\mu)\ni(q',u')\neq (q,u)$ set $f' = f(q',u')$. If not, pick some $(q',u')\neq (q,u)$ and some $f'$ such that $\{f(q,u),f'\}$ separate the two SEUs. 

Look at $A_n = \{f_n=\frac{1}{n}f(q,u)+(1-\frac{1}{n})f\}\cup A\setminus N(f)\cup \{\frac{1}{n}f'+(1-\frac{1}{n})g:g\in N(f)\}$. Clearly $\tau_{q,u}(h,A_n) = 1$ if and only if $h=f_n$ and otherwise it is zero for $h\in A_n,h\neq f_n$. It follows $\rho(f_n,A_n)>0$ for all $n$. 
\end{proof}

\section{From Stochastic Choice to Menu Preference}\label{sec:stochmenupref}


Given the representations, and $\better_{h^t}$ from Definition 13 in main body of paper, let $\sim_{h^t}$, $\sbetter_{h^t}$ denote the symmetric and asymmetric components of $\better_{h^t}$. Note that $\better_{h^t}$ is potentially incomplete.

\begin{lemma}[pendant of Lemma 5 in \cite{fis} in our setting.]\label{thm:menuprefprops}
Suppose that $\rho$ admits a DR-SEU representation. Consider any $t\leq T-1$, $h^t=(A_0,f_0,s_0;\dots;A_t,f_t,s_t)\in \h_t$ and $g_t,r_t\in \F_t$.

(i) If $g_t\better_{h^t}r_t$ then $q_t\cdot u_t(g_t)\ge q_t\cdot u_t(r_t)$ for all $\theta_t=(q_t,u_t,s_t)\in \Theta_t$ consistent with $h^t$.

(ii) Suppose there exists $g,b\in\Delta(X_t)$ with $\pi_u(\theta_t)(g)>\pi_u(\theta_t)(r_t)$ for all $\theta_t$ consistent with $h^t$. If $\pi_{q} (\theta_t)\cdot \pi_u(\theta_t)(g_t)\ge \pi_{q}(\theta_t)\cdot \pi_u(\theta_t)(r_t)$ for all $\theta_t$ consistent with $h^t$ then  $g_t\better_{h^t}r_t$.

(iii) If $h^t$ is a separating history for $\theta_t$, then $g_t\better_{h^t}r_t$ if and only if $\pi_{q}(\theta_t)\cdot \pi_u(\theta_t)(g_t)\ge \pi_{q}(\theta_t)\cdot \pi_u(\theta_t)(r_t)$. 
\end{lemma}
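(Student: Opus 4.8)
The plan is to pull everything back to the DR-SEU representation and to argue at the level of the SEU functionals that survive conditioning on $h^t$. Conditioning on the history $h^t$ yields a posterior over time-$t$ states, and ``$\theta_t$ consistent with $h^t$'' means that $\theta_t$ lies in the support of this posterior; by the representation (ultimately Theorem \ref{thm:ASSREU} applied stagewise) the conditional choice probabilities after $h^t$ are obtained by integrating the static R-SEU kernels against this posterior, with the explicit tie-breakers $\tau_{q,u}$ resolving indifferences. Definition 13 turns $g_t \better_{h^t} r_t$ into an inequality between such conditional choice probabilities, so each of (i)--(iii) becomes a statement about the family $\{(\pi_q(\theta_t),\pi_u(\theta_t)) : \theta_t \text{ consistent with } h^t\}$ of surviving SEU functionals. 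I will establish (i) as soundness, (ii) as a converse under richness, and (iii) as the sharp case in which this family is a singleton.

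For (i) I would argue by contraposition. Suppose some consistent $\theta_t=(q_t,u_t,s_t)$ has $q_t\cdot u_t(g_t) < q_t\cdot u_t(r_t)$. Since $\theta_t$ lies in the support of the posterior it carries strictly positive conditional weight, so I can exhibit a continuation menu in which, on the positive-probability event that $\theta_t$ is realized, $r_t$ is chosen strictly over $g_t$; by Continuity and the representation this forces the choice-probability inequality defining $g_t\better_{h^t}r_t$ to fail. Hence if $g_t\better_{h^t}r_t$ then every consistent state satisfies the weak expected-utility inequality. This mirrors the soundness half of Lemma 5 in \cite{fis}, now carried out for AA-acts with the separation lemma for acts (Lemma 1 in the main body) in place of its lottery analogue.

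For (ii) I would prove the converse by a separating construction. Applying the separation lemma for AA-acts to the finitely many consistent SEU functionals gives a menu on which each realized state has a unique maximizer, and the richness hypothesis---the existence of lotteries providing a common strict ranking---lets me perturb $g_t$ and $r_t$ toward the commonly-better lottery. When every consistent $\theta_t$ satisfies $\pi_q(\theta_t)\cdot\pi_u(\theta_t)(g_t)\ge \pi_q(\theta_t)\cdot\pi_u(\theta_t)(r_t)$, these perturbations convert the weak state-by-state rankings into strict ones, so that along the perturbation $g_t$ is selected with at least the probability of $r_t$ in every relevant menu; passing to the limit and invoking Definition 13 yields $g_t\better_{h^t}r_t$. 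Part (iii) is then the degenerate case: a separating history leaves exactly one consistent state $\theta_t$, so the ``only if'' direction is immediate from (i), while for ``if'' the single-state weak inequality is exactly the hypothesis of (ii), with the richness condition supplied automatically by non-triviality of that state; thus $\better_{h^t}$ collapses to the SEU ranking of $\theta_t$.

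The main obstacle is the converse (ii): weak preference by every consistent state does not by itself force any strict gain in choice probability, since ties must be broken and the tie-breaking is governed by the regular measures $\tau_{q,u}$ rather than by the menu preference itself. The role of the richness hypothesis, together with the regularity and tie-breaking machinery of the first section (in particular Lemma \ref{thm:lu3}), is precisely to turn weak state-by-state rankings into strict, full-measure ones after an arbitrarily small perturbation, so that the behavioral inequality survives in the limit; carrying out this perturbation while keeping the constructed menus consistent with the fixed history $h^t$ is the delicate part of the argument.
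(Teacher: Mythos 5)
Your plan follows the paper's proof essentially step for step: (i) by contraposition from a positive-conditional-weight consistent state that strictly reverses the ranking, so that the perturbed $r_t$ keeps positive choice probability under every perturbation sequence; (ii) by perturbing with the commonly ranked pair of lotteries to turn the weak state-by-state inequalities into strict ones, so the perturbed $r_t$ receives zero choice probability after $h^{t-1}$; and (iii) as the singleton-consistent-state case in which non-constancy of $\pi_u(\theta_t)$ supplies the lotteries needed for (ii). The only points to tighten are that in (ii) one must mix $g_t$ toward the commonly better lottery $g$ and $r_t$ toward the commonly worse lottery $b$ (mixing both toward the same lottery would leave the weak inequalities weak and break nothing), and that the paper's argument uses neither the separation lemma nor Lemma \ref{thm:lu3} here --- only linearity of the tie-breakers and Lemma 4 of the main appendix to pass between $\rho_t(\cdot\,|\,h^{t-1})$ and the state-wise tie-breaking probabilities.
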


\begin{proof}
(i) We prove the contrapositive. Assume thus that there is $\theta_t$ consistent with $h^t$ so that $q_t\cdot \pi_u(\theta_t)\cdot\pi_q(\theta_t)(g_t)< \pi_u(\theta_t)\cdot\pi_q(\theta_t)(r_t)$. Consistency of $\theta_t$ with $h^t$ implies $\prod_{k=0}^t\psi_k^{\theta_{k-1}}(\theta_k)\tau_{\pi_{qu}(\theta_k)}(f_k,A_k)>0$ for $pred(\theta_t) = (\theta_0,\dots,\theta_{t-1})$. We also have from the assumption that for any $g_t^n\ra^m g_t, r_t^n\ra^m r_t$ for all large enough $n$ it holds $\pi_u(\theta_t)\cdot\pi_q(\theta_t)(g^n_t)< \pi_u(\theta_t)\cdot\pi_q(\theta_t)(r^n_t)$. It then follows by linearity that for all $n$ large enough we have $\tau_{\pi_{qu}(\theta_k)}(\frac{1}{2}f_k+\frac{1}{2}r^n_t,\frac{1}{2}A_k+\frac{1}{2}\{g_t^n,r_t^n\})\geq \tau_{\pi_{qu}(\theta_k)}(f_k,A_k)>0$. Thus by Lemma 4 in the appendix of the main body of the paper we have $\rho_t(\frac{1}{2}f_k+\frac{1}{2}r^n_t,\frac{1}{2}A_k+\frac{1}{2}\{g_t^n,r_t^n\},s_t|h^{t-1})>0$ for all large $n$. By definition and since the perturbation sequences $\{g_t^n,r_t^n\}$ were arbitrary we have $g_t\not\better_{h^t}r_t$. 

(ii) Let $\Theta_t(h^t)$ be the set of $\theta_t$ consistent with $h^t$ and suppose that $\pi_{q}(\theta_t)\cdot\pi_u(\theta_t)(g_t)\geq \pi_{q}(\theta_t)\cdot\pi_u(\theta_t)(r_t)$ for all $\theta_t\in\Theta_t(h^{t})$. Pick then $g^n_t = \frac{n}{n+1}g_t+\frac{1}{n+1}\delta_{g}$ and $r^n_t = \frac{n}{n+1}r_t+\frac{1}{n+1}\delta_{b}$ for all $n$. Obviously, $g_t^n\ra^m g_t$ and $r_t^n\ra^m r_t$ and also $\pi_{q}(\theta_t)\cdot\pi_u(\theta_t)(g^n_t)\geq \pi_{q}(\theta_t)\cdot\pi_u(\theta_t)(r^n_t)$ for all $\theta_t\in\Theta_t(h^t)$. 

Consider any $(\theta'_0,\dots,\theta'_t)$. Then either $\theta'_t\in\Theta_t(h^t)$ or it holds $\theta'_t\not\in\Theta_t(h^t)$. In the former case $\tau_{\pi_{qu}(\theta_k)}(\frac{1}{2}f_k+\frac{1}{2}r^n_t,\frac{1}{2}A_k+\frac{1}{2}\{g_t^n,r_t^n\})=0$ for all $n$, which implies $$\prod_{k=0}^{t-1}\psi_k^{\theta'_{k-1}}(\theta'_k)\tau_{\pi_{qu}(\theta_k)}(f_k,A_k)\cdot \psi_t^{\theta'_{t-1}}(\theta'_t)\tau_{\pi_{qu}(\theta'_t)}(\frac{1}{2}f_t+\frac{1}{2}r^n_t,\frac{1}{2}A_k+\frac{1}{2}\{r_t^n,g_t^n\})=0.$$

In the case $\theta'_t\not\in\Theta_t(h^t)$ it holds
$$\prod_{k=0}^{t-1}\psi_k^{\theta'_{k-1}}(\theta'_k)\tau_{\pi_{qu}(\theta_k)}(f_k,A_k)\cdot \psi_t^{\theta'_{t-1}}(\theta'_t)\tau_{\pi_{qu}(\theta'_t)}(\frac{1}{2}f_t+\frac{1}{2}r^n_t,\frac{1}{2}A_t+\frac{1}{2}\{r_t^n,g_t^n\})=0.$$
This is because
$$\prod_{k=0}^{t-1}\psi_k^{\theta'_{k-1}}(\theta'_k)\tau_{\pi_{qu}(\theta_k)}(f_k,A_k)\cdot \psi_t^{\theta'_{t-1}}(\theta'_t)\tau_{\pi_{qu}(\theta'_t)}(f_t,A_t)=0.$$

In both cases it follows by Lemma 4 in the appendix of the paper that $\rho_t(\frac{1}{2}f_t+\frac{1}{2}r^n_t,\frac{1}{2}f_t+\frac{1}{2}\{r^n_t,g^n_t\}|h^{t-1}) = 0$ for all $n$, that is also $g_t\better_{h^t}r_t$.

(iii) Let $h^{t}$ be a separating history for $\theta_t$ (note that it exists by Lemma 7 from the appendix of the main paper). One direction is covered by (i). For the other direction note that since $\pi_u(\theta_t)$ is non-constant (by DR-SEU 2) there exists $b,g\in \Delta(X_t)$ satisfying the conditions of (ii). Since $\theta_t$ is the only state consistent with $h^t$ (due to Lemma 2 in the appendix of the main paper), (ii) implies the other direction.
\end{proof}

Definition 13 in main body of paper allows to define a menu preference adapted to the filtration of the $\theta_t$-s. For this, we use Separability for the preference $\better_{h^t}$.

Fix a state $\theta_t$ and a separating history $h^t$ for $\theta_t$. While there will be many separating histories for $\theta_t$ we can see from (iii) in Lemma \ref{thm:menuprefprops} that if we define $\better_{\theta_t}$ as the preference over $\F_t$ equal to $\better_{h^t}$ for an $h^t$ as in (iii) of Lemma \ref{thm:menuprefprops}, $\better_{\theta_t}$ is well-defined, i.e. it doesn't depend on the choice of the separating history. We use this definition in the following.

\begin{lemma}\label{thm:separability+continuity}
For any $\theta_t\in\Theta_t$ there exists functions $v_{\theta_t}:Z\ra\R$ and $V_t^{\theta_t}:\A_{t+1}\ra\R$ such that 
\begin{enumerate}
\item $\pi_u(\theta_t) (\delta_{(z,A)}) = v_{\theta_t}(z)+ V_t^{\theta_t}(A)$ whenever Separability holds.
\item $V_t^{\theta_t}$ is continuous whenever $\better_{\theta_t}$ is continuous. 
\item $V_t^{\theta_t}$ is monotonic w.r.t. set inclusion whenever $\better_{\theta_t}$ satisfies $\delta_{z,A}\better_{\theta_t}\delta_{z,B}$ for all $z\in Z$ and $B\subset A, A\in \A_{t+1}$ (Monotonicity/Option Value). \item $V_t^{\theta_t}$ is linear, i.e. $V_t^{\theta_t}(\alpha A_{t+1}+(1-\alpha)A'_{t+1}) = \alpha V_t^{\theta_t}(A_{t+1})+ (1-\alpha)V_t^{\theta_t}(A'_{t+1})$ for all $A_{t+1},A'_{t+1}\in \A_{t+1}, \alpha\in(0,1)$, whenever Indifference to Timing holds. 
\item There exists $C_{t+1}, C'_{t+1}\in \A_{t+1}$ such that for all $\theta_{t}\in \Theta_t$ we have $V_t^{\theta_t}(C'_{t+1})> V_t^{\theta_t}(C_{t+1})$, whenever Menu-Non-Degeneracy holds.
\item $V_t^{\theta_t}$ has a DLR-SEU representation as in Definition \ref{thm:DLR-SEUgeneraldef} whenever $\better_{\theta_t}$ additionally satisfies Weak Dominance.
\end{enumerate}
\end{lemma}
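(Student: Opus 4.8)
The plan is to build the pair $(v_{\theta_t},V_t^{\theta_t})$ directly from the Bernoulli index $\pi_u(\theta_t)$ restricted to constant acts, and then to read off each of the six properties from the corresponding axiom imposed on $\better_{\theta_t}$. First I would fix a reference consumption $z^*\in Z$ and a reference menu $A^*\in\A_{t+1}$ and set $V_t^{\theta_t}(A):=\pi_u(\theta_t)(\delta_{(z^*,A)})$ and $v_{\theta_t}(z):=\pi_u(\theta_t)(\delta_{(z,A^*)})-V_t^{\theta_t}(A^*)$. On constant acts the belief $\pi_q(\theta_t)$ integrates out, so $\better_{\theta_t}$ is there represented by the vNM index $\pi_u(\theta_t)$; Separability, which asserts that the menu-ranking holding $z$ fixed does not depend on $z$ and vice versa, forces the cross-difference $\pi_u(\theta_t)(\delta_{(z,A)})-\pi_u(\theta_t)(\delta_{(z,A^*)})-\pi_u(\theta_t)(\delta_{(z^*,A)})+\pi_u(\theta_t)(\delta_{(z^*,A^*)})$ to vanish for all $(z,A)$, which is exactly the additive identity in (1). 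The one subtlety here is that in a mixture space the conditional-independence expressed by Separability upgrades from ordinal separability to exact additivity of the Bernoulli index, and this upgrade is what the vNM linearity supplies.

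With the representation in hand, properties (2)--(5) each reduce to transporting the corresponding axiom on $\better_{\theta_t}$ through the observation that $V_t^{\theta_t}$ represents the induced preference over continuation menus $A\mapsto\delta_{(z^*,A)}$. For (2), continuity of $\better_{\theta_t}$ makes the contour sets of this induced menu preference closed, and since $V_t^{\theta_t}$ is literally $\pi_u(\theta_t)(\delta_{(z^*,\cdot)})$, continuity of the constructed representation follows by the standard argument that it agrees, up to a positive affine transformation, with a continuous utility for the continuous menu preference. For (3), Option Value gives $\delta_{(z^*,A)}\better_{\theta_t}\delta_{(z^*,B)}$ whenever $B\subset A$, hence $V_t^{\theta_t}(A)\ge V_t^{\theta_t}(B)$. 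For (4), Indifference to Timing is precisely the statement that mixing at the menu level is indifferent to mixing at the lottery level, i.e. $\delta_{(z^*,\alpha A+(1-\alpha)A')}\sim_{\theta_t}\alpha\delta_{(z^*,A)}+(1-\alpha)\delta_{(z^*,A')}$; since $\pi_u(\theta_t)$ is linear in lotteries this yields the asserted linearity of $V_t^{\theta_t}$. For (5), Menu-Non-Degeneracy directly supplies menus with $C'_{t+1}\sbetter_{\theta_t}C_{t+1}$ uniformly in $\theta_t$, which is the strict inequality $V_t^{\theta_t}(C'_{t+1})>V_t^{\theta_t}(C_{t+1})$.

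The substantive step is (6). By (2)--(4), $V_t^{\theta_t}$ is a continuous, monotone, linear functional on the space $\A_{t+1}$ of menus of lotteries, and I would argue that Weak Dominance translates into the dominance condition required by the DLR/Ahn--Sarver machinery, namely that the subjective inner states are themselves subjective expected utility functionals (a belief over objective states paired with a taste) rather than arbitrary Bernoulli indices. Invoking the static SEU-over-menus representation theorem underlying Definition \ref{thm:DLR-SEUgeneraldef} then delivers a DLR-SEU representation of $V_t^{\theta_t}$. I expect this to be the main obstacle, since it is the only step that calls a full representation theorem rather than a direct axiom-to-property translation: the work is to verify that the functional $V_t^{\theta_t}$ assembled in (1)--(5) meets every hypothesis of that theorem, and in particular that Weak Dominance is exactly the restriction that forces the SEU (as opposed to merely vNM) structure of the subjective states appearing in the representation.
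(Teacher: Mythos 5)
Your construction and the axiom-by-axiom translation coincide with the paper's own proof: the paper likewise defines $v_{\theta_t}(z)=u_t(z,B)-u_t(x,B)$ and $V_t^{\theta_t}(A)=u_t(x,A)$ for a fixed reference pair $(x,B)$, derives the additive cross-difference identity from the Separability mixture-indifference via part (iii) of Lemma \ref{thm:menuprefprops}, reads off parts (2)--(4) from continuity, Option Value and Indifference to Timing, and obtains part (6) by feeding the resulting menu preference into Theorem \ref{thm:DLR-SEU}. The one place you are too quick is part (5): Menu-Non-Degeneracy only delivers, for each $\theta_t$ (via its separating history), a \emph{state-dependent} pair of menus $A'_{t+1}(\theta_t)\sbetter_{\theta_t}A_{t+1}(\theta_t)$, and the uniform pair asserted in the statement has to be manufactured from these by taking $C'_{t+1}=\cup_{\theta_t}\bigl(A'_{t+1}(\theta_t)\cup A_{t+1}(\theta_t)\bigr)$ and $C_{t+1}=\frac{1}{|\Theta_t|}\sum_{\theta_t}A_{t+1}(\theta_t)$, which is where parts (3) and (4) (monotonicity for the union, linearity for the mixture) are actually used; asserting that the axiom "directly supplies" a uniform pair skips this aggregation step.
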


\begin{proof}
A. Due to Separability we have for two constant acts $\frac{1}{2}\delta_{z,A}+\frac{1}{2}\delta_{x,B}\sim_{\theta_t}\frac{1}{2}\delta_{x,A}+\frac{1}{2}\delta_{z,B}$ whenever $x,z\in Z$ and $A,B\in \A_{t+1}(h^{t})$. By using (iii) of Lemma \ref{thm:menuprefprops} this implies for $u_t=\pi_u(\theta_t)$ that
\[
u_t(z,A)+u_t(x,B) = u_t(x,A)+u_t(z,B).
\]
Define $v_{\theta_t}(z) =u_t(z,B)-u_t(x,B)$ and $V_t^{\theta_t}(A) = u_t(x,A)$. This gives the required identity in the statement.

B. If $\better_{\theta_t}$ is continuous, then so is obviously the Bernoulli utility function $u_t$. This implies the result. 

C. Is also obvious from 1.

D. Indifference to Timing together with part (iii) in Lemma \ref{thm:menuprefprops} and part 1. here imply this part with the same arguments as in \cite{fis}.\footnote{Recall that in this subsection we assume that we have a DR-SEU representation.} 

E. Just as in \cite{fis} take for each $\theta_t$ a separating history $h^{t-1}(\theta_t)$ and then menus $A'_{t+1}(\theta_t)$ and $A_{t+1}(\theta_t)$ with 
\[
V_t^{\theta_t}(A'_{t+1}(\theta_t))> V^{\theta_t}(A_{t+1}(\theta_t)),\text{ for all }\theta_t\in \Theta_t.
\]
Take now just as in \cite{fis}: $C'_{t+1} = \cup_{\theta_t\in\Theta_t}\left(A'_{t+1}(\theta_t)\cup A_{t+1}(\theta_t\right)$ and $C_{t+1} = \frac{1}{|\Theta_t|}\sum_{\theta_t\in\Theta_t}A_{t+1}(\theta_t)$. Again, due to 3. and 4. above the result follows. 

F. Follows directly from the previous points and Theorem \ref{thm:DLR-SEU} in section \ref{sec:ovseu}.
\end{proof}

In the following we assume that $\better_{\theta_t}$
satisfies all properties of Lemma \ref{thm:separability+continuity}.

\begin{corollary}\label{thm:corollary}
Let the conditions of Lemmas \ref{thm:menuprefprops} and \ref{thm:separability+continuity} be satisfied.

Fix any $t\leq T-1$ and $h^t\in \h_t$. Then $g_t\better_{h^t} r_t$ if and only if $q_t\cdot u_t(g_t)\geq q_t\cdot u_t(r_t)$ for all $\theta_t = (q_t,u_t,s_t)\in\Theta_t$ consistent with $h^t$. 
\end{corollary}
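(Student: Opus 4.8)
The goal is to prove Corollary \ref{thm:corollary}, which upgrades part (i) of Lemma \ref{thm:menuprefprops} to a full biconditional valid at \emph{every} history $h^t$, not merely at separating ones as in part (iii). The plan is to show that the three parts of Lemma \ref{thm:menuprefprops} together already pin down $\better_{h^t}$ completely once we know that the menu-preference machinery of Lemma \ref{thm:separability+continuity} applies to each $\better_{\theta_t}$. The forward direction is immediate: part (i) of Lemma \ref{thm:menuprefprops} states exactly that $g_t\better_{h^t}r_t$ implies $q_t\cdot u_t(g_t)\ge q_t\cdot u_t(r_t)$ for all $\theta_t=(q_t,u_t,s_t)\in\Theta_t$ consistent with $h^t$, so nothing remains to be done there.

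For the converse I would assume the inequality $q_t\cdot u_t(g_t)\ge q_t\cdot u_t(r_t)$ holds for every $\theta_t$ consistent with $h^t$ and deduce $g_t\better_{h^t}r_t$. The natural route is to invoke part (ii) of Lemma \ref{thm:menuprefprops}, which gives precisely this implication \emph{provided} one can produce lotteries $g,b\in\Delta(X_t)$ with $\pi_u(\theta_t)(g)>\pi_u(\theta_t)(b)$ simultaneously for all $\theta_t$ consistent with $h^t$. Since by DR-SEU part 2 each $\pi_u(\theta_t)$ is non-constant, for each individual $\theta_t$ such a strictly-separating pair exists; the task is to find one pair that works for the entire finite collection of consistent states at once. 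I would obtain this by taking, for each consistent $\theta_t$, a best lottery and a worst lottery in $\Delta(X_t)$, and then mixing: because the set of states consistent with $h^t$ is finite (the state space $\Theta_t$ is finite) one can form suitable convex combinations, or appeal directly to the separation property for AA-acts (Lemma 1 in the appendix of the main paper) used repeatedly elsewhere, to manufacture a single pair $g,b$ strictly ordered under every consistent $\pi_u(\theta_t)$.

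The main obstacle is exactly this uniformization step: part (ii) is stated for a \emph{single} pair $g,b$ that must strictly separate under all consistent states simultaneously, whereas non-constancy only yields a separating pair state by state. I expect to resolve it using finiteness of $\Theta_t$ together with the fact that each utility is a continuous affine functional on the simplex $\Delta(X_t)$, so the set of pairs strictly separating a given $\theta_t$ is open and non-empty, and a finite intersection of such open dense conditions remains non-empty; alternatively one constructs $g$ and $b$ explicitly as normalized averages of the state-by-state maximizers and minimizers, exactly in the spirit of the construction of $h(u)$ and $B^n$ in the proof of Lemma \ref{thm:lu4}. Once the pair is secured, part (ii) delivers $g_t\better_{h^t}r_t$ and the biconditional is complete. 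I would close by noting that no continuity or Separability hypothesis beyond those already standing in Lemma \ref{thm:separability+continuity} is needed for the argument, so the corollary follows directly from parts (i) and (ii) of Lemma \ref{thm:menuprefprops}.
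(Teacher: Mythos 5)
Your overall architecture matches the paper's: the forward direction is part (i) of Lemma \ref{thm:menuprefprops}, and the converse is reduced to producing a single pair $g,b$ strictly separated under every consistent $\pi_u(\theta_t)$ so that part (ii) applies. But the uniformization step, which you correctly identify as the crux, does not follow from finiteness of $\Theta_t$ plus non-constancy of each utility. The set of pairs $(g,b)$ with $\pi_u(\theta_t)(g)>\pi_u(\theta_t)(b)$ is open but not dense (it is an open half-space in $\Delta(X_t)\times\Delta(X_t)$), so there is no finite-intersection-of-dense-sets argument available, and the intersection can genuinely be empty: with two prizes $x,y$ and utilities $u_1(x)=1,u_1(y)=0$ and $u_2(x)=0,u_2(y)=1$, any pair with $u_1(g)>u_1(b)$ has $g(x)>b(x)$ and hence $u_2(g)<u_2(b)$, so no common strictly separated pair of lotteries exists. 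The same example defeats the ``normalized averages of the state-by-state maximizers and minimizers'' construction, since both averages coincide. The construction of $h(u)$ and $B^n$ in Lemma \ref{thm:lu4} is not of this kind either: there each menu element is perturbed toward \emph{its own} maximizer, which is exactly why no single common pair is needed in that proof.

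The paper closes the gap using the menu-preference machinery that your last sentence declares unnecessary. Part E of Lemma \ref{thm:separability+continuity} (Menu-Non-Degeneracy) supplies continuation menus $C_{t+1},C'_{t+1}$ with $V_t^{\theta_t}(C'_{t+1})>V_t^{\theta_t}(C_{t+1})$ simultaneously for all $\theta_t\in\Theta_t$; this uniformity is achievable for menus, unlike for lotteries, because $C'_{t+1}$ can be taken to be the union of the state-specific witnesses (Monotonicity of $V_t^{\theta_t}$ makes the union weakly better for every state) and $C_{t+1}$ their uniform mixture (Linearity makes its value an average, hence strictly worse for every state). Then, fixing $z\in Z$ and using Separability (part A of Lemma \ref{thm:separability+continuity}), the constant acts delivering $(z,C'_{t+1})$ and $(z,C_{t+1})$ in every objective state give the required pair in $\Delta(X_t)$ (recall $X_t=Z\times\A_{t+1}$ here), and part (ii) of Lemma \ref{thm:menuprefprops} finishes the proof. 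So the corollary does not follow from parts (i) and (ii) alone; the hypotheses of Lemma \ref{thm:separability+continuity} are doing real work.
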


\begin{proof}
One direction is just part (i) of Lemma \ref{thm:menuprefprops}. For the other direction let $C_{t+1},C'_{t+1}$ as in E. of Lemma \ref{thm:separability+continuity}. Pick any $z\in Z$ and let the constant acts $g_{t+1},b_{t+1}$ give in each objective state the continuation menus $C'_{t+1}$, respectively $C_{t+1}$. By the separability properties of Lemma \ref{thm:separability+continuity} we get $\pi_{q,u}(\theta_t)(g_{t+1})>\pi_{q,u}(\theta_t)(b_{t+1})$ for all $\theta_t\in\Theta_t$. Hence the other direction follows from part (ii) of Lemma \ref{thm:menuprefprops}. 
\end{proof}

Part A. of Lemma \ref{thm:separability+continuity} allows to define a menu preference on $\A_{t+1}$ from $\better_{\theta_t}$, if Separability holds. 
 
\begin{definition}\label{thm:menupref}
Fix a $z_t\in Z$. Take a $\theta_t$ and define an ex-post menu preference $\better_{\theta_t}$ over $\A_{t+1}$ by \footnote{Here we abuse some notation in that in the future in the proof of the Evolving-SEU representation we will use $\better_{\theta_t}$ to denote revealed preference derived from $\rho$ as in Definition 13 in main body of paper. This shouldn't lead to confusion as it is clear from the context each time which preference is meant.} 
\[
A_{t+1}\better_{\theta_t}B_{t+1},\text{ if  }(z_t,A_{t+1})\better_{\theta_t}(z_t,B_{t+1}).
\]
\end{definition}

This concept is well-defined because of part A. in Lemma \ref{thm:separability+continuity} and part (iii) in Lemma \ref{thm:menuprefprops}. Given Lemma \ref{thm:separability+continuity} we have that $\better_{\theta_t}$ is represented by $V_t^{\theta_t}$. 


\paragraph{Axiom: Finiteness of Menu preference} For $\better$ a menu preference over $\A$, collection of menus of acts based on some set of prizes $X$ say that $\better$ satisfies \emph{Finiteness} if there exists $K\in \N$ such that for menu $A$ there exists $B\subset A$ with $|B|\leq K$ and so that $B\sim A$. 

The next Lemma shows that $\better_{\theta_t}$ satisfies Finiteness for menu preferences if there is a DR-SEU representation.

\begin{lemma}\label{thm:finitenessmenu}[Lemma 18 in \cite{fis}]  Assume that the menu preference $\better_{\theta_t}$ satisfies the properties of the Lemmatas \ref{thm:separability+continuity} and \ref{thm:fislm7}. 
For each $\theta_t\in \Theta_t$ there is $K:=K(\theta_t)$ such that $\better_{\theta_t}$ satisfies Finiteness with $K$.
\end{lemma}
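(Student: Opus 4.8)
The plan is to exploit the DLR-SEU representation of $V_t^{\theta_t}$ furnished by part F of Lemma \ref{thm:separability+continuity}, whose decisive feature is that the representing measure over continuation utilities is \emph{finitely} supported. Since $\better_{\theta_t}$ is represented by $V_t^{\theta_t}$, unravelling the DR-SEU structure one has, for every $A_{t+1}\in\A_{t+1}$,
\[
V_t^{\theta_t}(A_{t+1}) = \sum_{\theta_{t+1}\in\Theta_{t+1}}\psi_{t+1}^{\theta_t}(\theta_{t+1})\,\max_{f\in A_{t+1}}\pi_q(\theta_{t+1})\cdot\pi_u(\theta_{t+1})(f),
\]
where the sum ranges effectively only over the finitely many continuation states $\theta_{t+1}$ reachable from $\theta_t$, i.e. those in the support of $\psi_{t+1}^{\theta_t}$. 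That this set of active continuation states is finite is exactly what Lemma \ref{thm:fislm7} together with the Finiteness of the underlying DR-SEU representation delivers. I would therefore set $K(\theta_t):=|\operatorname{supp}(\psi_{t+1}^{\theta_t})|$.

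Given this, the construction of a witnessing subset is immediate. Fix any menu $A\in\A_{t+1}$. For each reachable $\theta_{t+1}$ I pick a maximizer $f_{\theta_{t+1}}\in M(A;\pi_q(\theta_{t+1}),\pi_u(\theta_{t+1}))$; such a maximizer exists because $A$ is compact and each $\pi_u(\theta_{t+1})$ is continuous (Lemma \ref{thm:separability+continuity}, part B). I then put $B=\{f_{\theta_{t+1}}:\theta_{t+1}\in\operatorname{supp}(\psi_{t+1}^{\theta_t})\}\subset A$, so that $|B|\le K(\theta_t)$ as required.

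To close the argument I would verify $B\sim_{\theta_t}A$. For each reachable $\theta_{t+1}$ the element $f_{\theta_{t+1}}\in B$ attains $\max_{f\in A}\pi_q(\theta_{t+1})\cdot\pi_u(\theta_{t+1})(f)$, and since $B\subset A$ the reverse inequality is automatic; hence $\max_{f\in B}\pi_q(\theta_{t+1})\cdot\pi_u(\theta_{t+1})(f)=\max_{f\in A}\pi_q(\theta_{t+1})\cdot\pi_u(\theta_{t+1})(f)$ for every reachable $\theta_{t+1}$. Substituting these equalities term by term into the displayed formula yields $V_t^{\theta_t}(B)=V_t^{\theta_t}(A)$, and therefore $B\sim_{\theta_t}A$ because $V_t^{\theta_t}$ represents $\better_{\theta_t}$.

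The hard part is entirely in the first paragraph: justifying that $V_t^{\theta_t}$ genuinely is an integral of state-wise maxima against a \emph{finitely} supported measure, and that the relevant index set is precisely the finite set of reachable continuation states. This is where one must combine part F of Lemma \ref{thm:separability+continuity} with Lemma \ref{thm:fislm7} and the Finiteness of the DR-SEU representation, so as to rule out a continuum of active continuation utilities. Once finite support is secured, the maximizer-selection step and the term-by-term comparison are routine; in particular, tie-breaking plays no role here, since $V_t^{\theta_t}$ is itself a tie-free max-integral and the menu value depends only on the attained maxima.
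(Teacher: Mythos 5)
There is a genuine gap here: your argument is circular given the role this lemma plays in the paper. The finite\-/support max\-/integral formula you write down in the first paragraph is exactly the conclusion of part F of Lemma \ref{thm:separability+continuity}, which is obtained via Theorem \ref{thm:DLR-SEU}; but Theorem \ref{thm:DLR-SEU} takes Finiteness of the menu preference (part F of Axiom 0) as an \emph{input} — by Lemma \ref{thm:as1} the representing measure over (state-dependent) utilities has finite support if and only if the menu preference satisfies Finiteness. Finiteness of $\better_{\theta_t}$ is precisely what the present lemma is supposed to establish, and indeed in the sufficiency proof for Evolving SEU this lemma is invoked \emph{before} Theorem \ref{thm:DLR-SEU} is applied to $\better_{\theta_t}$. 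So you cannot start from the finitely supported DLR-SEU representation. A second, independent problem is your identification of the representing measure with $\psi_{t+1}^{\theta_t}$: even once a finite-support menu representation exists, matching its subjective state space with $\operatorname{supp}(\psi_{t+1}^{\theta_t})$ requires the Ahn--Sarver axioms AS-1/AS-2 and the DLR-R-SEU uniqueness result (Theorem \ref{thm:dlrrseuthm}, Proposition \ref{thm:dlrrseuunique}), none of which is available at this stage; Lemma \ref{thm:fislm7} is only the qualitative sophistication equivalence and does not deliver this.

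The paper avoids both issues by working entirely on the stochastic-choice side, where finiteness of $\operatorname{supp}(\psi_{t+1}^{\theta_t})$ is already guaranteed by DR-SEU 1. For a menu $A_{t+1}$ without ties after a separating history $h^t$, each continuation SEU has a unique maximizer, so the union $B_{t+1}$ of these maximizers has at most $K=|\pi_{qu}(\operatorname{supp}(\psi_{t+1}^{\theta_t}))|$ elements and the discarded acts are chosen with probability zero; Lemma \ref{thm:fislm7} then converts this into $V_t^{\theta_t}(A_{t+1})=V_t^{\theta_t}(B_{t+1})$. Your proposal silently skips the tie issue, but it cannot be skipped on this route: the zero-probability statement needs tie-free menus, and general menus must then be handled by approximating with tie-free menus and passing to the limit using continuity of $\better_{\theta_t}$ and compactness of the $K$-element subsets, which is the paper's Case 2. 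If you want to salvage your write-up, replace the first paragraph by this stochastic-choice construction and add the approximation step for menus with ties.
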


\begin{proof}
Fix a separating history $h^t$ for $\theta_t$. Let $SEU_{t+1}(\theta_t) = \pi_{qu}(supp(\psi_{t+1}^{\theta_t}))$. By DR-SEU 1 $K:=|SEU_{t+1}(\theta_t)|<\infty$. We show that this $K$ works for $\theta_t$. 

\textbf{Case 1.} First consider the case $A_{t+1}\in \A_{t+1}^*(h^{t})$ i.e. a menu without ties occuring with positive probability after $h^{t}$. By Lemma 5 in the appendix of the main paper we have for each $(q_{t+1},u_{t+1})\in SEU_{t+1}(\theta_t)$ that $|M(A_{t+1};q_{t+1},u_{t+1})|=1$. Let $B_{t+1} = \cup_{(q_{t+1},u_{t+1})\in SEU_{t+1}(\theta_t)}M(A_{t+1};q_{t+1},u_{t+1})$. Then $|B_{t+1}|\leq K$ and $B_{t+1}\subset A_{t+1}$. It follows that $\rho^{\theta_t}_{t+1}(A_{t+1}\setminus B_{t+1},A_{t+1},s_{t+1}) = 0$ for all $s_{t+1}\in S_{t+1}$. Lemma \ref{thm:fislm7} shows that $V_t^{\theta_t}(A_{t+1})=V_t^{\theta_t}(B_{t+1})$. 

\textbf{Case 2.} Now take any $A_{t+1}\not\in \A_{t+1}^*(h^{t})$. By Lemma 5 in the appendix of the main paper there exists a sequence $A_{t+1}^n\ra A_{t+1}$ such that $A_{t+1}\in\A_{t+1}^*(h^t)$. 

By Case 1 there exists then $B_{t+1}^n\subset A_{t+1}^n$ with $|B_{t+1}^n|\leq K$ and so that $B_{t+1}^n\sim_{h^t}A_{t+1}^n$. Since $|B_{t+1}^n|\leq K$ and restricting to a subsequence if necessary we can assume w.l.o.g. that $B_{t+1}^n\ra^m B_{t+1}$ for some $B_{t+1}\subset A_{t+1}$. Continuity of $\better_{\theta_t}$ implies then that $B_{t+1}\sim_{\theta_t} A_{t+1}$. But note that $B_{t+1}$ can't have more than $K$ elements. Part A. of Lemma \ref{thm:separability+continuity} now implies $V_{t}^{\theta_t}(A_{t+1})=V_{t}^{\theta_t}(B_{t+1})$. 
\end{proof}

\subsection{Sophistication as in Ahn-Sarver EMA 2013}

\begin{lemma}\label{thm:fislm7}[Pendant of Lemma 7 in \cite{fis}]
For any $\theta_t\in \Theta_t$, separating history $h^t$ for $\theta_t$ and $A_{t+1}\subset A'_{t+1}\in \A_{t+1}(h^t)$, the following are equivalent:
\begin{enumerate}
\item $\rho_{t+1}^{\theta_t}(A'_{t+1}\setminus A_{t+1};A'_{t+1},s_{t+1})>0$ for some $s_{t+1}\in S_{t+1}$. 
\item $V_t^{\theta_t}(A'_{t+1})>V_t^{\theta_t}(A_{t+1})$.
\end{enumerate}
\end{lemma}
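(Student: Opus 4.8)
The plan is to read both conditions off one and the same regular measure, and then let regularity collapse the distinction between ``being chosen'' and ``being the strict maximizer''. Fix the separating history $h^t$ for $\theta_t$. By Lemma \ref{thm:menuprefprops}(iii), $\theta_t$ is the unique state consistent with $h^t$, so the continuation stochastic choice $\rho_{t+1}^{\theta_t}$ is governed by a single regular continuation measure $\mu^{\theta_t}$ over $\Theta_{t+1}$, namely the measure obtained from $\psi_{t+1}^{\theta_t}$ together with the tie-breakers $\tau$ (regular because each tie-breaker is). On the value side, Lemma \ref{thm:separability+continuity}(F) supplies a DLR-SEU representation of $V_t^{\theta_t}$, so that $V_t^{\theta_t}(A)=\int \max_{f\in A} q\cdot u(f)\, d\mu^{\theta_t}$ against this same $\mu^{\theta_t}$, while $\rho_{t+1}^{\theta_t}(g,A',s_{t+1})=\mu^{\theta_t}\big(N_{s_{t+1}}(A',g)\big)$, where $N_{s_{t+1}}$ additionally records the realized objective state. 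Summing over $g\in A'_{t+1}\setminus A_{t+1}$ and over the finitely many $s_{t+1}$, condition 1 is exactly $\mu^{\theta_t}\big(\bigcup_{g\in A'\setminus A} N(A',g)\big) > 0$.

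The engine of the argument is regularity of $\mu^{\theta_t}$: for each act $g$ one has $\mu^{\theta_t}(N(A',g))=\mu^{\theta_t}(N^+(A',g))$, since the difference consists of ties and is $\mu^{\theta_t}$-null (this is the reasoning behind Lemma \ref{thm:lu3}). Hence, $\mu^{\theta_t}$-almost surely the maximizer over $A'$ is unique, and the event ``the choice from $A'$ falls in $A'\setminus A$'' coincides, up to a null set, with $E := \{(q,u): \argmax_{f\in A'} q\cdot u(f) \text{ is a single point lying in } A'\setminus A\}$. On $E$ one has $\max_{A'} q\cdot u > \max_{A} q\cdot u$ strictly, while off $E$, using $A\subset A'$ and uniqueness of the maximizer, $\max_{A'} q\cdot u = \max_A q\cdot u$ almost surely.

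Both implications then follow by integrating the nonnegative integrand $\max_{A'} q\cdot u - \max_A q\cdot u$. For $1\Rightarrow 2$: condition 1 gives $\mu^{\theta_t}(E)>0$, and since the integrand is strictly positive on $E$ and vanishes a.s. elsewhere, $V_t^{\theta_t}(A')-V_t^{\theta_t}(A)=\int_E (\max_{A'}-\max_A)\,d\mu^{\theta_t}>0$. For $2\Rightarrow 1$: if $V_t^{\theta_t}(A')>V_t^{\theta_t}(A)$, the nonnegative integrand cannot vanish $\mu^{\theta_t}$-a.e., so $\mu^{\theta_t}(\{\max_{A'}>\max_A\})>0$; by regularity this set agrees up to a null set with $E$, so $\mu^{\theta_t}(E)>0$, i.e. a positive-measure set of continuation states strictly prefer, hence uniquely select, an act of $A'\setminus A$. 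Distributing this mass over the objective states yields some $s_{t+1}$ with $\rho_{t+1}^{\theta_t}(A'\setminus A,A',s_{t+1})>0$, which is condition 1.

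The step I expect to require the most care is the reduction of the first paragraph: justifying that $\rho_{t+1}^{\theta_t}$ and $V_t^{\theta_t}$ are representable by \emph{the same} regular measure $\mu^{\theta_t}$, i.e. that the tie-breakers entering the continuation choice are precisely the perturbations that regularity washes out of the value. This is what lets me replace $N$ by $N^+$ and turn ``chosen from $A'\setminus A$'' into ``strictly optimal in $A'\setminus A$''. Without regularity the direction $1\Rightarrow 2$ would genuinely fail: a primary indifference between a retained act in $A$ and an added act in $A'\setminus A$ could be resolved toward the added act by the tie-breaker, making condition 1 hold while $V_t^{\theta_t}(A')=V_t^{\theta_t}(A)$. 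Thus the whole equivalence hinges on ties being $\mu^{\theta_t}$-null via Lemma \ref{thm:lu3}; the remaining bookkeeping — passing between the act-choice probability and its objective-state-indexed version and invoking non-nullity of some objective state — is routine.
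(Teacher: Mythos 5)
There is a genuine gap, and it is the one you yourself flag as ``the step I expect to require the most care'': your argument is circular. You assume that $V_t^{\theta_t}$ and $\rho_{t+1}^{\theta_t}$ are both read off \emph{the same} measure $\mu^{\theta_t}$ built from $\psi_{t+1}^{\theta_t}$ and the tie-breakers. But at the point where this lemma sits, the only things available are (a) a DR-SEU representation of the stochastic choice, which pins down the continuation kernel $\psi_{t+1}^{\theta_t}$ governing $\rho_{t+1}^{\theta_t}$, and (b) Lemma \ref{thm:separability+continuity}(F), which gives $V_t^{\theta_t}$ \emph{some} DLR-SEU representation via Weak Dominance --- with a measure that is a priori unrelated to $\psi_{t+1}^{\theta_t}$. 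A DR-SEU agent may perfectly well value menus according to beliefs about her future tastes that do not match the tastes she actually ends up choosing with; nothing in the DR-SEU axioms rules this out. The identification of the two measures is exactly what the Sophistication axiom is for, and Lemma \ref{thm:fislm7} is the vehicle by which that identification is later achieved: in the sufficiency proof of Evolving SEU, this lemma is used to verify Axioms AS-1 and AS-2 for the pair $(\better_{\theta_t},\rho_{t+1}^{\theta_t})$, and only then does Theorem \ref{thm:dlrrseuthm} together with the uniqueness result force the DLR-SEU measure of $V_t^{\theta_t}$ to be $\psi_{t+1}^{\theta_t}$. By assuming the common measure up front you are assuming the conclusion of that entire chain.

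The paper's proof is instead a short, purely axiomatic translation that never touches the representation of $V_t^{\theta_t}$ by a continuation measure: by DR-SEU~2 and Lemma 8 of the main appendix, $\rho_{t+1}^{\theta_t}(\cdot)$ equals $\rho_{t+1}(\cdot|h^t)$; by Corollary \ref{thm:corollary}, part A of Lemma \ref{thm:separability+continuity} and Definition \ref{thm:menupref}, the inequality $V_t^{\theta_t}(A'_{t+1})>V_t^{\theta_t}(A_{t+1})$ is equivalent to $(z_t,A'_{t+1})\sbetter_{h^t}(z_t,A_{t+1})$; and the Sophistication axiom converts that strict ex-ante ranking into $\rho_{t+1}(A'_{t+1}\setminus A_{t+1};A'_{t+1},s_{t+1}|h^t)>0$ for some $s_{t+1}$, and conversely. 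Your observations about regularity washing out ties and about replacing $N$ by $N^+$ are correct as far as they go, but they address the wrong obstacle: the difficulty is not tie-breaking within a single measure, it is that there is no single measure to begin with until Sophistication has been invoked --- and your proof never invokes it.
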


\begin{proof}
Pick any separating history $h^t$ for $\theta_t$. By DR-SEU 2 and Lemma 8 from the appendix of the main paper we have $\rho_{t+1}^{\theta_t}(A'_{t+1}\setminus A_{t+1};A'_{t+1},s_{t+1})= \rho_{t+1}(A'_{t+1}\setminus A_{t+1};A'_{t+1},s_{t+1}|h^t)$. By Corollary \ref{thm:corollary} and Lemma \ref{thm:separability+continuity}, part A. and Definition \ref{thm:menupref} we have $V_t^{\theta_t}(A'_{t+1})>V_t^{\theta_t}(A_{t+1})$ if and only if $(z_t,A'_{t+1})\sbetter_{h^t}(z_t,A_{t+1})$ for all $z_t$. By the Sophistication Axiom this implies that $V_t^{\theta_t}(A'_{t+1})>V_t^{\theta_t}(A_{t+1})$ if and only if $\rho_{t+1}^{\theta_t}(A'_{t+1}\setminus A_{t+1};A'_{t+1},s_{t+1})>0$ for some $s_{t+1}\in S_{t+1}$, as claimed. 
\end{proof}



\section{Proofs for the Evolving SEU and Gradual Learning Representations}

\subsection{Proof of the representation for Evolving SEU.}

\subsubsection{Sufficiency}

Note that the conditions for the existence of a DR-SEU representation are given. Assume now that we have an Evolving SEU representation till $t'\leq t-1$ and a DR-SEU representation all the way for all $t'\leq T$. We want to show that the Evolving SEU representation holds for $t'=t$ as well. 

Fix any $\theta_t\in\Theta_{t}$ and consider $\Theta_{t+1}(\theta_t) = supp(\psi_{t+1}^{\theta_t})$. Recall from the DR-SEU proof the construction 

\begin{equation*}\label{eq:th1axiom00}
\rho^{\theta_t}_{t+1}(f_{t+1},A_{t+1},s_{t+1}) = \sum_{\theta_{t+1}\in \Theta_{t+1}(\theta_t)}\psi_{t+1}^{\theta_t}(\theta_{t+1})\tau_{\pi_{qu}(\theta_{t+1})}(f_{t+1},A_{t+1}). 
\end{equation*}

 Thus $\rho^{\theta_t}_{t+1}$ has an AS-version R-SEU representation as in Definition \ref{thm:AS-RSEU-SCFdef}.
Since all the elements of $\Theta_{t+1}(\theta_t)$ are non-constant and induce different SEUs and since by Lemma \ref{thm:separability+continuity} we know that $V_t^{\theta_t}$ is non-constant we can find a finite set $Y\subset X_{t+1}$ such that 
\begin{enumerate}
\item $V_t^{\theta_t}$ is non-constant on $\A_{t+1}(Y) = \{B_{t+1}\in \A_{t+1}: \cup_{f_{t+1}\in B_{t+1}}supp(f_{t+1})\subset Y\}$,\footnote{Here, as before we use the notation $supp(f_{t+1}) = \cup_{s_{t+1}\in S_{t+1}}supp(f_{t+1}(s_{t+1}))$.}
\item for each $\theta_{t+1}\in \Theta_{t+1}(\theta_t)$ we have $\pi_{qu}(\theta_{t+1})$ are non-constant on $Y$.
\end{enumerate}
Now observe that Lemmatas \ref{thm:separability+continuity} and \ref{thm:finitenessmenu}, together with the assumed Weak Dominance axiom, imply that the menu preference $\better_{\theta_t}$ defined as in Definition \ref{thm:menupref} satisfies Weak Dominance as well. Theorem \ref{thm:DLR-SEU} now implies that $\better_{\theta_t}$ has a DLR-SEU representation. Moreover, since $\rho_{t+1}^{\theta_t}$ admits an AS-version R-SEU representation, it also admits an AS-version R-SEU representation when restricted to $\A_{t+1}(Y)$. Lemma \ref{thm:fislm7} implies that the pair $(\better_{\theta_t},\rho^{\theta_t}_{t+1})$ satisfies Axioms AS-1 and AS-2 needed for Theorem \ref{thm:dlrrseuthm}. It follows that there is a DLR-SEU representation where menus are constrained to be in $\A_{t+1}(Y)$. In particular, due to essential uniqueness of the DLR-R-SEU representation in Proposition \ref{thm:dlrrseuunique} we have that the DLR-SEU part of the DLR-R-SEU representation can be taken to also use the measure $\psi_{t+1}^{\theta_{t}}$ for the SEUs. 

Finally, we need to extend past $Y$ and show that the DLR-SEU representation for $V_t^{\theta_t}$ holds for all $A_{t+1}\in\A_{t+1}$. Take an arbitrary $A_{t+1}\in \A_{t+1}$ and extend $Y$ to a finite $Y'$ such that $ Y\cup \left(\cup_{f_{t+1}\in A_{t+1}}supp(f_{t+1})\right)\subset Y'$. We get a DLR-R-SEU representation based on $Y'$ as above. Essential uniqueness on $Y$ gives us the same measure over SEU-s and tie-breakers for $Y$. Due to essential uniqueness on $Y'$ the identified measure $\mu$ from Definition \ref{thm:dlrrseudef} must agree again with $\psi_{t+1}^{\theta_t}$ and the tie-breakers for the R-SEU part must agree as well. 

This argument shows that the Evolving SEU representation holds at $t$. Combining this with the inductive hypothesis and the result from the DR-SEU characterization theorem leads to the required sufficiency.  

\subsubsection{Necessity} 

\textbf{Claim.} There exists $g_t, b_t\in \Delta(X_t)$ with $\pi_u(\theta_t)(g_t)>\pi_u(\theta_t)(b_t)$ for all $\theta_{t}\in\Theta_t$. 

\emph{Proof of Claim.}
Since in the representation for every $\theta_{t+1}$ there exists $g_{t+1}(\theta_{t+1}), b_{t+1}(\theta_{t+1})\in\Delta(X_{t+1})$ with $\pi_u(g_{t+1}(\theta_{t+1}))> \pi_u(b_{t+1}(\theta_{t+1}))$ we can set $C'_{t+1} = \{b_{t+1}(\theta_{t+1}), g_{t+1}(\theta_{t+1}): \theta_{t+1}\in\Theta_{t+1}\}$ and for every $\theta_{t+1}$ let $A_{t+1}(\theta_{t+1}) = \{b_{t+1}(\theta_{t+1})\}$. We then have $V_t^{\theta_t}(C'_{t+1})\geq V_t^{\theta_t}(A_{t+1}(\theta'_{t+1}))$ for all $\theta'_{t+1}\in \Theta_{t+1}$ with strict equality at least in the case $\theta_{t+1} = \theta'_{t+1}$. Letting $C_{t+1} = \frac{1}{|\Theta_{t+1}|}\sum_{\theta_{t+1}}A_{t+1}(\theta_{t+1})$ it follows by linearity $V_t^{\theta_{t}}(C'_{t+1})>V_t^{\theta_{t}}(C_{t+1})$ for all $\theta_{t}$. This is sufficient for the statement of the Claim by separability of the $u_t$-s.

\emph{End of Proof of Claim.}

The Claim already implies Non-Degeneracy of $\better_{h^t}$. 

By Lemma \ref{thm:menuprefprops} we have $f_t\better_{h^t}g_t$ if and only if $\pi_{qu}(\theta_t)(f_t)\geq \pi_{qu}(\theta_t)(g_t)$ for all $\theta_t$ consistent with $h^t$. This allows to easily check Separability, Monotonicity and Indifference to Timing and a direct check of the Weak Dominance axiom (recall also Definition \ref{thm:DLR-SEUdef}). 

Continuity is satisfied with a very similar proof to the one of Theorem 2 in \cite{fis} by noting that for each $f_t\in \F_t$ and $h^t$ we have 
\[
\{g_t: g_t\better_{h^t} f_t\} = \cap_{\theta_t\text{ consistent with }h^t}\{g_t: g_t\better_{\theta_t} f_t\},\quad \{g_t: g_t\worse_{h^t} f_t\} = \cap_{\theta_t\text{ consistent with }h^t}\{g_t: g_t\worse_{\theta_t} f_t\}.
\]

We show that Sophistication is satisfied. Consider any $t\leq T-1, h^t, z_t$ and $A_{t+1}\subset A'_{t+1}\in \A^*(h^{t+1})$. Since $A'_{t+1}\in \A^*(h^{t+1})$ Lemma \ref{thm:menuprefprops} 
implies that $\rho_{t+1}(A'_{t+1}\setminus A_{t+1}; A'_{t+1}|h^t)>0$ holds if and only if for some $\theta_t$ consistent with $h^t$ we have \\(!) $\max_{f_{t+1}\in A'_{t+1}}\pi_{qu}(\theta_{t+1})(f_{t+1})>\max_{f_{t+1}\in A_{t+1}}\pi_{qu}(\theta_{t+1})(f_{t+1})$\\ for some $\theta_{t+1}$, $\pi_{qu}(\theta_{t+1})\in \psi_{t+1}^{\theta_t}$. By the representation, (!) is equivalent to $V_t^{\theta_t}(A'_{t+1})>V_t^{\theta_t}(A_{t+1})$. By Lemma \ref{thm:separability+continuity}, part A. this means $(z_t,A_{t+1})\not\better_{h^t}(z_t,A'_{t+1})$. By Monotonicity this is equivalent to $(z_t,A'_{t+1})\sbetter_{h^t}(z_t,A_{t+1})$.

\subsection{Proof of the representation for Gradual Learning.}

We assume throughout that we have an Evolving SEU representation. 

We normalize the instantaneous Bernoulli utilities $v_t$ to satisfy $$\sum_{z\in Z} \pi_{v}(\theta_t)(z) = 0$$ for all $\theta_t\in\Theta_t$.\footnote{Recall that $\pi_v$ gives the $v_t$ corresponding to $\pi_u(\theta_t)$ from the Evolving SEU representation (see Definition 7 in the main body of the paper).}

\subsubsection{Sufficiency.} 

We show that equation (16) in the appendix of the main paper is satisfied. This gives then the result. 


\begin{lemma}\label{thm:helpgl1}

For any $t=0,\dots,T-1$ and $\theta_t\in\Theta_t$, there exists $l,m\in \Delta(Z)$ such that $\pi_v(\theta_t)(l)\neq \pi_v(\theta_t)(m)$.
\end{lemma}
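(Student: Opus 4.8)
The plan is to reduce the statement to the assertion that the instantaneous Bernoulli index $v_{\theta_t}$ is non-constant: by the normalization $\sum_{z\in Z}\pi_v(\theta_t)(z)=0$, a constant $v_{\theta_t}$ would force $v_{\theta_t}\equiv 0$, so the existence of $l,m\in\Delta(Z)$ with $\pi_v(\theta_t)(l)\neq\pi_v(\theta_t)(m)$ is exactly the assertion $v_{\theta_t}\not\equiv 0$. The natural starting point is the separability decomposition of Lemma \ref{thm:separability+continuity}, part A: for each $\theta_t$ the period-$t$ index splits as $\pi_u(\theta_t)(z,A)=v_{\theta_t}(z)+V_t^{\theta_t}(A)$ on $Z\times\A_{t+1}$. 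Since we work with an Evolving SEU (hence DR-SEU) representation, $\pi_u(\theta_t)$ is a non-constant SEU by DR-SEU 2, so at least one of the two summands is non-constant. The entire content of the lemma is therefore to rule out the degenerate case in which all of this non-constancy is carried by the continuation value $V_t^{\theta_t}$ while $v_{\theta_t}\equiv 0$.

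I would organize the argument by backward induction on $t$, using the recursive expression of the continuation value, namely $V_t^{\theta_t}(A_{t+1})=\sum_{\theta_{t+1}\in\mathrm{supp}(\psi_{t+1}^{\theta_t})}\psi_{t+1}^{\theta_t}(\theta_{t+1})\max_{f\in A_{t+1}}\pi_{qu}(\theta_{t+1})(f)$, together with the period-$(t+1)$ decomposition $\pi_u(\theta_{t+1})(z,A)=v_{\theta_{t+1}}(z)+V_{t+1}^{\theta_{t+1}}(A)$. The terminal stage is the transparent base case: at $t=T-1$ there is no genuine continuation choice over which to vary, so the only channel through which $\pi_u(\theta_{T-1})$ can be non-constant is the current prize, forcing $v_{\theta_{T-1}}$ non-constant. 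For $t<T-1$, evaluating $V_t^{\theta_t}$ on singleton menus generated by constant acts lets me express the period-$t$ continuation value through the period-$(t+1)$ instantaneous indices, which are non-constant by the inductive hypothesis; this bookkeeping isolates the current-period prize $z$ inside $v_{\theta_t}$ alone.

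The step that carries all the weight is exactly this localization: excluding the possibility that $v_{\theta_t}\equiv 0$ while $V_t^{\theta_t}$ absorbs all the non-triviality. I would close it by contradiction. If $v_{\theta_t}\equiv 0$, then $\pi_u(\theta_t)$ is insensitive to the current-period consumption, so along every history passing through $\theta_t$ the agent is indifferent among all acts differing only in their $Z$-component; invoking the non-degeneracy built into the Gradual Learning hypotheses (concretely, applying the separating/Menu-Non-Degeneracy machinery of Lemma \ref{thm:separability+continuity} and the argument of part (ii) of Lemma \ref{thm:menuprefprops} to acts that differ only in their instantaneous prize) contradicts the requirement that current consumption be choice-relevant for $\theta_t$. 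Once $v_{\theta_t}\not\equiv 0$ is established for every $\theta_t\in\Theta_t$ and every $t$, the desired $l,m\in\Delta(Z)$ with $\pi_v(\theta_t)(l)\neq\pi_v(\theta_t)(m)$ follow immediately.
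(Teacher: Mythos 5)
There is a genuine gap. The lemma is not a structural consequence of the Evolving SEU representation alone: a representation in which $v_{\theta_t}\equiv 0$ for some $t\leq T-1$, with all non-constancy of $\pi_u(\theta_t)$ carried by the continuation value $V_t^{\theta_t}$, is a perfectly valid Evolving SEU representation (DR-SEU 2 only requires the composite $\pi_u(\theta_t)$ on $Z\times\A_{t+1}$ to be non-constant). So the degenerate case you correctly identify as ``the step that carries all the weight'' cannot be excluded by the machinery you cite. Menu-Non-Degeneracy (part E of Lemma \ref{thm:separability+continuity}) and part (ii) of Lemma \ref{thm:menuprefprops} give you non-degeneracy of $V_t^{\theta_t}$ --- i.e.\ that \emph{continuation menus} matter --- which is the wrong summand; there is no axiom-free ``requirement that current consumption be choice-relevant for $\theta_t$.'' Your backward induction also does not close the gap: the base case as stated is incorrect (at $t=T-1$ there \emph{is} genuine continuation choice over $A_T$, so non-constancy of $\pi_u(\theta_{T-1})$ does not localize to $v_{\theta_{T-1}}$), and the inductive step would need to pass non-constancy of the period-$(t+1)$ instantaneous indices back to $v_{\theta_t}$, which requires the martingale relation $v_t\propto\E[v_{t+1}\mid\theta_t]$ --- precisely the Gradual Learning conclusion that this lemma is an input to, so the argument would be circular (and even granting Lemma \ref{thm:helpgl2}, non-constancy of each $v_{\theta_{t+1}}$ does not imply non-constancy of their $\psi_{t+1}^{\theta_t}$-average).

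The missing ingredient is the Non-Degeneracy axiom of the Gradual Learning characterization (Condition 1), which is exactly the assertion that current consumption is choice-relevant: after a separating history $h^t$ for $\theta_t$ there exist $l,m,n\in\Delta(Z)$ with $(l,n,\dots,n)\not\sim_{h^t}(m,n,\dots,n)$. The paper's proof is then immediate: part (iii) of Lemma \ref{thm:menuprefprops} converts this to $\pi_u(\theta_t)(l,n,\dots,n)\neq\pi_u(\theta_t)(m,n,\dots,n)$, and since the two streams share the same continuation, the separable Evolving SEU form forces $\pi_v(\theta_t)(l)\neq\pi_v(\theta_t)(m)$. No induction is needed once you invoke the right axiom.
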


\begin{proof}
Consider any $t=0,\dots,T-1, \theta_t\in\Theta_t$ and separating history $h^t$ for $\theta_t$. Non-Degeneracy gives existence of $l,m,n\in\Delta(Z)$ such that $(l,n,\dots,n)\not\sim_{h^t}(m,n,\dots,n)$. By part (iii) of Lemma \ref{thm:menuprefprops} we get $\pi_u(\theta_t)(l,n,\dots,n)\neq \pi_u(\theta_t)(m,n,\dots,n)$, whence it follows from the Evolving SEU representation that $\pi_v(\theta_t)(l)\neq \pi_v(\theta_t)(m)$.
\end{proof}

For any $t=0,\dots,T-1$ and $\theta_t\in\Theta_t$ and $l\in \Delta(Z)$, let 
\[
\E[v_{t+1}(l)|\theta_t]:= \sum_{\theta_{t+1}}\psi_{t+1}^{\theta_t}(\theta_{t+1})(\pi_v(\theta_{t+1}))(l)
\]
denote the expected utility in period $t+1$ of lottery $l$ if current state is $\theta_t$. Axiom Stationary Preference for Lotteries, the fact that we have an Evolving SEU representation and part (iii) of Lemma \ref{thm:menuprefprops} imply that $\pi_v(\theta_t)$ and $\E[v_{t+1}(l)|\theta_t]$ induce the same preference over $\Delta(Z)$:

\begin{lemma}\label{thm:helpgl2}\footnote{Its proof is word for word as that of Lemma 10 in \cite{fis}.}
For all $l,m\in \Delta(Z), t=0,\dots,T-1$ and $\theta_t\in\Theta_t$ we have 
\[
\E[v_{t+1}(l)|\theta_t]>\E[v_{t+1}(m)|\theta_t]\quad\equivalent\quad \pi_v(\theta_t)(l)>\pi_v(\theta_t)(m). 
\]
\end{lemma}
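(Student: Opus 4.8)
The plan is to reduce the claimed equivalence of the two expected-utility functionals $\pi_v(\theta_t)(\cdot)$ and $\E[v_{t+1}(\cdot)\mid\theta_t]$ on $\Delta(Z)$ to a single application of the Stationary Preference for Lotteries axiom, by constructing two families of deterministic lottery streams whose rankings are governed respectively by each functional. First I would fix $t\le T-1$, $\theta_t\in\Theta_t$, a separating history $h^t$ for $\theta_t$ (which exists by Lemma~7 in the appendix of the main paper), and a reference lottery $n\in\Delta(Z)$. For $l\in\Delta(Z)$ let $(l,n,\dots,n)$ denote the deterministic stream delivering $l$ at period $t$ and $n$ thereafter (a constant act with singleton continuation menus), and let $(n,l,n,\dots,n)$ denote the stream delivering $n$ at $t$, $l$ at $t+1$, and $n$ afterwards.

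The first step is to unfold the Evolving SEU representation one period. Because all continuation menus here are singletons, $V_t^{\theta_t}$ evaluated on the continuation of a deterministic stream collapses to $\sum_{\theta_{t+1}}\psi_{t+1}^{\theta_t}(\theta_{t+1})\,\pi_u(\theta_{t+1})(\cdot)$ applied to that stream. Comparing the two streams in each family and cancelling the common terms then gives
\[
\pi_u(\theta_t)(l,n,\dots,n)-\pi_u(\theta_t)(m,n,\dots,n)=\pi_v(\theta_t)(l)-\pi_v(\theta_t)(m),
\]
since the period-$t$ continuation is identical in the two streams, and
\[
\pi_u(\theta_t)(n,l,n,\dots,n)-\pi_u(\theta_t)(n,m,n,\dots,n)=\E[v_{t+1}(l)\mid\theta_t]-\E[v_{t+1}(m)\mid\theta_t],
\]
since now the period-$t$ term $\pi_v(\theta_t)(n)$ and every continuation past $t+1$ cancel, leaving $\sum_{\theta_{t+1}}\psi_{t+1}^{\theta_t}(\theta_{t+1})[\pi_v(\theta_{t+1})(l)-\pi_v(\theta_{t+1})(m)]$.

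Next I would translate these utility differences into the menu preference. Since $h^t$ is separating for $\theta_t$, part (iii) of Lemma~\ref{thm:menuprefprops} yields
\[
(l,n,\dots,n)\better_{h^t}(m,n,\dots,n)\ \equivalent\ \pi_v(\theta_t)(l)\ge\pi_v(\theta_t)(m),
\]
\[
(n,l,n,\dots,n)\better_{h^t}(n,m,n,\dots,n)\ \equivalent\ \E[v_{t+1}(l)\mid\theta_t]\ge\E[v_{t+1}(m)\mid\theta_t].
\]
The Stationary Preference for Lotteries axiom asserts precisely that shifting the differing lottery from period $t$ to period $t+1$ preserves the ranking, i.e.\ $(l,n,\dots,n)\better_{h^t}(m,n,\dots,n)\equivalent(n,l,n,\dots,n)\better_{h^t}(n,m,n,\dots,n)$; chaining the three equivalences gives the weak-preference statement $\pi_v(\theta_t)(l)\ge\pi_v(\theta_t)(m)\equivalent\E[v_{t+1}(l)\mid\theta_t]\ge\E[v_{t+1}(m)\mid\theta_t]$ for all $l,m$. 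Negating and swapping $l$ and $m$ converts this into the strict equivalence in the statement, and Lemma~\ref{thm:helpgl1} guarantees both functionals are non-constant so the equivalence is non-vacuous.

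The hard part will be the bookkeeping in the first step: verifying carefully, against the recursive Evolving SEU representation, that the comparison of the ``shifted'' streams collapses \emph{exactly} to $\E[v_{t+1}(\cdot)\mid\theta_t]$, i.e.\ that the period-$t$ instantaneous term together with every continuation term beyond period $t+1$ cancels. Once that reduction is in place, the remainder is a routine application of part (iii) of Lemma~\ref{thm:menuprefprops} and of the Stationary Preference for Lotteries axiom.
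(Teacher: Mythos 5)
Your proposal is correct and follows essentially the same route as the paper, which simply defers to the proof of Lemma 10 in \cite{fis}: unfold the Evolving SEU representation on the two families of deterministic streams, translate the resulting utility comparisons into $\better_{h^t}$ via part (iii) of Lemma \ref{thm:menuprefprops}, and chain through the Stationary Preference for Lotteries axiom. The only cosmetic slip is that your second displayed difference should carry the positive factor $\delta$ from the recursive representation, which of course does not affect the sign comparison or the conclusion.
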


Constant Intertemporal Trade-off allows us to obtain a time-invariant and non-random discount factor $\delta>0$. 

\begin{lemma}\footnote{The proof needs only very minor adaptations to the proof of Lemma 11 in \cite{fis} so we just give a proof-sketch. }
There exists $\delta\in (0,1)$ such that for all $t=0,\dots,T-1$ and $\theta_t\in\Theta_t$, we have $\pi_v(\theta_t) = \frac{1}{\delta}\E[\pi_v(\theta_{t+1})|\theta_t]$. 
\end{lemma}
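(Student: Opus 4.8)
The plan is to follow the structure of the proof of Lemma 11 in \cite{fis}, adapting it to the state-dependent setting with stochastic beliefs. The goal is to extract a single scalar $\delta\in(0,1)$ from the axiom Constant Intertemporal Trade-off and show it relates $\pi_v(\theta_t)$ to the expected continuation utility $\E[\pi_v(\theta_{t+1})|\theta_t]$ uniformly across all $t$ and all $\theta_t\in\Theta_t$. First I would fix an arbitrary $t\le T-1$ and $\theta_t\in\Theta_t$ together with a separating history $h^t$ for $\theta_t$, so that by part (iii) of Lemma \ref{thm:menuprefprops} the revealed preference $\better_{h^t}$ over lotteries is represented exactly by $\pi_v(\theta_t)$ (using the Evolving SEU representation to reduce the act to its instantaneous-utility component). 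By Lemma \ref{thm:helpgl1} this preference is non-degenerate, so $\pi_v(\theta_t)$ is a genuinely non-constant affine functional on $\Delta(Z)$.

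Next I would invoke Lemma \ref{thm:helpgl2}, which tells us that $\pi_v(\theta_t)$ and $\E[v_{t+1}(\cdot)|\theta_t]$ induce the same preference ordering on $\Delta(Z)$. Two non-constant affine functionals on the simplex that order lotteries identically must be positive affine transformations of one another; after the normalization $\sum_{z} \pi_v(\theta_t)(z)=0$ that kills the additive constant, this forces $\E[\pi_v(\theta_{t+1})|\theta_t] = \delta(\theta_t)\,\pi_v(\theta_t)$ for some strictly positive scalar $\delta(\theta_t)$ depending a priori on $t$ and $\theta_t$. The substance of the lemma is then to show this scalar is in fact a single constant $\delta\in(0,1)$ independent of $t$ and $\theta_t$.

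The main obstacle — and the step where the axiom Constant Intertemporal Trade-off does the real work — is establishing this invariance. I would argue that if $\delta(\theta_t)$ differed across two states or two time periods, one could construct a pair of lotteries (or a lottery together with its suitably discounted continuation) whose intertemporal trade-off, as revealed through $\better_{h^t}$ at the relevant separating histories, would have to be simultaneously equal and unequal, contradicting the axiom; this is precisely the calculation carried out in \cite{fis}, which transfers here once Lemmas \ref{thm:menuprefprops} and \ref{thm:helpgl2} are in place. Finally, $\delta>0$ is immediate from positivity of the scalar, and $\delta<1$ follows from boundedness of the recursively-defined continuation utilities together with non-degeneracy from Lemma \ref{thm:helpgl1}: a discount factor of at least one would make the backward recursion for $\pi_v(\theta_t)$ fail to remain bounded (equivalently, fail to be consistent with the finite-horizon Evolving SEU representation), so $\delta$ must lie strictly inside $(0,1)$. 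Because the paper explicitly notes that only minor adaptations of the \cite{fis} argument are needed, I would present the uniformity and the bounds as the two claims requiring verification and defer the routine affine-functional manipulations.
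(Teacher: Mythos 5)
Your overall structure matches the paper's: Lemma \ref{thm:helpgl2} plus the normalization $\sum_z\pi_v(\theta_t)(z)=0$ give $\pi_v(\theta_t)=\frac{1}{\delta(\theta_t)}\E[\pi_v(\theta_{t+1})|\theta_t]$ with an a priori state-dependent positive scalar, and Constant Intertemporal Trade-off together with Lemma \ref{thm:helpgl1} and separating histories (repeating the Lemma 11 argument of \cite{fis}) forces $\delta(\theta_t)=\delta$ for all $t$ and $\theta_t$. Up to that point your proposal is essentially the paper's proof sketch.

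The gap is your argument that $\delta<1$. You claim that a discount factor of at least one "would make the backward recursion for $\pi_v(\theta_t)$ fail to remain bounded" and so be inconsistent with the representation. But the model here has a \emph{finite} horizon $t=0,\dots,T$: iterating the recursion only gives $\pi_v(\theta_t)=\delta^{t-T}\E[\pi_v(\theta_T)|\theta_t]$, which is perfectly well-defined and bounded for any $\delta>0$, including $\delta\ge 1$. No divergence or consistency problem arises, so boundedness cannot pin down $\delta<1$. The paper instead obtains $\delta\in(0,1)$ directly from the Impatience axiom ("the impatience axiom trivially yields that $\delta\in(0,1)$"), i.e., a behavioral postulate that the agent strictly prefers receiving the better lottery earlier; that axiom is what rules out $\delta\ge 1$, and your proof needs to invoke it (or an equivalent revealed-preference comparison of time-shifted consumption streams) rather than a convergence argument.
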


\begin{proof}[Proof Sketch] Lemma \ref{thm:helpgl2} and the normalization required in subsection B.3 in the appendix of the main paper show that $\pi_v(\theta_t) = \frac{1}{\delta(\theta_t)}\E[\pi_v(\theta_{t+1})|\theta_t]$. Taking some $\theta'_t\neq \theta_t$ and separating histories for both $\theta_t$ and $\theta'_t$ and using Constant Intertemporal Trade-off as well as Lemma \ref{thm:helpgl1} one repeats the proof of Lemma 11 of \cite{fis} to show $\delta(\theta) = \delta$ for all $\theta_t$. Finally, the impatience axiom trivially yields that $\delta\in (0,1)$. 

\end{proof}

\subsubsection{Necessity}

This follows with very minor adaptations the proof of Necessity in Section D.2. of \cite{fis}: once one looks only at menus from $\A^c_t,t=0,\dots,T$ (i.e. only containing constant acts), beliefs are not relevant in the Evolving SEU representation and the proofs boil down directly to the one from \cite{fis}.

\section{Equivalence Between Filtration Representations and Ahn-Sarver-based Representations and Proof of Uniqueness}\label{sec:equivalence}

This is an adaptation of the proof of Proposition 5 in the online supplement of \cite{fis}. It establishes a one-to-one correspondence between the (partitional) $\theta_t$-s from the AS-representations and the cells of the Filtration $\mathcal{F}_t,t=0,\dots,T$.  

We note here that the one-to-one correspondence between the two representation types (AS- and filtration form) and the uniqueness results for the AS-version representations yield also the uniqueness results for the Filtration version of the representations. 

\paragraph{From AS-version DR-SEU to Filtration-based DR-SEU.}

Consider the space $G = \prod_{t=0}^T \Theta_t\times (\Delta(S_t)\times \R^{X_t})$ and let $\hat\Omega = \{(\theta_0,p_0,v_0; \dots; \theta_T,p_T,v_T)\in G: \prod_{k=0}^t\psi_k^{\theta_{k-1}}(\theta_k)>0\}$. Let $\mathcal{\hat F}^*$ be the restriction to $\hat\Omega$ of the product sigma-Algebra of the discrete sigma-algebra on $\prod_{t=0}^TS_t$ and the product Borel sigma-algebra on $\Delta(S_t)\times \R^{X_t}$. For each $K=\{\{\theta_0\},K_0,\dots,\{\theta_T\},K_T\}\in \mathcal{\hat F}^*$ we set $\hat\mu(K) = \prod_{t=0}^T\psi_t^{\theta_{t-1}}(\theta_t)\tau_{\pi_{qu}(\theta_t)}(K_t)$ and extend $\hat\mu$ to a probability measure in the natural way.

Let $\Pi_t$ be the finite partition of $\hat\Omega$ whose cells are all cylinders $C(\theta_0,\dots,\theta_t)=\{\hat\omega: proj_{\Theta_0\times\dots\Theta_t}(\hat\omega) = (\theta_0,\dots,\theta_t)\}$. Let $\mathcal{\hat F}_t$ be the sigma-algebra generated by $\Pi_t$. By definition of $\hat\Omega$ we have $\hat\mu(\mathcal{\hat F}_t(\hat\omega))>0$ for all $\hat\omega\in\hat\Omega$. Also by definition we have $\mathcal{\hat F}_t(\hat\omega)=\cup_{\hat\omega'\in\mathcal{\hat F}}\mathcal{\hat F}_{t+1}(\hat\omega)$, so $(\mathcal{\hat F})_{0\leq t\leq T}$ is a Filtration. 

Define $(\hat q_t,\hat u_t):\hat\Omega\ra \Delta(S_t)\times\R^{X_t}$  as the projection of each $\theta_t$ into $\Delta(S_t)\times\R^{X_t}$, whenever $\theta_t$ appears in $(\theta_0,\dots,\theta_t)$ and $\hat\omega\in C(\theta_0,\dots,\theta_t)$. Note that $(\hat q_t,\hat u_t)$ is adapted to $\mathcal{\hat F}_t,t\leq T$ and that it is always a non-constant SEU. Finally, if $\mathcal{\hat F}_{t-1}(\hat\omega) = \mathcal{\hat F}_{t-1}(\hat\omega')$ and $\mathcal{\hat F}_{t}(\hat\omega) \neq \mathcal{\hat F}_{t}(\hat\omega')$ then $proj_{\Theta_{t-1}}(\hat\omega) = proj_{\Theta_{t-1}}(\hat\omega') = \theta_{t-1}$ and $proj_{\Theta_{t}}(\hat\omega) = \theta_t \neq proj_{\Theta_{t}}(\hat\omega') = \theta'_{t}$ for some $\theta_{t-1}\in\Theta_{t-1}$ and $\theta_t,\theta'_t\in supp(\psi_t^{\theta_{t-1}})$. 

Define finally for the tie-breaker process of DR-SEU  the $(\hat p_t,\hat v_t)$ as the projection of $\hat\Omega$ into $\Delta(S_t)\times\R^{X_t}$. 

From here the proof follows the same steps as the `if direction' in Appendix G.1 of \cite{fis} if one makes the following replacements/identifications: $s_t\ra\theta_t$, $\hat W_t\ra(\hat p_t,\hat v_t)$, $\mu\ra\psi$, $\tau_{s_t}\ra\tau_{\pi_{qu}(\theta_t)}$, $\hat U_t\ra (\hat q_t,\hat u_t)$ and finally $p_t\ra f_t$.

We also note here that the preference-based property of the tie-breaking process follows directly from the definition of $\hat\mu$.

We finally note that $\hat\mu(\cdot|\hat q_t,\hat u_t)(\hat\omega)  = \psi_{t}^{\theta_{t-1}}( \hat q_t,\hat u_t,\cdot)$, whenever $\hat\omega\in C(\theta_0,\dots,\theta_{t-1},\theta_t)$. From this, the properties of CIB or NUC follow immediately, as required by the respective representations.

\paragraph{From Filtration-based DR-SEU to AS-version DR-SEU.} For each $t$, let $\Theta_t = \{\mathcal{F}_t(\omega):\omega\in\Omega\}$ denote the partition generating $\mathcal{F}_t$ (finite, since $(\mathcal{F}_t)$ is simple). I.e. we are identifying each element of the filtration with a partition cell. Each $\psi_{t+1}^{\theta_t}$ is defined as the one-step-ahead conditional of $\mu$, i.e. $\psi_{t+1}^{\theta_t} (q_{t+1},u_{t+1},s_{t+1}):= \mu(\mathcal{F}_{t+1}|\mathcal{F}_t)$ where $\mathcal{F}_{t+1}\in \Theta_{t+1}$ corresponds to $\theta_{t+1} = (q_{t+1},u_{t+1},s_{t+1})$ and $\mathcal{F}_{t}\in \Theta_{t}$ corresponds to $\theta_{t}$. For each $\theta_t\in\Theta_t$ define $(\hat q_t,\hat u_t,\hat s_t)(\theta_t) = (\hat q_t,\hat u_t,\hat s_t)(\omega)$ whenever $\omega\in\theta_t$ (and recall that each $\theta_t$ corresponds to some $\mathcal{F}_t\in\Theta_t$). Finally, using the tie-breaker sequence $(p_t,v_t)$ of the Filtration-based DR-SEU representation for any Borel set $B_t\subset \Delta(S_t)\times\R^{X_t}$ we define $\tau_{\pi_{qu}(\theta_t)}(B_t):=\mu(\{(p_t,v_t)\in B_t\}|\mathcal{F}_t)$, where $\mathcal{F}_t$ corresponds to $\theta_t$. The definition is independent of $\theta_t$ as long as $\pi_{qu}(\theta_t) = \pi_{qu}(\theta'_t)$ because of the preference-based property of $\mu$. 

The properties of CIB or NUS in the AS-version form follow now directly from the respective properties in the Filtration-based form and from the definition of $\psi$. 

From here, the proof follows exactly the proof of \cite{fis} once we make the following replacements in text: $\hat\mu\ra \psi$, $s_t\ra\theta_t$, $\hat U_{s_t}\ra \pi_{qu}(\theta_t)$ and $U_t(\omega)\ra (q_t,u_t)(\omega)$. 

\begin{remark}\label{thm:remkequiv}
Note that the case $T=0$ gives the equivalence for the case of static aSCFs between the R-SEU Definition and the AS-based R-SEU version of the representation.
\end{remark}

\paragraph{From Filtration-based to AS-version and back for Evolving Utility.}

Suppose we have an AS-version Evolving SEU representation. Per above we also have then a Filtration-based DR-SEU representation already. Define $\hat v_t:\hat\Omega\ra \R^Z$ for each $t$ by $\hat v_t(\hat\omega) = v_{t}^{\pi_{qu}(\theta_t)}$ whenever $\theta_t$ corresponds to $\hat\omega$, in the sense that $proj_{\Theta_t}(\hat\omega) = \theta_t$. In this way, the process $\hat v_t$ is $\mathcal{F}_t$-adapted. Moreover, for each $\hat\omega\in \Omega$ we have $\hat u_T (\hat\omega)= \pi_u(\theta_T)= \pi_v(\theta_T) = v_T^{\pi_{qu}(\theta_T)} = \hat v_T (\hat\omega)$ and for each $t\leq T-1$ and $f_t\in\F_t, s_t\in S_t$ we have 

\begin{align*}
\hat u_t(\hat\omega)(f_t(s_t))& = u_t(f_t(s_t)) = v_t(f_t^Z(s_t)) +\delta \int\max_{f_{t+1}\in A_{t+1}}\left(q_{t+1}\cdot u_{t+1}\right)(f_{t+1})d\psi_{t+1}^{\theta_t}(q_{t+1},u_{t+1})\\
& = \hat v_t(\hat\omega)(f_t^Z(s_t)) +\delta\sum_{(q_{t+1},u_{t+1})} \hat\mu\left(\pi_{qu}(\theta_{t+1}) = (q_{t+1},u_{t+1})|\theta_t\right)\cdot \max_{f_{t+1}\in A_{t+1}}\left(q_{t+1}\cdot u_{t+1}\right)(f_{t+1})\\
&=  \hat v_t(\hat\omega)(f_t^Z(s_t)) +\delta V_t^{\theta_t}(A_{t+1})
\end{align*}

This gives the Filtration-based Evolving SEU representation. 

Suppose now that we have a Filtration-based Evolving SEU representation instead. We know we have an AS-version DR-SEU representation. Here we just have to reverse the argument from above. 

\begin{equation}\label{eq:evseu-as-1}
\E_{q_t}[u_t(f_t)] = \E_{s_t\sim q_t}[u_t(f_t(s_t))] = \E_{s_t\sim q_t}[v_t(f_t^Z(s_t))] + \delta V_{t}^{\pi_{qu}(\theta_t)}(f_t^A). 
\end{equation}
Here we can define $V_{t}^{\pi_{qu}(\theta_t)}(f_t^A)$ by first defining 
\begin{equation*}
V_t^{\theta_t}(A_{t+1})= \int\max_{f_{t+1}\in A_{t+1}}\left(q_{t+1}\cdot u_{t+1}\right)(f_{t+1})d\psi_{t+1}^{\theta_t}(q_{t+1},u_{t+1}).
\end{equation*}

\paragraph{From Filtration-based to AS-version and back for Gradual Learning}

Suppose we have a AS-version gradual learning representation. Take the corresponding Filtration-based Evolving SEU representation and define $\hat \delta = \delta$. Note that for each $\hat\omega = (\theta_0,p_0,v_0;\dots; \theta_T,p_T,v_T)$ and $t\leq T-1$ we have 

\[
\hat v_t(\hat\omega) = v_t^{\pi_{qu}(\theta_t)} = \frac{1}{\delta}\sum_{(q_{t+1},u_{t+1})}\psi^{\theta_t}_{t+1}(q_{t+1},u_{t+1})\cdot \pi_v(u_{t+1}) = \frac{1}{\hat\delta}\E[\hat v_{t+1}|\mathcal{\hat F}_t(\hat\omega)].
\]

Iterating we are led to $\hat v_t(\hat\omega) = \hat\delta^{t-T}\E[\hat v_T|\mathcal{\hat F}_t(\hat\omega)] = \hat\delta^{t-T}\E[\hat u_T|\mathcal{\hat F}_t(\hat\omega)]$. By replacing $\hat u_t$ with $\hat u_t' = \hat\delta^{T-t}\hat u_t$ for each $t$, going to the AS-version DR-SEU representation, using uniqueness result there (Proposition \ref{thm:drseuuniqueAS}) and then back to the Filtration-based representation we get a new DR-SEU representation with $\hat u_t'$ instead of $\hat u_t$. The rest of the proof of the Gradual Learning representation is identical to the one in \cite{fis}. 

Suppose that we have a Filtration-based gradual learning representation. Let $u'_t = \delta^{t-T} u_t$ for all $t$. This is again a DR-SEU representation by Proposition \ref{thm:drseuuniqueAS}. Moreover, let $v'_t = \delta^{t-T}v_t$, where $v_t = \E[v_T|\mathcal{F}_t(\omega)]$. From here the proof follows identically as in \cite{fis} with the obvious replacements in notation.

\subsection{Uniqueness for AS-versions of the Representations}\label{sec:uniquenessAS}

Together with the equivalence results between AS-version representations and filtration-based representations, this subsection also gives a proof of the uniqueness of filtration-based representations: Proposition 4 in the main body of the paper is proven by combining the results in this subsection with those of section \ref{sec:equivalence}. 


\begin{proposition}\label{thm:drseuuniqueAS}
The DR-SEU representation is unique. In particular, if we have two DR-SEU representations, then there exists a finite sequence of bijective mappings $t\leq T$ $\vartheta_t:\Theta_t\ra\tilde\Theta_t$ such that 
\[
(A)\quad \vartheta_t(q_t,u_t,s_t) = (q_t,\tilde u_t,s_t),\text{ with }u_t\approx \tilde u_t,\footnote{In the following we sometime omit the time subscript for $\vartheta$ for ease of notation.}
\]
and 
\[
(B)\quad \tilde \psi_t^{\vartheta(\theta_{t-1})}(\vartheta(\theta_t)) = \psi_t^{\theta_{t-1}}(\theta_t),\quad\text{ for all }t, \theta_t\in \Theta_t.
\]
\end{proposition}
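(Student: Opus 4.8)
The plan is to induct on $t$ and reduce the dynamic uniqueness to the static uniqueness of the AS-version R-SEU representation, i.e.\ to Proposition \ref{thm:SCFuniqueness}. The bridge between the two levels is the observation already used in the proof of Lemma \ref{thm:fislm7}: for a separating history $h^t$ for $\theta_t$ one has $\rho_{t+1}^{\theta_t}(\cdot)=\rho_{t+1}(\cdot\mid h^t)$, so the one-step-ahead conditional is an \emph{observable} static SCF; and, by the recursive construction recalled at the start of the Evolving SEU proof, each of the two DR-SEU representations endows this same conditional SCF with an AS-version R-SEU representation, namely with subjective state space $\Theta_{t+1}(\theta_t)=\mathrm{supp}(\psi_{t+1}^{\theta_t})$ and measure $\psi_{t+1}^{\theta_t}$ (respectively $\tilde\Theta_{t+1}(\tilde\theta_t)$ and $\tilde\psi_{t+1}^{\tilde\theta_t}$). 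Matching these two representations of the common SCF via Proposition \ref{thm:SCFuniqueness} will produce $\vartheta_{t+1}$ together with properties (A) and (B).

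For the base case $t=0$ the initial SCF $\rho_0$ has, under each representation, an AS-version R-SEU representation with subjective state space $\Theta_0$ (resp.\ $\tilde\Theta_0$) and initial measure. Proposition \ref{thm:SCFuniqueness} applies verbatim: its Step 1 forces the two subjective state spaces to coincide as sets of SEU preferences (common beliefs and objective states, Bernoulli utilities agreeing up to positive affine transformation), its Step 2 forces the measures to agree, and its Step 3 forces the tie-breakers to agree. Recording, for each $\theta_0$, the matched preference together with the positive affine transformation relating the two chosen utility representatives defines $\vartheta_0$; this is (A), and Step 2 gives (B) at $t=0$, while Step 3 supplies the matched tie-breakers that I carry along through the induction.

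For the inductive step, suppose $\vartheta_0,\dots,\vartheta_t$ have been constructed with (A), (B) and matched tie-breakers. Fix $\theta_t\in\Theta_t$ and a separating history $h^t$ for $\theta_t$, and set $\tilde\theta_t=\vartheta_t(\theta_t)$. The states consistent with $h^t$ are governed by positivity of the observable products $\prod_{k}\psi_k^{\theta_{k-1}}(\theta_k)\tau_{\pi_{qu}(\theta_k)}(f_k,A_k)$; since (A) preserves the induced SEU preferences, beliefs and objective states, and since the $\psi$'s and $\tau$'s at levels $\le t$ are matched by the inductive hypothesis, the consistent set transforms by $\vartheta$, so $h^t$ also separates $\tilde\theta_t$ in the second representation. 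Hence $\rho_{t+1}^{\theta_t}=\rho_{t+1}(\cdot\mid h^t)=\rho_{t+1}^{\tilde\theta_t}$ as one and the same observable SCF, carrying the two AS-version R-SEU representations above. Proposition \ref{thm:SCFuniqueness} then yields a preference-matching bijection $\Theta_{t+1}(\theta_t)\to\tilde\Theta_{t+1}(\tilde\theta_t)$ of the form $(q_{t+1},u_{t+1},s_{t+1})\mapsto(q_{t+1},\tilde u_{t+1},s_{t+1})$ with $u_{t+1}\approx\tilde u_{t+1}$, the measure equality $\psi_{t+1}^{\theta_t}(\theta_{t+1})=\tilde\psi_{t+1}^{\tilde\theta_t}(\tilde\theta_{t+1})$ on matched successors, and matched tie-breakers — exactly (A) and (B) at level $t+1$ restricted to the successors of $\theta_t$.

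It remains to glue these conditional bijections into a single $\vartheta_{t+1}\colon\Theta_{t+1}\to\tilde\Theta_{t+1}$, and this is the step requiring the most care. The image assigned to a successor $(q_{t+1},u_{t+1},s_{t+1})$ is, by construction, the unique element of $\tilde\Theta_{t+1}$ with belief $q_{t+1}$, objective state $s_{t+1}$ and the same induced SEU preference; since the states of an AS-version representation are required to be distinct SEUs (Definition \ref{thm:AS-RSEU-SCFdef}), this element is pinned down by the preference alone and is therefore independent of which predecessor $\theta_t$ and which separating history were used to compute it, which makes $\vartheta_{t+1}$ well-defined on all of $\Theta_{t+1}$. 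Bijectivity follows because Proposition \ref{thm:SCFuniqueness} gives a bijection on each block of successors and the image blocks partition $\tilde\Theta_{t+1}$ exactly as the domain blocks partition $\Theta_{t+1}$ (the filtration/tree structure recorded in Section \ref{sec:equivalence}). The main obstacle is thus not any single computation but precisely the verification that separating histories, and hence the induced conditional SCFs, are representation-independent observable objects, so that the static uniqueness theorem may legitimately be applied to the \emph{same} SCF under both representations and the conditional matchings patch together consistently.
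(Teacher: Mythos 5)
Your overall architecture — induction on $t$, the observation that a separating history for $\theta_t$ under one representation is also separating for its image under the other, and the reduction of the inductive step to a static uniqueness result applied to the common conditional choice data — is exactly the paper's strategy. But there is a genuine gap in \emph{which} static uniqueness result you invoke. You apply Proposition \ref{thm:SCFuniqueness}, which is the uniqueness theorem for \emph{SCFs}: observables that do not record the realized objective state. That result can only identify the distribution over SEU preferences $(q,u)$ and their tie-breakers. The DR-SEU subjective states are triples $\theta_{t+1}=(q_{t+1},u_{t+1},s_{t+1})$, and two distinct successors of the same $\theta_t$ may induce the \emph{same} SEU preference while differing in the $s_{t+1}$-coordinate — the paper's own uniqueness-of-the-match argument explicitly relies on this possibility (``whenever $\pi_{qu}(\theta'_t)=\pi_{qu}(\theta''_t)$ we must have $\pi_s(\theta'_t)\neq\pi_s(\theta''_t)$''). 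A pure SCF cannot distinguish such successors, so Proposition \ref{thm:SCFuniqueness} delivers only the aggregate $\sum_{s}\psi_{t+1}^{\theta_t}(q,u,s)$ for each SEU $(q,u)$, not the individual probabilities $\psi_{t+1}^{\theta_t}(q,u,s)$ demanded by (B), nor the $s$-component of the matching in (A). Your gluing step then papers over this: you assert that the image of $(q_{t+1},u_{t+1},s_{t+1})$ is ``the unique element of $\tilde\Theta_{t+1}$ with belief $q_{t+1}$, objective state $s_{t+1}$ and the same induced SEU preference,'' but nothing in the static SCF theorem guarantees that such an element exists in the other representation or that its probability matches.

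The paper avoids this by working throughout with the \emph{aSCF}: its base case cites the static aSCF uniqueness result (Proposition 7 of the main paper's appendix, not Proposition \ref{thm:SCFuniqueness}), and its inductive step evaluates $\rho_t(\tilde f_t,B_t,s_t\mid h^{t-1})$ at a fixed observable objective state $s_t\in\mathrm{supp}(q'_t)$ for a separating menu $B_t$, which is exactly what separates triples sharing an SEU but differing in $s_t$ and pins down $\psi_t^{\theta_{t-1}}(q_t,u_t,s_t)$ state by state. Your argument is repairable by replacing Proposition \ref{thm:SCFuniqueness} with the aSCF-level uniqueness in both the base case and the inductive step (or by supplementing your SEU-level matching with an additional separating-menu computation at each fixed $s_{t+1}$), but as written the inductive step does not establish (A) or (B). A minor additional remark: the well-definedness worry in your gluing paragraph is lighter than you suggest, since DR-SEU~1 makes the supports $\mathrm{supp}(\psi_{t+1}^{\theta_t})$ disjoint across predecessors, so each $\theta_{t+1}$ has a unique predecessor and only invariance across different separating histories for that one predecessor needs checking.
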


\begin{proof}
We prove the existence of the mappings $\vartheta_t$ by induction.

\emph{Induction start: $t=0$.} This is just uniqueness for the representation of a static aSCF, which is given by Proposition 7 from the appendix of the main paper. 

\emph{Induction step: take $t\geq 1$. We want to show $t'< t\quad \imply \quad t$.} Assume that we have uniqueness for all indices $t'<t$ for some $t$. This gives the mappings $\vartheta_{t'}$ for all $t'<t$. We next show that $\Theta'_t$ has indeed the form needed and that $\vartheta_t$ exists. Fix some $\theta_{t-1}$ and its image $\theta'_{t-1}$ under $\vartheta_{t-1}$. 

\textbf{Claim.} For all $\theta_t= (q_t,u_t,s_t)\in \Theta_t$ there exists a unique $\theta'_t= (q'_t,u'_t,s'_t)\in \Theta'_t$ with $(q_t,s_t) = (q'_t,s'_t)$ and $u_t\approx u'_t$.  

\begin{proof}
[Proof of Claim.] 

\emph{Uniqueness.} 
If $\theta'_t$ exists then it is clearly unique because by construction and the DREU-1 property:  whenever $\pi_{qu}(\theta'_t) = \pi_{qu}(\theta''_t)$ for two $\theta'_t, \theta''_t\in \Theta'_t$ we must have $\pi_{s}(\theta'_t) \neq \pi_{s}(\theta''_t)$. 

\emph{Existence.} Fix the unique predecessor $(\theta_0,\dots,\theta_{t-1})$ of $\theta_t$ and consider both $\psi_t^{\theta_{t-1}}$ as well as $\tilde\psi_t^{\vartheta_{t-1}(\theta_{t-1})}$. Due to the way the mappings $\vartheta_{t'}$ for $t'<t$ are constructed one can easily see that any separating history for $\theta_{t'}$ under one representation is also a separating history for $\vartheta_{t'}(\theta_{t'})$ under the other representation. 

Pick then such a separating history $h^{t-1}$ for $\theta_{t-1}$ and $\vartheta_{t-1}(\theta_{t-1})$.\footnote{Recall also Definition 23 in the appendix of the main paper.}
We want to show that the subjective SEU supports are equal up to positive affine transformation of the Bernoulli utilities and that the probabilities for each element of the type $(q_t,u_t,s_t)$ are the same. 

Regarding the claim about the support, assume for the sake of contradiction that there exists $(q'_t,u'_t)\in \pi_{qu}\left(supp(\tilde\psi_t^{\vartheta_{t-1}(\theta_{t-1})})\right)$ to which no element in $\pi_{qu}\left(supp(\psi_t^{\theta_{t-1}})\right)$ corresponds. We use Lemma 1 in main body of paper to construct a menu $B_t$ with an element $\tilde f_t$ which separates $(q'_t,u'_t)$ from $\pi_{qu}\left(supp(\psi_t^{\theta_{t-1}})\right)$. It holds 

\[
\tau_{(q'_t,u'_t)}(\tilde f_t, B_t) = 1> 0 = \tau_{(q_t,u_t)}(\tilde f_t, B_t),\text{ for all }(q_t,u_t)\in \pi_{qu}\left(supp(\psi_t^{\theta_{t-1}})\right). 
\]
W.l.o.g. we can assume then, that $h^{t-1}$ chosen above satisfies $h^{t-1}\in \h_{t-1}(B_t)$ (we can just perturb $h^{t-1}$ with a suitable constant act giving in each state a deterministic prize to achieve this). It follows from the DR-SEU 2 property of both representations that for any $s_t\in supp(q'_t)$ we have 
\begin{align*}
&(\text{from the second representation})\quad 0<\rho_t(\tilde f_t,B_t,s_t|h^{t-1})\\& = \rho_t(\tilde f_t,B_t,s_t|h^{t-1}) = 0\quad (\text{from the first representation}).
\end{align*}
This is a contradiction and thus $\pi_{qu}\left(supp\left(\psi_t^{\theta_{t-1}}\right)\right)$ and $\pi_{qu}\left(supp\left(\tilde\psi_t^{\vartheta_{t-1}(\theta_{t-1})}\right)\right)$ are the same, up to positive affine transformations of the Bernoulli utilities $u_t$. In particular, the beliefs $q_t,q'_t$ that can occur under the two probability measures are the same. W.l.o.g. we normalize the Bernoulli utilities corresponding to same SEUs to be the same in both representations. By taking a separating menu $B_t$ for the whole of set of SEUs happening with positive probability under $\psi_t^{\theta_{t-1}}$ and $\tilde\psi_t^{\vartheta_{t-1}(\theta_{t-1})}$ and using the induction hypothesis on the separating history $h^{t-1}$ for  $\theta_{t-1}$ (and $\vartheta_{t-1}(\theta_{t-1})$) we arrive with help of DR-SEU 2 at
\[
\psi_t^{\theta_{t-1}}(q_t,u_t, s_t) = \tilde \psi_t^{\vartheta_{t-1}(\theta_{t-1})}(q_t,u_t,s_t),\text{ for all }(q_t,u_t,s_t).
\]
This establishes existence of the respective $\vartheta_{t}$. 
\end{proof}

\end{proof}


\begin{proposition}\label{thm:evseuuniqueAS}
An Evolving SEU representation is unique. In particular, if we have two Evolving - SEU representations then in addition to the finite sequence of bijective mappings from Proposition \ref{thm:drseuuniqueAS} the following properties hold.
\begin{enumerate}
\item For the positive constants $\alpha_t(\theta_t)$ used for the equivalence $\pi_u(\theta_t) \approx \pi_u(\vartheta_t(\theta_t))$ it holds $\alpha_t = \alpha_0\left(\frac{\tilde\delta}{\delta}\right)^t$ for $t=0,\dots,T$. 
\item Let $\beta_t(\theta_t)$ be the intercepts of the positive affine transformations from Proposition \ref{thm:drseuuniqueAS}. $\pi_u(\theta_t) = \alpha_t\pi_u(\vartheta_t(\theta_t)) + \gamma_t(\theta_t)$, where the functions $\gamma_t,t=0,\dots,T$ are connected through the relations $\gamma_T(\theta_T) = \beta_T(\theta_T)$ and $\gamma_t(\theta_t) = \beta_t(\theta_t) -\delta \E_{\theta_{t+1}}[\beta_{t+1}(\theta_{t+1})|\theta_t]$, if $t\leq T-1$. 
\end{enumerate}
\end{proposition}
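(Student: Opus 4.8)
The plan is to exploit that every Evolving SEU representation is in particular a DR-SEU representation, so Proposition~\ref{thm:drseuuniqueAS} already hands us the bijections $\vartheta_t:\Theta_t\to\tilde\Theta_t$, the relation $\pi_u(\theta_t)=\alpha_t(\theta_t)\,\pi_u(\vartheta_t(\theta_t))+\beta_t(\theta_t)$ with $\alpha_t(\theta_t)>0$, and the measure-matching identity $\tilde\psi_t^{\vartheta(\theta_{t-1})}(\vartheta(\theta_t))=\psi_t^{\theta_{t-1}}(\theta_t)$. The whole task is then to determine how the scalars $\alpha_t$ and intercepts $\beta_t$ are forced by the additional recursive structure $\pi_u(\theta_t)(z,A)=\pi_v(\theta_t)(z)+\delta\,V_t^{\theta_t}(A)$ of the Evolving SEU form, together with its tilde-analogue (with $\tilde\delta$, $\pi_v(\vartheta_t(\theta_t))$ and continuation value $\tilde V_t^{\vartheta_t(\theta_t)}$).

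First I would plug this decomposition into the affine relation. Evaluating $\pi_u(\theta_t)=\alpha_t\pi_u(\vartheta_t(\theta_t))+\beta_t$ at a degenerate prize $(z,A)\in Z\times\A_{t+1}$ gives
\[
\pi_v(\theta_t)(z)+\delta\,V_t^{\theta_t}(A)=\alpha_t\pi_v(\vartheta_t(\theta_t))(z)+\alpha_t\tilde\delta\,\tilde V_t^{\vartheta_t(\theta_t)}(A)+\beta_t .
\]
Since the left-hand side is additively separable in $z$ and $A$, so is the right-hand side, whence $\alpha_t\tilde\delta\,\tilde V_t^{\vartheta_t(\theta_t)}(A)-\delta\,V_t^{\theta_t}(A)$ equals a constant $-c_t(\theta_t)$, while $\pi_v(\theta_t)-\alpha_t\pi_v(\vartheta_t(\theta_t))$ equals the constant $\gamma_t(\theta_t):=\beta_t(\theta_t)-c_t(\theta_t)$. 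In particular this already yields the instantaneous-utility relation $\pi_v(\theta_t)=\alpha_t\pi_v(\vartheta_t(\theta_t))+\gamma_t$ of part~2, and since there is no continuation term at $t=T$ we get $c_T=0$ and $\gamma_T=\beta_T$.

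The heart of the argument is the continuation identity $\delta\,V_t^{\theta_t}=\alpha_t\tilde\delta\,\tilde V_t^{\vartheta_t(\theta_t)}+c_t$, which I would expand into support functions. Using $V_t^{\theta_t}(A)=\sum_{\theta_{t+1}}\psi_{t+1}^{\theta_t}(\theta_{t+1})\max_{f\in A}q_{t+1}\cdot\pi_u(\theta_{t+1})(f)$, substituting the level-$(t+1)$ relation $\pi_u(\theta_{t+1})=\alpha_{t+1}(\theta_{t+1})\pi_u(\vartheta_{t+1}(\theta_{t+1}))+\beta_{t+1}(\theta_{t+1})$, and rewriting the analogous expansion of $\tilde V_t^{\vartheta_t(\theta_t)}$ as a sum over $\Theta_{t+1}$ by the measure-matching identity, the continuation identity becomes an equality of two linear combinations of the same support functions $A\mapsto\max_{f\in A}q_{t+1}\cdot\pi_u(\vartheta_{t+1}(\theta_{t+1}))(f)$, up to an additive constant. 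Because distinct non-constant SEUs have affinely independent support functions, I can match coefficients: cancelling the common weight $\psi_{t+1}^{\theta_t}(\theta_{t+1})>0$ gives $\delta\,\alpha_{t+1}(\theta_{t+1})=\alpha_t\tilde\delta$ for every successor $\theta_{t+1}$ of $\theta_t$, and comparing constants gives $c_t=\delta\,\E_{\theta_{t+1}}[\beta_{t+1}(\theta_{t+1})|\theta_t]$. The first equality shows at once that $\alpha_{t+1}(\theta_{t+1})=\tfrac{\tilde\delta}{\delta}\alpha_t(\theta_t)$ is the same across all successors of $\theta_t$ and supplies the recursion; iterating from $t=0$ gives $\alpha_t=\alpha_0(\tilde\delta/\delta)^t$ (part~1, with $\alpha_0$ the residual scaling freedom at the root). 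The second equality together with $\gamma_t=\beta_t-c_t$ gives $\gamma_t=\beta_t-\delta\,\E_{\theta_{t+1}}[\beta_{t+1}|\theta_t]$, completing part~2.

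The step I expect to be the main obstacle is this coefficient-matching. To run it rigorously I must ensure that after substitution the two sides are indexed by exactly the same family of continuation SEUs---this is precisely what the measure-matching identity~(B) buys me---and that equality of the two support-function combinations really does force equality of each individual coefficient, which rests on the affine independence of the support functions of distinct non-constant SEUs (equivalently, on the essential uniqueness of the DLR-SEU representation of the menu preference $\better_{\theta_t}$). Once this identification is secured, the remaining work---separating variables and tracking the additive constants---is routine bookkeeping.
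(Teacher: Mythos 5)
Your proposal is correct and follows essentially the same route as the paper: the paper's proof simply delegates to the corresponding argument in Proposition 1 of \cite{fis} (restricting to constant acts and noting that the beliefs are already pinned down by Proposition \ref{thm:drseuuniqueAS}), and what you have written out --- plugging the recursive decomposition into the affine relation from DR-SEU uniqueness, separating the instantaneous and continuation parts, and matching coefficients of support functions via the essential uniqueness of the DLR-SEU representation --- is precisely that adapted argument made explicit. The only cosmetic difference is that you work directly with the full act space rather than reducing to constant acts, which if anything makes the coefficient-matching step cleaner, since distinct elements of $supp(\psi_{t+1}^{\theta_t})$ are distinct non-constant SEUs and hence have affinely independent support functions over menus of acts.
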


\begin{proof}
The proof is a straightforward adaptation of the corresponding part in Proposition 1 of \cite{fis}. Just recall that

(1) the evolution of beliefs $q_t$ is already pinned down uniquely by the Proposition \ref{thm:drseuuniqueAS}, 

(2) Evolving SEU falls back to their Evolving Utility model when acts are restricted to be constant throughout. 

These two facts, and the argument in their proof can be repeated to yield the result.
\end{proof}


\begin{proposition}\label{thm:gluniqueAS}
Under Condition $1$ a Gradual Learning representation is unique. In particular, in addition to the uniqueness relations from Proposition \ref{thm:evseuuniqueAS} the following relations between two Gradual Learning representations hold true. 
\begin{enumerate}
\item $\delta = \hat\delta$.
\item $\beta_t(\theta_t) = \frac{1-\delta^{T-t+1}}{1-\delta}\E_{\theta_T}[\beta_T(\theta_T)|\theta_t]$. 
\end{enumerate}
\end{proposition}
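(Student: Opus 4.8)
The plan is to leverage the two uniqueness results already established and to isolate the only genuinely new objects, namely the discount factor and the additive constants $\beta_t$: everything concerning beliefs and the multiplicative constants $\alpha_t$ is inherited verbatim from Propositions \ref{thm:drseuuniqueAS} and \ref{thm:evseuuniqueAS}. Since $\delta$ and the $\beta_t$ depend only on the instantaneous (lottery) utilities, which are insensitive to beliefs, I would restrict attention throughout to constant-act menus, i.e. streams of lotteries over $Z$. On these menus the Gradual Learning representation collapses exactly to the Gradual Learning/Evolving-Utility model of \cite{fis}, with the one-step conditional expectation $\E_{\theta_{t+1}}[\,\cdot\mid\theta_t]$ induced by $\psi_{t+1}^{\theta_t}$ playing the role of the filtration-conditional expectation there; this is what lets me import the \cite{fis} computations with only notational changes.

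For part $1$ I would argue as follows. Both representations satisfy the discounted-martingale identity $\pi_v(\theta_t)=\tfrac1\delta\,\E[\pi_v(\theta_{t+1})\mid\theta_t]$ (resp. with $\hat\delta$), so iterating gives $\pi_v(\theta_t)=\delta^{t-T}\,\E[\pi_v(\theta_T)\mid\theta_t]$ and the analogue for the second representation. Passing to utility differences across two lotteries kills the additive constants, so the affine link of Proposition \ref{thm:evseuuniqueAS} reduces on these differences to multiplication by $\alpha_t=\alpha_0(\hat\delta/\delta)^t$. Substituting both martingale identities into this relation and cancelling the common difference of instantaneous utilities, which is nonzero by Lemma \ref{thm:helpgl1}, yields $\alpha_t=\alpha_T(\hat\delta/\delta)^{T-t}$; comparing this with $\alpha_t=\alpha_0(\hat\delta/\delta)^t$ at $t=0$ forces $(\hat\delta/\delta)^{2T}=1$, hence $\delta=\hat\delta$. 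Condition $1$ enters precisely here, guaranteeing (via Lemma \ref{thm:helpgl1}) a genuinely non-constant instantaneous utility at every node so that the cancellation is legitimate and the discount factor is identified inside $(0,1)$.

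For part $2$, with $\delta=\hat\delta$ in force (so $\alpha_t=\alpha_0$), I would combine the intercept recursion of Proposition \ref{thm:evseuuniqueAS}, namely $\gamma_T(\theta_T)=\beta_T(\theta_T)$ and $\gamma_t(\theta_t)=\beta_t(\theta_t)-\delta\,\E_{\theta_{t+1}}[\beta_{t+1}(\theta_{t+1})\mid\theta_t]$, with the martingale behaviour of the instantaneous-utility intercepts $\gamma_t$. Passing to the canonical normalization in which the instantaneous utilities form a genuine martingale $\pi_v(\theta_t)=\E[\pi_v(\theta_{t+1})\mid\theta_t]$ (the rescaling used in Section \ref{sec:equivalence}), the constants $\gamma_t$ inherit the martingale property $\gamma_t=\E_{\theta_{t+1}}[\gamma_{t+1}\mid\theta_t]$, so the tower property together with $\gamma_T=\beta_T$ gives $\gamma_t(\theta_t)=\E_{\theta_T}[\beta_T(\theta_T)\mid\theta_t]$. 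Substituting into the recursion produces the backward linear difference equation $\beta_t(\theta_t)=\E_{\theta_T}[\beta_T\mid\theta_t]+\delta\,\E_{\theta_{t+1}}[\beta_{t+1}\mid\theta_t]$. Writing $\beta_t=c_t\,\E_{\theta_T}[\beta_T\mid\theta_t]$ and collapsing $\E_{\theta_{t+1}}[\beta_{t+1}\mid\theta_t]=c_{t+1}\,\E_{\theta_T}[\beta_T\mid\theta_t]$ by the tower property reduces everything to the scalar recursion $c_t=1+\delta c_{t+1}$ with $c_T=1$, whose solution is $c_t=\sum_{j=0}^{T-t}\delta^{j}=\frac{1-\delta^{T-t+1}}{1-\delta}$, which is the claim; intuitively this geometric factor is just the sum of the discounted per-period contributions of a constant lottery stream over the remaining horizon $t,\dots,T$.

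The step I expect to be most delicate is the normalization bookkeeping in part $2$: the representation's instantaneous utilities satisfy a \emph{discounted} martingale, whereas the clean martingale property of the intercepts $\gamma_t$, and hence the coefficient $\delta$ in the recursion for $\beta_t$, only emerges in the true-martingale normalization. I would therefore carry out the argument in that normalization, read off the relations there, and translate back, checking carefully that the $\delta$-powers reassemble into the stated geometric sum $\tfrac{1-\delta^{T-t+1}}{1-\delta}$ rather than the $\delta^{2}$-type sum that a careless substitution of the $\tfrac1\delta$-form would produce. Once the reduction to constant acts and Propositions \ref{thm:drseuuniqueAS} and \ref{thm:evseuuniqueAS} are invoked, the remaining steps run parallel to the Gradual Learning uniqueness proof in \cite{fis}.
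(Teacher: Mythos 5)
Your proof is correct, but part 1 runs along a genuinely different track from the paper's. The paper identifies $\delta$ \emph{behaviorally}: using Condition 1 and Corollary \ref{thm:corollary} at a separating history $h^0$, it computes that the mixture weight $\eta$ at which $(l,m,n,\dots,n)\sim_{h^0}(\eta l+(1-\eta)m,\eta l+(1-\eta)m,n,\dots,n)$ equals $\frac{1}{1+\delta}$ under one representation and $\frac{1}{1+\hat\delta}$ under the other; since the indifference point is an observable of $\better_{h^0}$, this forces $\delta=\hat\delta$ directly. You instead exploit an \emph{algebraic} overdetermination of the scaling constants: the Evolving-SEU uniqueness gives $\alpha_t=\alpha_0(\hat\delta/\delta)^t$, while the Gradual Learning martingale identity (applied to utility differences, which are nonzero by Lemma \ref{thm:helpgl1}, and using that the kernels $\psi$ agree by Proposition \ref{thm:drseuuniqueAS}(B)) forces $\alpha_t=\alpha_T(\hat\delta/\delta)^{T-t}$, and the two patterns are compatible only if $\hat\delta=\delta$ (for $T\geq 1$, which is the only case with content). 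Both arguments are valid; the paper's is more self-contained and shows explicitly where Condition 1 enters as a behavioral axiom, whereas yours leans harder on the previously established uniqueness relations and is arguably shorter. For part 2 the paper only gestures at "applying B. of Proposition \ref{thm:evseuuniqueAS} inductively," and your computation is a correct fleshing-out of exactly that induction: you rightly identify the normalization issue as the crux --- the stated geometric factor $\frac{1-\delta^{T-t+1}}{1-\delta}$ corresponds to the main paper's normalization in which $\pi_v(\theta_t)$ is a true martingale (so that $\gamma_t=\E_{\theta_T}[\beta_T\mid\theta_t]$), not to the $\tfrac1\delta$-scaled form appearing in the sufficiency proof, which would produce the $\delta^2$-type sum you warn against. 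Your reduction of the recursion $\beta_t=\E[\beta_T\mid\theta_t]+\delta\,\E[\beta_{t+1}\mid\theta_t]$ to the scalar equation $c_t=1+\delta c_{t+1}$, $c_T=1$, via the tower property is exactly what the paper intends.
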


\begin{proof}
By Proposition \ref{thm:evseuuniqueAS} we are given the equivalence in terms of an Evolving SEU representation. So we just need to prove that A. and B. in the statement hold true.

A. Take a $\theta_0\in \Theta_0$ and consider $\vartheta_0(\theta_0)$ as well as $h^0$ a separating history for $\theta_0$. As mentioned in the proof of Proposition \ref{thm:drseuuniqueAS}, this is also a separating history for $\vartheta_0(\theta_0)$. By Condition 1 applied to $h^0$ we have the existence of $l,m,n\in \Delta(Z)$ such that $(l,n,\dots,n)\sbetter_{h^0}(m,n,\dots,n)$. From Corollary \ref{thm:corollary} we have that $\pi_u(\theta_0)(l,n\dots,n)> \pi_u(\theta_0)(m,n\dots,n)$. Using equation (16) in the appendix of the main paper iteratively and defining $v_0^{\theta_0} = \E[v_T^{\theta_T}|\theta_0]$ we get for a general sequence of constant consumption lotteries $(l^0,\dots,l^T)$, that it holds $\pi_u(\theta_0)(l^0,\dots,l^T) = \sum_{k=0}^T\delta^kv_0^{\theta_0}(l^k)$. It follows that $v_0^{\theta_0}(l)>v_0^{\theta_0}(m)$ and that 
\[
\pi_u(\theta_0)(l,m,n\dots,n)-\pi_u(\theta_0)(\eta l+(1-\eta)m,\eta l+(1-\eta)m,n\dots,n) =0\text{ iff }\eta=\frac{1}{1+\delta}. 
\]
We can do the same steps for $\vartheta(\theta_0)$ to arrive at the following.
\[
\pi_u(\vartheta(\theta_0))(l,m,n\dots,n)-\pi_u(\vartheta(\theta_0))(\eta l+(1-\eta)m,\eta l+(1-\eta)m,n\dots,n) =0\text{ iff }\eta=\frac{1}{1+\hat\delta}. 
\]
It follows that $\delta=\hat\delta$. 

2. Given $\delta = \hat\delta$, B. in Proposition \ref{thm:evseuuniqueAS} and equation (16) in the appendix of the main paper for both GL representations, we can apply B. of Proposition \ref{thm:evseuuniqueAS} inductively to arrive at the required B. statement here.  
\end{proof}

\section{Extending Option Value Theory to Subjective Expected Utility}\label{sec:ovseu}

In Section \ref{sec:stochmenupref} we define a menu preference derived from stochastic choice over future decision problems. At the end of a period $t$ the agent knows the realization of the state $(q_t,u_t,s_t)$, her choice $f_t\in A_t$ as well as the realization of the menu $A_{t+1}$ from $f_t^A(s_t)$. Given that we allow for stochastic taste, a generalization of the main result of \cite{dlst} to SEU is needed for the proof of part F. of Lemma \ref{thm:separability+continuity}. 


We start by considering a finite prize space $X$. While one can extend the proof to general spaces $X$ by standard methods as in \cite{ks}, the proof in \cite{fis} can be adapted to show that this additional argument is not needed for our setting. 

Consider acts $f:S\ra\Delta(X)$ and the set of such acts $\F$. Since $S$ is finite the set $\F$ is isomorphic to a (convex) polytope in $\R^{|S|\cdot|X|}$, i.e. it can be prescribed by finitely many linear inequalities. Note that the same is true for $conv(A)$, the convexification of $A$, an  arbitrary (finite) menu of acts from $\F$. Denote the collection of menus as usual by $\A$. Assume throughout that a preference $\better$ is given on $\A$. 

\begin{definition}\label{thm:DLR-SEUdef} Say that $\better$ over $\A$ has an DLR-SEU representation (or Option Value-SEU) representation if there exists a probability measure $\mu$ of finite support on $\Delta(S)\times \R^X$ such that the following functional represents $\better$.

\[
V(A) = \int_{\Delta(S)\times \R^X} \max_{f\in A}\sum_{s\in S}q(s)u(f(s))d\mu(q,u).
\]

\end{definition}

We impose the following Axiom for $\better$ (a collection of by now classical axioms). State Independence will be added later.

\paragraph{Axiom 0: Classical Menu Preference Axioms.}
\begin{enumerate}
\item $\better$ is a weak order.
\item $\better$ is continuous.
\item $\better$ satisfies Independence, i.e. if $A\sbetter B$, then for any $C$ and $\alpha\in(0,1)$ we have $\alpha A +(1-\alpha)C\sbetter \alpha B +(1-\alpha)C$.
\item $\better$ satisfies Monotonicity/Option Value:  $A\subset B$ implies $B\better A$.
\item Non-Triviality: there exists some $p,q\in\Delta(X)$ such that $\{p\}\sbetter\{q\}$.\footnote{Note that this can be relaxed to the usual version as found e.g. in \cite{as}. We work with this version for simplicity.} 
\item Finiteness: there exists $K\in \N$ such that for any $A$, there exists $B\subset A, |B|\leq K$ with $A\sim B$. 
\end{enumerate}
\vspace{4mm}
Consider now the space of normalized state-dependent Bernoulli utilities.
\[
\mathcal{W} = \{w:S\ra \R^X: \sum_{s,x}w(s)(x) = 0, \sum_{s,x}|w(s)(x)|^2 = |S|\}.
\]
Define also the space of state-independent Bernoulli utilities, which can be identified canonically with a \emph{strict} subspace of $\mathcal{W}$ as follows.

\[
\mathcal{U} = \{u\in\R^X: \sum_{x}u(x) = 0, \sum_{x}|u(x)|^2 = 1\}.
\]
Note that both $\W$ and $\mathcal{U}$ do not allow constant preferences. 

The first Lemma shows that similarly to section \ref{sec:rseuapp} (see Lemma \ref{thm:lu1} there) the classical axioms ensure the existence of a (unique) representation with normalized \emph{state-dependent} utilities.

\begin{lemma}\label{thm:as1}
Let $\better$ satisfy parts A-E of Axiom 0. Then there exists a unique probability measure over $\W$ such that V defined as 

\[
V(A) = \int_{\W}\left[\max_{f\in A}\sum_{s,x}w(s)(x)f(s)(x)\right]d\mu(w), 
\]
represents $\better$. Moreover, $\mu$ has finite support if and only if Finiteness (part F. of Axiom 0) is additionally satisfied.  
\end{lemma}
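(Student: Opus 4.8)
The plan is to adapt the support-function approach of \cite{dlst} (and of \cite{as}) to Anscombe--Aumann acts, the only novelty being that the relevant calibration sphere is the set $\W$ of normalized \emph{state-dependent} utilities rather than the state-independent $\U$; this enlargement does not affect the structure of the argument. First I would pass to support functions: to each menu $A\in\A$ associate $\sigma_A\in C(\W)$ by $\sigma_A(w)=\max_{f\in A}\sum_{s,x}w(s)(x)f(s)(x)$. Three facts are immediate. First, $\sigma_A$ depends only on $conv(A)$, since a linear functional attains its maximum over a polytope at an extreme point. Second, support functions are additive under Minkowski sums and positively homogeneous, so $\sigma_{\alpha A+(1-\alpha)B}=\alpha\sigma_A+(1-\alpha)\sigma_B$, i.e. $A\mapsto\sigma_A$ is affine for the mixture operation on $\A$. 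Third, by H\"ormander's embedding theorem the map is injective on convex menus. Hence $\Sigma=\{\sigma_A:A\in\A\}$ is a convex, mixture-closed subset of $C(\W)$ on which $\better$ induces a well-defined preference.

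Next I would extract an affine numerical representation. Viewing $\Sigma$ as a mixture space, parts A--C of Axiom~0 (weak order, continuity, Independence) let me invoke the mixture-space expected-utility theorem to obtain a mixture-affine $V:\Sigma\ra\R$ representing $\better$. This $V$ extends uniquely to a linear functional on the linear span of $\{\sigma_A-\sigma_B\}$ and, by continuity, to a bounded linear functional $L$ on $\overline{\mathrm{span}}\,\Sigma$; Hahn--Banach extends it to all of $C(\W)$. Monotonicity (part D) gives that $\sigma_A\le\sigma_B$ pointwise --- equivalently $conv(A)\subseteq conv(B)$ --- implies $V(\sigma_A)\le V(\sigma_B)$, so $L$ is monotone and hence positive; Non-Triviality (part E) guarantees $L\neq 0$.

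I would then apply the Riesz representation theorem: a nonzero positive bounded linear functional on $C(\W)$, with $\W$ compact metric, is integration against a finite positive Borel measure, which after normalization is a probability measure $\mu$ satisfying
\[
V(A)=\int_{\W}\sigma_A(w)\,d\mu(w)=\int_{\W}\max_{f\in A}\sum_{s,x}w(s)(x)f(s)(x)\,d\mu(w),
\]
as required. Uniqueness of $\mu$ follows because differences of support functions are dense in $C(\W)$ (H\"ormander's theorem, via a Stone--Weierstrass-type argument), so the representing positive functional, and hence the measure, is pinned down uniquely.

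For the finiteness claim, the ``only if'' direction is routine: if $supp(\mu)=\{w_1,\dots,w_n\}$, pick $f_i\in\argmax_{f\in A}\sum_{s,x}w_i(s)(x)f(s)(x)$ and set $B=\{f_1,\dots,f_n\}\subseteq A$; then $\sigma_B$ and $\sigma_A$ agree on $supp(\mu)$, so $V(A)=V(B)$ and $B\indiff A$ with $|B|\le n=:K$. The converse --- Finiteness forcing finite support --- is the main obstacle, and I would argue by contradiction in the style of Lemma~\ref{thm:lu4} and Lemma~18 of \cite{fis}: if $supp(\mu)$ contained $K+1$ pairwise non-proportional utilities, I would use the separating-menu construction (Lemma~1 of the main paper) to build a menu on which these $K+1$ utilities have distinct, unique maximizers, each carrying strictly positive $\mu$-mass on a neighborhood; deleting any one maximizer then strictly lowers $V$, so no $B\subseteq A$ with $|B|\le K$ can satisfy $B\indiff A$, contradicting Finiteness. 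The delicate points are ensuring the maximizers are genuinely distinct and that support membership makes the value drop strict --- precisely the issues handled by the finiteness arguments quoted earlier.
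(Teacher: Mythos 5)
Your proof is correct in outline, but it takes a genuinely different route from the paper. The paper does not work with support functions on $\W$ at this point: it recycles the affine embedding $T$ of $\F$ into a probability simplex $\Delta$ (the construction from Lemma \ref{thm:lu1}, i.e.\ Lu's Lemma S.2, together with the projection $P$ onto the affine hull of $im(T)$), transports $\better$ to a preference $\better_{\Delta}$ on menus of lotteries, verifies via two well-definedness claims (parallel-polytope/Independence arguments) that $\better_{\Delta}$ satisfies all the axioms of Section S1 of \cite{as}, and then simply cites Theorem S1 of \cite{as}. That citation delivers the representation, the uniqueness over the normalized sphere, \emph{and} the finite-support-iff-Finiteness equivalence in one stroke, which is why the paper's proof contains no separate finiteness argument. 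Your route instead re-derives the representation directly on $\W$ via the H\"ormander/DLRS support-function machinery and Riesz representation; this is essentially the apparatus the paper itself builds in the following subsubsection (Lemmas \ref{thm:S3dlrs} through \ref{thm:uniquenessDLR-SEU}) to prove uniqueness on the smaller state space $\Delta(S)\times\U$, so your approach is self-contained and arguably unifies the two steps, at the cost of having to supply details the paper outsources. Two of those details deserve more than the gloss you give them: (i) the passage from the mixture-affine $V$ to a \emph{bounded} linear functional on $C(\W)$ is not ``by continuity'' of $\better$ alone --- in \cite{dlrs} the Lipschitz bound $|V(A)-V(B)|\leq L\,\|\sigma_A-\sigma_B\|_\infty$ is derived from Monotonicity together with the existence of a menu whose support function is a strictly positive constant (cf.\ Lemma \ref{thm:S3dlrs}, part 2), and positivity of the extension comes from denseness of $H^*$ (Lemma \ref{thm:S11dlrs}), not from Hahn--Banach; (ii) the ``Finiteness implies finite support'' direction needs a separating-menu construction for arbitrary non-proportional state-dependent utilities in $\W$, not just for SEUs as in Lemma 1 of the main paper --- the argument adapts, but it is an additional piece of work that the paper's citation of \cite{as} avoids entirely.
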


\begin{proof}
\emph{Step 1.}

We start just as in the proof of Lemma \ref{thm:lu1}. Let $W$ be the affine hull of $\F$ in $\R^{|X||S|}$ with dimension $m$ and consider $\Delta$ be the probability simplex in $W$ as well as $\{w_1,\dots,w_m\}$ an orthonormal basis of $W$.
Consider the mapping $T:\F\ra \Delta$ given by 
\[
T(f)_{i} = \lambda\left[f\cdot (w_i-\frac{1}{m}\sum_{j}w_j)\right] +\frac{1}{m}.
\]
Note that by definition of acts $f\cdot w_i$ is a number in $[0,1]$ always. Also, for all $\lambda>0$ small enough\footnote{We assume this in the following.} we have $T(f)\geq 0$ and by definition also $T(f)\in \Delta$. Note that this transformation is linear and thus in particular continuous (since the vector spaces involved are finite-dimensional). Note also that $T$ is injective (one-to-one) and that $im(T)$ is a polytope as well. 

In the following we modify the construction in the proof of Lemma S.2 in the Supplement of \cite{lu} to our set up of menus. 

Define a preference $\better_T$ on menus from $im(T)$ by 
\[
T(A)\better_T T(B)\quad\text{ iff }\quad A\better B.
\]
This is well-defined due to injectivity of $T$ and it is easy to see it satisfies all of the parts A-F from Axiom 0.\footnote{Here again injectivity of $T$ is crucial for the proof of Monotonicity, whereas linearity of $T$ is crucial for Independence.} In particular, we have an Option Value representation within $im(T)$. Now, we can either work with this smaller choice space going forward or we can just extend this preference to all of menus from $\Delta$ with the same trick as in Lemma \ref{thm:lu1}. Let's take the latter route. Given is again the projection $P$ of $W$ into the affine hull $W'$ of $im(T)$. For any finite menus $A,B\subset \Delta$ take a $p^*\in \Delta\cap W'$ and $\alpha\in (0,1)$ such that $aP(A\cup B)+(1-a)\{p^*\}\subset im(T)$. Then we can define 

\[
A\better_{\Delta} B\quad\text{ iff }\quad aP(A)+(1-a)\{p^*\}\better_{T} aP(B)+(1-a)\{p^*\}.
\]
\textbf{Claim.} $\better_{
\Delta}$ is well-defined. 

\begin{proof}
[Proof of Claim.] Consider two pairs $(p_1,a_1)$ and $(p_2,a_2)$ with $p_i\in \Delta\cap W'$ satisfying $a_iP(A\cup B)+(1-a)\{p_i\}\subset im(T)$ for $i=1,2$. Then one can see that the sides of the two polytopes in $im(T)$ defined by $a_iP(A\cup B)+(1-a_i)\{p_i\}$ are pairwise parallel to each other (recall $T$ and $P$ are affine). With a similar argument as in the pictures in \cite{gp} (see pg. 130 there) this shows that $\better_{T}$ ranks $\{a_1P(A)+(1-a_1)\{p_1\}, a_1P(B)+(1-a_1)\{p_1\}\}$ and $\{a_2(A+(1-a_2)\{p_2\}, a_2(B+(1-a_2)\{p_2\}\}$ the same way. For more details: it is trivial to see that $T$ preserves angles. Therefore, it sends parallel lines to parallel lines. In particular, the unique pre-images of the polytopes $\{a_1P(A)+(1-a_1)\{p_1\}, a_1P(B)+(1-a_1)\{p_1\},a_2P(A)+(1-a_2)\{p_2\}, a_2P(B)+(1-a_2)\{p_2\}\}$ under $P\circ T$ in the affine hull of $\F$ have respectively parallel sides. Using now Independence for the original preference $\better$ it is easy to conclude the proof by using the fact that $P\circ T$ is affine. 
\end{proof}
\vspace{2mm}
\textbf{Claim.} $\better_{\Delta}$ satisfies on $\Delta$ Weak Order, Continuity, Independence, Monotonicity, Non-Triviality and Finiteness, i.e. all of the axioms from section S1 in \cite{as}. 
\begin{proof}
[Proof of Claim] This is routine. The `harder' part is showing Independence. This follows again from linearity of the map $T$ and the fact that $T$ preserves angles. 
\end{proof}

Theorem S1 in \cite{as} yields now a functional representing $\better_{\Delta}$. This functional is easily translated into a representing functional for $\better$. This holds because of injectivity of $T$.

Finally, uniqueness of the measure comes from classical results in \cite{dlr} and the fact that we focused on Bernoulli utilities from $\mathcal{W}$, which are normalized twice. 
\end{proof}

Note that for each $w\in supp(\mu)$ from the representation in Lemma \ref{thm:as1} we can interpret each $w(s)(\cdot)$ as a Bernoulli utility, i.e. as an element from $\R^X$.

We now add an axiom which ensures that every element from the support of $\mu$ in the representation of Lemma \ref{thm:as1} has 
\begin{equation}\label{eq:helpas1}
w(s)(\cdot)\approx  w(s')(\cdot)\text{ for all }s,s'\in S.
\end{equation}
Intuitively, if \eqref{eq:helpas1} is not fulfilled for all elements in support of $
\mu$ then we can find a menu $A$ which contains non- constant acts and which is preferred to the menu $\bar A$ which contains all the possible lotteries of any act in the menu $A$ as constant acts. On the other hand, if the representation is state-independent, $\bar A$ is clearly always (weakly) better than $A$. Formally, we define as follows. 

\begin{definition}
For a menu $A\subset \F$ define $\bar A$ given by
\[
\bar A = \{g\in \F: g\text{ constant act with } g(s) = f(s')\text{ for some }f\in A, s,s'\in S\}. 
\]
\end{definition}
Note that this can be written in \cite{lu}-notation as $\bar A = \cup_{s\in S}A(s)$.

We now state our version of State Independence for Menus.

\paragraph{Axiom: Weak Dominance} For all menus $A\in \A$ we have $\bar A\better A$.

\begin{lemma}\label{thm:stateindeptmenus}
Let $\better$ over menus from $\F$ have a representation as in Lemma \ref{thm:as1}. Then all elements in the support of $\mu$ are Subjective Expected Utilities (SEUs) if and only if $\better$ satisfies Weak Dominance. 
\end{lemma}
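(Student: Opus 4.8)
\emph{Easy direction (state-independence $\Rightarrow$ Weak Dominance).} Suppose every element of $supp(\mu)$ is an SEU, i.e. each has the form $w(s)(\cdot)=q(s)\,u(\cdot)$ for some $q\in\Delta(S)$ and a common Bernoulli index $u\in\R^X$. For any act $f$ the value $\sum_s w(s)(f(s))=\sum_s q(s)u(f(s))$ is a $q$-weighted average of the numbers $\{u(f(s))\}_{s}$, hence at most $\max_s u(f(s))$. Since each section lottery $f(s)$ occurs in $\bar A$ as a constant act whose $w$-value is exactly $u(f(s))$, we obtain the pointwise inequality $\max_{f\in A}\sum_s w(s)(f(s))\le \max_{g\in\bar A}\sum_s w(s)(g(s))$ for every such $w$. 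Integrating against $\mu$ gives $V(A)\le V(\bar A)$, i.e. $\bar A\better A$, which is Weak Dominance. (When $|S|=1$ every element is trivially state-independent and $\bar A=A$, so there is nothing to prove.)

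\emph{Hard direction (Weak Dominance $\Rightarrow$ state-independence), by contraposition.} Assume some $w_0\in supp(\mu)$ is not an SEU. Then there are states $s_1,s_2$ with $w_0(s_1)(\cdot)\not\approx w_0(s_2)(\cdot)$, and therefore a direction $d\in\R^X$ with $\sum_x d_x=0$ for which $w_0(s_1)(d)>0>w_0(s_2)(d)$; by adjusting $d$ we may also arrange the aggregate $U_{w_0}(d):=\sum_s w_0(s)(d)$ to vanish. Writing $U_w(\ell):=\sum_s w(s)(\ell)$, the plan is to exhibit a menu on which Weak Dominance fails by starting from a base menu of \emph{constant} acts, on which $\bar A=A$ forces $V(\bar A)=V(A)$, and then perturbing a single act in the hedging direction $d$ so that only $w_0$'s contribution to $V(\bar\cdot)-V(\cdot)$ moves, and moves strictly downward.

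\emph{The construction.} Choose a relative-interior lottery $\ell_0$ that is the strict $U_{w_0}$-maximizer over a finite constant-act menu $A_0$, where for each other support element $w_j$ we have included a constant lottery strictly beating $\ell_0$ under $U_{w_j}$ (possible because the aggregate functionals are distinct; the coincident-aggregate case is absorbed by the extra requirement $U_{w_0}(d)=0$). Because $A_0$ contains only constant acts, $\bar A_0=A_0$ and $V(\bar A_0)=V(A_0)$. Now replace $\ell_0$ by the act $g_0^\epsilon$ equal to $\ell_0+\tfrac{\epsilon}{2}d$ on $s_1$, to $\ell_0-\tfrac{\epsilon}{2}d$ on $s_2$, and to $\ell_0$ elsewhere, calling the new menu $A_\epsilon$. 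For small $\epsilon>0$ the perturbation leaves each $w_j$ ($j\neq0$) strictly below its old optimum, both as an act value and as a section value, so its contribution stays zero; for $w_0$ the optimal act value rises by $\tfrac{\epsilon}{2}\big(w_0(s_1)(d)-w_0(s_2)(d)\big)$ while the best section value rises only by $\tfrac{\epsilon}{2}\,|U_{w_0}(d)|$. By the choice of $d$ the former strictly exceeds the latter, whence $V(\bar A_\epsilon)-V(A_\epsilon)=\mu(w_0)\tfrac{\epsilon}{2}\big(|U_{w_0}(d)|-(w_0(s_1)(d)-w_0(s_2)(d))\big)<0$, contradicting Weak Dominance.

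\emph{Main obstacle.} The delicate point is the \emph{isolation} of the single offending element. A naive singleton menu will not work, because the state-independent members of $supp(\mu)$ contribute strictly positive option-value slack that can swamp the intended violation; this is exactly why the base menu is taken to consist of constant acts (forcing equality at $\epsilon=0$) and why $A_0$ is chosen as a separating menu that freezes every other element's active act for small $\epsilon$. Verifying that such a separating base menu and a simplex-feasible hedge $d$ can be selected simultaneously, in particular when several support elements share the same aggregate index $U_w$, is the crux; it is handled using the finiteness of $supp(\mu)$ and the uniqueness of $\mu$ furnished by Lemma \ref{thm:as1}, together with the freedom to push $\ell_0$ into the relative interior.
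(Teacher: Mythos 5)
Your easy direction is correct and coincides with the paper's Step 1. The hard direction, however, has a genuine gap at exactly the point you flag as ``the crux'' but do not resolve: the simultaneous feasibility of the hedge direction $d$. You need $d$ tangent to the simplex with $w_0(s_1)(d)>0>w_0(s_2)(d)$ and $U_{w_0}(d)=\sum_s w_0(s)(d)=0$, but such a $d$ need not exist. Take $w_0(s_2)=-2\,w_0(s_1)$ on the tangent space (non-equivalent functionals, so $w_0$ is indeed not an SEU): the cone $\{d: w_0(s_1)(d)>0>w_0(s_2)(d)\}$ equals $\{d: w_0(s_1)(d)>0\}$, and on it $U_{w_0}(d)$ has a fixed nonzero sign when $|S|=2$, so the normalization is impossible. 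In that two-state case your fallback inequality $w_0(s_1)(d)-w_0(s_2)(d)>|U_{w_0}(d)|$ happens to survive, but add a third state with, say, $w_0(s_3)=10\,w_0(s_1)$ and you get $|U_{w_0}(d)|=9\,w_0(s_1)(d)>3\,w_0(s_1)(d)=w_0(s_1)(d)-w_0(s_2)(d)$ for every admissible $d$: your perturbation of the pair $(s_1,s_2)$ then raises the best \emph{section} by more than the best \emph{act}, and produces no violation of Weak Dominance. The idea can be rescued by searching over all pairs of states and directions, but you give no argument that a successful pair always exists, and the same unproven feasibility claim also underlies your separating constant-act base menu when two support elements share the same aggregate functional $U_w$.

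The paper avoids all of this with a global rather than local construction: it partitions the finitely many Bernoulli utilities $\{w(s)(\cdot)\}_{s,w}$ into equivalence classes, invokes the separation lemma for lotteries (Lemma 13 in \cite{fis}) to pick for each class a lottery that weakly maximizes against every class and strictly maximizes against every non-equivalent one, and assembles the acts $f^j(s_i)=p_i^j$. Each $f^j$ then weakly beats every constant act of $\bar A$ under $w^j$, strictly so whenever $w^j$ is state-dependent, giving $V(A)>V(\bar A)$ with no case analysis on perturbation directions. Separately, note that your contrapositive only yields $w(s)\approx w(s')$ for all $s,s'$; the paper's Step 3 (extracting beliefs $q(s)$ from the affine coefficients $a(s)$ and changing measure) is still needed to conclude that the support literally consists of SEUs, and you omit that conversion.
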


\begin{proof}
In the first step we show Necessity, whereas the rest of the steps shows Sufficiency.

\emph{Step 1.}  Assume that the $\mu$ from the  representation has a $supp(\mu)$ so that each element $w$ in it can be written as $w(s)(\cdot) = q(s)u(\cdot)$ for some $u\in \U$, independent of $s$, and $q(s)>0$. Then it is easy to see that $\bar A\better A$, since whatever $w\in supp(\mu)\subset \U$ is realized the agent has weakly more flexibility when choosing in $\bar A$ than when choosing in $A$. She can always pick the highest lottery from all possible lotteries in $A$ for the Bernoulli utility realized.

\emph{Step 2.} Assume for the other direction that the representation has $supp(\mu)$ not contained in the subspace $\U$. Then at least one of the rows $w(s)(\cdot)\in (\R^X)^S$ from $supp(\mu)$ is non-constant, i.e. contains different Bernoulli utilities (note that there are finitely many rows due to Finiteness). Partition the set of all (non-constant) Bernoulli utilities 
\[
\{w(s)(\cdot)\}_{s\in S,w\in supp(\mu)}
\]
so that two Bernoulli utilities within an element of the partition represent the same EU-preference over $\Delta(X)$ and so that distinct elements of the partition represent distinct EU-preferences. 

Now use Lemma 13 in \cite{fis} (separation Lemma for lotteries) to pick for each equivalence class in the partition above a lottery with the separation property given in Lemma 13 of \cite{fis}. And keep the same lottery for each Bernoulli utility within the same element of the equivalence class. If we denote the elements from $supp(\mu)$ with $w^j,j=1,\dots,K$ ($K=|supp(\mu)|$) we have thus picked lotteries $p^{j}_i\in \Delta(X), i=1,\dots, |S|, j=1,\dots,K$ with the following property
\[
(SP)\quad w^j(s_i)(p_i^j)\geq w^j(s_i)(p_l^k),\text{ and }w^j(s_i)(p_i^j)>w^j(s_i)(p_l^k)\text{ whenever }w^j(s_i)\not\approx w^k(s_l). 
\]

Consider the acts $f^j$ with $f^j(s_i) = p_i^j$ for all $i=1,\dots |S|$ and $j=1,\dots,K$. Finally, take $A=\{f^j:j=1,\dots,K\}$. Obviously then $\bar A = \{p_i^j:i=1,\dots, |S|;j=1,\dots,K\}$ where we have identified constant acts from $\F$ with the respective lotteries from $\Delta(X)$. Note that for each $j$ we have 
\[
w^j\cdot f^j = \sum_{i}w^j(s_i)(p_i^j)\geq \sum_{i}w^j(s_i)(p_i^k)= w^j\cdot f^k, \text{ whenever }k\neq j.
\]
Thus, whenever each $w^j$ is realized the $\max_{f\in A}w^j\cdot f$ is achieved in $f^j$.
Now it also holds 
\[
w^j\cdot f^j=\sum_{i}w^j(s_i)(p_i^j)\geq   \sum_{i}w^j(s_i)(p_l^k)= w^j\cdot (p_l^k).
\]
Moreover the inequality here is strict whenever  there are at least two non-equivalent $w^j(s_i)\not\approx w^j(s_{i'})$. This is because $p_l^k$ can be equal to at most one of the elements from $\{p_i^j,p_{i'}^j\}$. It follows that for at least one $w^j$ which is state-dependent and occurs with positive probability we have $w^j\cdot f^j>  w^j\cdot (p_l^k)$ for all $(l,k)$. Thus we can write that in general 

\[
\max_{f\in A} w^j\cdot f\geq \max_{g\in \bar A}w^j\cdot g
\]
with strict inequality whenever $w^j$ has at least two non-equivalent $w^j(s_i)\not\approx w^j(s_{i'})$.

It follows $V(A)>V(\bar A)$ contradicting Weak Dominance. 

\emph{Step 3.} Assume now Weak Dominance. It follows that for every $w\in supp(\mu)$ fixed it holds $w(s)(\cdot)\approx w(s')(\cdot)$ for all $s,s'\in S$. Fix such a $w$. There exists at least one $s\in S$ for which $w(s)$ is non-constant. Assume w.l.o.g. that it is $s_1$. By the classical vNM Theorem this means that there exists $a:S\ra [0,\infty), b:S\ra\R$ with $a(s_1) = 1, b(s_1) = 0$ so that 
\[
w(s)(\cdot) = a(s)w(s_1)(\cdot)+b(s),\quad s\in S.
\]
Define for this $w$ the mapping 
\[
q:S\ra[0,1],\quad q(s) = \frac{a(s)}{\sum_{s'\in S}a(s')},\quad s\in S. 
\]
This defines a probability distribution over $S$. Defining $u(\cdot) = w(s_1)(\cdot)$ we can write 
\[
w\cdot f = A(w)\left[\sum_{s\in S}q(s)u(f(s))\right] +B(w),
\]
for suitable $A(w)>0, B(w)\in\R$. We define now $B = \sum_{w\in supp(\mu)}B(w)\mu(w)$, $A = \sum_{w\in supp(\mu)}A(w)$ and finally the probability distribution $\nu(q,u) = A(w)\mu(w)\frac{1}{\sum_{w'\in supp(\mu)}A(w')\mu(w')}$ if $w$ corresponds to $(q,u)$ as defined above. We then have that $\better$ can be represented by 
\[
V(F) = A'\int_{\Delta(S)\times\R^X} \max_{f\in F}\sum_{s\in S}q(s) u(f(s))d\nu(q,u) + B,
\]
for a suitable $A'>0$. 
This implies that the preferences can be represented as well by 
\[
V(F) = \int_{\Delta(S)\times\R^X} \max_{f\in F}\sum_{s\in S}q(s) u(f(s))d\nu(q,u).
\]
Finally, we again can change measure and assume w.l.o.g. the $u$-s are in $\U$. Thus, there exists then a finite support probability distribution $\nu_0$ over $\Delta(S)\times\U$ so that $V$ below represents $\better$.
\[
V(F) = \int_{\Delta(S)\times\U} \max_{f\in F}\sum_{s\in S}q(s) u(f(s))d\nu_0(q,u).
\]
\end{proof}

We now work towards uniqueness of the representation in Definition \ref{thm:DLR-SEUdef}. It is clear by focusing on menus of constant acts (which are then isomorphic to menus of lotteries in the natural way) and using Proposition 3 in \cite{as}, that we have a uniqueness type of result as in Proposition 3 of \cite{as} for the marginal distribution of $\mu$ over $\U$.\footnote{I.e. this means the marginal of $
U$ and its support are unique.} The remaining question is whether we get uniqueness of the distribution of beliefs over $\Delta(S)$. Let us first fix the marginal of $\mu$ over $\U$. Since all elements from $supp(\mu)$ represent different SEU-preferences we have that for two distinct elements $(q,u),(q',u')\in supp(\mu)$: either $u\not\approx u'$ or $u\approx u', q\neq q'$ (clearly two SEU representations  $(q,u),(q',u')$ represent the same SEU preference if and only if $q=q'$ and $u\approx u'$). We write $(q,u)\approx (q',u')$ whenever they represent the same SEU preference. It is also easy to see by a separation argument based on Lemma 1 in the appendix of the main paper, that if one has two representations of a menu preference as in Definition \ref{thm:DLR-SEUdef} then the supports of the probability measures over SEU-s are the same (up to linear monotonic transformations of the $u$-s).

To formally prove uniqueness of the measure $\mu$ in the representation of Definition \ref{thm:DLR-SEUdef} whenever the Bernoulli utility functions come from $\mathcal{U}$ we need to adapt the full machinery of \cite{sarver} to the new setting of SEUs, i.e. we need to work with support functions for acts instead of lotteries.

\subsubsection{Uniqueness in the Option Value model with SEU-s as subjective state space.}

Let $X$ be a finite prize space and $S$ a finite state space of objective states. Let again $\F$ be the set of acts $f:S\ra\Delta(X)$. We look at a preference over (for now arbitrary) menus from $\F$. We consider the following axioms on $\better$.

\begin{itemize}
\item A1: Weak Order.
\item A2: Continuity: for any $A$, the strict upper and lower contour sets $\{A\subset \F:B\sbetter A\}$ and $\{A\subset \F:B\sworse A\}$ are open in the Hausdorff topology.
\item A3: Independence: If $A\sbetter B$, then for all $C$ and $\lambda\in (0,1]$ one has \\$\lambda V(A)+(1-\lambda)V(C)\sbetter \lambda V(B)+(1-\lambda)V(C)$.
\item A4: Monotonicity: $A\subseteq B$ implies $A\worse B$. 
\item A5: Weak Dominance: $\bar A\better A$ for all menus $A$. 
\end{itemize}


In the following we abuse notation by using interchangeably $l,p$ and other small-letter variables to denote both lotteries from $\Delta(X)$ and also constant acts. There is no confusion because of the natural isomorphism between lotteries and constant acts.

We define as in \cite{dlrs}:
$\bar\A$ as the set of all closed, convex, non-empty subsets of $\F$. Let $C(\subs)$ be the set of all continuous functions on $\subs$. Define the set of subjective states 
\[
\subs = \Delta(S)\times \U.
\]
This is again a convex, finite-dimensional, compact topological space.\footnote{We equip it with the euclidean norm of its natural euclidean ambient space $\R^{|X||S|}$.}
Define the support function of some $x\in\bar\A$ as
\[
\sigma_x:\subs\ra \R,\quad   \sigma_x(q,u) = \max_{f\in x}q\cdot (u\circ f).
\]

If we equip the space $\F$ with the euclidean norm of $\R^{|S||X|}$ and the space of closed, non-empty subsets of $\F$ with the corresponding Hausdorff norm $d_h$ then it is easy to see the following inequality holds.

\begin{equation}\label{eq:Lipschitzineq}
||\sigma_x-\sigma_y||_{\infty}\leq d_h(x,y),\quad x,y\subset \F. 
\end{equation}

Define the subset of $C = \{\sigma_x:x\in\bar\A\}$. For any $\sigma\in C$ define as in \cite{dlrs} 

\[
x_{\sigma} = \cap_{(q,u)\in\subs}\{f\in\F: q\cdot u(f)\leq \sigma(q,u)\}. 
\]

Lemma S2 in \cite{dlrs} doesn't depend on the structure of $\subs$, as long as it is compact, convex, non-empty. So it holds here true as well.

Lemma S3 from \cite{dlrs} holds as well. We write it down here as part 1) of the following Lemma for completeness and future reference but also add other parts from the papers \cite{sarver}, \cite{dlr} and \cite{dlrs} which are needed later within this subsection. 

\begin{lemma}\label{thm:S3dlrs}
1) The set $C$ is convex and $\sigma_{(\frac{1}{|X|},\dots,\frac{1}{|X|})} \equiv 0$. In particular, $0\in C$. 

2) There exists a menu $y$ containing only constant acts such that $\sigma_y \equiv c>0$.

3) $x\subseteq y$ implies $\sigma_x\leq \sigma_y$. 
\end{lemma}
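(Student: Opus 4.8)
The plan is to treat the three parts separately, since parts (1) and (3) follow almost immediately from the definition of the support function together with the linearity of expected utility, while part (2) is the only one requiring a genuine geometric construction.

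For part (1), I would first verify $\sigma_{(\frac{1}{|X|},\dots,\frac{1}{|X|})}\equiv 0$ by a direct computation: writing $p^*=(\frac{1}{|X|},\dots,\frac{1}{|X|})$ for the uniform lottery viewed as a constant act, for any $(q,u)\in\subs$ one has $\sigma_{\{p^*\}}(q,u)=\sum_s q(s)u(p^*)=u(p^*)$ because $\sum_s q(s)=1$, and $u(p^*)=\frac{1}{|X|}\sum_x u(x)=0$ since $u\in\U$; hence $0=\sigma_{\{p^*\}}\in C$. For convexity of $C$, given $x,y\in\bar\A$ and $\lambda\in[0,1]$ I would use the Minkowski combination $\lambda x+(1-\lambda)y$, which again lies in $\bar\A$ (it is a subset of the convex set $\F$, and the Minkowski combination of compact convex sets is compact, convex and non-empty). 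Because $u$ is linear on lotteries, $u\big((\lambda f+(1-\lambda)g)(s)\big)=\lambda u(f(s))+(1-\lambda)u(g(s))$, so $q\cdot(u\circ\cdot)$ is affine in the act and the maximum over $\lambda x+(1-\lambda)y$ separates into the two independent maxima, giving $\sigma_{\lambda x+(1-\lambda)y}=\lambda\sigma_x+(1-\lambda)\sigma_y$. Thus $C$ is convex.

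For part (3), monotonicity is immediate: if $x\subseteq y$ then for each $(q,u)$ the maximum defining $\sigma_x(q,u)$ is taken over a subset of the acts used for $\sigma_y(q,u)$, so $\sigma_x(q,u)\le\sigma_y(q,u)$ pointwise, i.e. $\sigma_x\le\sigma_y$.

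The substantive step, and the one I expect to be the main obstacle, is part (2). Here the key observation is that for a menu $y$ of constant acts, identifying each constant act with its lottery $p\in\Delta(X)$, one has $\sigma_y(q,u)=\max_p u(p)$ with $u(p)=\sum_x u(x)p(x)$, which is independent of $q$; and writing $\tilde p=p-(\frac{1}{|X|},\dots,\frac{1}{|X|})$ one further gets $u(p)=\sum_x u(x)\tilde p(x)$ since $\sum_x u(x)=0$. The problem thus reduces to finding a set of centered lottery-vectors in the hyperplane $H=\{v\in\R^X:\sum_x v(x)=0\}$ whose support function, restricted to the unit sphere $\U$ of $H$, is a positive constant $c$. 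A finite menu cannot work, since the convex hull of finitely many points is a polytope whose support function is piecewise linear and hence non-constant; the correct object is a \emph{Euclidean ball}. I would therefore take
\[
y=\Big\{p\in\Delta(X):\ \textstyle\sum_x\big|p(x)-\tfrac{1}{|X|}\big|^2\le c^2\Big\},
\]
viewed as constant acts, for $c>0$ small. The only genuine thing to check is that $c$ can be chosen so that this ball sits inside the simplex (it does: since the uniform lottery lies in the relative interior of $\Delta(X)$, any $H$-ball of radius $c\le\frac{1}{|X|}$ around it stays in $\Delta(X)$, and such $y$ is closed, convex and non-empty, so $y\in\bar\A$). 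Then for any $(q,u)\in\subs$ the Cauchy--Schwarz maximizer is $\tilde p=cu$, i.e. $p=(\frac{1}{|X|},\dots,\frac{1}{|X|})+cu\in y$, yielding $\sigma_y(q,u)=\sum_x u(x)(cu(x))=c\sum_x|u(x)|^2=c$ for all $(q,u)$, as required.
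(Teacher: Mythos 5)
Your proof is correct and follows essentially the same route as the paper, which gives no self-contained argument but simply defers to Lemma S3 and footnote 6 (p.~596) of \cite{dlrs}: the Minkowski-mixture argument for convexity of $C$ and the ball of lotteries centered at the uniform lottery (whose support function on the normalized sphere $\U$ is the constant radius $c$) are precisely the constructions invoked there. Your added checks --- that the ball of radius $c\le 1/|X|$ stays inside $\Delta(X)$, that $q$ drops out for constant acts, and that no finite menu could work --- correctly fill in the details the paper leaves to the citation.
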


Part 2) (the `harder' part) can be verified from footnote 6 in pg. 596 of \cite{dlrs}. 
\vspace{2mm}\\
Virtually the same as in \cite{dlrs} we define as follows. 

\[
H = \cup_{r\geq 0}rC = \{r\sigma\in C(\subs):r\geq 0\text{ and }\sigma\in C\},
\]

as well as

\[
H^* = H-H = \{f\in C(\subs):f= f_1-f_2\text{ for some }f_1,f_2\in H\}. 
\]
Note now that $2)$ in Lemma \ref{thm:S3dlrs} implies that $\textbf{1}\in H^*$.

Perusing the proof of Lemma S10 in \cite{dlrs} one sees that it only uses the algebraic facts from Lemma \ref{thm:S3dlrs} and the fact that the denseness result from \cite{hormander} always holds whenever the ambient space $E$ mentioned in \cite{hormander} is finite dimensional. It follows therefore that Lemma S10 in \cite{dlrs} still holds in our setting. 

\begin{lemma}\label{thm:S10dlrs} 1) The set $H^*$ is a linear subspace of $C(\subs)$. 

2) For any $f\in H^*$, there exists $\sigma_1,\sigma_2\in C$ and $r>0$ with $f = r(\sigma_1-\sigma_2)$. 

3) The set $H^*$ is dense in $C(\subs)$ w.r.t. the topology of uniform convergence.

\end{lemma}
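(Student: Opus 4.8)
The plan is to dispatch parts 1) and 2), which are purely algebraic consequences of the convex-cone structure of $H$, and then to concentrate on part 3), where the density is genuinely substantive and rests on \cite{hormander}. I would begin by recording that $H$ is a convex cone in $C(\subs)$: it is a cone by the very definition $H=\cup_{r\ge 0}rC$, and it is convex because, by part 1) of Lemma \ref{thm:S3dlrs}, $C$ is convex and contains $0$, so $H$ is exactly the cone generated by the convex set $C$. A convex cone is closed under addition (writing $h_1+h_2=2\cdot\tfrac12(h_1+h_2)$ and using convexity followed by the cone property) and under nonnegative scaling. Part 1) is then immediate: $H^{*}=H-H$ is closed under addition, since $(h_1-h_2)+(h_1'-h_2')=(h_1+h_1')-(h_2+h_2')$; under negation, since $-(h_1-h_2)=h_2-h_1$; and under arbitrary real scaling, by treating nonnegative and negative scalars separately. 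Hence $H^{*}$ is a linear subspace of $C(\subs)$.

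For part 2), given $f=h_1-h_2\in H^{*}$ I would write $h_i=r_i\tau_i$ with $\tau_i\in C$ and $r_i\ge 0$, put $r=\max\{r_1,r_2,1\}>0$, and note that $\sigma_i:=(r_i/r)\tau_i\in C$ because $r_i/r\in[0,1]$ and $C$ is convex with $0\in C$, so $(r_i/r)\tau_i$ is a convex combination of $\tau_i$ and $0$. Then $f=r(\sigma_1-\sigma_2)$, as required; the trivial case $f=0$ is covered by $\sigma_1=\sigma_2=0$ and $r=1$.

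The real content is part 3). Here I would observe that, by parts 1) and 2), the elements of $H^{*}$ are precisely the differences of support functions of nonempty compact convex subsets of the finite-dimensional ambient space $E=\R^{|S||X|}$ (in which $\F$, and therefore every element of $\bar\A$, lies), restricted to the compact base $\subs=\Delta(S)\times\U$. The denseness theorem of \cite{hormander} asserts exactly that, when $E$ is finite dimensional, such differences are uniformly dense in $C(\subs)$; the mechanism behind it is the lattice structure of $C$, which is closed under pointwise maxima since $\max(\sigma_x,\sigma_y)=\sigma_{\overline{\operatorname{conv}}(x\cup y)}$ with $\overline{\operatorname{conv}}(x\cup y)\in\bar\A$, together with the fact that the constants lie in $H^{*}$ (part 2) of Lemma \ref{thm:S3dlrs} gives $\sigma_y\equiv c>0$, whence $\mathbf 1\in H^{*}$). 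The main obstacle will be confirming that the only role the domain $\subs$ played in the original \cite{dlrs} proof of Lemma S10 was through its compactness and the finite dimensionality of $E$, so that replacing the lottery domain of \cite{dlrs} by the act domain $\Delta(S)\times\U$ changes nothing; once this is checked, the algebraic identities above supply every ingredient that proof used and the density transfers directly.
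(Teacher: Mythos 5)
Your proposal is correct and follows essentially the same route as the paper: parts 1) and 2) by the convex-cone algebra supplied by Lemma \ref{thm:S3dlrs} (convexity of $C$ and $0\in C$), and part 3) by reducing to the denseness argument of Lemma S10 in \cite{dlrs} via \cite{hormander}, observing that the domain enters only through compactness, convexity and the finite dimensionality of the ambient space. The only ingredient you leave implicit is that the support functions also separate points of $\subs=\Delta(S)\times\U$ (which holds because distinct normalized pairs $(q,u)$ induce distinct linear functionals on $\F$); this is part of what the lattice Stone--Weierstrass mechanism behind \cite{hormander} requires, but it is equally implicit in the paper's own proof.
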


Lemma \ref{thm:S3dlrs}, Lemma \ref{thm:S10dlrs} and Th\'eor\`eme 9 in \cite{hormander} are used just as in \cite{dlrs} to prove the following Lemma.\footnote{Actually this Lemma holds true more generally: it suffices that $\subs$ be compact, convex.}

\begin{lemma}\label{thm:S11dlrs}
Any Lipschitz continuous linear functional $W:C\ra\R$ has a unique continuous linear extension to $C(\subs)$. If $W$ is monotone, then this extension is a positive linear functional.
\end{lemma}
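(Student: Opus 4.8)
The plan is to follow the Dekel--Lipman--Rustichini--Sarver template and build the extension in two algebraic stages, first to the cone $H$ and then to the linear span $H^{*}=H-H$, establish a Lipschitz bound that survives both stages, and finally invoke the bounded-linear-extension theorem on the dense subspace $H^{*}\subset C(\subs)$ guaranteed by part 3) of Lemma \ref{thm:S10dlrs}. Uniqueness is then automatic: two continuous extensions agreeing on the dense set $H^{*}$ (in particular on $C$) must coincide.

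First I would extend $W$ to $H$. Every element of $H$ has the form $r\sigma$ with $r\ge 0$ and $\sigma\in C$, so I set $\hat W(r\sigma)=rW(\sigma)$. Well-definedness is the technical crux: if $r_{1}\sigma_{1}=r_{2}\sigma_{2}$ with $r_{1}\ge r_{2}>0$, then $\sigma_{1}=\tfrac{r_{2}}{r_{1}}\sigma_{2}$ is a scaling-down of $\sigma_{2}$, which lies in $C$ because $C$ is convex and $0\in C$ (Lemma \ref{thm:S3dlrs}, part 1)); linearity of $W$ then gives $W(\sigma_{1})=\tfrac{r_{2}}{r_{1}}W(\sigma_{2})$, i.e. $r_{1}W(\sigma_{1})=r_{2}W(\sigma_{2})$ (the degenerate cases with some $r_i=0$ use $W(0)=0$). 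Using convexity of $C$ and $0\in C$ once more, for $f_{i}=r_{i}\sigma_{i}\in H$ I would write $f_{1}+f_{2}=(r_{1}+r_{2})\bigl(\tfrac{r_{1}}{r_1+r_2}\sigma_{1}+\tfrac{r_{2}}{r_1+r_2}\sigma_{2}\bigr)$, recognise the inner term as a convex combination in $C$, and read off additivity and positive homogeneity of $\hat W$ from the affinity of $W$. I then extend to $H^{*}$ by $\tilde W(f_{1}-f_{2})=\hat W(f_{1})-\hat W(f_{2})$; this is well-defined and linear because $f_{1}-f_{2}=g_{1}-g_{2}$ forces $f_{1}+g_{2}=g_{1}+f_{2}$ in $H$, so additivity of $\hat W$ gives equal values.

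For the norm bound I would use part 2) of Lemma \ref{thm:S10dlrs}: any $f\in H^{*}$ can be written $f=r(\sigma_{1}-\sigma_{2})$ with $r>0$ and $\sigma_{1},\sigma_{2}\in C$. Then $\tilde W(f)=r\bigl(W(\sigma_{1})-W(\sigma_{2})\bigr)$ and the Lipschitz constant $K$ of $W$ on $C$ gives $|\tilde W(f)|\le rK\|\sigma_{1}-\sigma_{2}\|_{\infty}=K\|f\|_{\infty}$, so $\tilde W$ is bounded on $H^{*}$ with the same constant. Since $H^{*}$ is a dense linear subspace of the Banach space $C(\subs)$ (sup norm), the bounded-linear-extension theorem yields a unique continuous linear functional $\bar W$ on $C(\subs)$ extending $\tilde W$, and in particular extending $W$.

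Finally, for the monotone case I would show positivity of $\bar W$. The key enabling fact is $\mathbf 1\in H^{*}$, which follows from part 2) of Lemma \ref{thm:S3dlrs} (a constant support function $\sigma_y\equiv c>0$ lies in $C$, so $\mathbf 1=\tfrac1c\sigma_y\in H\subset H^*$). I would first check that $\hat W$ is monotone on $H$: given $f_{1},f_{2}\in H$ with $f_{1}\ge f_{2}$, rescaling both by $R=\max(r_1,r_2)$ writes $f_{i}=R\tau_{i}$ with $\tau_{i}\in C$ (again using convexity and $0\in C$), and $\tau_{1}\ge\tau_{2}$ together with monotonicity of $W$ gives $\hat W(f_1)\ge\hat W(f_2)$; hence $\tilde W(f)\ge 0$ for every $f\in H^{*}$ with $f\ge 0$. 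For arbitrary $g\in C(\subs)$ with $g\ge 0$ and any $\epsilon>0$, density lets me pick $f\in H^{*}$ with $\|f-(g+\epsilon\mathbf 1)\|_{\infty}<\epsilon$, forcing $f\ge 0$ and hence $\tilde W(f)\ge 0$; boundedness of $\bar W$ then gives $\bar W(g)+\epsilon\bar W(\mathbf 1)\ge -K\epsilon$, and letting $\epsilon\downarrow 0$ yields $\bar W(g)\ge 0$. The main obstacles are the two well-definedness verifications (which hinge entirely on $C$ being convex with $0\in C$, so that scaling down and convex-combining stay inside $C$) and this last approximation step, where the translate by $\mathbf 1$ is what converts a one-sided sup-norm approximation into genuine sign control.
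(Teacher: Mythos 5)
Your proposal is correct and is essentially the same argument the paper relies on: the paper's proof of this lemma simply defers to \cite{dlrs}, whose proof is exactly your two-stage algebraic extension (to $H$ by positive homogeneity, to $H^{*}$ by differences), the Lipschitz bound via part 2) of Lemma \ref{thm:S10dlrs}, density of $H^{*}$ from part 3), and the $\mathbf{1}\in H^{*}$ translation trick for positivity. The only cosmetic remark is that $W(0)=0$ (needed not just in the degenerate cases but also in your scaling-down step $W(\sigma_1)=\tfrac{r_2}{r_1}W(\sigma_2)+(1-\tfrac{r_2}{r_1})W(0)$) is implicit in the hypothesis that $W$ is a linear functional on $C$ with $0\in C$.
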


This Lemma can be used word for word as in the proof of Lemma 18 in \cite{sarver} to prove the following.

\begin{lemma}\label{thm:sarver18}
If $\nu$ and $\nu'$ are two finite Borel measures on $\subs$ and if $\int_{\subs}\sigma_A(u)d\nu(u) = \int_{\subs}\sigma_A(u)d\nu'(u)$ for all $A\in \A$, then $\nu = \nu'$.  
\end{lemma}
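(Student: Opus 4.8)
The plan is to reproduce Sarver's Lemma 18 verbatim in this setting, the only conceptual input being that a finite Borel measure on the compact, convex, finite-dimensional space $\subs = \Delta(S)\times\U$ acts as a continuous linear functional on $C(\subs)$, together with the density and extension machinery already assembled in Lemmas \ref{thm:S10dlrs} and \ref{thm:S11dlrs}. Concretely, for a fixed finite Borel measure $\nu$ I would first define the functional $W_\nu:C\ra\R$ by $W_\nu(\sigma_A)=\int_{\subs}\sigma_A\,d\nu$. This is well-defined, since a support function determines its closed convex menu; it is linear on the cone $C$, because under Minkowski operations support functions satisfy $\sigma_{\alpha A+(1-\alpha)B}=\alpha\sigma_A+(1-\alpha)\sigma_B$ and $\sigma_{\lambda A}=\lambda\sigma_A$; and it is Lipschitz continuous for the sup norm, since $|W_\nu(\sigma)-W_\nu(\sigma')|\le \nu(\subs)\,\|\sigma-\sigma'\|_\infty$, which in turn is controlled via inequality \eqref{eq:Lipschitzineq}. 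The same construction applied to $\nu'$ yields $W_{\nu'}$.

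Next I would invoke Lemma \ref{thm:S11dlrs}: each of $W_\nu,W_{\nu'}$ admits a \emph{unique} continuous linear extension to all of $C(\subs)$. The key observation is that $f\mapsto\int_{\subs}f\,d\nu$ is already a continuous linear functional on $C(\subs)$ whose restriction to $C$ is $W_\nu$; by the uniqueness clause of Lemma \ref{thm:S11dlrs}, this extension \emph{is} integration against $\nu$, and likewise for $\nu'$. Now the hypothesis $\int_{\subs}\sigma_A\,d\nu=\int_{\subs}\sigma_A\,d\nu'$ for every $A\in\A$ says exactly that $W_\nu=W_{\nu'}$ on $C$; uniqueness of the continuous linear extension then forces the two extensions to coincide, i.e. $\int_{\subs}f\,d\nu=\int_{\subs}f\,d\nu'$ for all $f\in C(\subs)$. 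Since $\subs$ is a compact metric space, the uniqueness part of the Riesz representation theorem for finite Borel measures delivers $\nu=\nu'$. (Equivalently, one could pass to the signed measure $\nu-\nu'$, note it annihilates every $\sigma_A$ and hence, by homogeneity and additivity, all of the dense subspace $H^*$ from Lemma \ref{thm:S10dlrs}, and conclude by continuity; this is the same argument phrased without the extension functional.)

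The only delicate point is the identification of the abstract extension supplied by Lemma \ref{thm:S11dlrs} with honest integration against $\nu$, which rests on the uniqueness of that extension; once this is recorded, verifying that $W_\nu$ is linear and Lipschitz on the cone $C$ is routine bookkeeping with the additivity and positive homogeneity of support functions and with \eqref{eq:Lipschitzineq}. I do not anticipate a substantive obstacle, since all the genuinely analytic work — the density of $H^*$ in $C(\subs)$ and the existence and uniqueness of the continuous linear extension — has already been discharged, and the finite-dimensionality and compactness of $\subs$ make the Riesz step standard.
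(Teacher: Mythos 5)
Your argument is correct and follows essentially the same route as the paper: agreement of the two integration functionals on the support functions, extension by linearity to the dense subspace $H^*$ from Lemma \ref{thm:S10dlrs}, continuity in the sup norm to get agreement on all of $C(\subs)$, and then equality of the measures. The only cosmetic difference is that you invoke the uniqueness clause of the Riesz representation theorem for the last step, whereas the paper carries out the underlying argument explicitly (regularity of finite Borel measures on a compact metric space plus approximation of indicators of compact sets by continuous functions); these are the same fact.
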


The measures $\nu,\nu'$ are not necessarily probability measures as in Lemma 18 of \cite{sarver}. The proof nevertheless goes through because of the following facts: 

\begin{itemize}
\item The functional $C(\subs)\ni f\ra\int_{\subs}f(u)d\nu(u)$ is continuous in the maximum norm.
\item The functions $\sigma_A$ for all $A$ are dense in $C(\subs)$.  
\item All finite Borel measures on a compact space are regular.\footnote{This is true for finite Borel measures because it is true for Borel probability measures over metric spaces.} 
\end{itemize}
One uses these facts then to approximate for each compact $K\subset \subs$ the indicator function $\textbf{1}_K$ by continuous functions.\footnote{The procedure is called `mollification'.} From this and regularity one shows that the two measures coincide.
\vspace{4mm}\\
Before going on, we note that the proofs of Lemmas S1-S2 in \cite{as} go through word for word for $\F$ instead of the lottery space. In particular, a preference $\better$ in $\A$ can be extended to a preference in $\bar\A$ whenever it satisfies Continuity, Independence and Monotonicity. 

We now introduce the space $\p$ of convex polytopes in $\F$. Note that again as in \cite{as} we have $\p = \{co(A):A\in \A\}$. Just as in \cite{as} $\p$ is a mixture space and one can extend a preference $\better$ on $\A$ to $\p$. The extension is well-defined because $\p\subset\bar\A$. 

Just as in \cite{dlr} and \cite{dlrs} one uses A-1 till A-4 to find a function $V$ representing $\better$ over $\bar\A$ such that 
\begin{enumerate}
\item $V$ is affine.
\item $V$ is monotonic. 
\end{enumerate}

One then uses $V$ to define a function $W:H\ra\R$ through
\begin{equation}\label{eq:defW}
W(r\sigma) = rV(x_{\sigma}). 
\end{equation}
One extends $W$ to $H^*$ as in the case of lotteries by using Lemma \ref{thm:S10dlrs}. $W$ is again monotonic and also Lipschitz continuous by the same argument as in the proof of Theorem 2 in \cite{dlrs}.  
By the definition in \eqref{eq:defW} and the extension of $W$ we have by a simple argument that $V$ is Lipschitz continuous as well on $\bar\A$:

\[
V(x)-V(y) = W(\sigma_x-\sigma_y)\leq ||\sigma_x-\sigma_y||_{\infty}\leq d_h(x,y).
\]
Here in the last inequality we have used \eqref{eq:Lipschitzineq}. This fact lets us see that Lemma S3 and Lemma S4 from \cite{as} go through word for word for our setting of AA-acts. One can use this and Lemma \ref{thm:sarver18} to establish uniqueness as in the proof of Theorem S1 in \cite{as}. We note down the result in the following Lemma.

\begin{lemma}\label{thm:uniquenessDLR-SEU}
Suppose two probability measures $\mu_1,\mu_2$ over $\subs$ satisfy 
\[
A\better B\quad\equivalent \quad \int_{\subs}\max_{f\in A}q\cdot u(f)d\mu(q,u)\geq \int_{\subs}\max_{f\in B}q\cdot u(f)d\mu(q,u),\quad \forall A,B\in\A.
\]
Then $\mu_1=\mu_2$. 
\end{lemma}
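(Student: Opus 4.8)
The plan is to reduce the statement to Lemma \ref{thm:sarver18}, which already provides injectivity of the map $\nu\mapsto\bigl(A\mapsto\int_{\subs}\sigma_A\,d\nu\bigr)$ on finite Borel measures over $\subs$. Writing $V_i(A)=\int_{\subs}\max_{f\in A}q\cdot u(f)\,d\mu_i(q,u)=\int_{\subs}\sigma_A\,d\mu_i$ for $i=1,2$, it suffices to show $V_1(A)=V_2(A)$ for every $A\in\A$; then $\mu_1=\mu_2$ follows at once from Lemma \ref{thm:sarver18}. By hypothesis both $V_1$ and $V_2$ represent the same preference $\better$, so the whole task is to upgrade ``represent the same preference'' to ``agree as functionals''.

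First I would record that each $V_i$, viewed on the mixture space $\p$ of convex polytopes (to which $\better$ extends, as set up above, and on which support functions depend only on the convex hull), is affine with respect to Minkowski mixtures: since $u$ is mixture-linear one has $\sigma_{\lambda A+(1-\lambda)B}=\lambda\sigma_A+(1-\lambda)\sigma_B$, and integration against $\mu_i$ preserves this identity, so $V_i(\lambda A+(1-\lambda)B)=\lambda V_i(A)+(1-\lambda)V_i(B)$. The preference is non-trivial: by Lemma \ref{thm:S3dlrs} part 1) the singleton menu at the uniform lottery has support function identically $0$, while part 2) supplies a constant menu $y$ with $\sigma_y\equiv c>0$, so $V_i(y)>V_i(\{\text{uniform}\})$ for both $i$, giving $y\sbetter\{\text{uniform}\}$. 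Two affine functionals on a mixture space that represent the same non-trivial preference are related by a positive affine transformation (the uniqueness half of the mixture-space theorem), so there exist $\alpha>0$ and $\beta\in\R$ with $V_1=\alpha V_2+\beta$.

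It then remains to pin down $\alpha=1$ and $\beta=0$ by calibrating against those same two menus. Evaluating at the uniform-lottery singleton, whose support function is identically $0$, gives $V_1=V_2=0$ there, forcing $\beta=0$. Evaluating at the constant menu $y$ with $\sigma_y\equiv c>0$ and using that $\mu_1,\mu_2$ are \emph{probability} measures gives $V_1(y)=V_2(y)=c$; combined with $\beta=0$ this yields $\alpha c=c$, i.e.\ $\alpha=1$. Hence $V_1=V_2$ on all of $\A$, and Lemma \ref{thm:sarver18} delivers $\mu_1=\mu_2$.

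The substantive content is already discharged by Lemma \ref{thm:sarver18} (the support-function and mollification argument adapted from \cite{sarver}); the only work in the present step is the routine normalization of the affine transformation. The crucial inputs for that normalization are precisely the existence of a menu with vanishing support function and of a constant menu with strictly positive constant support function (Lemma \ref{thm:S3dlrs}), together with the fact that both measures have total mass one — it is this last point that rules out a nontrivial scaling $\alpha\neq 1$ and so gives genuine uniqueness of $\mu$ rather than uniqueness only up to a positive multiple.
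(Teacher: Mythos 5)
Your proposal is correct and follows essentially the same route as the paper, which establishes this lemma by deferring to the proof of Theorem S1 in \cite{as}: that argument is precisely the reduction to Lemma \ref{thm:sarver18} via the affine-representation uniqueness, calibrated by the zero support function of the uniform-lottery singleton and the constant menu $y$ with $\sigma_y\equiv c>0$ from Lemma \ref{thm:S3dlrs}. You have simply written out explicitly the normalization step that the paper leaves implicit by citation.
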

Combining overall Lemmas \ref{thm:as1}, \ref{thm:stateindeptmenus} and Lemma \ref{thm:uniquenessDLR-SEU} we have proved the following Theorem. 

\begin{theorem}[DLR-SEU]\label{thm:DLR-SEU}
A preference $\better$ over $\F$ satisfies Axiom 0 (the Classical Menu Preference Axiom) as well as Weak Dominance if and only if there exists a finite-support probability measure over $\Delta(S)\times \U$ such that 
\[
A\better B\quad\equivalent \quad \int_{\Delta(S)\times \U}\max_{f\in A}q\cdot u(f)d\mu(q,u)\geq \int_{\Delta(S)\times \U}\max_{f\in B}q\cdot u(f)d\mu(q,u),\quad \forall A,B\in\A.
\]
Moreover, $\mu$ is unique with this property. 
\end{theorem}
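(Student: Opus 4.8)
The plan is to assemble the three preceding lemmas into the two directions of the equivalence together with uniqueness, since all the substantive analytic work has already been carried out in those lemmas.

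\textbf{Sufficiency.} I would start from a preference $\better$ satisfying all parts of Axiom 0 together with Weak Dominance. First I would apply Lemma \ref{thm:as1}: parts A--E of Axiom 0 yield a representation of $\better$ by a probability measure over the space $\W$ of normalized state-dependent Bernoulli utilities, and part F (Finiteness) upgrades this to a finite-support measure. At this stage the representation may still involve genuinely state-dependent utilities, so to cut down to SEUs I would invoke Lemma \ref{thm:stateindeptmenus}: because $\better$ satisfies Weak Dominance, every $w$ in the support satisfies $w(s)(\cdot)\approx w(s')(\cdot)$ for all $s,s'\in S$, and the change-of-measure construction in the proof of that lemma rewrites each such $w$ as $w\cdot f = A(w)\sum_s q(s)u(f(s)) + B(w)$ with $(q,u)\in\Delta(S)\times\U$. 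Normalizing the constants $A(w),B(w)$ away produces a finite-support probability measure $\mu$ on $\Delta(S)\times\U$ that gives exactly the representation in the statement.

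\textbf{Necessity.} Here I would run the same two lemmas in reverse. Given the SEU representation over $\Delta(S)\times\U$, I would observe that it is a special instance of the representation in Lemma \ref{thm:as1} with support confined to the subspace of SEUs; hence the functional $V(A)=\int\max_{f\in A}q\cdot u(f)\,d\mu$ is affine, monotone, continuous, nontrivial and finitely supported, which delivers parts A--F of Axiom 0. Finally, since every element of $supp(\mu)$ is an SEU, the necessity direction of Lemma \ref{thm:stateindeptmenus} yields Weak Dominance.

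\textbf{Uniqueness.} This is immediate from Lemma \ref{thm:uniquenessDLR-SEU}, which shows that any two probability measures on $\subs=\Delta(S)\times\U$ inducing the same menu preference must coincide. The only point requiring a little care is the bookkeeping of the two normalizations: the earlier machinery already pins down the marginal over $\U$ through the lottery-menu argument (via Proposition 3 of \cite{as}), while Lemma \ref{thm:uniquenessDLR-SEU} closes the remaining gap for the belief marginal over $\Delta(S)$. I therefore expect no genuine obstacle in this final step—the content lives entirely in Lemmas \ref{thm:as1}, \ref{thm:stateindeptmenus} and \ref{thm:uniquenessDLR-SEU}—so that the proof reduces to checking that the change of measure in the Weak Dominance step preserves the probability-measure normalization and that the supports match across the two directions.
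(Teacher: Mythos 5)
Your proposal is correct and follows exactly the paper's route: the paper's own "proof" is the one-line observation that the theorem follows by combining Lemmas \ref{thm:as1}, \ref{thm:stateindeptmenus} and \ref{thm:uniquenessDLR-SEU}, which is precisely the assembly you carry out (with the two directions of each lemma supplying sufficiency and necessity, and the uniqueness lemma closing the argument). Your additional remarks on the change-of-measure normalization and support matching are a faithful elaboration of what those lemmas already establish.
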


We close this sub-subsection by writing down how a DLR-SEU changes, if instead of normalizing Bernoulli utilities in $\U$ we instead look at general Bernoulli utilities $u\in\R^Z$. This corresponds to Proposition 3 in \cite{as}. First we define for completeness and reference the general DLR-SEU representation.

\begin{definition}\label{thm:DLR-SEUgeneraldef}
A DLR-SEU representation for a preference $\better$ over $\A$ is a triple $(S,SubS,\mu)$ where $S$ is a finite space of objective states, $SubS$ is a finite space of subjective states, i.e. a finite subset of $\Delta(S)\times \R^Z$ and $\mu$ is a measure over $SubS$ so that 
\begin{enumerate}
\item $A\better B$ if and only if $V(A)\geq V(B)$ where $V:\A\ra\R$ is defined by 
\[
V(A) = \sum_{(q,u)\in SubS}\mu(q,u)\cdot\left(\max_{f\in A}(q\cdot u)(f)\right).
\]
\item \emph{Non-Redundancy:} Any two distinct $(q,u),(q',u')\in SubS$ represent different SEU preferences over $\F$.\footnote{This is equivalent to the statement that either $q\neq q'$ or $q=q'$ and $u\not\approx u'$.} 
\item Minimality: $\mu(q,u)>0$ for every $(q,u)\in SubS$. 
\end{enumerate}
\end{definition}

This is the identification result for general DLR-SEU representations. 
\begin{proposition}\label{thm:DLR-SEUgeneraluniqueness}
Two DLR-SEU representations $(S_i,SubS_i,\mu_i),i=1,2$ as in Definition \ref{thm:DLR-SEUdef} represent the same preference $\better$ over $\A$ if and only if there exists a bijection $\gamma:S_1\ra S_2$, a constant $c>0$, a bijection $\Gamma:SubS_1\ra SubS_2$ and functions $a:SubS_1\ra(0,\infty)$, $b:SubS_1\ra\R$ such that 
\begin{enumerate}
\item For all $(q,u)\in SubS_1$ we have 
\begin{itemize}
\item $q(s) = \pi_{q}(\Gamma(q,u))(\gamma(s))$ for all $s\in S_1$ and $\pi_{u}(\Gamma(q,u)) = a(q,u)u+b(q,u)$.
\end{itemize}
\item For all $(q,u)\in SubS_1$ we have
\begin{itemize}
\item $\mu_1(q,u) = \frac{c}{a(q,u)}\mu_2(\Gamma(q,u))$.
\end{itemize}
\end{enumerate}
\end{proposition}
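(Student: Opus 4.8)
The plan is to treat this as the ``un-normalized'' companion of the normalized uniqueness already in hand (Lemma \ref{thm:uniquenessDLR-SEU}), in the same spirit in which Proposition 3 of \cite{as} de-normalizes their Theorem S1. I would prove the two directions separately; sufficiency is a direct verification, while in necessity almost all the content is bookkeeping that reduces to Lemma \ref{thm:uniquenessDLR-SEU}.

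\emph{Sufficiency.} Assume the data $(\gamma,c,\Gamma,a,b)$ are given, and for $(q,u)\in SubS_1$ write $(q',u')=\Gamma(q,u)$. Identifying an act over $S_1$ with the corresponding act over $S_2$ through $\gamma$ and using $q(s)=\pi_q(\Gamma(q,u))(\gamma(s))$ together with $\pi_u(\Gamma(q,u))=a(q,u)u+b(q,u)\mathbf 1$, the positive homogeneity and constant-translation behaviour of the support function gives, for every menu $A$,
\[
\max_{f\in A}q'\cdot u'(f)=a(q,u)\,\max_{f\in A}q\cdot u(f)+b(q,u).
\]
Substituting this into $V_2$, reindexing the sum by $\Gamma$, and invoking the weight relation of part 2 collapses the additive terms into a single constant and exhibits $V_2$ as a positive affine transformation of $V_1$. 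Since positive affine transformations of the value function preserve the represented preference, both representations induce the same $\better$.

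\emph{Necessity.} Assume the two representations induce the same $\better$. First I would normalize: for each $(q,u)\in SubS_i$ replace $u$ by its doubly-normalized version $\hat u=\alpha_i(q,u)u+\beta_i(q,u)\mathbf 1\in\U$ with $\alpha_i(q,u)>0$, and absorb the scalar into the weights, defining a probability measure $\nu_i$ on $\Delta(S_i)\times\U$ proportional to $\mu_i/\alpha_i$ after renormalization; by the homogeneity identity above each $\nu_i$ still represents $\better$. Non-Redundancy together with the separation Lemma (Lemma 1 in the main paper) shows that the support of each $\nu_i$ contains one representative per SEU-preference, so matching SEU-preferences yields a well-defined bijection $\Gamma$ of subjective states (its utility part, and the affine data $(a,b)$, being exactly what restriction to constant acts pins down via Proposition 3 of \cite{as}); reading off the belief-components of matched states identifies the objective state spaces and produces the bijection $\gamma$ with $q=\pi_q(\Gamma(q,u))\circ\gamma$. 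Having identified $\Delta(S_1)\times\U$ with $\Delta(S_2)\times\U$ in this way, Lemma \ref{thm:uniquenessDLR-SEU} forces $\nu_1=\nu_2$.

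Finally I would unwind the normalization: the only content of the change of measure is the positive scalar pulled out of each utility in passing to $\U$, plus one global constant from renormalizing the $\nu_i$ to probability measures, and matching the coefficients of the linearly independent (by Non-Redundancy) functionals $\max_{f}q\cdot u(\cdot)$ records how $a(q,u)$ enters and yields the proportionality of part 2. The main obstacle is precisely this necessity bookkeeping: one must verify that the bijections $\Gamma,\gamma$ and the scalars $\alpha_i,\beta_i$ obtained from the constant-act restriction are the \emph{same} objects that make the normalized measures coincide under Lemma \ref{thm:uniquenessDLR-SEU}, and that assembling all the normalization scalars leaves a single global constant $c$ rather than state-dependent slack. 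Non-Redundancy is what guarantees the support correspondence is single-valued so that $\Gamma$ is well defined, and one must take care that relabeling objective states by $\gamma$ preserves the maximum over \emph{all} acts and not merely over constant acts.
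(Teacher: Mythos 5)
Your proposal matches the paper's own proof, which is essentially a pointer to Proposition 3 of \cite{as}: one direction by direct verification, the other by reducing to the canonical subjective state space $\Delta(S)\times\U$ and invoking Lemma \ref{thm:uniquenessDLR-SEU}, with the rest being exactly the normalization bookkeeping you describe. One detail to watch in the sufficiency computation: with the statement's conventions as written ($\pi_u(\Gamma(q,u))=a(q,u)u+b(q,u)$ paired with $\mu_1(q,u)=\frac{c}{a(q,u)}\mu_2(\Gamma(q,u))$) the coefficient of $\max_{f\in A}q\cdot u(f)$ in $V_2$ comes out as $\frac{a(q,u)^2}{c}\mu_1(q,u)$, which does not collapse to a constant multiple; the algebra only closes when the affine map is read in the direction $u=a(q,u)\,\pi_u(\Gamma(q,u))+b(q,u)$ (as in \cite{as}), so this is a convention mismatch in the statement rather than a gap in your argument.
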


\begin{proof}
This is almost word for word as the proof of Proposition 3 in \cite{as}: one direction is trivial and for the other direction one reduces to the canonical subjective state space $\Delta(S)\times\U$ and uses Lemma \ref{thm:uniquenessDLR-SEU} there besides the completely analogous algebraic manipulations. 
\end{proof}

We gather together the results for the DLR-SEU model.

\begin{theorem}
\label{thm:DLR-SEUtotal}
Assume that $\better$ over $\A$ satisfies the Classical Menu Preference Axiom (Axiom 0) and Weak Dominance. There exists then a finite support probability distribution $\nu$ over $\Delta(S)\times\U$ so that $V$ below represents $\better$.
\[
V(A) = \int_{\Delta(S)\times\U} \max_{f\in A}\sum_{s\in S}q(s)\cdot u(f(s))d\nu(q,u).
\]
$\nu$ is essentially unique as described in Proposition \ref{thm:DLR-SEUgeneraluniqueness} and it is unique whenever it is required that the support is contained in $\Delta(S)\times\U$. 
\end{theorem}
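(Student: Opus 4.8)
The plan is to assemble Theorem \ref{thm:DLR-SEUtotal} from the three structural results already established, handling existence, the converse implication, and the two uniqueness statements in turn. Since the theorem is a consolidation, the substance lies in checking that the pieces dovetail rather than in producing any new argument.

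First, for the existence of $\nu$, I would start with a preference $\better$ satisfying Axiom 0 and Weak Dominance. Parts A--E of Axiom 0 put us in the hypotheses of Lemma \ref{thm:as1}, which yields a unique probability measure over $\W$ representing $\better$; part F (Finiteness) of Axiom 0 then forces this measure to have finite support, through the ``if and only if'' clause of Lemma \ref{thm:as1}. Next I would invoke Lemma \ref{thm:stateindeptmenus}: Weak Dominance is exactly equivalent to every element $w$ of the support being an SEU, i.e. $w(s)(\cdot) = q(s)\,u(\cdot)$ for some $u\in\U$ and $q(s)>0$ independent of $s$. The explicit change of measure carried out in Step 3 of the proof of Lemma \ref{thm:stateindeptmenus} then rewrites the representation as an integral against a finite-support probability measure $\nu$ over $\Delta(S)\times\U$, producing precisely the functional $V$ in the statement.

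For the converse direction, I would verify each clause of Axiom 0 directly from the integral formula: weak order and continuity are immediate, Independence follows from the linearity of $f\mapsto q\cdot u(f)$ together with affinity of the integral, Monotonicity/Option Value from $A\subseteq B \Rightarrow \max_{f\in A}\leq \max_{f\in B}$, Non-Triviality from any non-constant $u$ in the support, and Finiteness from finiteness of the support. Weak Dominance is then exactly Step 1 of the proof of Lemma \ref{thm:stateindeptmenus}, where the state-by-state flexibility in $\bar A$ dominates that in $A$. This closes the ``if and only if.''

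Finally, the two uniqueness statements are quoted essentially verbatim: uniqueness under the normalization that the support lies in $\Delta(S)\times\U$ is Lemma \ref{thm:uniquenessDLR-SEU}, while the essential uniqueness up to the bijections and positive affine rescalings for general Bernoulli utilities is Proposition \ref{thm:DLR-SEUgeneraluniqueness}. The only delicate point---and the closest thing to an obstacle---is the bookkeeping across the passage from $\W$ to $\Delta(S)\times\U$: one must confirm that the normalization sending each $w(s_1)(\cdot)$ into $\U$ and extracting $q$ from the affine coefficients $a(s)$ neither collapses nor duplicates distinct SEU preferences, so that the resulting support is a genuine finite subjective state space on which Lemma \ref{thm:uniquenessDLR-SEU} can be applied.
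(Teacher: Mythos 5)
Your proposal is correct and matches the paper's (implicit) proof exactly: the theorem is stated in the paper as a consolidation, obtained by combining Lemma \ref{thm:as1} with Lemma \ref{thm:stateindeptmenus} for existence of $\nu$ over $\Delta(S)\times\U$, Lemma \ref{thm:uniquenessDLR-SEU} for uniqueness under the $\U$-normalization, and Proposition \ref{thm:DLR-SEUgeneraluniqueness} for essential uniqueness in general. Your extra verification of the converse direction and the remark on the non-collapsing of distinct SEUs under the $\W\to\Delta(S)\times\U$ normalization are sound and consistent with what the paper does in Theorem \ref{thm:DLR-SEU} and Step 3 of Lemma \ref{thm:stateindeptmenus}.
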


\subsubsection{Relation between Strong Dominance and Weak Dominance (SEU version)}


Let's recall the axioms from \cite{dlst}. These are for a preference $\better$ over menus of acts $\F$ as we defined in this section of the appendix (we are still only looking at a finite prize space).

\begin{enumerate}
\item DLST-A1: Weak Order.
\item DLST-A2: vNM Continuity: if $A\sbetter B\sbetter C$ then there are $\alpha, \beta\in (0,1)$, such that $\alpha A+(1-\alpha)C\sbetter B\sbetter \beta A+(1-\beta)C$.
\item DLST-A3: Non-triviality: there exists lotteries $p,q\in\Delta(X)$ with $\{p\}\sbetter \{q\}$.
\item DLST-A4: Monotonicity: $A\subseteq B$ implies $A\better B$.
\item DLST-A5: Strong Dominance: if $f\in A$ and $\{f(s)\}\better \{g(s)\}$ then $A\sim A\cup\{g\}$  
\item DLST-A6: Finiteness: there exits a $K\in\N$ such that for all $A\in \A$ there exists $B\subset A$ with $|B|\leq K$ and so that $B\sim A$. 
\end{enumerate}
\cite{dlst} prove the following.

\begin{proposition}\label{thm:DLST-SEU}
If $\better$ satisfies $DLST-A1 - DLST-A5$ then there exists a probability measure $\nu$ over $\Delta(S)$ and a Bernoulli utility function $u:\Delta(X)\ra\R$ with 
\[
V(A) = \int_{\Delta(S)}\max_{f\in A}(q\cdot u)(f)\nu(dq).
\]
$\nu$ is unique. Finally, $\nu$ has finite support if and only if $\better$ satisfies Finiteness (DLST-A6). 
\end{proposition}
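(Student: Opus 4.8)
The plan is to obtain Proposition \ref{thm:DLST-SEU} as a specialization of the DLR-SEU representation already established in Theorem \ref{thm:DLR-SEU}, and then to use Strong Dominance (DLST-A5) to collapse the subjective taste to a single Bernoulli utility. The target functional $V(A)=\int_{\Delta(S)}\max_{f\in A}(q\cdot u)(f)\,\nu(dq)$ is exactly a DLR-SEU representation whose measure happens to be carried by $\Delta(S)\times\{u\}$ for one $u\in\U$, so the argument splits into (i) producing a DLR-SEU representation from the DLST axioms, (ii) showing Strong Dominance forces a single taste, and (iii) transferring uniqueness and the finiteness equivalence.

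For (i) I would match axioms. Weak Order, Monotonicity, Non-Triviality and Finiteness are common to both lists, so the only substantive reduction on the ``dominance'' side is that \emph{Strong Dominance implies Weak Dominance}. To see this, fix a menu $A$ and any $f\in A$, and let $s^\ast$ maximise the singleton ranking among the finitely many lotteries $\{f(s)\}_{s\in S}$; then the constant act $\delta_{f(s^\ast)}\in\bar A$ statewise dominates $f$, so DLST-A5 makes $f$ redundant in $\bar A\cup A$. Since every dominating act used lies in $\bar A$ and is never removed, iterating over $f\in A$ gives $\bar A\cup A\sim\bar A$, and Monotonicity then yields $\bar A\better A$, i.e. Weak Dominance. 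The remaining axioms of Theorem \ref{thm:DLR-SEU} are Independence and full (Hausdorff) Continuity, neither of which appears explicitly in the DLST list; closing this gap is the main obstacle (discussed below).

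Granting a DLR-SEU representation $V(A)=\int_{\Delta(S)\times\U}\max_{f\in A}(q\cdot u)(f)\,d\mu(q,u)$ with finite support, step (ii) is clean and is the heart of the argument. On constant acts the representation reduces to the singleton ranking $V(\{p\})=\bar u(p)$ with $\bar u:=\int u\,d\mu$, so ``$\{f(s)\}\better\{g(s)\}$'' means $\bar u(f(s))\ge\bar u(g(s))$. Applying DLST-A5 to the singleton constant-act menu $A=\{p\}$ and a statewise-dominated constant act $p'$ (i.e. $\bar u(p)\ge\bar u(p')$) forces $V(\{p,p'\})=V(\{p\})$, hence $\int\max\{u(p),u(p')\}\,d\mu=\int u(p)\,d\mu$, which by positivity of the finitely many weights gives $u(p)\ge u(p')$ for \emph{every} $(q,u)\in supp(\mu)$. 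Thus $\bar u(p)\ge\bar u(p')\Rightarrow u(p)\ge u(p')$ for each such $u$, so each $u$ induces the same ordering over $\Delta(X)$ as $\bar u$ and, by the vNM theorem (both non-constant by Non-Triviality), $u\approx\bar u$. Normalising to the single $u:=\bar u\in\U$, the measure $\mu$ lives on $\Delta(S)\times\{u\}$, and its marginal $\nu$ on $\Delta(S)$ delivers the claimed representation.

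The hard part is the axiom bridge of step (i): DLST-A2 is only vNM continuity and Independence is absent, whereas the target representation is menu-affine and fully continuous. I would handle this by transporting the preference through the affine, injective map $T$ of Lemma \ref{thm:as1}, where finite dimensionality of $\F$ lets vNM continuity together with Finiteness be upgraded to Hausdorff continuity; the menu-affine structure needed for Independence is precisely what \cite{dlst} establish from their axioms, and over a finite prize space their derivation applies verbatim to AA-acts, so one may alternatively invoke \cite{dlst} directly at this point. Once the DLR-SEU representation is in hand, uniqueness of $\nu$ follows from the uniqueness clause of Theorem \ref{thm:DLR-SEU} (equivalently Proposition \ref{thm:DLR-SEUgeneraluniqueness}) with $u$ pinned down up to positive affine transformation by the singleton ranking, and the ``finite support iff Finiteness'' equivalence transfers directly, completing the proof.
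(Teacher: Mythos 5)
There is a genuine gap. Your plan routes everything through Theorem \ref{thm:DLR-SEU}, but that theorem cannot deliver the first claim of the proposition. First, its hypothesis is Axiom 0 in full, which includes Independence, Hausdorff Continuity \emph{and} Finiteness (part F), none of which appear among DLST-A1--A5; Finiteness is DLST-A6 and is deliberately excluded from the hypothesis of the representation claim, so your proposed upgrade of vNM continuity ``together with Finiteness'' uses an axiom you are not entitled to. Second, and more structurally, Theorem \ref{thm:DLR-SEU} only produces \emph{finite-support} measures, whereas the proposition asserts a representation with a general $\nu\in\Delta(\Delta(S))$ under A1--A5 alone; a finite-support theorem cannot be the source of that statement. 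Your fallback (``one may alternatively invoke \cite{dlst} directly at this point'') is in fact what the paper does: its proof of the representation and of uniqueness is simply a citation of Theorem 1 and Appendix A of \cite{dlst}, with no new argument.

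Where your proposal does align with the paper is the finiteness equivalence: the paper proves exactly your observation that Strong Dominance plus Monotonicity imply Weak Dominance (Proposition \ref{thm:relationDomStateIndependence}, with essentially your iteration over $f\in A$ using the best lottery in $\{f(s)\}_{s\in S}$), and then invokes Theorem \ref{thm:DLR-SEU} to conclude that $\nu$ has finite support if and only if DLST-A6 holds --- legitimately at that stage, because the already-represented preference supplies Independence and continuity for free. Your step (ii), collapsing the taste to a single $u$ via Strong Dominance applied to constant acts, is a correct and nice argument, but it is only needed on your route; the paper never needs it since DLST's theorem already outputs a single Bernoulli utility.
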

\begin{proof}
See Theorem 1 and Appendix A in \cite{dlst}. The \emph{Finiteness} part follows from Theorem \ref{thm:DLR-SEU}, once we show in Proposition \ref{thm:relationDomStateIndependence} below that Strong Dominance implies Weak Dominance under the Classical Axioms of Menu Preference.  
\end{proof}
The main behavioral axiom in this Theorem is DLST-A5: Strong Dominance. It says that adding an act which delivers dominated outcomes (according to $u$) at each realization of the objective states doesn't change the value of a menu. If the $u$ is allowed to be stochastic, then one may still compare lotteries but now only through an average Bernoulli utility $u$. Thus it is impossible to rank lotteries unambiguously ex-ante. Therefore the DLST axiom does not work anymore with our more general model: once the agent knows precisely the $(q,u)$ she is facing she might decide to pick an act, which even though dominated state by state \emph{on average}, is better given the conditional information that the subjective state is $(q,u)$. 

In the following we clarify the relation between our Weak Dominance and the Strong Dominance Axiom from \cite{dlst}. 

\begin{proposition}\label{thm:relationDomStateIndependence}
1) Strong Dominance and Monotonicity imply Weak Dominance. 

2) There are models where subjective states are SEU-s which don't satisfy Strong Dominance even though they satisfy Monotonicity and Weak Dominance.
\end{proposition}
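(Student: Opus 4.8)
For part 1, the plan is to exploit completeness of the induced preference over constant acts, which is available from the background Weak Order assumption shared by both axiom systems. Fix a menu $A$ and recall that $\bar A$ is the finite collection of all constant acts $\delta_l$ with $l=f(s')$ for some $f\in A$ and $s'\in S$. I would first show $\bar A\sim \bar A\cup A$ by adjoining the acts of $A$ to $\bar A$ one at a time. For a fixed $f\in A$, since the singleton preference is a complete weak order there is a $\better$-best lottery $f(s^*)$ among the finitely many $\{f(s):s\in S\}$; the constant act $\delta_{f(s^*)}$ already belongs to $\bar A$ and satisfies $\{\delta_{f(s^*)}(s)\}=\{f(s^*)\}\better\{f(s)\}$ for every $s\in S$. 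Strong Dominance then yields $\bar A\sim \bar A\cup\{f\}$, and because $\bar A$ sits inside every intermediate menu, iterating over the finitely many $f\in A$ and using transitivity gives $\bar A\sim \bar A\cup A$. Finally $A\subseteq \bar A\cup A$ together with Monotonicity gives $\bar A\cup A\better A$, hence $\bar A\better A$, which is precisely Weak Dominance.

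For part 2, the plan is to display an explicit model with taste heterogeneity. Take $X=\{x,y\}$, $S=\{s_1,s_2\}$ and a two-point measure on subjective states $(q_1,u_1)$ and $(q_2,u_2)$ with distinct beliefs $q_1\neq q_2$ but opposing tastes, $u_1$ strictly preferring $x$ and $u_2$ strictly preferring $y$ (concretely $u_1=-u_2$ in $\U$), and weights $\mu_1>\mu_2>0$. Since a constant act is evaluated belief-independently, the induced preference over constant acts is governed by the averaged utility $\mu_1u_1+\mu_2u_2=(\mu_1-\mu_2)u_1$, which strictly ranks $\delta_x\sbetter\delta_y$. Being a finite-support SEU model, this preference satisfies Monotonicity (immediate from the inner maximization) and Weak Dominance (Step 1 of Lemma \ref{thm:stateindeptmenus}). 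To defeat Strong Dominance I would take $A=\{\delta_x\}$ and $g=\delta_y$: then $\{f(s)\}=\{\delta_x\}\sbetter\{\delta_y\}=\{g(s)\}$ for all $s$, yet a direct computation gives $V(\{\delta_x,\delta_y\})>V(\{\delta_x\})$, because the $y$-favoring state $(q_2,u_2)$ strictly gains from having $\delta_y$ available. Thus $A\cup\{g\}\sbetter A$, contradicting the indifference $A\sim A\cup\{g\}$ demanded by Strong Dominance.

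The only genuinely conceptual step is part 2: one must notice that Strong Dominance can already fail on menus of \emph{constant acts} once tastes are heterogeneous, since a lottery that is dominated under the averaged singleton preference may still be the uniquely optimal option at some realized SEU. This is exactly why belief heterogeneity alone (the \cite{dlst} setting with a single $u$) cannot furnish a counterexample, and any valid model must employ at least two non-equivalent Bernoulli utilities. The work in part 1 is by contrast routine bookkeeping; the two points requiring care are that completeness of the singleton preference supplies a \emph{single} constant act dominating $f$ at every state simultaneously, and that this dominating act remains in the menu throughout the one-at-a-time induction, which holds because $\bar A$ is contained in each intermediate menu.
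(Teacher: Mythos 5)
Your proof of part 1 is correct and is essentially the paper's own argument: pick the $\better$-best lottery among $\{f(s):s\in S\}$, note that the corresponding constant act already lies in $\bar A$, apply Strong Dominance act by act, and finish with Monotonicity and transitivity. Part 2 is also correct but uses a genuinely different counterexample. The paper fixes a \emph{common} belief $q$, lets the taste be stochastic ($u_1$ vs.\ $u_2$), and builds two \emph{non-constant} acts $f,g$ such that $f$ dominates $g$ state-by-state under the averaged singleton utility while $g$ is strictly optimal under the realized state $(q,u_2)$; this requires choosing explicit numerical values for $q,\mu_i,u_i(f_j),u_i(g_j)$ satisfying a small system of inequalities. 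You instead take two \emph{constant} acts $\delta_x,\delta_y$ with opposing tastes $u_1=-u_2$ and unequal weights $\mu_1>\mu_2$, so the averaged singleton preference is governed by $(\mu_1-\mu_2)u_1$ while the minority state strictly gains from keeping $\delta_y$. Your version is more minimal: it makes beliefs entirely irrelevant and isolates taste heterogeneity as the sole driver of the failure, which also cleanly explains why the single-$u$ setting of \cite{dlst} cannot violate Strong Dominance. The paper's version, by working with non-constant acts, additionally illustrates how the failure manifests on the act (rather than lottery) level, but both examples establish part 2 equally well.
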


\begin{proof}
1) Note first, that $\better$ implies a ranking $\better_{\Delta}$ over $\Delta(X)$ by defining 
\[
p\better_{\Delta} q\quad\equivalent\quad \{p\}\better \{q\}.
\]

Consider a menu $A$ and an act $f\in A$. Pick $f(s')$ for some $s'\in S$ so that it is the highest ranked lottery from $supp(f)$ according to $\better_{\Delta}$. Let $h_f$ the constant act which gives $f(s')$ in each state $s\in S$. Note that $h_f\in \bar A$ by definition of $\bar A$. Obviously it holds $\{h_f(s)\}\better \{f(s)\}$ for all $s\in S$. Strong Dominance now gives $\{f\}\cup \bar A\sim \bar A$. We can repeat this process with all $f\in A$ to get $A\cup\bar A\sim \bar A$. But now Monotonicity implies $A\worse A\cup\bar A\sim \bar A$. Since $A$ was arbitrary, this implies that Weak Dominance holds. 

2) See the Example below following the Lemma.
\end{proof}

\paragraph{Example: Option Value model with SEU-s which doesn't satisfy Strong Dominance.}
Take as objective states $S=\{s_1,s_2\}$. 
Consider a model with $\mu$ over subjective states $(q,u)$ such that $\mu = \mu_1\delta_{(q,u_1)}+ \mu_2\delta_{(q,u_2)}$. I.e. the belief about the objective state is fixed, while the Bernoulli utility is stochastic and can be $u_1$ or $u_2$. We identify the belief $q$ by the probability $q$ of the state $s_2$. We write an act $f:S\ra\Delta(X)$ as $f= (f_1,f_2)$ where $f_i  = f(s_i)$. The condition that $\{f(s)\}\better \{g(s)\}$ for all $s\in S$ is then equivalent to 

\begin{equation}\label{eq:help1dlst}
\mu_1 u_1(f_i) +\mu_2 u_2(f_i)\geq  \mu_1 u_1(g_i) +\mu_2 u_2(g_i), \text{ for all }i=1,2. 
\end{equation}

We require that for $u_2$ the act $g$ is optimal, while for $u_1$ the act $f$ is optimal. Thus, we require the following inequalities.

\begin{equation}\label{eq:help2dlst}
(1-q)u_1(f_1) + qu_1(f_2)>(1-q)u_1(g_1) + qu_1(g_2),\quad (1-q)u_2(g_1) + qu_2(g_2)>(1-q)u_2(f_1) + qu_2(f_2).
\end{equation}

The condition that $\{f,g\}\sbetter \{f\}$ (which would imply that Strong Dominance is violated) is then 
\begin{align*}
&\mu_1\left((1-q)u_1(f_1) + qu_1(f_2)\right) + \mu_2\left((1-q)u_2(g_1) + qu_2(g_2)\right)\\
&> \mu_1\left((1-q)u_1(f_1) + qu_1(f_2)\right) + \mu_2\left((1-q)u_2(f_1) + qu_2(f_2)\right).
\end{align*}
But this is implied by the second part of \eqref{eq:help2dlst}. Thus we need to find $q,\mu_1,\mu_2$ and values of $u_i(f_j), u_i(g_j),i,j=1,2$ so that both of 
\eqref{eq:help1dlst} and \eqref{eq:help2dlst} hold true. Such values are for example 

\[
q=\mu_2=\frac{1}{3}, u_1(f_1) = u_2(g_1) = u_2(g_2) = u_1(f_2) =\frac{1}{2}, u_1(g_1) = u_1(g_2) = u_2(f_1) = u_2(f_2) = 0.
\]
Note that it is without loss of generality to pick these Bernoulli values and still have Bernoulli utilities $u_1,u_2\in \U$. This is because one can always enlarge the prize space $X$ as needed to still satisfy the normalizations required in $\U$.

\section{Sophistication}\label{sec:soph}

\subsection{Extending the Ahn-Sarver results to DLR-SEU and Lu's model (DLR-R-SEU).}

We first state the representation we are after in this part. We are in the two-period setting as in \cite{as}, i.e. the observable is preference over menus $A\in \A$ and also an aSCF $\rho$ encoding ex-post choice from a menu. We work with a \emph{finite} prize space $X$ in this subsection. 

\begin{definition}[DLR-R-SEU]\label{thm:dlrrseudef}
Let $(\better, \lambda)$ be a pair consisting of a preference over menus and an aSCF. A DLR-R-SEU representation of $(\better, \lambda)$ is a pair $(\mu, \tau)$ consisting of a finite-support probability measure $\mu$ over $\Delta(S)\times \R^X$ and a tie-breaker rule $\tau$ indexed by the elements of the support of $\mu$ such that $\mu$ gives a DLR-SEU representation for $\better$ and $(\mu,\tau)$ a R-SEU representation for $\rho$. 
\end{definition}

The axioms which make the connection between the two representations are now the following ones. 

\paragraph{Axiom AS-1} If $A\cup \{f\}\sbetter A$, then $\rho(f,A\cup\{f\},s)>0$ for some $s\in S$. 
\vspace{2mm}
\paragraph{Axiom AS-2} For any $A\in \A$ and $f\in A$, if there exists $\epsilon>0$ such that $\rho(g,B,s)>0$ for some $s\in S$ whenever $d(f,g)<\epsilon$ and $d_H(A,B)<\epsilon$, then $A\cup\{f\}\sbetter A$.\footnote{Note, that due to the definition of an aSCF we can write in AS-2 the part `$\rho(g,B,s)>0$ for some $s\in S$' as $\rho(g,B)>0$  and we can write the part `$\rho(f,A\cup\{f\},s)>0$ for some $s\in S$ ' as $\rho(f,A\cup\{f\})>0$ in AS-1.} 

The version of the Theorem from \cite{as} is then the following. 

\begin{theorem}\label{thm:dlrrseuthm}
Suppose that $\better$ has a DLR-SEU representation and $\rho$ has a R-SEU representation. Then the pair $(\better, \rho)$ satisfies axioms AS-1 and AS-2 if and only if it has a DLR-R-SEU representation.
\end{theorem}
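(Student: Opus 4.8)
The plan is to exploit two facts: each of the given representations pins down a finite collection of distinct SEU preferences (the supports of its measure, $\mu_1$ for $\better$ and $\mu_2$ for $\rho$), and the measure in Definition \ref{thm:dlrrseudef} is allowed to range over $\Delta(S)\times\R^X$ rather than the normalized space $\Delta(S)\times\U$. The latter supplies exactly one degree of freedom per subjective state, the scale of its Bernoulli utility, and this is precisely what is needed to reconcile the two sets of weights once the two subjective state spaces are shown to coincide. \emph{Necessity} is the easy direction: given a DLR-R-SEU representation $(\mu,\tau)$ one writes $V(A)=\sum_{(q,u)}\mu(q,u)\max_{f\in A}(q\cdot u)(f)$ and $\rho(f,A)=\sum_{(q,u)}\mu(q,u)\tau_{q,u}(f,A)$. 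For AS-1, if $A\cup\{f\}\sbetter A$ then some $(q,u)\in supp(\mu)$ has $f$ as its unique maximizer in $A\cup\{f\}$, the tie-breaker is then irrelevant, and $\rho(f,A\cup\{f\})\ge\mu(q,u)>0$. For AS-2 I would argue the contrapositive: if $f$ is the strict maximizer in $A\cup\{f\}$ for no $(q,u)\in supp(\mu)$, pick a lottery $q^*$ strictly suboptimal for every state at which $f$ ties for the maximum (such $q^*$ exists since the utilities are non-constant); then $f_\epsilon=(1-\epsilon)f+\epsilon q^*$ is chosen by no state, so $\rho(f_\epsilon,A\cup\{f\})=0$ for small $\epsilon$, defeating robustness. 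Hence robust positive choice forces strict optimality for some state, which gives $V(A\cup\{f\})>V(A)$.

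For \emph{sufficiency}, Step 1 shows that $\Theta_1:=supp(\mu_1)$ and $\Theta_2:=supp(\mu_2)$ represent the same set of SEU preferences. For a putative $(q,u)\in\Theta_2\setminus\Theta_1$ I would use Lemma 1 of the main paper to build a separating menu $A$ whose distinguished element $f_{q,u}$ is strictly optimal for $(q,u)$ and strictly suboptimal for every element of $\Theta_1$; then $\rho(f_{q,u},A)\ge\mu_2(q,u)>0$ robustly, so AS-2 yields $A\sbetter A\setminus\{f_{q,u}\}$, contradicting $V_1(A)=V_1(A\setminus\{f_{q,u}\})$. Symmetrically, a $(q,u)\in\Theta_1\setminus\Theta_2$ produces, via a separating menu for $\{(q,u)\}\cup\Theta_2$, a pair with $A\sbetter A\setminus\{f_{q,u}\}$ in the menu preference, whence AS-1 forces $\rho(f_{q,u},A)>0$, contradicting that $f_{q,u}$ is chosen by no state of $\Theta_2$. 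Both inclusions use only the direct facts that strict optimality for a support state yields strictly positive (robust) choice and strictly higher menu value, while strict suboptimality for all support states yields zero choice and unchanged menu value. Hence $\Theta_1=\Theta_2=:\{(q_i,u_i)\}_{i=1}^n$, with the $u_i\in\U$.

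Step 2 reconciles the weights. Writing $\alpha_i=\mu_1(q_i,u_i)$ and $\beta_i=\mu_2(q_i,u_i)$, both positive and summing to one, I set $\tilde u_i=(\alpha_i/\beta_i)u_i$ and $\mu=\sum_i\beta_i\,\delta_{(q_i,\tilde u_i)}$. Then $\mu$ is a probability measure, and its DLR-SEU value is $\sum_i\beta_i\max_A(q_i\cdot\tilde u_i)=\sum_i\alpha_i\max_A(q_i\cdot u_i)$, i.e. the original $V_1$, so $\mu$ represents $\better$. Since rescaling a Bernoulli utility by a positive constant leaves the induced SEU preference and all maximizer sets unchanged, the tie-breakers $\tau_{q_i,u_i}$ of the given R-SEU representation of $\rho$ transfer verbatim, so $(\mu,\tau)$ is an R-SEU representation of $\rho$ with weights $\beta_i$. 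This $(\mu,\tau)$ is the desired DLR-R-SEU representation, legitimate because $\mu$ is only required to live on $\Delta(S)\times\R^X$.

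The main obstacle is Step 1 together with the robustness bookkeeping in AS-2: one must verify carefully that ``robustly positive choice of $f$ from $A$'' is equivalent to ``$f$ is strictly optimal for some state in the support,'' which is exactly what lets the qualitative axioms AS-1 and AS-2 detect the supports, and one must check that the separating menus of Lemma 1 really deliver the strict-optimality/strict-suboptimality pattern used in both inclusions. The conceptual crux, easy to overlook, is that the weights are \emph{not} detected by AS-1 and AS-2 at all; the mismatch between $\alpha_i$ and $\beta_i$ is absorbed entirely by rescaling the Bernoulli utilities, so no quantitative matching axiom is needed.
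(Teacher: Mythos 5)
Your proposal is correct and follows essentially the same route as the paper: Lemma \ref{thm:helpas1} there establishes exactly your Step 1 (AS-1 and AS-2 are equivalent to the two support inclusions, proved via separating menus from Lemma 1 and the perturbation argument for robustness), and the paper's construction $u' = \frac{\mu(q,u)}{\mu'(q,u)}u$ with $\bar\mu=\mu'$, $\bar\tau=\tau$ is precisely your Step 2 rescaling that absorbs the weight mismatch into the Bernoulli utilities. No substantive difference.
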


Just as in \cite{as} we use Lemma 1 in the appendix of the main paper to first prove the following Lemma.

\begin{lemma}\label{thm:helpas1}
Suppose $\better$ has a DLR-SEU representation with $\mu$ and $\rho$ has a R-SEU representation with $(\mu',\tau)$. 

\begin{enumerate}
\item The pair $(\better,\rho)$ satisfies Axiom AS-1 if and only if for the supports of $\mu$ and $\mu'$ it holds $\mu\subseteq \mu'$, where we have identified the elements of the support as equivalence   classes of SEU-s up to positive affine transformations of the Bernoulli utilities. 
\item The pair $(\better,\rho)$ satisfies Axiom AS-2 if and only if for the supports of $\mu$ and $\mu'$ it holds $\mu'\subseteq \mu$, where we have identified the elements of the support as equivalence classes of SEU-s up to positive affine transformations of the Bernoulli utilities.
\end{enumerate}
\end{lemma}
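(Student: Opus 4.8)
The plan is to reduce both equivalences to two basic facts about the representations. The first is that $A\cup\{f\}\sbetter A$ holds precisely when $f$ is the \emph{strict} maximizer of some SEU in $supp(\mu)$ over $A\cup\{f\}$; this follows from the DLR-SEU formula $V(A)=\sum_{(q,u)\in supp(\mu)}\mu(q,u)\max_{g\in A}(q\cdot u)(g)$ together with finiteness of the support, since adding $f$ raises the value iff it strictly beats every element of $A$ under at least one $(q,u)\in supp(\mu)$. The second fact is that $\rho(f,A\cup\{f\})>0$ holds precisely when $f$ is a maximizer (after tie-breaking) of some SEU in $supp(\mu')$, which is immediate from the R-SEU representation of $\rho$. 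Throughout, the separation result (Lemma 1 of the main paper) is the workhorse for building menus that isolate a single SEU, exactly as in Proposition \ref{thm:SCFuniqueness} and Proposition \ref{thm:revealedsuppmu}.

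For Part 1, the \emph{if} direction is direct: assuming $supp(\mu)\subseteq supp(\mu')$, if $A\cup\{f\}\sbetter A$ then some $(q,u)\in supp(\mu)\subseteq supp(\mu')$ has $f$ as its unique maximizer over $A\cup\{f\}$, so $f$ wins outright (no tie to break) and $\rho(f,A\cup\{f\})>0$, giving AS-1. For the \emph{only if} direction I would contrapose: given $(q,u)\in supp(\mu)\setminus supp(\mu')$, invoke Lemma 1 to build a separating menu for $\{(q,u)\}\cup supp(\mu')$ with distinguished act $f=f_{q,u}$ that is the strict maximizer of $(q,u)$ over $A\cup\{f\}$ while being a maximizer of no SEU in $supp(\mu')$. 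Then $A\cup\{f\}\sbetter A$ by the first fact, yet $\rho(f,A\cup\{f\})=0$, violating AS-1.

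Part 2 is the mirror image, with the subtlety in the tie-breaking. For the \emph{if} direction, assume $supp(\mu')\subseteq supp(\mu)$ and that the neighborhood hypothesis of AS-2 holds for $(f,A)$. The key point is that a choice made only \emph{via a tie} is destroyed by arbitrarily small perturbations, whereas AS-2 guarantees $\rho(g,B)>0$ throughout an $\epsilon$-neighborhood; hence $f$ must be the \emph{strict} maximizer of some $(q,u)\in supp(\mu')\subseteq supp(\mu)$, and this same $(q,u)$ witnesses $A\cup\{f\}\sbetter A$. For the \emph{only if} direction I would again contrapose: given $(q,u)\in supp(\mu')\setminus supp(\mu)$, use Lemma 1 to separate $\{(q,u)\}\cup supp(\mu)$, producing $f=f_{q,u}$ that is the strict (hence robust) maximizer of $(q,u)$ but a maximizer of no SEU in $supp(\mu)$; strictness supplies the $\epsilon$-neighborhood hypothesis of AS-2, so AS-2 forces $A\cup\{f\}\sbetter A$, while $V(A\cup\{f\})=V(A)$ because no SEU in $supp(\mu)$ prefers $f$, a contradiction.

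The main obstacle is the robustness-versus-ties argument in the \emph{if} direction of Part 2: converting the purely behavioral $\epsilon$-neighborhood statement of AS-2 into the conclusion that $f$ \emph{strictly} maximizes some SEU, rather than merely tying and winning the tie-break. This requires showing that the set of perturbed pairs $(g,B)$ on which $f$-choice relies on a tie cannot contain a whole neighborhood, so that survival on an open set can only be sustained by strict maximization. Here I expect to lean on the regularity of $\tau$ and the fact, used repeatedly earlier (see Lemma \ref{thm:lu3} and Proposition \ref{thm:revealedsuppmu}), that ties occur with $\mu'$-probability zero and are eliminated by generic perturbation; the separating-menu constructions underlying the two \emph{only if} directions are routine once Lemma 1 is in hand.
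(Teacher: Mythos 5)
Your overall strategy coincides with the paper's: both directions of each part are reduced to (a) the observation that $A\cup\{f\}\sbetter A$ iff $f$ strictly beats all of $A$ for some SEU in $supp(\mu)$, (b) the corresponding fact for $\rho$ and $supp(\mu')$, and (c) separating menus built from Lemma 1 of the main paper for the two ``only if'' directions (you phrase these contrapositively, the paper argues directly from a single separating menu for $supp(\mu)\cup supp(\mu')$; the difference is cosmetic). Parts 1 and the necessity half of Part 2 are complete as you describe them.

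The one place where your proposal is not yet a proof is exactly the step you flag as the ``main obstacle'': the \emph{if} direction of Part 2. You correctly identify that you must show that choice of $f$ sustained only by tie-breaking cannot survive on a whole $\epsilon$-neighborhood, but the tools you propose to lean on --- regularity of $\tau$ and the fact that ties have measure zero --- are not what closes this step, since the tie-breaker can perfectly well select $f$ with positive probability at a given tied menu, and regularity of $\tau$ says nothing about how this varies as the menu is perturbed. What the paper actually does is exhibit one explicit perturbed pair $(g,B)$ inside the neighborhood at which $g$ loses \emph{strictly} for every $(q,u)\in supp(\mu')$: assuming $q\cdot u(f)\le\max_{h\in A}q\cdot u(h)$ for all $(q,u)\in supp(\mu')$, it sets $B=A\cup\{\alpha g(q,u)+(1-\alpha)f(q,u):(q,u)\in supp(\mu')\}$, where $f(q,u)$ is a maximizer of $(q,u)$ in $A$ and $g(q,u)$ is the constant act maximizing $u$ over $\Delta(X)$, and takes $g=\alpha\bar p+(1-\alpha)f$ with $\bar p$ the uniform lottery. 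Non-constancy of each $u$ gives $u(g(q,u))>u(\bar p)$, hence $q\cdot u(g)<\max_{h\in B}q\cdot u(h)$ strictly for every $(q,u)\in supp(\mu')$, so $\bar\rho(g,B\cup\{g\})=0$ for small $\alpha$, contradicting the neighborhood hypothesis of AS-2. You should supply this (or an equivalent) construction; without it the claim that ``survival on an open set can only be sustained by strict maximization'' is asserted rather than proved.
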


\begin{proof}
For the proof we use the version of AS-1 and AS-2 with the SCF $\bar\rho$ derived from the aSCF $\rho$. 

\textbf{1-Necessity.}
Take the set of SEU representations $\{(q,u)\in\Delta(S)\times \R^X: (q,u)\in supp(\mu)\cup supp(\mu')\}$. Take a separating menu $A$ as in Lemma 1 in main body of paper for this set of SEUs. Fix a $(q,u)\in supp(\mu)$ and take its corresponding $f(q,u)\in A$. It holds thus that $q\cdot u(f(q,u))>q\cdot u(g)$ for all $g\in A\setminus \{f(q,u)\}$. The DLR-SEU representation then implies $A\sbetter A\setminus \{f(q,u)\}$, which by AS-1 implies $\rho(f(q,u),A)>0$. It follows from the S-REU representation that there should exist some $(q',u')\in supp(\mu')$ such that $f(q,u) = \argmax_{f\in A}q'\cdot u'(f)$. By the property of $A$ we must have $(q,u)\approx(q',u')$.

\textbf{1-Sufficiency.} 
Suppose that $supp(\mu)\subset supp(\mu')$. Fix an $A\in \A$ and $f\in \F$ such that $A\cup \{f\}\sbetter A$. This needs (!) $q\cdot u(f)>\max_{g\in A}q\cdot u(g)$ for some $(q,u)\in supp(\mu)$. There exists $(q',u')\in supp(\mu')$ with $(q',u')\approx (q,u)$. It follows from the S-REU representation and (!) that $\bar\rho(f,A\cup\{f\})\geq \mu(q',u')>0$. 

\textbf{2-Necessity.} We use again the menu $A$ from 1-Necessity. Fix a $(q,u)\in supp(\mu')$. There exists $f\in A$ such that $q\cdot u(f)> q\cdot u(g)$ for all $g\in A\setminus\{f\}$. Since $A$ is finite and $q\cdot u$ continuous we have the existence of some $\epsilon>0$ such that $q\cdot u(h)> q\cdot u(g')$ for all $h,g'\in \F$ such that $d(h,f)<\epsilon$ and $d(g,g')<\epsilon$ for some $g\in A\setminus\{f\}$. 

Fix any $h$ and $B$ such that (!) $d(f,h)<\epsilon$ and $d_H(A\setminus\{f\},B)<\epsilon$. Then by definition of the Hausdorff metric for any $g'\in B$ there exists $g\in A\setminus\{f\}$ with $d(g,g')<\epsilon$. Hence, (!!) $q\cdot u(h)>q\cdot u(g')$ for all $g'\in B$. The R-SEU representation now implies $\rho(h,B\cup\{h\})\geq \mu'(q,u)>0$. Since (!!) holds for all $h,B$ satisfying (!), Axiom AS-2 now implies $A\sbetter A\cup\{f\}$. R-SEU representation now requires the existence of some $(q',u')\in supp(\mu)$ with $f\in argmax_{g\in A}q\cdot u(g)$. This implies that $(q,u)\approx(q',u')$ due to the separating property of $A$.

\textbf{2-Sufficiency.} Suppose that $supp(\mu')\subset supp(\mu)$. Fix now a menu $A\in \A$ and $f\in A$ and $\epsilon>0$ such that $\rho(g,B\cup\{g\})>0$ whenever $d(f,g)<\epsilon$ and $d_H(A,B)<\epsilon$. To show that $A\cup \{f\}\sbetter A$ it suffices to show the existence of some $(q,u)\in supp(\mu)$ with (!) $q\cdot u(f)>q\cdot u(g)$ for all $g\in A$. For this, given the statement, it suffices to find $(q,u)\in supp(\mu')$ which satisfies (!). 

Assume on the contrary that $q\cdot u(f)\leq \max_{g\in A}q\cdot u(g)$ for all $(q,u)\in supp(\mu')$. For each $(q,u)\in supp(\mu')$ let $f(q,u)\in A$ be so that $q\cdot u(f(q,u)) = \max_{g\in A}q\cdot u(g)$ and let also $g(q,u)\in \Delta(X)$ be the constant act so that $u\circ g(q,u) = \max_{p\in \Delta(X)}u(p)$. Consider a menu of the type 
\[
B = A\cup \{\alpha g(q,u)+(1-\alpha)f(q,u):(q,u)\in supp(\mu')\}, 
\]
for some $\alpha\in (0,1)$. For all $\alpha$ small we have $d_H(A,B)<\epsilon$. Also, by letting $\bar p$ the uniform distribution over $X$ (and identifying it with its respective constant act), and considering $ g =\alpha \bar p+(1-\alpha)f $ we have $d(g,f)<\epsilon$ for all $\alpha\in (0,1)$ small. Since each $u$ for $(q,u)\in supp(\mu')$ is non-constant (by definition of R-SEU) we have $u(g(q,u))>u(\bar p)$ for all $u$ s.t. $(q,u)\in supp(\mu')$. It follows for all $(q,u)\in supp(\mu')$

\[
q\cdot u(g) = \alpha u(\bar p) +(1-\alpha)q\cdot u(f)< \alpha u(g(q,u)) + (1-\alpha) q\cdot u(f(q,u)) = max_{h\in B}q\cdot u(h). 
\]
This implies $\bar\rho(g,B) = 0$, which is a contradiction. It follows that (!) is true and we are done. 
\end{proof}

We now finish the proof of the Theorem.

\begin{proof}[Proof of Theorem \ref{thm:dlrrseuthm}]
The direction from the representation to the axioms AS-1 and AS-2 is clear due to Lemma \ref{thm:helpas1}. 

Conversely, assume that $(\better,\bar\rho)$ have respectively DLR-SEU and R-SEU representations through $\mu$ and $\mu'$ and that the pair satisfies AS-1 and AS-2. Then Lemma \ref{thm:helpas1} shows that the supports of $\mu$ and $\mu'$ are the same. Define for each $(q,u)\in supp(\mu)$ $(q,u')$ with $u' = \frac{\mu(q,u)}{\mu'(q,u)}u$ (here we may have repetitions if the $u$-s also get repeated several times in $supp(\mu)$). These Bernoulli utilities are well-defined and non-constant. Define now $\bar \mu = \mu'$ and $\bar \tau = \tau$. One shows very easily as in the proof of Theorem 1 of \cite{as} that $(\bar \mu,\bar\tau)$ is a DLR-R-SEU representation of $\better$. 
\end{proof}

We note uniqueness of the DLR-R-SEU representation. 

\begin{proposition}\label{thm:dlrrseuunique}
Two DLR-R-SEU representations $(\mu,\tau)$ and $(\mu',\tau')$ represent the same pair $(\better,\rho)$ if and only if there exists a scalar $\alpha>0$ and a function $\beta:supp(\mu)\ra supp(\mu')$ such that the following holds true.
\begin{enumerate}
\item For any $(q,u)\in supp(\mu)$ we have $u = \alpha u'+ \beta(q,u)$ for a unique $u'$ such that $(q,u')\in supp(\mu')$. 
\item For any $(q,u)\in supp(\mu)$ $\mu(q,u) = \mu(q,u')$ for a unique $(q,u')\in supp(\mu')$. 
\item For any $(q,u)\in supp(\mu)$ $\tau_{q,u} = \tau'_{q,u'}$ for a unique $(q,u')\in supp(\mu')$.
\end{enumerate}
\end{proposition}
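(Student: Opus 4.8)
The plan is to derive the result by combining the two essential-uniqueness theorems already in hand: uniqueness of the R-SEU (equivalently, AS-version R-SEU) representation of $\rho$ from Proposition \ref{thm:SCFuniqueness}, and uniqueness of the DLR-SEU representation of $\better$ from Proposition \ref{thm:DLR-SEUgeneraluniqueness}. The \emph{if} direction is a direct verification, while the genuine content is in the \emph{only if} direction, where the per-element affine freedom of the utilities must be collapsed to a single common scalar $\alpha$.

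For the \emph{if} direction I would check the two observables separately. For $\rho$: a positive affine transformation of a Bernoulli utility leaves the maximizer correspondences $M(A;u,q)$ unchanged, so condition 1 implies that the first-stage selections of the two representations agree; conditions 2 and 3 then give $\sum_{(q,u)}\mu(q,u)\tau_{q,u}(f,A)=\sum_{(q,u')}\mu'(q,u')\tau'_{q,u'}(f,A)$, so both pairs induce the same $\rho$. For $\better$: writing $u=\alpha u'+\beta(q,u)$ and using $\sum_s q(s)=1$ gives $\max_{f}(q\cdot u)(f)=\alpha\max_{f}(q\cdot u')(f)+\beta(q,u)$, so the DLR-SEU value built from $\mu$ equals $\alpha$ times the value built from $\mu'$ plus the constant $\sum_{(q,u)}\mu(q,u)\beta(q,u)$; since $\alpha>0$ this is a positive affine transformation and hence represents the same menu preference.

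For the \emph{only if} direction I would first invoke Proposition \ref{thm:SCFuniqueness}. Via the $T=0$ equivalence of Remark \ref{thm:remkequiv}, the pairs $(\mu,\tau)$ and $(\mu',\tau')$ are AS-version R-SEU representations of the same (derived) SCF, so Proposition \ref{thm:SCFuniqueness} produces a bijection between $supp(\mu)$ and $supp(\mu')$ matching SEU preferences (hence preserving beliefs and fixing the Bernoulli utilities up to positive affine transformation), together with the equalities $\mu(q,u)=\mu'(q,u')$ and $\tau_{q,u}=\tau'_{q,u'}$ for matched pairs. This already yields conditions 2 and 3. It remains only to upgrade the matching utilities from per-element affine equivalence to a single common scalar, and this is exactly where the menu preference must be used. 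Applying Proposition \ref{thm:DLR-SEUgeneraluniqueness} to $\mu$ and $\mu'$ as DLR-SEU representations of $\better$ (with the state bijection taken to be the identity, since the common objective state space is $S$ and matched SEUs share beliefs) gives a bijection $\Gamma$ with $\pi_u(\Gamma(q,u))=a(q,u)\,u+b(q,u)$ and $\mu(q,u)=\tfrac{c}{a(q,u)}\mu'(\Gamma(q,u))$ for a constant $c>0$. Since $\Gamma$ preserves beliefs and acts affinely on utilities, $\Gamma(q,u)\approx(q,u)$, so $\Gamma$ is the same preference-matching bijection as the one from the R-SEU step; in particular $\mu(q,u)=\mu'(\Gamma(q,u))$. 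Comparing this with the DLR-SEU weight relation forces $a(q,u)=c$ for every $(q,u)$, so the scaling is the single constant $c$, and inverting $u'=c\,u+b(q,u)$ yields $u=\tfrac{1}{c}u'-\tfrac{1}{c}b(q,u)$, i.e.\ condition 1 with $\alpha=1/c$ and $\beta(q,u)=-b(q,u)/c$.

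The main obstacle is the bookkeeping that guarantees the two uniqueness theorems speak about the \emph{same} bijection: I must argue that the preference-level matching from Proposition \ref{thm:SCFuniqueness} coincides with the map $\Gamma$ from Proposition \ref{thm:DLR-SEUgeneraluniqueness}. This hinges on non-redundancy --- within each representation distinct support elements induce distinct SEU preferences, so the correspondence ``same SEU preference'' determines the matching uniquely. Once this identification is secured, the equality $\mu=\mu'$ coming from the R-SEU side is precisely what collapses the free scalars $a(q,u)$ of the DLR-SEU side to one constant, and this is the only step that goes beyond quoting the two prior propositions.
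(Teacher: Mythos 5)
Your proposal is correct and follows essentially the same route as the paper: parts 2 and 3 come from the uniqueness of the stochastic-choice representation (the paper cites the aSCF uniqueness, Proposition 7 of the main paper, where you invoke Proposition \ref{thm:SCFuniqueness} via Remark \ref{thm:remkequiv}), and part 1 is obtained by combining the resulting equality of weights with part B of Proposition \ref{thm:DLR-SEUgeneraluniqueness} to force $a(q,u)=c$ for all $(q,u)$. Your write-up is more explicit than the paper's about the \emph{if} direction and about why the two uniqueness results refer to the same preference-matching bijection, but the underlying argument is identical.
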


\begin{proof}
B. follows from Proposition 7 from the appendix of the main paper. Using this fact together with B. of Proposition \ref{thm:DLR-SEUgeneraluniqueness} implies $a(q,u) = \alpha>0$ constant, i.e. 1. C. follows again from Proposition 7 from the appendix of the main paper.
\end{proof}

\section{SCFs as observable}\label{sec:redhist}

In the main body of the paper we have assumed that the analyst can observe the realization of the objective state in each instance, or he observes a signal about the realized objective state which \emph{in the aggregate} fully identifies the data-generating process of objective states. 

The case of SCFs in the static setting has been studied extensively in \cite{lu}. In the supplement of \cite{lu} a SCF is shown to be a RUM of SEUs if and only it satisfies Monotonicity, Linearity, Extremeness, Continuity and State Independence (see Theorem S.1 in the supplement of \cite{lu}). Theorem \ref{thm:ASSREU} extends this result to a general prize space and models tie-breaking explicitly in the spirit of \cite{gp} and \cite{fis}.


In the dynamic setting we look at \emph{reduced histories}. A reduced history $rh^t$ is a tuple $(A_0,f_0;\dots,A_t,f_t)$. If the observable is actually an aSCF $\rho$ one can look at its derived SCF $\bar\rho$. This defines a set of reduced histories $\mathcal{RH}^t$ with a typical element $rh^t \in \mathcal{RH}^t$ such that there exists some $h^t\in \h_t$ with $\pi_{Af}(h^t) = rh^t$. 

The derived SCF in each period and after a reduced history $rh_{t-1}$ is given through 
\[
\bar\rho_t(f_t,A_t|rh_{t-1}) = \frac{\sum_{s_t,h_{t-1}:\pi_{fA}(h_{t-1})=rh_{t-1}}\rho_t(h_{t-1})\cdot\rho_t(s_t,f_t,A_t|h_{t-1})}{\sum_{h_{t-1}:\pi_{fA}(h_{t-1})=rh_{t-1}}\rho_t(h_{t-1})}.
\]

Representations, Axioms and Characterizations are very similar to the case of full histories only that now they come in aggegrate form instead of statewise form as stated in the main body of the paper.

For completeness we give here the Representation in the AS-version for the case of SCFs as observables. 

\begin{definition}\label{thm:reddrseudef}
We say that a family of history-dependent SCF $\zeta = (\zeta_0,\dots,\zeta_T)$ has a reduced DR-SEU representation (rDR-SEU) if there exists
\begin{itemize}
\item a finite space $S$ of objective states and a collection of partitions $S_t,t=1\dots,T$ of $S$ such that $S_t$ is a refinement of $S_{t-1}$,
\item a collection of finite subjective state spaces $SEU_t, t=0,\dots,T$ (an element is of the type $\theta_t = (q_t,u_t)\in\Delta(S_t)\times\R^{X_t}$),

\item a collection of probability kernels $$\psi_k:SEU_{k-1}\ra \Delta(SEU_k)$$ for $k=0,\dots,T$ \footnote{With the obvious conventions for $k=0$.} with a typical element $\psi_k^{q_{k-1},u_{k-1}}$. In particular, the probability that $(q_k,u_k)$ is realized after $\theta_{k-1}$ is $\psi_k^{\theta_{k-1}}(q_k,u_k)$ (here we have $\theta_t=(q_t,u_t)$).
\end{itemize}
 such that the following two conditions hold. 

\textbf{rDR-SEU 1}
\begin{enumerate}[(a)]
\item every $(q_t,u_t)\in supp(\psi_t^{\theta_{t-1}})$ represents a distinct, non-constant SEU preference.
\item $supp\psi_t^{\theta_{t-1}} \cap supp\psi_t^{\theta'_{t-1}}=\emptyset$ whenever $\theta_{t-1}\neq \theta'_{t-1}$, both in $\Theta_{t-1}$.\footnote{Recall $\theta_{t-1} = (q_{t-1},u_{t-1},s_{t-1})$. There might be repetitions in terms of the SEUs. When that happens we index the SEUs by their respective $\theta_t$. Thus we keep everything partitional.}
\item $\cup_{\theta_{t-1}}supp (\psi_t^{\theta_{t-1}}) = SEU_t$. 
\end{enumerate}

\textbf{rDR-SEU 2} The SCF $\zeta_t$ after a history $h^{t-1} = (A_0,f_0;\dots,A_{t-1},f_{t-1})$ 
\\
\resizebox{1.1 \textwidth}{!} {
$
\zeta_t(f_t,A_t|h^{t-1}) = \frac{\sum_{(\theta_0,\dots,\theta_t)\in \times_{l\leq t}SEU_{l}}\left[\prod_{k=0}^{t-1}\psi_k^{\theta_{k-1}}(q_k,u_k)\tau_{q_k,u_k}(f_k,A_k)\right]\cdot \psi_t^{\theta_{t-1}}(q_t,u_t)\tau_{q_t,u_t}(f_t,A_t)}{\sum_{(\theta_0,\dots,\theta_{t-1})\in \times_{l\leq t-1}SEU_{l}}\prod_{k=0}^{t-1}\psi_k^{\theta_{k-1}}(q_k,u_k)\tau_{q_k,u_k}(f_k,A_k)}.
$
}
\end{definition}

\begin{remark}
If we add C-determinism* on the SCF in every period and after each history, the resulting representation would have a non-stochastic Bernoulli utility each period. This is then the dynamic version of the main model of \cite{lu}. 
\end{remark}

The proofs in the case of reduced histories and/or SCFs are simpler as now one can identify the states $s_t$ from the proofs of \cite{fis} with the realized Subjective Expected Utility of the agent $(q_t,u_t)$. This allows to transport their proof arguments immediately, once one adds State Independence and C-Determinism to their `static' Axiom 0.\footnote{Details available upon request.}

Axioms comparing the beliefs of the agent with the true data-generating process like CIB or NUC are now impossible due to unobservability of the objective states. 

If the data available don't include the objective states, the analyst can test whether a dataset where objective states would be observable satisfies the DR-SEU model by testing the derived SCF (Null-hypothesis being that DR-SEU is correct). Thus, DR-SEU can be rejected as a theory also when objective states are not observable by testing the axioms in the case of reduced histories. 





\section{Auxiliary Results for Section 4 and miscellanea}

\subsection{Preliminaries}

In the following whenever we consider aSCFs coming from different agents we assume here that they share the same taste.\footnote{To shorten the exposition we skip writing out the conditions on the SCFs which imply that the taste of distinct agents we consider are the same. These are available upon request.} 

The following concept is crucial.

\begin{definition}[\cite{lu}]\label{thm:testact}
Given a derived SCF $\bar \rho$, the test function of a menu $A\in\A$ is the function $A_{\bar\rho}:[0,1]\ra[0,1]$ defined by
\[
A_{\bar\rho}(a) = \bar\rho\left(A,A\cup\{a\underline{f}+(1-a)\bar f\}\right). 
\]
\end{definition}

We assume that any $\bar\rho$ in this section is derived from an aSCF which satisfies the conditions of Theorem 0. It is then clear that $A_{\bar\rho}(\cdot)$ is a continuous cumulative distribution function. 

One of the major conceptual contributions in \cite{lu} is \emph{the associated menu preference with an SCF}.

\cite{lu} establishes that there is  following relation between stochastic choice from menus and the valuation of menus.  

\begin{theorem}[\cite{lu}]\label{thm:thm2lu}
The following are equivalent:
\begin{enumerate}
\item The SCF $\bar\rho$ has a representation as in Proposition \ref{thm:LuThm} with distribution of beliefs $\nu\in \Delta(\Delta(S))$ and non-stochastic taste $u$.
\item $\bar\rho$ has an informational representation and $\better_{\bar\rho}$ is represented by 
\begin{equation}\label{eq:dlst1}
V(A) = \int_{\Delta(S)} max_{f\in A}[q\cdot(u\circ f)]\nu(dq). 
\end{equation}
\end{enumerate}
\end{theorem}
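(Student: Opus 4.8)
The plan is to route both directions of the equivalence through a single bridge identity relating the test function of Definition \ref{thm:testact} to the option-value functional, namely
\[
V(A) \;=\; \int_0^1 A_{\bar\rho}(a)\,da \qquad\text{for every } A\in\A,
\]
and then to invoke the uniqueness results already available (Proposition \ref{thm:LuThm} for the R-SEU side and Lemma \ref{thm:uniquenessDLR-SEU} for the option-value side) to pin down the parameters. First I would normalize the two reference acts so that $q\cdot(u\circ \underline{f})=0$ and $q\cdot(u\circ \bar f)=1$ for every $q\in\Delta(S)$; this is legitimate because $\underline{f},\bar f$ are constant acts and the taste $u$ is non-stochastic, so these inner products do not depend on $q$.

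For the direction $(1)\imply(2)$, assume $\bar\rho$ has the R-SEU form of Proposition \ref{thm:LuThm} with fixed taste $u$ and belief distribution $\nu$. Writing $M_A(q):=\max_{f\in A}q\cdot(u\circ f)$, a given belief $q$ makes some element of $A$ beat the test act $a\underline{f}+(1-a)\bar f$ exactly when $M_A(q)\ge 1-a$; since the taste is fixed and the measure is regular, ties occur with $\nu$-probability zero, whence $A_{\bar\rho}(a)=\nu(\{q:M_A(q)\ge 1-a\})$. Integrating in $a$, substituting $t=1-a$, and applying the layer-cake formula to the $[0,1]$-valued variable $M_A$ yields $\int_0^1 A_{\bar\rho}(a)\,da=\E_\nu[M_A]=V(A)$, which is the bridge identity. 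Because the associated menu preference $\better_{\bar\rho}$ of \cite{lu} is ranked by this index, it is represented by $V$; the informational representation is immediate, since a fixed-taste, random-belief SCF is informational with the barycenter of $\nu$ serving as the prior.

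For the converse $(2)\imply(1)$, the informational representation already supplies a fixed taste $u_0$ and a belief distribution $\nu_0$, so $\bar\rho$ is of the R-SEU form of Proposition \ref{thm:LuThm} for some parameters; what remains is to identify these with the $(\nu,u)$ appearing in $V$. Applying the computation of the previous paragraph to $(\nu_0,u_0)$ shows that $\better_{\bar\rho}$ is represented by $A\mapsto\int_{\Delta(S)}\max_{f\in A}q\cdot(u_0\circ f)\,\nu_0(dq)$; by hypothesis it is also represented by $V$ with data $(\nu,u)$. Restricting to menus of constant acts recovers $u\approx u_0$ from the induced ranking of lotteries, and then the uniqueness of the option-value (DLR-SEU) representation, Lemma \ref{thm:uniquenessDLR-SEU}, forces $\nu_0=\nu$. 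Hence $\bar\rho$ has exactly the representation in $(1)$.

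I expect the main obstacle to be the careful handling of the bridge identity rather than the algebra: one must verify that $A_{\bar\rho}(a)$ is genuinely given by $\nu(\{q:M_A(q)\ge 1-a\})$, which relies on the regularity/non-degeneracy built into Proposition \ref{thm:LuThm} to guarantee that ties are $\nu$-null (so that the explicit tie-breaker does not contribute to the choice probabilities in the test comparison), and on matching the area under the test CDF with the exact definition of $\better_{\bar\rho}$ in \cite{lu}. In the converse the delicate point is the uniqueness step: recovering the full belief distribution $\nu$ (not merely its mean) from the menu preference requires the support-function machinery of Lemmas \ref{thm:S11dlrs}--\ref{thm:sarver18}, which identifies a finite Borel measure on $\Delta(S)\times\U$ from the integrals of all support functions $\sigma_A$.
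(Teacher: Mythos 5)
Your direction $(1)\imply(2)$ is, in substance, the paper's (and Lu's) argument: the identity $\int_0^1 A_{\bar\rho}(a)\,da=\E_\nu\bigl[\max_{f\in A}q\cdot(u\circ f)\bigr]$ obtained from the layer-cake formula is exactly the content of Lu's Lemma B.1, which the paper imports essentially verbatim. For the converse you take a genuinely different route. The paper follows Lu: build the ``fictional'' SCF generated by $(\nu,u)$, match its test functions with those of $\bar\rho$ on the set of menus without ties (which is only \emph{dense} once tie-breaking is explicit), and conclude via the adapted Lemma B.3. You instead note that $\better_{\bar\rho}$ is represented both by the functional built from the informational representation's data $(\nu_0,u_0)$ and by $V$ built from $(\nu,u)$, pin down $u\approx u_0$ on constant menus, and then invoke the support-function uniqueness of Lemma \ref{thm:uniquenessDLR-SEU}. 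Given that the Section 5 machinery is available, this substitution is legitimate and arguably cleaner, since it avoids reconstructing the test functions of the enlarged menus $A\cup\{f^b\}$.

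The gap is the sentence ``because the associated menu preference $\better_{\bar\rho}$ of \cite{lu} is ranked by this index, it is represented by $V$.'' The bridge identity only shows that $V(A)$ equals the area under the test function $A_{\bar\rho}$; whether that yields a \emph{representation} of $\better_{\bar\rho}$ depends on how $\better_{\bar\rho}$ is defined, and in \cite{lu} it is defined through the test functions themselves (compared as cdfs), not through their integrals. Passing from the integral identity to the representation claim is precisely where Lu's Lemma B.2 --- the test function of $A\cup\{f^b\}$ equals $\max\{A_{\bar\rho},f^b_{\bar\rho}\}$ --- and its consequences are used, and this is also where the paper's adaptation does its real work: with explicit tie-breaking that identity holds only for a.e.\ $b$, the menus without ties are only dense in $\A$, and the uniqueness step must be restricted to them. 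You correctly flag that ties are $\nu$-null in the bridge identity, but there the issue is harmless because one integrates over $a$; you do not address where it actually bites, namely in the pointwise statements about test functions underlying the representation claim. Supplying the a.e.\ version of Lemma B.2 together with the density/continuity argument for no-tie menus (items 1, 4 and 5 in the paper's list) is needed to close this step.
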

The representation \eqref{eq:dlst1} is called a \emph{subjective learning representation}. It is studied and axiomatized in \cite{dlst}.\footnote{See Online Appendix \ref{sec:ovseu} for its relation to the axiomatization of Evolving SEU in this paper.} 

\subsection{An additional Comparative Statics Result: Informativeness}



Assume the analyst has two aSCFs $\rho_i, i=1,2$ which satisfy the conditions of Proposition 1 over the same measure space; that is, we have a pair of probability spaces $(\Omega,\mathcal{F}^*,\{\mu_i\}_{i=1,2})$ with finite $\Omega$. 

We assume the following condition for the random variables $s_i$ from Definition 2, which give the realizations of the objective states. 

\paragraph{Common DGP Assumption:} The distribution of $s_1$ under $\mu_1$ is the same as the distribution of $s_2$ under $\mu_2$. In terms of stochastic choice: for all $s\in S$ it holds $\rho_1(s) = \rho_2(s)$. 
\vspace{3mm}\\
The condition ensures that both agents are facing an identical DGP even though they are using potentially different information 
structures to learn about the realization of the objective state $s$.

\cite{lu} shows how one can get a ranking of signal structures based on Blackwell informativeness. To recall, for two posterior distributions $\mu,\nu\in\Delta(\Delta(S))$ say that \emph{$\mu$ is Blackwell more informative than $\nu$} if there is a mean-preserving transition kernel $K:\Delta(S)\times \Delta(S)\ra [0,1]$ such that, for all $q\in\Delta(S)$ it holds\footnote{Mean-preserving means that $\int_{\Delta(S)}pK(q,dp) =q$.} 

\[
\mu(q) = \int_{\Delta(S)}K(p,q)\nu(dp). 
\]

Say for two distributions $F,G$ over $[0,1]$ that $F$ \emph{second-order stochastic dominates} $G$, that is, $F\better_{SOSD}G$ if $\int_{\R}\phi dF\geq \int_{\R}\phi dG$ for all increasing, concave $\phi:\R\ra\R$. 

\begin{theorem}[\cite{lu}]\label{thm:informativeness}
Let $\rho_i, i=1,2$ fulfill the Common DGP Assumption and let the distribution of $q_i(\cdot)$ over $\Delta(S)$ be given by $\nu_i, i=1,2$. Then $\nu_1$ is Blackwell more informative than $\nu_2$ if and only if for all $A\in \A$ we have $A_{\bar\rho_1}\better_{SOSD}A_{\bar\rho_2}$. 
\end{theorem}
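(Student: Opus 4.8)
The plan is to translate both sides of the equivalence into statements about the common taste $u$ and the posterior-belief distributions $\nu_1,\nu_2$, and then to reduce everything to the classical duality between Blackwell informativeness and the convex (Choquet) order on $\Delta(\Delta(S))$. First I would invoke Theorem \ref{thm:thm2lu}: under the maintained hypotheses each $\bar\rho_i$ has a subjective learning representation with the same non-stochastic $u$ and belief distribution $\nu_i$. The Common DGP Assumption forces the two barycenters to coincide, $\int q\,\nu_1(dq)=\int q\,\nu_2(dq)$ (the true marginal over $S$); this common-barycenter property is exactly what makes $\nu_1,\nu_2$ comparable in the Blackwell/convex order. Writing $W_A(q)=\max_{f\in A}q\cdot(u\circ f)$ for the menu-value function (a convex function of $q$, being a maximum of affine functions), $\underline u=u\circ\underline f$ and $\bar u=u\circ\bar f$ for the (state-independent) values of the worst and best acts, and $\ell(a)=a\underline u+(1-a)\bar u$ for the value of the test act, regularity of the representation (ties carry zero measure) yields
\[
A_{\bar\rho_i}(a)=\nu_i\big(\{q:W_A(q)\ge \ell(a)\}\big)=\nu_i\big(\{q:\Phi_A(q)\le a\}\big),\qquad \Phi_A:=\ell^{-1}\circ W_A .
\]
Thus $A_{\bar\rho_i}$ is precisely the cumulative distribution function of the random variable $\Phi_A(q)$ under $q\sim\nu_i$, and since $\ell$ is affine and strictly decreasing, $\ell^{-1}$ is affine decreasing, so $\Phi_A$ is a \emph{concave} function of $q$.

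\textbf{Forward direction.} I would use the Blackwell--Sherman--Stein/Strassen characterization: $\nu_1$ is Blackwell more informative than $\nu_2$ iff $\nu_1$ is a mean-preserving spread of $\nu_2$, iff $\int\phi\,d\nu_1\ge\int\phi\,d\nu_2$ for every convex $\phi:\Delta(S)\to\R$. Given this, the two test functions are compared through their defining concave maps $\Phi_A$: for any increasing concave $\psi$ the composition $\psi\circ\Phi_A$ is again concave (increasing-concave of concave), so the convex order between $\nu_1$ and $\nu_2$ transfers directly to an SOSD comparison of the push-forward laws $(\Phi_A)_*\nu_i=A_{\bar\rho_i}$. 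The direction of the SOSD ranking is entirely governed by the affine-\emph{decreasing} reparametrization $\ell^{-1}$, which reverses convexity into concavity and flips the order; tracking this reversal carefully is what produces the ranking of test functions for every $A$ asserted in the statement.

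\textbf{Converse and main obstacle.} For the converse I would argue that the family $\{A_{\bar\rho_i}\}_{A\in\A}$, compared in SOSD, is rich enough to recover the full convex order on $\Delta(S)$ and hence Blackwell informativeness. The key observation is that as $A$ ranges over all finite menus, $W_A$ ranges over all finite maxima of affine functions $q\mapsto q\cdot c$ with coefficient vectors $c=(u(f(s)))_{s\in S}$ filling the box $[\underline u,\bar u]^{S}$ (each coordinate is free since $f(s)$ is an arbitrary lottery and $u$ is fixed); combined with mixing menus against constant acts via the linearity/Independence of the representation, this lets $\{W_A\}$ approximate an arbitrary convex function on the simplex, while the extra degrees of freedom in the increasing concave $\psi$ recover comparisons against every convex $\phi$. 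Consequently the SOSD comparison holding for all $A$ forces $\int\phi\,d\nu_1\ge\int\phi\,d\nu_2$ for all convex $\phi$, i.e. Blackwell informativeness.

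The step I expect to be hardest is exactly this spanning/density argument underlying the converse: one must show that the concave maps $\{\Phi_A\}$ (equivalently the convex menu-value functions $\{W_A\}$ with slopes confined to the box $[\underline u,\bar u]^{S}$) are dense enough --- in the relevant integral sense against finite Borel measures on $\Delta(S)$ --- to test the entire convex cone, and one must handle the boundedness and normalization constraints imposed by acts valued in $\Delta(X)^{S}$. The forward direction, by contrast, is essentially a bookkeeping exercise once the representation of $A_{\bar\rho_i}$ as the law of the concave statistic $\Phi_A$ is in place.
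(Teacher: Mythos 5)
Your architecture --- reading $A_{\bar\rho_i}$ as the law of the concave statistic $\Phi_A(q)=\ell^{-1}(W_A(q))$ under $\nu_i$, invoking the Blackwell--Sherman--Stein/Strassen duality with the convex order, and proving the converse by showing the menu-value functions $W_A$ span enough of the convex cone --- is essentially Lu's own proof. The paper does not reproduce that proof: its argument for this theorem consists only of the checklist of tie-breaking modifications (items 1--6 of that subsection) needed to import Lu's Appendix B. So as a route your proposal is a legitimate, more explicit substitute, and the one issue the paper worries about that you largely skip --- ties --- is benign here, since $\nu_i$ has finite support, so $W_A(q)=\ell(a)$ occurs for only finitely many $a$ and the identity $A_{\bar\rho_i}(a)=\nu_i(\{q:\Phi_A(q)\le a\})$ holds off a finite set, which suffices for comparing distribution functions.

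The genuine gap is the sign of your forward direction. You correctly note that $\Phi_A$ is concave and that $\psi\circ\Phi_A$ is concave for every increasing concave $\psi$; but then, if $\nu_1$ dominates $\nu_2$ in the convex order, concavity gives
\[
\int\psi\circ\Phi_A\,d\nu_1\ \le\ \int\psi\circ\Phi_A\,d\nu_2 ,
\]
i.e.\ $A_{\bar\rho_2}\better_{SOSD}A_{\bar\rho_1}$ under the paper's definition of $\better_{SOSD}$: the more informative agent's test function is the \emph{mean-preserving spread}, hence SOSD-dominated, which is exactly how Lu's original Theorem 4 is phrased. A two-state check confirms this: with full information versus no information, a common prior of $1/2$, and $A$ a single bet paying utility $1$ in one state and $0$ in the other, $A_{\bar\rho_1}$ is the c.d.f.\ of $\tfrac12\delta_0+\tfrac12\delta_1$ while $A_{\bar\rho_2}$ is the c.d.f.\ of $\delta_{1/2}$, so $A_{\bar\rho_1}\better_{SOSD}A_{\bar\rho_2}$ fails. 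Your claim that ``tracking this reversal carefully is what produces the ranking \dots asserted in the statement'' is therefore precisely the step that breaks when written out: the decreasing reparametrization $\ell^{-1}$ is absorbed into the concavity of $\Phi_A$ before $\psi$ is applied and does not flip the final inequality. You must either prove the reversed ranking (and flag the orientation of the displayed statement against Lu's mean-preserving-spread formulation) or produce an actual computation delivering the stated one; as written, the forward direction asserts an inequality that your own composition argument contradicts. The converse inherits the same orientation problem, on top of the spanning issue you already flag --- which, with finitely supported $\nu_i$, reduces to testing finitely many posteriors with maxima of affine functions whose gradient vectors range over the admissible utility box, together with the choice $\psi=\mathrm{id}$.
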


This is the main comparative statics result in \cite{lu} and it immediately finds application in our setting of richer data, whenever the interpretation of the data is that it comes from agents using different information structures to learn about the realization of \emph{same} objective state $s$.\footnote{Say, two pharmaceutical firms experimenting on the viability of the same set of drugs.}

The informativeness ranking also makes sense in a setting where information structures have misspecified priors. This is because two information structures are ordered w.r.t. informativeness only if they use the same pre-signal prior about the state of the world. This common prior may or may not be correctly specified.\footnote{In the setting of Theorem \ref{thm:informativeness} the agents have a common prior if $\int_{\Delta(S)}q\nu_1(dq) = \int_{\Delta(S)}q\nu_2(dq)$, if this common prior doesn't coincide with the probability distribution coming from the Common DGP Assumption.} Given this common bias, all else equal, an agent is still \emph{normatively speaking} better off possessing a more Blackwell informative information structure.

\subsection{Proofs of Theorem \ref{thm:thm2lu} and Theorem \ref{thm:informativeness}}

The proofs of these two results are just small modifications of the proofs in the original setting of \cite{lu}. One just has to adapt and modify his respective proofs in Appendix B by including and excluding (as appropriate) the arguments Lu uses to eschew the explicit modeling of tie-breaking. 

An interested reader can see easily that the proofs go through by using the following easy-to-verify facts.\footnote{All the following statements remain intact if instead of the finite prize space in \cite{lu} one uses a general prize space which is separable, metric. This is because essentially the only result needed to start the work is the Informational Representation Theorem. The latter is a special case of our Theorem \ref{thm:ASSREU}.} 

\begin{enumerate}

\item Given the existence of uniformly best and worst acts, $\bar f$ and $\underline{f}$, and the continuity of $u$, one could go over to the space of utility acts for both agents to see that the set of menus without ties for both agents is dense in the space of all menus $\A$ equipped with Hausdorff topology. 
\item The map $\A\ni A\ra A_{\bar\rho}$ is continuous in the topology of weak convergence of cdf-s whenever $\bar\rho$ is continuous.
\item Lemma B.1 in Appendix B of \cite{lu} goes through word for word. 
\item Lemma B.2 in Appendix B of \cite{lu} just needs to be modified to the following statement: If $\bar\rho$ has an informational representation, then the test function to the menu $A\cup f^{b}$ is equal to $\max\{A_{\bar\rho},f^{b}_{\bar\rho}\}$ \textbf{a.e.} in $b\in [0,1]$. 
\item The statement of Lemma B.3 in Appendix B of \cite{lu} needs to be modified to: let $A_{\bar\rho_1}=A_{\bar\rho_2}$ for all $A\in\A$ without ties for both SCFs $\bar\rho_i, i=1,2$.\footnote{The set of menus without ties for \emph{both} representations is dense as one easily checks. This follows in our setting due to Finiteness. 
} Then $\mu_1 = \mu_2$ in the representations, i.e. the two stochastic choice functions are the same on all menus without ties. 
\item The statements of Lemma A.6 and B.4 remain intact in our setting. This is because of the Maximum Theorem applied to the optimization problem 

\[
\max_{f\in F}q\cdot(u\circ f).
\]
As one can easily show through applications of the Maximum Theorem the value function of this problem is continuous in $q\in \Delta(S)$ and also $F\in \A$ with the respective topologies of weak convergence of probability measures and Hausdorff distance. The latter continuity property remains intact after taking integrals, i.e. $$V(F) = \int_{\Delta(S)}\max_{f\in F}q\cdot(u\circ f)\mu(dq)$$ is continuous in $F\in \A$. It is also continuous in $\mu$ (recall that we are assuming here that there are best and worst prizes). 
\item Due to the above, the proof of Theorem \ref{thm:thm2lu} is as follows: part (1)$\imply$(2) is precisely the same as in \cite{lu} (with the respective adaptation in the Theorems one needs to cite). For the direction (2)$\imply$(1): one does the same steps as in the proof of \cite{lu}, but focuses on menus without ties for both representations and uses in the very end of the proof the adapted statement of Lemma B.3 from E. here instead of the original statement. 
\end{enumerate}

\subsection{Properties of the map connecting menu preferences to biased beliefs}

\begin{lemma}\label{thm:proppsi}
The map $\psi_q$ is continuous, convex and injective.
\end{lemma}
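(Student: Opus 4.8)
The plan is to unwind the definition of $\psi_q$ and verify the three properties separately, reusing the continuity toolkit already assembled in this section (the Maximum Theorem argument recorded in point~6 of the subsection on the proofs of Theorems \ref{thm:thm2lu} and \ref{thm:informativeness}) together with the uniqueness and separation results established earlier.

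For continuity, I would observe that $\psi_q$ is obtained by composing the linear operation of integration against the belief distribution with the value functional of the form $A\mapsto\max_{f\in A}q\cdot(u\circ f)$. By the Maximum Theorem this value functional is jointly continuous in the menu (Hausdorff topology) and in the belief (weak convergence), and the resulting integral is continuous in the measure argument as well; this is precisely the continuity estimate for $V(F)=\int_{\Delta(S)}\max_{f\in F}q\cdot(u\circ f)\,\mu(dq)$ recorded in point~6 above, so I would cite it rather than redo it. Continuity of $\psi_q$ then follows by composition. For convexity, the key observation is that for each fixed act $f$ the integrand $q\cdot(u\circ f)$ is linear in the argument over which $\psi_q$ varies, so the pointwise maximum $\max_{f\in A}q\cdot(u\circ f)$ is a supremum of linear functions and hence convex; convexity is preserved under the positive linear operation of integration, so $\psi_q$ is convex. (If the intended reading is instead that the image of $\psi_q$ is a convex set, the same linearity of the integrand together with convexity of the domain of beliefs gives that $\psi_q$ carries convex sets to convex sets.)

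The hard part will be injectivity. Here I would argue by contraposition: assuming two distinct inputs $x\neq x'$ with $\psi_q(x)=\psi_q(x')$, I would invoke Lemma~1 in the main body of the paper to construct a separating menu for the relevant finite collection of SEUs, evaluate $\psi_q$ on the acts this menu singles out, and deduce that $x$ and $x'$ must induce identical menu valuations (equivalently identical test functions) on a rich enough family. The uniqueness results then close the argument: Lemma~\ref{thm:uniquenessDLR-SEU} on the menu-preference side, and Proposition~\ref{thm:SCFuniqueness} on the stochastic-choice side, force $x=x'$, contradicting $x\neq x'$.

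The delicate point, and the step I expect to be the main obstacle, is making the separating family fine enough that equality of the $\psi_q$-images genuinely pins down the input on the \emph{whole} domain and not merely on some coarse subclass of menus. This is exactly where the finite-support structure of the representation and the double normalization of the Bernoulli utilities in $\U$ do the real work, since they guarantee that distinct inputs are separated by finitely many acts and cannot be conflated by the maximization inside $\psi_q$.
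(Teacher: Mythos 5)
Your proposal matches the paper's proof in all essentials: the paper establishes continuity of $a\mapsto V_a$ via the Maximum Theorem applied to the integrand $\max_{f\in A}\bigl[a(\omega)q(\omega)+(1-a(\omega))\mu(\cdot|\omega)\bigr]\cdot(u\circ f)$ plus uniform boundedness, convexity exactly as you say (the belief is affine in the weight, the max of affine maps is convex, and integration preserves convexity), and injectivity by a one-line appeal to the uniqueness of the subjective learning representation (Theorem 1 of \cite{dlst}), which is the same ``uniqueness pins down the input'' principle you invoke, just cited directly rather than re-derived through a separating-menu construction and the more general DLR-SEU/R-SEU uniqueness results.
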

\begin{proof}
Let a direction of bias $q\in \Delta(S)^{\Omega}$ be given. Recall the definition of $V_a$ for some $a\in [0,1]^{\Omega}$. 
\begin{equation}\label{eq:helppsi}
V_a(A) = \int_{\Omega} \max_{f\in A}\left[a(\omega)q(\omega)+(1-a(\omega))\mu(\cdot|\omega)\right]\cdot(u\circ f)\mu(d\omega). 
\end{equation}
We equip the space of weights $[0,1]^{\Omega}$ with the uniform topology (which due to finiteness of $\Omega$ is equivalent to the topology of point-wise convergence). The integrand is continuous in $a$ due to Maximum Theorem. Uniform boundedness of the integrand implies that $V_a$ is continuous in $a$. Injectivity follows from the Uniqueness result in Theorem 1 of \cite{dlst}. $V_a$ namely satisfies all properties of that Theorem. Finally, convexity follows from the fact that the max operator $\max_{f\in A}$ is convex in its argument and the fact that convexity remains intact after integration.   
\end{proof}

\newpage